\DeclareSymbolFont{extraup}{U}{jkpsyc}{m}{n}
\DeclareMathSymbol{\vardiamondsuits}{\mathalpha}{extraup}{113} 
\def\vardiamondsuit{{\Large \vardiamondsuits}}
\newtheorem{theorem}{Theorem}
\newtheorem{theorem*}{Theorem}[section]
\newtheorem{lemma}{Lemma}
\newtheorem*{lemma*}{Lemma}
\def\H{H}
\def\varPhi{\Omega}
\newtheorem*{corollary*}{Corollary}
\newtheorem{corollary}{Corollary}
\theoremstyle{definition}
\newtheorem{definition}{Definition}
\newtheorem*{conventions}{Conventions}
\theoremstyle{remark}
\newtheorem{remark}{Remark}
\DeclareMathOperator{\Hil}{\mathscr{H}}
\DeclareMathOperator{\non}{\nonumber}
\DeclareMathOperator{\B}{\mathcal{B}}
\DeclareMathOperator{\LL}{\mathcal{L}}
\DeclareMathOperator{\N}{{\mathcal{N}}}
\author{Thomas Faulkner,}
\emailAdd{tomf@illinois.edu}
\author{Min Li}
\emailAdd{minl2@illinois.edu}
\affiliation{Department of Physics, University of Illinois,\\ 1110 W. Green St., Urbana, IL 61801-3080, USA}
\title{\boldmath  Asymptotically isometric codes for holography}
\abstract{The holographic principle suggests that the low energy effective field theory of gravity, as used to describe perturbative quantum fields
about some background has far too many states. 
It is then natural that any quantum error correcting code with such a quantum field theory as the code subspace
is not isometric. We discuss how this framework can naturally arise in an algebraic QFT treatment of a family of CFT with a large-$N$ limit described by the single trace sector. 
We show that an isometric
code can be recovered in the  $N \rightarrow \infty$ limit when acting on  fixed states in the code Hilbert space. 
Asymptotically isometric codes come equipped with the notion of simple operators and nets of causal wedges. While the causal wedges are additive, they need not satisfy Haag duality, thus leading to the possibility of non-trivial entanglement wedge reconstructions. Codes with complementary recovery are
defined as having extensions to Haag dual nets, where entanglement wedges are well defined for all causal boundary regions. 
We prove an asymptotic version of the information disturbance trade-off theorem and use this to show that boundary theory causality is maintained by net extensions. We give a characterization of the existence of an entanglement wedge extension via the asymptotic equality of bulk and boundary relative entropy or modular flow.  
While these codes are asymptotically exact, at fixed $N$ they can have large errors on states that do not survive the large-$N$ limit. This allows us to fix well known issues that arise when modeling gravity as an exact codes, while maintaining the nice features expected of gravity, including, among other things, the emergence
of non-trivial von Neumann algebras of various types. 
 }
\begin{document}

\maketitle

\section{Introduction}

Quantum error correction (QEC) is a useful paradigm for studying AdS/CFT  and holography \cite{Dong:2016eik, pastawski2015holographic, Almheiri:2014lwa, Harlow:2016vwg, Kang:2018xqy, Faulkner:2020hzi}. Small (i.e. finite dimensional) code subspaces
embedded into the effective field theory of gravity are protected from the action of localized operators in the microscopic CFT, allowing them
to be decoded on multiple boundary subregions. The encoding map from bulk to boundary, in this case can be taken as an isometry.

An important open question is to what extent bulk locality, as described by local quantum fields about some background geometry, can be understood in this QEC paradigm? Since dynamics is built into relativistic QFT via local spacetime symmetries this question is related to the challenge of building dynamics into holography codes.  Dynamical constraints on gravity such as the quantum focusing conjecture \cite{bousso2016quantum} reduce to causality constraints on local QFT algebras \cite{Balakrishnan:2017bjg,Ceyhan:2018zfg}, and we would like to see these naturally arise in the QEC paradigm. 
 In order to achieve this one might expect that the code subspace must be large enough to encode the quantum field theory Hilbert space and it is not entirely clear how this should work.  In fact there is a certain sense in which local quantum fields are incompatible with the traditional QEC paradigm - even with a short distance cutoff, there are  more states in QFT than are allowed by holographic bounds \cite{Bekenstein:1973ur,'tHooft:1993gx, susskind1994black,Susskind:1998dq, Bousso:2002ju} and many of these states should not map to anything sensible in the boundary/microscopic theory. Instead these holographic
bounds are enforced via backreaction effects and the appearance of black holes. 

From an operator algebra point of view, it is known that the collection of operators describing the low energy bulk theory are in fact not algebras at all \cite{El-Showk:2011yvt, Papadodimas:2013jku}.
On the other hand if the bulk was well described by QFT in some local spacetime region then the general expectation is that the algebra of operators should be that of a type-III$_1$ von Neumann algebra \cite{haag2012local}. In contrast to type-I von Neumann algebras,  more typically studied in quantum information theory, these algebras give the correct
mathematical structure that explains the necessarily infinite
entanglement present in the QFT adiabatic vacuum near the edge of the bulk region \cite{Witten:2018lha}. In a gravitational theory such infinite entanglement is regularized to a finite quantity \cite{Bekenstein:1973ur,solodukhin2008entanglement}, as measured by the generalized entropy, yet the resulting entropy
scales as $1/G_N$ so, working perturbatively about $G_N = 0$, one expects that such QFT like algebras could still arise.

 Indeed recent papers have argued that exactly such
a von Neumann algebra description is emergent in the dual of the thermofield double state in the large-$N$ QFT \cite{Leutheusser:2021qhd,Leutheusser:2021frk, Witten:2021unn}.
 The goal of this paper is to extend the QEC description of holography in a way that allows for such non-trivial QFT algebras acting on the code subspace. In so doing we will extend
 these recent papers, studying large $N$ von Neumann algebras, beyond causal wedges to general entanglement wedges.
 Such a result for QEC codes has remained elusive, due to the necessary approximate nature of these codes. For exact QEC codes, it
 is easy to construct examples with bulk QFT algebras \cite{Kang:2018xqy, Faulkner:2020hzi}. However they fail to genuinely describe holographic QFTs due to the additivity issue discussed in \cite{Kelly:2016edc, Faulkner:2020hzi}. In the approximate case the focus has been on models with finite dimensional Hilbert spaces \cite{Cotler:2017erl,Hayden:2018khn,Akers:2020pmf}. Some infinite dimensional results exists, however the type-III$_1$ case has remained elusive \cite{Gesteau:2021jzp}.

Here we directly confront the incompatibility of local QFT and error correcting codes: since effective field theory
over counts the degrees of freedom in the microscopic theory it is natural that the code is non-isometrically embedded, using a bounded operator $V_N$, into the microscopic theory. In particular we expect that certain bulk states map to the same boundary state implying the existence of a non-trivial kernel for $V_N$. In this way we can preserve the local predictions of the bulk effective theory.

We discuss a version of this paradigm that arises from a careful study of the large-$N$ limit of some boundary QFT, with the expectation
that $G_N^{-1} = \mathcal{O}( N^{\#})$ for some positive power. Should it exist, the gravitational effective field theory arises
as an expansion about the $N=\infty$ point as an algebra generated by the single-trace operators. These are the operators with a smooth large $N$ limit.  These algebras are
state dependent in the weak subspace dependent sense. The boundary/microscopic theory is then a fixed-$N$ CFT. Our code is built from the single trace algebra. However, at any fixed-$N$ the single trace operators along with their multi-trace composites do not actually form an algebra. Hence
the decoding maps $\gamma_N$ from the single trace description to the finite $N$ microscopic theory are not necessarily algebra preserving homomorphisms - indeed it need not be a unital completely positive map, i.e. a quantum channel.  Instead these nice features are only recovered in a limiting procedure. This leads to the non-isometric approximate codes, albeit ones that become isometric in the limit.

 By approximate we do not mean the standard notion of approximate error correction, where
reconstruction errors are controlled by some function of $N$ which vanishes as $N \rightarrow \infty$. Instead we only demand the code works exactly in the limit $N \rightarrow \infty$. In particular this code does poorly (not even approximately well) for certain ``black hole'' states of the finite $N$ theory. These are the states that do not have a smooth large-$N$ limit.
 The distinction between our notion of approximate and the more standard notion arises from the distinction between uniform convergence and pointwise convergence. In particular we find these codes become isometric only when probing $V_N^\dagger V_N$ in matrix elements using fixed states on the code subspace ({\bf pointwise}): from the theory of operator algebras this is the notion of weak operator convergence.  This should be compared with norm convergence ({\bf uniform}). 
 We will refer to codes that are at least weakly isometric, in the limit $N \rightarrow \infty$, simply
 as \emph{asymptotically isometric codes.}

The importance of non-isometric codes for studying gravity was recently discussed in \cite{Akers:2022qdl, Akers:2021fut, Penington:2019npb}. The context here was an attempt
to give a Hilbert space interpretation of the Page curve computations \cite{Almheiri:2019psf,Penington:2019npb}. In these cases the island rule \cite{Almheiri:2019hni} imposes a different
kind of holographic bound, screening the large Hilbert space inside an evaporating Black Hole that is entangled with the radiation.
This large Hilbert space arises in the effective description and is non-isometrically encoded in the smaller black hole Hilbert space. 
We expect the non-isometric codes we discuss here can be used to study the Page curve computations and that in this case the non-isometric nature
of our codes will be related that of \cite{Akers:2022qdl}. 
In the discussion Section~\ref{sec:d} we make some preliminary comments on how to
use asymptotic codes in this case, in particular we emphasize the need for two codes before and after the Page time. We then discuss how for these codes the Hawking paradox arises due to a non-commutativity of the large-$N$ limit and the $\mathcal{O}(N^\#)$ time evolution required to see the Page transition. 

In this paper we will study several aspects of asymptotically isometric codes. In Section~\ref{sec:sta} we show how such codes might arise from the single trace algebra of a matrix like theory with a large $N$ limit. The construction is based on a sequence of states that has a well defined large-$N$ limit when computing correlation functions of the single trace fields. To give a more rigorous discussion we describe the single trace fields using a Weyl algebra. These algebras are also sometimes called generalized free fields.
Under certain technical assumptions on the resulting von Neumann algebras and on the mode of convergence for the correlation function, we extract our asymptotic code. 
In particular we highlight the important assumption that the von Neumann algebras under study are \emph{hyperfinite} meaning they can be generated by an increasing family of finite dimensional von Neumann sub-algebras. Most physically relevant von Neumann algebras are hyperfinite - in particular the type-III$_1$ von Neumann algebras of local regions in QFTs that have reasonable thermodynamic behavior, are hyperfinite. 

In the absence of a derivation of these specific technical assumption we still claim that these asymptotic codes are of independent interest. In particular the rest of the paper, Section~\ref{sec:aic}-\ref{sec:p} is devoted to a study of these codes. We find that many properties of these codes are related to the expected properties of holographic QFTs.

In Section~\ref{sec:aic} we give our main definition of an asymptotically isometric codes, which corresponds to the triple $(\mathcal{C},V_N, \gamma_N)$ where $\mathcal{C}$
defines a net (in the algebraic QFT sense) of von Neumann algebras associated to boundary causal regions, $V_N$ are the non-isometric maps from bulk to boundary and $\gamma_N$ are operator maps that send the single trace fields to a definite operator in the finite $N$ theory. These can be interpreted as simple QEC decoding/reconstruction maps and for this
reason it is natural to think of $\mathcal{C}$ as the causal wedges. We make the observation the local algebras in $\mathcal{C}$ need not satisfy Haag duality even if the underlying theory has this property. 
Entanglement wedges exist if such operator reconstructions can be extended beyond the causal wedge to Haag dual nets. In Section~\ref{sec:crhd} we prove that any extensions of the causal wedge must satisfy basic constraints of boundary theory causality, such as entanglement wedge nesting. We also show that the additivity issue identified in \cite{Kelly:2016edc,Faulkner:2020hzi} is not present. 
These results are based on Theorem~\ref{thm:qec} an asymptotic form of information-disturbance tradeoff that plays an important role
in the theory of QEC \cite{kretschmann2008information,kretschmann2008continuity,crann2016private} and in turn AdS/CFT \cite{Dong:2016eik,Harlow:2016vwg,Kang:2018xqy,Faulkner:2020hzi}.
We study the conditions under which the extension is maximal, culminating with the statement of the JLMS condition \cite{Jafferis:2015del} 
and the equality of bulk and boundary modular flow \cite{Faulkner:2017vdd}
 in Theorem~\ref{thm:JLMS}. We defer the proofs of these results to Section~\ref{sec:p}. The JLMS proof notably uses Petz's powerful work on sequences (or nets) of quantum channels \cite{petz1994discrimination,ohya2004quantum} and in particular we find that the standard Petz map can be used as the recovery map with no extra twirling \cite{junge2018universal,Cotler:2017erl,Faulkner:2020iou} required.\footnote{For finite dimensional codes this was established in \cite{Chen:2019gbt}.}

State-dependent entanglement wedges and back-reaction effects can be understood as arising from different sectors associated to different large-$N$ limits.  These limits correspond to different sequences of states.  In a paper under preparation \cite{toappear} (and briefly summarized in the discussion section) motivated by the hypothesis that the full large-$N$ limit of some boundary theory requires many different asymptotic codes,
we study possible ways to patch the codes together.  We then study the vacuum sector and thermal sectors in some detail. In particular we prove conditions under which the entanglement wedge must equal the causal wedge for such codes.
We then use sector theory to discuss the appearance of the crossed product and type-II$_\infty$ \cite{Witten:2021unn, Chandrasekaran:2022eqq} algebras acting on thermal codes using so called time-shifted thermal codes.

While we do not prove a general condition under which an entanglement wedge exists we make some speculative comments in the discussion section connecting:
1. the survival of the split property on a sufficiently complete code subspace, 2. saturation of the modular chaos bound, 3. the appearance of a gravitational effective theory and  4. the existence of an entanglement wedge for all boundary regions deformable to a sphere. 

Sections~\ref{sec:aic}-\ref{sec:p}, which defines and studies our codes, can be read independently of Section~\ref{sec:sta} which extracts them from
the large-$N$ limits of certain QFTs.
In the next Section~\ref{sec:aqft}, to get us started, we review some aspects of algebraic QFT that we will need throughout the paper.

\section{Algebraic QFT}
\label{sec:aqft}

\begin{conventions}
We denote vectors in a Hilbert space with Dirac notation
$\left| \psi \right>, \left| \eta \right>, \ldots $, and we will also sometimes use an explicit inner product notation $( \left| \psi \right>, \left| \eta \right>)$ where the \emph{first} entry is anti-linear. Define the adjoint action ${\rm Ad}_u(\cdot) = u^\dagger \cdot u$.
We will conflate a closed subspace of a Hilbert space with the projection to this subspace. 
Complex Hilbert spaces will typically 
be denoted with mathscr, $\mathscr{H}, \mathscr{K}, \ldots$, algebras with mathcal $\mathcal{M},\mathcal{N}, \mathcal{W} \ldots$, and spacetime regions with capital letters $K, O, \ldots$.  The notation $\omega_{\psi}(\cdot) \equiv \left< \psi \right| \cdot \left| \psi \right>$ will be used a lot. By quantum channel we mean a normal unital completely positive map
between von Neumann algebras. Normal states and positive maps are ultraweakly continuous. Limits on various spaces will, by default, refer to the natural norm on that space.
So for example $\lim_i a_i = 0$ for bounded operators on $\mathscr{H}$: $a_i \in \mathcal{B}(\mathscr{H})$ means that $\lim_i \| a_i \| = 0$.
 $so-\lim_i a_i$ means that $\lim_i \| a_i \left| \psi \right> \| = 0$ for all vectors $\psi \in \mathscr{H}$ and $wo-\lim_i a_i$ means that $\lim_i \left< \psi_1 \right| a_i \left| \psi_2 \right> = 0$ for all vectors $\psi_{1,2} \in \mathscr{H}$. The pointwise limit of linear functionals $w-\lim_i \rho_i = 0$ simply means that $\lim_i \rho_i(a) = 0$ for each $a$.
Occasionally we will need some other limits/operator topologies. 
We review these in Appendix~\ref{app:convergence}. 
The name \emph{state} will either refer to vectors on a Hilbert space, or positive linear functionals on some operator algebra. The later will typically be normal states (ultraweakly continuous) if not otherwise specified. 
Quantum channels are defined to be normal unital completely positive maps between two von Neumann algebras. 
\end{conventions}

Consider a $d$-dimensional CFT, defined on the Lorentzian cylinder $K = S_{d-1} \times \mathbb{R}$.\footnote{By working on the cylinder instead of the conformally related Minkowski space, we avoid having to discuss various singular cases in the assignment of algebras to regions \cite{Duetsch:2002hc}. For the so called conformal nets (corresponding to chiral CFTs) this is the distinction between
a net over $\mathbb{R}$ and a net over $\mathbb{S}^1$. The relation between algebras in the two different nets is complicated by the addition of the point at $\infty$
to the $\mathbb{R}$ net.} 
We will assign local algebras to causally complete subregions $\{ O \subset K : O'' = O\}$ of $K$.\footnote{For some of the properties stated below a more restrictive
set of regions might be necessary due to issues of regularity. For example one might stick to regions causally generated by finite unions of causal diamonds $\diamondsuit$ defined below. This would be sufficient for our purposes, yet the details of this restriction will not be important in this paper.}
The $'$ notation, applied to open subregions of $O \subset K$, refers to the largest open set that is space-like separated to $O$ and we use the causal structure of the Lorentzian cylinder.  The entire cylinder $K$ also counts as a causally complete region with $K' = \emptyset$. 
A causally complete region $O$ is called local if the closure of $O$ is a non-empty compact subset of $K$. Generally a subset labelled $O \subset K$ will be assumed causally complete if not otherwise stated. There is a special class of such local regions which are called double cones $\diamondsuit = I^+(p) \cap I^-(q)$ where $p,q \in K$ with $p\in I^-(q)$ and where $I^\pm(p)$ denote the set of points in $K$ that are to the chronological future/past of $p$. Only for $p,q$ sufficiently close are these diamonds causally complete and
we only use the label $\diamondsuit = \diamondsuit''$ when this is the case. 
The conformal group $g\in\widetilde{SO}(d,2)$ (where the tilde denotes the universal cover) acts on causally complete regions
in $K$ via $O \rightarrow g(O)$ in the obvious manner. 

When working with both boundary and bulk operator algebras we will consider the following axiomatic framework, based
on ideas developed in algebraic QFT \cite{Haag:1963dh}. See \cite{haag2012local} for a textbook treatment. We have a Hilbert space $\mathscr{K}$ and
a net of von Neumann algebras $O \rightarrow \mathcal{M}(O)$ that acts on $\mathscr{K}$ and which is indexed by causally complete regions $O$ as defined above. 
The local net over $K$ will be denoted in full as $K \supset O \rightarrow \mathcal{M}(O) \subset \mathcal{B}(\mathscr{K})$ and we will simply use $\mathcal{M}
\equiv \mathcal{M}(K)$ as a shorthand when
the rest of the net structure is understood.  Generally we will take $\mathcal{M}(\emptyset) = \mathbb{C} 1$. We might include some of the following axioms:
\begin{enumerate}
\item \emph{Isotony}:
\begin{equation}
\label{isotony}
\mathcal{M}(O_1) \subset \mathcal{M}(O_2) \qquad \qquad \forall \, O_1 \subset O_2 \,\,,\,\, O_{1,2} \subset K
\end{equation}
\item \emph{Causality}:
\begin{equation}
\mathcal{M}(O') \subset \mathcal{M}(O)'  \qquad \forall\,O \subset K
\end{equation}
where the prime acting on an algebra gives the commutant. 
\item \emph{Standard and factorial}:
 There exists a vector $\eta \in \mathscr{K}$ for which $\mathcal{M}(O)$ is cyclic and separating whenever $O$
 is local. The local algebras $\mathcal{M}(O)$ are factors. 
  \item  \emph{Additivity}: For any collection of double cones $\{ \diamondsuit^i \subset O \}_{i \in I}$ such that $\bigcup_i \diamondsuit^i = O$ then:
\begin{equation}
\mathcal{M}(O) = 
\left( \bigcup_{i \in I} \mathcal{M}(\diamondsuit^i) \right)''
\end{equation}
see Figure~\ref{fig:causal-regions}
 \item  \emph{Completeness}:
\begin{equation}
 \mathcal{M}( S_{d-1} \times \mathbb{R}) = \mathcal{B}(\mathscr{K})
 \end{equation}
 \item \emph{Haag duality}: 
\begin{equation}
\mathcal{M}(O') = \mathcal{M}(O)' \, \qquad \forall  \,\, O \subset K
\end{equation}
\item \emph{Hyperfinite}: The local von Neumann algebras $\mathcal{M}(O)$ are hyperfinite 
type-III$_1$ factors. 
 \item \emph{Split property}: For two local regions $O_{1,2} \subset K$ such that $O_1 \subset O_2$ is a proper subset there exists a type-I factor $\mathcal{N}$:
\begin{equation}
\mathcal{M}(O_1) \subset \mathcal{N}  \subset \mathcal{M}(O_2) 
\end{equation}
\item \emph{Conformal symmetry}: The Hilbert space $\mathscr{K}$ furnishes a strongly continuous representation $U(g)$ of the conformal group $g \in \widetilde{SO}(d,2)$, with
non-negative spectrum for the energy operator corresponding to time translations along $\mathbb{R}$. There is a vacuum vector $\Omega$ that is invariant $U(g) \left| \Omega \right> = \left| \Omega\right>$ and which is locally standard for $\mathcal{M}$. Also the $U(g)$ act covariantly on the net:
 \begin{equation}
U(g) \mathcal{M}(O) U(g)^\dagger 
= \mathcal{M}( g( O))
\label{ref:vNnet}
 \end{equation}

 \end{enumerate}
 
 \begin{figure}[!ht]
\centering
\includegraphics[scale=.4]{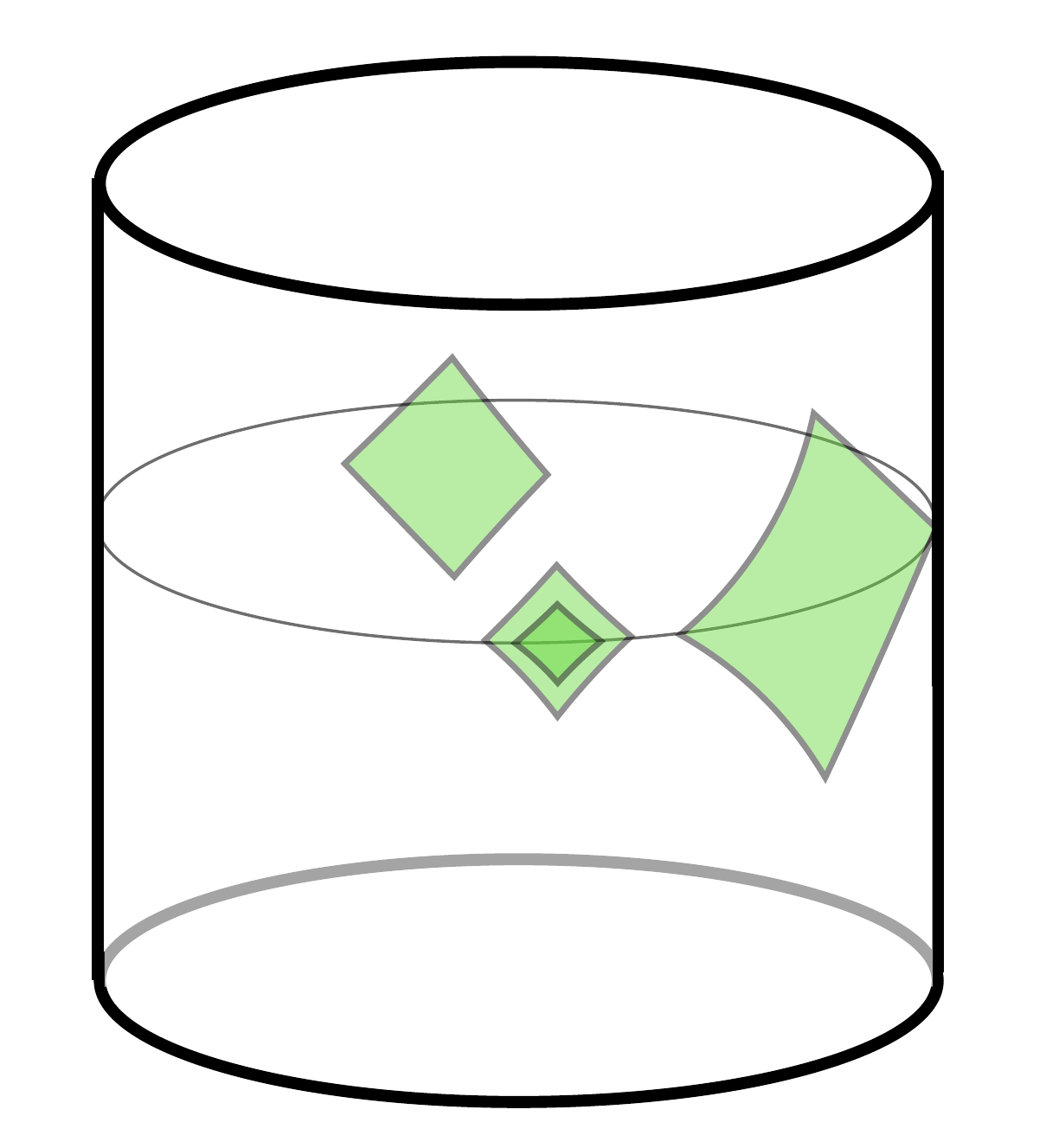} \hspace{2cm} \includegraphics[scale=.4]{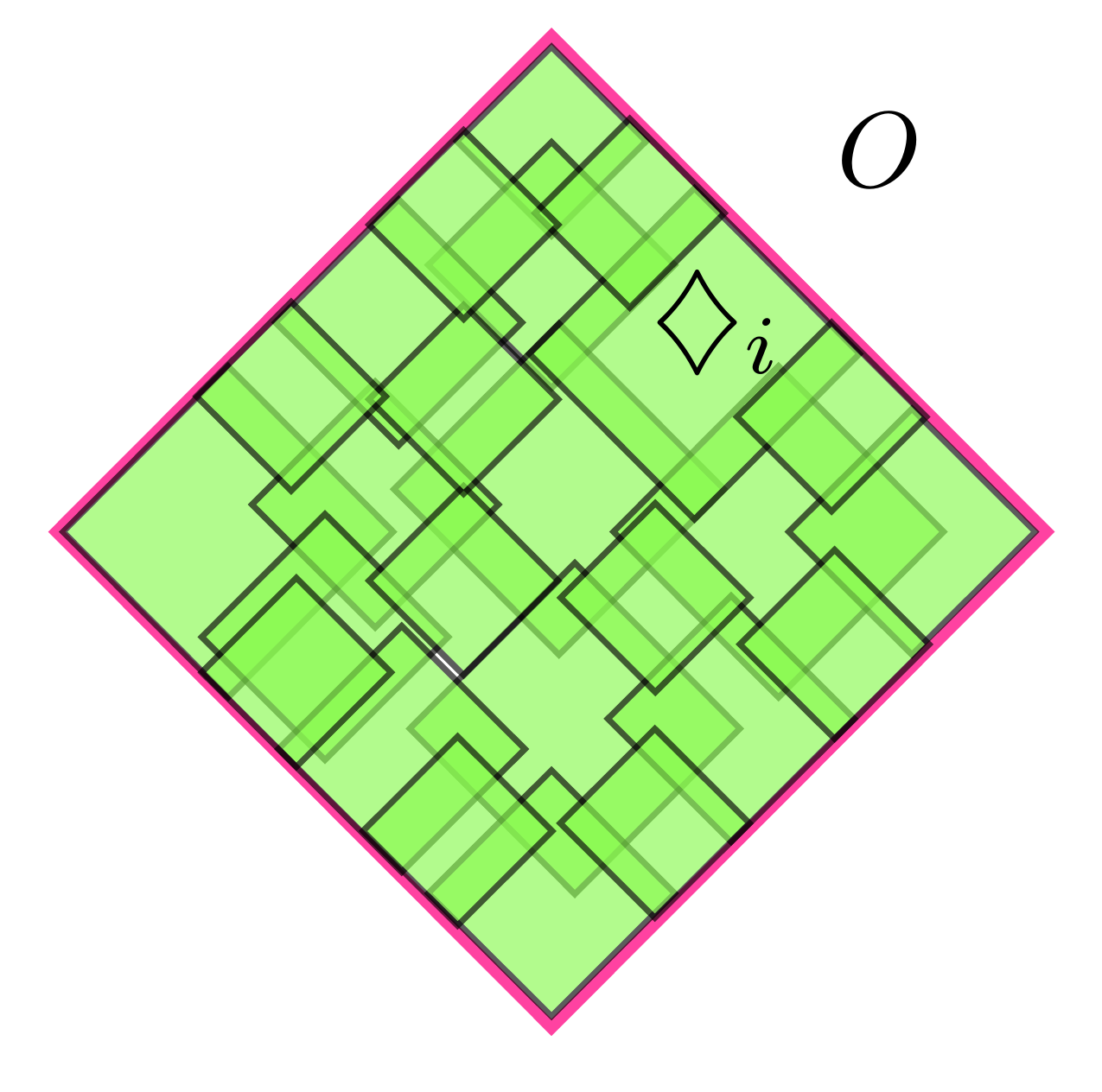}
\caption{ (\emph{left}) Some examples of causal regions on $S^{d-1} \times \mathbb{R}$. (\emph{right}) The additivity property we use in this paper considers
spacetime covers of $O$ by causal diamonds.
\label{fig:causal-regions}}
\end{figure}

Rather than quote these axioms individually every time we consider a new net, we will instead collect some axioms together and give these names:
\begin{definition}
\label{def:vN}
We refer to an \emph{additive net} as a local net $\mathcal{M}$ that satisfies axioms (1-4). 
We refer to a \emph{complete net} as a local net $\mathcal{M}$ that satisfies axioms (1-6). A net is additionally \emph{hyperfinite/split} if it also satisfies (7)/ (7-8) [split implies hyperfinite \cite{Buchholz:1986bg,Longo:1982zz}]. In general we may take $K$, in the definition of the net, to be different from $S_{d-1} \times \mathbb{R}$ (such as by including a reference/bath system) and in this case
the axioms $(1-9)$ as stated could still be applied, as long as we substitute in a new definition of ``local''
and specify the geometric action of the conformal group on this new $g : K \rightarrow K$.  We say that the net is in the vacuum sector if it also satisfies (9) for some pair $(U(g),\Omega)$. 
\end{definition}

It is well known that some of these conditions follow from others. In this paper we will not concern ourselves too much with the question of what are the minimal set of axioms.
At the same time some of these conditions are too restrictive in general. For example fermions should be included by grading the algebra and allowing for anti-commutation relations, see for example \cite{Pontello:2020csg}. 
 And we have  ignored the possibility of superselection sectors \cite{Doplicher:1971wk, Doplicher:1973at, haag2012local,Brunetti:1992zf, fredenhagen1990generalizations,Pontello:2020csg}, by working in a fixed representation.
 However these axioms will be, for the most part, sufficient for our purposes. For the fixed-$N$ boundary theories, local excitations of the vacuum on $S^{d-1} \times \mathbb{R}$ already contains all the interesting
 physics of AdS/CFT and in the presence of global charges we can work with the field algebra and this will give us a complete net with no superselection sectors. 
It is important to generalize our work to the fermion/anticommutator case, but we lose very little of the physics that we plan to expose by doing so. For the bulk algebras, as we will explain below, the theory of (superselection) sectors seems to work quite differently to the standard QFT case. This is because sector physics is tied to the large-$N$ limit. Instead the error correcting codes that we construct should be thought of as describing a single sector of the theory for which the above axioms are sufficient.

While we will always take the boundary theory to be described by a net of von Neumann algebras acting concretely on some $\mathscr{K}$, we will sometimes describe the bulk theory using $^\star$-algebras and $C^\star$-algebras. These are more abstract notions of operator algebras not requiring a Hilbert space representation.
A $^\star$-algebra is an associative algebra with an involution confusingly denoted here by $^\dagger$ and not $^\star$. $C^\star$-algebras additionally have a norm $\| \cdot \|$ satisfying the condition
$\| a^\dagger a \| = \| a \| \| a^\dagger \|$ and are closed under the norm topology induced by $\| \cdot \|$. We will always take such algebras to be unital, containing a unit $1$. 
Compared to von Neumann algebras
this is a stronger closure requirement, allowing for richer possibilities. Similar axioms can be given for nets of $^\star$-algebras and $C^\star$-algebras. A
net of $^\star$-algebras over $K$ is denoted $ K \supset O  \rightarrow \mathcal{W}(O)$, with shorthand $\mathcal{W} \equiv \mathcal{W}(K)$. In 
this paper such $^\star$-algebra nets are always assumed to satisfy:
\begin{enumerate}
\item \emph{Isotony}:
\begin{equation}
\mathcal{W}(O_1) \subset \mathcal{W}(O_2) \qquad O_1 \subset O_2
\end{equation}
\item \emph{Causality:}
\begin{equation}
[ \mathcal{W}(O_1), \mathcal{W}(O_2) ] = \{ 0 \} \, \qquad O_1 \subset O_2'
\end{equation}
\end{enumerate}
And for $C^\star$ algebras we always demand:
\begin{enumerate}
\item[3.] \emph{Additivity}: for any collection of double cones $\{ O^i \subset O \}$ such that $\bigcup_i O^i = O$ then:
\begin{equation}
\label{ref:Cnet}
\mathcal{W}(O) = 
\bigvee_i \mathcal{W}(O^i)
\end{equation}
where  for $C^\star$ algebras the $\vee_i$ operation refers to the smallest $C^\star$ algebra that
contains as sub-algebras each of the $\mathcal{W}(O^i)$. 
\end{enumerate}
\begin{definition}
\label{def:C}
We will refer to such a net satisfying (1-3), given immediately above, as an \emph{additive net} of $C^\star$ algebras. A \emph{subnet} $\mathcal{A} \subset \mathcal{W}$ is a net
of $^\star$-algebras for which $\mathcal{A}(O) \subset \mathcal{W}(O)$ for all $O \subset K$. 
\end{definition}

\section{Single trace algebra}

\label{sec:sta}

We now consider a family of CFT's labelled by $N \in \mathbb{N}$. We assume each CFT is described by a complete split net
in the vacuum sector $K \supset O \rightarrow \mathcal{M}_N(O) \subset \mathcal{B}(\mathscr{K}_N)$. To discuss a large $N$ limit we need to relate different theories. We now discuss a special class of operators that exists in each of the nets for any $N$ that will help us take the large-$N$ limit.  These are the single trace fields. This discussion will be necessarily semi-informal at points since there does not seem to be
a rigorous algebraic QFT definition of the relevant theories that we wish to work with. In any case we have the goal of abstracting a precise axiomatically defined mathematical structure that we will introduce in Section~\ref{sec:aic}, so this does not concern us too much.
Our approach follows the outlined procedure given in \cite{Leutheusser:2021qhd,Leutheusser:2021frk,Witten:2021jzq, Witten:2021unn}, albeit with a slightly different focus.

The single trace self-adjoint fields $\{ \phi_\alpha(x) \} $
generate an algebra that can be thought of as acting in the $N=\infty$ theory.  The number of such fields that we include could be finite or infinite depending on the theory we are studying.\footnote{$\mathcal{N}=4$ SYM, at large $N$ and large but fixed $\lambda$ (the `t Hooft coupling), will have an infinite number of stringy modes. Scaling
$\lambda \sim N^p$ for sufficiently small $p$ will truncate this, but there will still be an infinite number of KK modes from reduction on the $S^5$. Consistent truncations can presumably be considered where the number of single trace fields is finite.} Since such operators are distributional and unbounded it is convenient to work with the Weyl operators instead. That is, associated to
the single trace fields define:
\begin{equation}
 w(h)= \exp( i \int h^\alpha \phi_\alpha) 
 \end{equation}
  where  the $h \in \H$ are real smooth test functions on $S_{d-1} \times \mathbb{R}$ of compact support,
 valued in 
  the dual of the functional vector space spanned by the single trace fields. We denote the pairing $\left< h, \phi \right>
  = \int h^\alpha \phi_\alpha$ where the integral is over $S_{d-1} \times \mathbb{R}$ and we sum over different single trace fields $\alpha$.
The resulting operators will become a unitary subset $\mathcal{W}_u$ of the Weyl algebra that we wish to construct. 
We denote the correspondence $w: \H \rightarrow \mathcal{W}_u$. 
  
Furthermore it is possible to pick these functions $h$ to have compact support inside a causally complete region $O \subset K$ and this gives the vector space $h \in \H(O)$ a net structure
over the space of causally complete regions $K$. It satisfies isotony: $\H(O_1) \subset \H(O_2)$ for $O_1 \subset O_2$
as well as additivity in the sense that $h \in \H(O)$ can always be written as $h = \sum h_i$ for $h_i \in \H(O_i)$ where $O = \cup_i O_i$ is some cover.
To establish this, one uses a partition of unity associated to the covering $O_i$ \cite{roberts2004more, fredenhagen1990generalizations}.
  
 For each $w(h)$ there is a corresponding operator at finite $N$:
\begin{equation}
w_N(h) =  \exp( i \int h^\alpha \phi^{(N)}_\alpha)  \in \mathcal{M}_N
\end{equation}
where an example of such an operator, in a matrix like model, is $\phi^{(N)}_\alpha ={\rm Tr} M_1 M_2 \ldots M_k$ with $M_i$ the $N \times N$ matrices
and the trace here is the matrix model trace \cite{tHooft:1973alw,Coleman:1985rnk}.  The \emph{form} of the single trace expression is fixed, so operators associated to $\phi_\alpha$ exist in each of the fixed $N$ theories. This is the form of the single trace fields in non abelian gauge theories with adjoint matter. We are not restricted
to studying such theories, although it is useful to have this example in mind.

Given a sequence of states $\sigma_N \in \mathcal{B}(\mathscr{K})_\star$ we will eventually identify $ \phi_\alpha$ as the limit of $ \phi_\alpha^{(N)} - \big< \phi_\alpha^{(N)} \big>_{\sigma_N}$. 
We have normalized these operators so that the two point function has a smooth large-$N$ limit after the subtraction of the one point function. 
The limit of the two point function will determine the resulting algebraic structure on the Weyl operators. Two examples to keep in mind are the sequence of vacuum states $\sigma_N(\cdot) = \left< \Omega_N \right| \cdot \left| \Omega_N \right>$, such that the two point function of single trace (primary) operators are fixed by conformal invariance.
And also the thermal states:
\begin{equation}
\sigma_N^\beta(\cdot) = Z_N^{-1} {\rm Tr}_{\mathscr{K}_N} e^{ - \beta H_N }  (\cdot)
\end{equation}
where $H_N $ is the Hamiltonian of the CFT on $S^{d-1} \times \mathbb{R}$ and $Z_N$ is the partition function. 

Note that with this normalization three point functions of single trace fields are suppressed:
\begin{equation}
\left<  \phi^{(N)}_1  \phi^{(N)}_2  \phi^{(N)}_3 \right>^c_{\sigma_N} = C_{123} = \mathcal{O}(N^{-1})
\end{equation}
where $c$ means the connected correlator. Thus non-linear effects and the coupling $C_{123}$ kick in for Weyl operators scaling like $w_N( \hat{h} N )$. If we act
with an operator with such a scaling we would maintain the classical limit but we would also induce a new
classical background. Correspondingly the operators $w(\hat{h} N)$ do not have a smooth large-$N$ limit and so will not be described by the codes that we construct here.

The sources $h \in \H$ define a real linear vector space that we will equip with a symplectic form.
This is determined by the commutator Green's function for $\phi_\alpha$: 
  \begin{equation}
  \label{eq:beta}
  \beta(h_1, h_2) = \frac{1}{2} \int h_1^\alpha  \int h_2^\beta   \,\, i \left< [\phi_\alpha^1, \phi_\beta^2 ] \right>_\sigma
  \end{equation}
  where $\sigma$ is the limiting state of the $N=\infty$ theory. As we will discuss in Section~\ref{sec:deg}, $\beta$ has degeneracies associated
 to the stress tensor and conserved currents. Aside from these cases, that we will ignore for now, we expect that $\beta$
  is non-degenerate, meaning that there is no $h \in \H$ such that $\beta(h,h_2) =0$ for all $h_2 \in \H$. 
   This is to be contrasted to a free theory where the commutator Green's function 
  of the fundamental fields would lead to a large degeneracy. This is because $\phi_\alpha$ would then satisfy some equation of motion that would make $\beta$ degenerate on the range
  of the equations of motion. Hence for free QFT the phase space is constructed as a quotient of this space.  
  Here we are working with generalized free fields so there is no such equation of motion. 
  
  The (state dependent) symplectic form $\beta$ will determine the algebra and a positive symmetric bilinear form $\alpha$ determines the state on the limiting theory
  and this is determined by the anti-commutator Green's function:
  \begin{equation}
  \label{acgreen}
\alpha(h_1,h_2) =  \frac{1}{2} \int h_1^\alpha    \int h_2^\beta   \,\,  \left< \{ \phi_\alpha^1, \phi_\beta^2  \} \right>_\sigma
  \end{equation}
  which one can prove always satisfies:
  \footnote{ Which follows from:
\begin{equation}
\left< ( \phi_1 + \lambda \phi_2)^\dagger ( \phi_1 + \lambda \phi_2) \right> \geq 0
\end{equation}
with $\lambda = 1/2 \left< [\phi_1, \phi_2]\right>/\left< \phi_2 \phi_2 \right>$ giving:
\begin{equation}
4 \left< \phi_1 \phi_1 \right> \left< \phi_2 \phi_2 \right> -    | \left< [\phi_1, \phi_2]\right>|^2  \geq 0
\end{equation}}
  \begin{equation}
  \label{bdab}
  \beta(h_1, h_2) ^2 \leq \alpha(h_1, h_1) \alpha(h_2,h_2)
  \end{equation}
We say that $\alpha$ is compatible with $\beta$. Using $\alpha$ we can turn $H$ into a real Hilbert space. We should then complete $H \rightarrow \overline{H}$, by including limits, with respect to the associated norm. We continue to consider the case where $\beta$ is degenerate even under extension to $\overline{H}$.

The bound \eqref{bdab} allows us to introduce \cite{Kay:1988mu,petz1990algebra,Hollands:2017dov,Longo:2021rag} a unique (up to unitary equivilence) one particle structure $(\kappa, \mathscr{H}_1)$ 
where $\kappa : \H \rightarrow \mathscr{H}_1$ and $\mathscr{H}_1$ is the 
one particle Hilbert space with the inner product satisfying:
\begin{equation}
\label{stdsub}
\left( \kappa(h_1), \kappa(h_2) \right)_{\mathscr{H}_1} = \alpha(h_1,h_2) + i \beta(h_1, h_2)
\end{equation}
and $\kappa(\H) + i \kappa( \H)$ is dense in $\mathscr{H}_1$. We give a discussion of this in Appendix~\ref{app:oneparticle}.

  While we would like to work exclusively from the boundary theory point of view, we make some comments about how this works after assuming the standard AdS/CFT duality \cite{Maldacena:1997re,Witten:1998qj,Banks:1998dd,hamilton2006holographic}.
  By our general expectations of AdS/CFT, there will be a dual spacetime that is asymptotically AdS,
  with a bulk field $\Phi_\alpha$ associated to each single trace field $\phi_\alpha$, and using the covariant phase space formalism \cite{Crnkovic:1986ex,Lee:1990nz,Iyer:1994ys} the symplectic vector space $\tilde{H}$ is defined by the space of linearized
  solutions to the equations of motion for $\Phi_\alpha$ about the background geometry (these form the tangent space of the phase space for the fully non-linear theory), with boundary conditions at the AdS boundary and modulo linearized diffeomorphisms and gauge transformations that vanish at the boundary. 
  The boundary source $\int h^\alpha \phi_\alpha$ defines a solution:
  \begin{equation}
  h \rightarrow \Phi_\alpha(Y) = \int_{S_{d-1} \times \mathbb{R} } d\mu(x) h^\beta(x) G^{R-A}_{\alpha\beta}(x,Y) 
  \end{equation}
  to the bulk equations of motion where $R-A$ refers to the retarded minus advanced bulk to boundary Green's function. 
  The subtraction guarantees that the boundary conditions are such that the growing (source) term in $\Phi_\alpha$, near the AdS boundary, vanishes. Hence $\Phi_\alpha$
  is a solution to the bulk equations of motion with no boundary sources. 
  The symplectic form:
  \begin{equation}
  \label{stdsf}
 \beta(h_1,h_2) =  \int_\Sigma j( \Phi_\alpha^1, \Phi_\alpha^2 )
  \end{equation}
 is the standard symplectic flux associated to the quadratic fluctuations of the fields around the background spacetime and associated
 to some local gravitational effective theory. 
 In this expression $\Sigma$ is a bulk (AdS) Cauchy slice that we may anchor at any boundary Cauchy slice on the conformal boundary of AdS . If we pick the boundary anchor to lie at $t = t_c$ after the compactly supported
  sources are vanishing, \eqref{stdsf} only sees the retarded solution and we may replace $G^{R-A} \rightarrow G^R $. We then deform $\Sigma$ to the past so that it approaches the AdS boundary with  $t< t_c$.
  In this case the symplectic flux (which is now rotated so that the radial direction is ``time'') pairs near boundary sources $h$ with responses, and since we only have the retarded function there is the standard retarded response:
 \begin{equation}
  \beta( h_1, h_2) = \int_{S_{d-1} \times \mathbb{R} } d\mu(y) \theta(t_c - t_y)    h_1^\alpha(y) \int_{S_{d-1} \times \mathbb{R} } d\mu(x)  h_2^\beta(x)  i \left< \phi_\alpha \phi_\beta  \right>_{R,\sigma} - (1\leftrightarrow 2)
  \end{equation}
But since the sources are localized before $t_c$ we may drop the step function, then the two terms together give \eqref{eq:beta}.    
While the space of solutions $\Phi^\alpha \in \tilde{H}$ for given $h$ does not give all solutions, we expect it gives a dense subspace $H \subset \tilde{H}$ which will be sufficient for our purposes. 
In particular when we pick a state with some symmetric form $\alpha(h_1,h_2) $ and work out the completion $\overline{H}$, we will indeed encounter more general solutions. It must in fact include the symplectic space of linearized solutions for the bulk theory $\tilde{H}$, at least outside any bulk horizons.
That is, the single particle Hilbert space $\mathscr{H}_1$ discussed around \eqref{stdsub} is cognizant of the bulk wavefunctions.  This will be true also for sub-regions $O$ where the resulting one particle
structure will be cognizant of bulk wavefunctions in the causal wedge for $O$. In this case the modes are entangled with modes behind the causal horizon as discussed in Appendix~\ref{app:oneparticle}.

 \begin{figure}[h!]
\centering
\includegraphics[scale=.3]{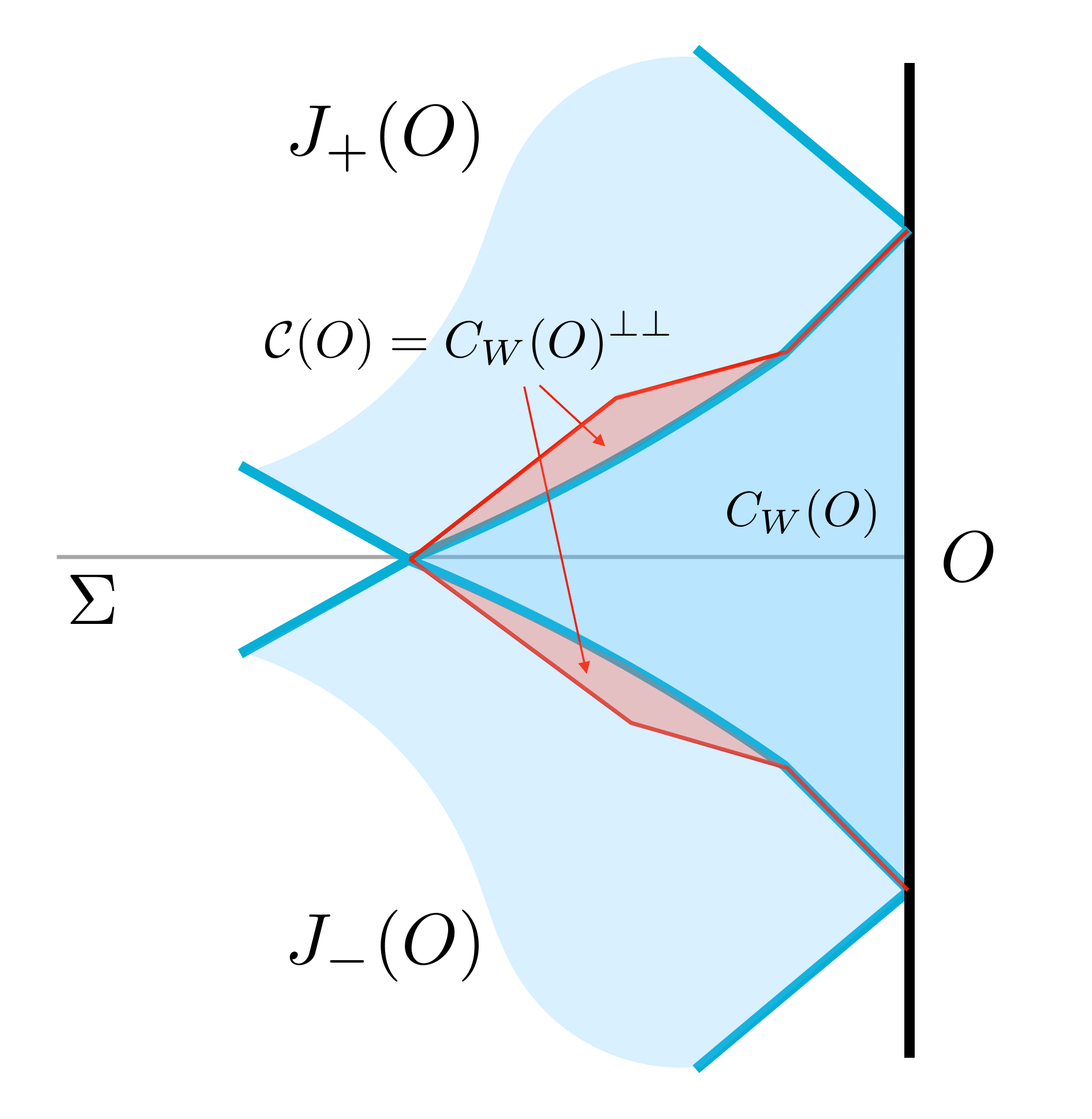} \hspace{2cm} \includegraphics[scale=.3]{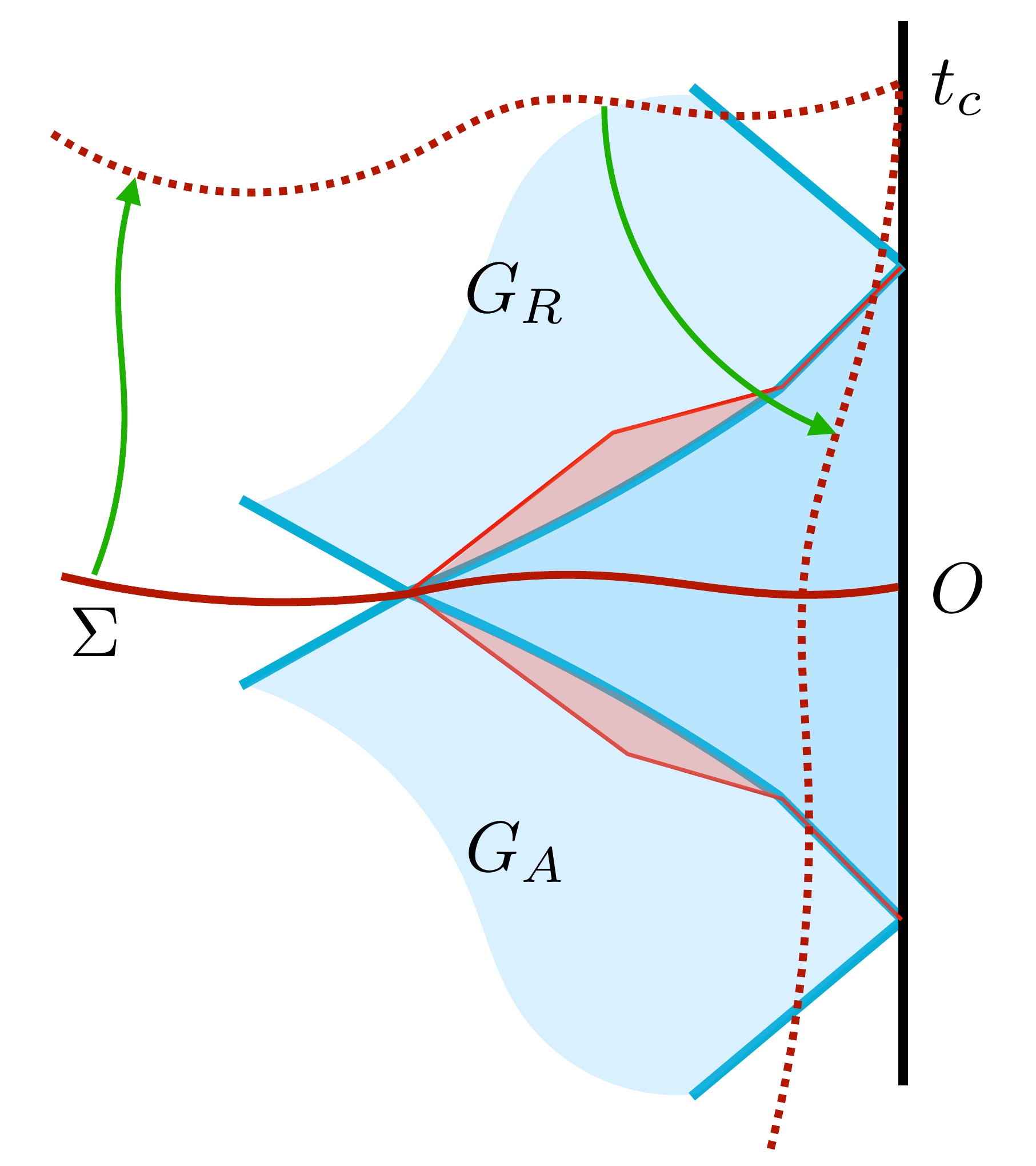}
\caption{ The case of AdS/CFT. (\emph{left}) The causal wedge is the spacetime region $J^+(O) \cap J^-(O)$. The algebra of fields will be associated to the causal completion of this regions, dubbed
the causal domain. This is a slightly larger region as discussed in Appendix~\ref{app:causal}. The picture here is a cartoon showing a cross section of a $D>2$ spacetime through the causal wedge. (\emph{right}) A cartoon of the symplectic flux deformation argument is given in the main text.
The bulk symplectic form is evaluated on a time slice $\Sigma$, but can be deformed to the boundary where it becomes a product of sources and response.}
\end{figure}

There is a natural commutative product defined on the Weyl operators which is simply:
\begin{equation}
\label{timesproduct}
w(h) \times w(g) = w(g + h)
\end{equation}
We also have $w(h)^\star = w(-h)$. After allowing
linear sums of the linearly independent Weyl unitaries, we get an abelian algebra denoted $\mathcal{W}_\times$. 
Such an abelian algebra is state independent and will be the same for different large-$N$ sectors of the theory. This should allow us to compare different sectors, a subject we do not go into great detail on in this paper. 

Given some non-degenerate symplectic form $\beta(\cdot,\cdot)$ we can alternatively turn $\mathcal{W}_u$ into a net
of (non-abelian) unital C$^\star$ algebras \cite{Slawny:1972iq,petz1990algebra}.
We give an associative algebraic structure to the Weyl operators $\mathcal{W}_u$
\begin{equation}
\label{weylalg}
 w^1 w^2 =  e^{i \beta(h_1,h_2)} w^1 \times w^2\, \qquad  w_i = w(h_i)
\end{equation}
and extend this linearly to the full set $\mathcal{W}_\times$. We set $w(0) = 1$ the identity.
 Since the product is different we will denote the resulting $^\star$-algebra as $\mathcal{W}$. 
 There is a unique $C^\star$ norm defined on such an algebra, as we review in Appendix~\ref{app:weyl}. 
The norm completion is then
the  unital $C^\star$ algebra $\overline{\mathcal{W}}^{\| \cdot \|}$. We will continue to denote this algebra as $\mathcal{W}$ for simplicity. 
We will argue later that $\mathcal{W}$ inherits the net structure  over $K$ of the vector spaces $\H(O)$ - this includes the isotony and additivity properties. Thus
$\mathcal{W}$ is an additive net of $C^\star$ algebras, see Definition~\ref{def:C}

Given such a $C^\star$ algebra we can also define a state as a positive linear functional $\sigma : \mathcal{W} \rightarrow \mathbb{C}$ which is positive
on positive elements of the $C^\star$ algebra and such that $\sigma(1) = 1$. 
Given a positive symmetric form $\alpha$, as in \eqref{acgreen}, there is a unique gaussian state (dubbed quasi-free in the literature):
\begin{equation}
\label{assumpo}
\sigma(w(h)) = \exp( - \frac{1}{2}\alpha(h,h))
\end{equation}
And given a state $\sigma$ and $C^\star$ algebra $\mathcal{W}$ we can form the GNS representation $\pi$ and define
the net of von Neumann algebras:
\begin{equation}
\mathcal{C} = \pi_\sigma(\mathcal{W})''
\end{equation}

Equivalently we can work on the Fock space $\mathscr{H}_F = e^{\mathscr{H}_1}$ for the single particle Hilbert space discussed above. We represent:
\begin{equation}
\pi_F( w(h)) = \exp( i (  a( \kappa(h) )   + a(\kappa(h) )^\dagger ))
\end{equation}
where the creation and annihilation operator satisfy the usual relations $[ a(\psi'), a(\psi)^\dagger ] =  \left( \psi', \psi \right)_{\mathscr{H}_1}$
such that:
\begin{equation}
\left< 0 \right| \pi_F( w(h)) \left| 0 \right> = \exp(- \frac{1}{2} \alpha(h,h) )
\end{equation}
where $\left| 0 \right>$ is the vacuum of the Fock space. 
The code subspace will be this Fock space. 
It is guaranteed that $\pi_\sigma$ is equivalent to the Fock representation, so we can work with either. 

\subsection{Superselection sectors}

Let us take a step back and consider the possibility of the appearance of superselection sectors. Since it is quite natural to work with $\mathcal{W}$ a $C^\star$ algebra there are potentially different folia of states related to disjoint representations
of the $\mathcal{W}$ algebra. These would be distinct from the Fock representation. For the local $\mathcal{W}(O)$ it is expected that some local short distance condition on the state, such as the Hadamard condition, is enough to fix a unique folia \cite{verch1994local,Hollands:1999fc}. Hence this is a question for the global algebra $\mathcal{W}$. Adhering to the local QFT philosophy of algebraic QFT, one usually constructs such a global algebra from the local ones. In this case it is called the quasi-local algebra and can be constructed mathematically as an inductive limit of the local algebras \cite{haag2012local,Pontello:2020csg}. 
Such a construction is only possible for directed nets $K$, which is not the case for the Lorentzian cylinder. Nevertheless the algebra we have defined, $\mathcal{W}$, can still be considered as a stand-in for the quasi-local algebra with the property that any consistent (under restriction $O_1 \subset O_2$) family of representations on $\mathcal{W}(O)$  can be lifted to a unique representation on $\mathcal{W}$. See \cite{fredenhagen1990generalizations} for more details on this generalization of the quasi-local algebra that is important for developing the theory of superselection sectors.

 The different representations would give different von Neumann algebras via the double commutant operation, potentially of different types. 
The folia of a representation $\pi$ is defined as the collection of states given by density matrices on the Hilbert space of the representation \cite{haag2012local}. 
Such states are also referred to as being $\pi$-normal \cite{bratteli2012operator} since they are normal states for the von Neumann algebra $\pi(\mathcal{W})''$. 
Two representations are disjoint if there is no intertwiner that relates them. Rather a folia is determined by the quasi-equivalence 
relations on representations: two representations $\pi_i$ are quasi-equivalent if their induced von Neumann algebras are naturally isomorphic (the isomorphism must be consistent on $\pi_i(\mathcal{W})$). This equivalence allows for such things as tensoring in a reference or projecting on the commutant (without projecting on the center should one exist.)

The main representation of interest $\pi_\sigma$, associated to $\alpha$, corresponds to the folium called the Fock representation. 
Physically we should imagine this is generated by the single
particle Hilbert space of quantum fields in the bulk spacetime. Other disjoint representations would have an infinite number of excitations from this Fock space. These would be excluded by holographic bounds and furthermore would require completely different large-$N$ codes to describe, with different commutation relations $\beta$. As such in the large-$N$ limit a different kind of sector physics will arise compared to the standard theory of superselection sectors. We will return to the question of such large $N$ sector physics in the Discussion section. 
For now we simply note that we stick to the Fock representation.
 
 The one caveat being that $\beta$ may be degenerate and this actually does lead to a primitive form of superselection sectors associated to the possible existence of a center. 
 We discuss this possibility next.

\subsection{Stress tensor and currents}
\label{sec:deg}

Some important examples of single trace fields
are the stress tensor and conserved currents:
\begin{align}
h^\alpha \rightarrow h^{\mu\nu} \, \qquad \phi_\alpha^{(N)} \rightarrow \tau_{\mu\nu}^{(N)} \equiv N_T^{-1/2}  T_{\mu \nu}^{(N)} 
\\
h^\alpha \rightarrow a^\mu \, \qquad \phi_{\alpha}^{(N)} \rightarrow j_\mu^{(N)} \equiv N_J^{-1/2} J_\mu^{(N)} 
\end{align}
where $N_T, N_J \propto N^\#$ are the coefficients of the vacuum stress tensor correlation function that is fixed by conformal invariance up to this number. 
The leading $N$ dependence is assumed to be a power of $N$, where $\# > 0$.  For simplicity, we will assume that this scaling with $N$ also determines the leading $N$ dependence of the connected thermal correlation function of the stress tensor and currents for fixed temperature. 
This is the case in many models of interest.  Since $T_{\mu\nu}^{(N)}$ is symmetric traceless and conserved we may regard $h^{\mu\nu}$ as the space of smooth symmetric functions
modulo total derivatives $\partial^{(\mu} b^{\nu)}$ and shifts by the cylinder metric $b \eta^{\mu \nu}$
for functions $b^{\mu},b$.  That is $h^{\mu\nu}$ represents such an equivalence class. The net structure is
then defined so that $h^{\mu\nu} \in \H(O)$ if one of its representatives is localized in $O$. A similar discussion applies to the currents and sources $a^\mu \cong a^\mu + \partial^\mu \lambda$.

The stress tensor and current normalizations are independently fixed by the Ward identity, and so we need to divide by $N_{T,J}^{-1/2}$ (after subtracting
the one point function)  in order to have a large-$N$ limit where the two point functions of the single trace fields are finite.
Typically this will suffice for the local stress tensor and current operators and the resulting large-$N$ limit will be dual to the metric
fluctuations and bulk gauge fields in a putative gravitational dual.

 However the generators of the conformal symmetries that are constructed using $T_{\mu\nu},J_\mu$ can have different large-$N$ limits since
 their fluctuations could be quite different to those of the local fluctuations in $\tau_{\mu\nu}, j_\mu$. 
For example the conformal generators:
\begin{equation}
H^{(N)}_\xi = \int_{S^{d-1}} d \Sigma^\nu \xi^\mu T_{\mu \nu}^{(N)}
\end{equation}
where $d\Sigma^\nu$ is the volume form for the $\Sigma = S^{d-1}$ Cauchy slice of the cylinder and $\xi$ is some conformal Killing vector
of the cylinder. The Hamiltonian arises from $\xi = \partial_t$. Since these annihilate the vacuum,  an asymptotic code based on the vacuum 
would be expected to contain the corresponding large-$N$ operator $H_\xi$ despite the predicted large-$N$ counting based on $\tau_{\mu\nu}$. 
In particular we will see that the origin of $H_\xi$ is not that of:
\begin{equation}
U^{(N)}_{\xi} \equiv \int_{S^{d-1}} d\Sigma^\mu \xi^\nu \tau_{\mu\nu}^{(N)} = N_T^{-1/2} H^{(N)}_\xi
\end{equation}
which is frozen in the large-$N$ limit in vacuum due to the pre-factor. Rather $H_\xi$ can be understood as arising from quadratic combinations of the other large-$N$ fields. 
In AdS/CFT these composites are simply the integrated bulk stress tensor, at least in the vacuum setting. These composites are constructed directly from the symmetries
of the bulk operator algebras, and the assumption that these symmetries are unitarily represented on the bulk Hilbert space.\footnote{In fact the conformal symmetries can be reconstructed from the modular operators of the bulk, under a seemingly unrelated assumption that the causal and entanglement wedges are equal for $O$ any double cone boundary region. See the Discussion section. } Hence the bulk theory will typically reorganize how such symmetry generators arise and we must be careful when we treat these operators in the large-$N$ limit. 

A signature of these difficulties can be found in the previously constructed Weyl algebra as a degeneracy of the symplectic form $\beta$ \eqref{eq:beta}.
These arise from conserved charges in the stress tensor and the currents that are (approximately) preserved by the defining state. 
Let us focus on $\xi = \partial_t$ which is sufficient to illustrate the main point. We will return to the general case at the end of this section. 
These degeneracies potentially lead to a center for the Weyl algebra that
we construct the above due to the limit
of commutation relations:
\begin{equation}
 \left[U^{(N)} , \phi^{(N)} \right] =  i N_T^{-1/2} \partial_t \phi^{(N)} 
\end{equation}
Thus so long as the background is time translation invariant (to leading order) $\partial_t \left< \phi^{(N)}  \right>_{\sigma_N} = \mathcal{O}(1)$, then after using this to do one point subtraction we conclude that
$ \left[U^{(N)} , \phi^{(N)} \right]  \rightarrow 0$ as an operator in the large-$N$ limit \cite{Witten:2021unn,Chandrasekaran:2022eqq}.
Or in other words, there is a degeneracy for the pre-symplectic form: 
\begin{equation}
\beta( h^{\mu\nu} =  \xi^{(\nu} \delta_\Sigma^{\mu)} , h_2 ) = 0
\end{equation}
for all $h_2 \in \H$, where $\delta_{\Sigma}^\nu$ is a $d-1$ form delta function on $\Sigma$. If the background were not time translation invariant then large-$N$ counting would give $\partial_t \left< \phi^{(N)}  \right>_{\sigma_N} = \mathcal{O}(N)$ and there would be no degeneracy and no center. We assume this is not the case. 

We can work out the limiting central operator:
\begin{equation}
\label{defU}
 \exp( i s U )
= \exp( i s \int_{S^{d-1}} d\Sigma^\mu (\partial_t)^\nu \tau_{\mu\nu})
\end{equation}
for all $s \in \mathbb{R}$ and these will commute with all $w(h)$. Thus it will appear as a center for the von Neumann algebra $\mathcal{C}$.  
In fact the correct mathematical framework to deal with this center is that of superselection sectors. The set of allowed superselection sectors are determined by
all possible limiting states on $\mathcal{W}$ and in particular the fluctuations of $U$ in these states. That is the possible superselection sectors will be determined dynamically.\footnote{We dismissed superselection sectors previously, however the existence of a center necessitates a reassessment. The sectors will however \emph{only} arise from the center with the previous discussion applying to the non-central part.} 

There are two natural possibilities that we consider:
\begin{itemize}
\item The central variable $U$ would be fixed to some constant $U=g$ in the state of interest:
\begin{equation}
\sigma^g( e^{i sU }) = \int d U  \mu_g(U) e^{ i s U} \qquad \mu_g(U) = \delta( U- g)
\label{deltamu}
\end{equation}
This would happen in the vacuum sector and other low energy sectors. It will also arise in so called fixed area states \cite{Dong:2018seb,Akers:2018fow} that we discuss below. 
In these case $e^{i s  U}$ is naturally removed in the GNS representation where it takes a fixed value. 
\item The central variable $U$ is determined by some function that then describes the energy fluctuations on scales of order $\mathcal{O}(N_T^{1/2})$, which then
must be Gaussian:
\begin{equation}
\label{gaussianmu}
\sigma( e^{i sU }) = \int d U \mu(U) e^{ i s U}\,,
\qquad \mu(U) = (c/\pi)^{1/2} \exp( - c U^2)
\end{equation}
This happens for the canonical ensemble for temperatures $T > T_{c}$ above the CFT deconfinement transition, dual to Hawking-Page transition in holographic theories, where we assume $T_{c} = \mathcal{O}(1)$. Thus the center will be excited and $\mathcal{Z}(\mathcal{C})
= \{ e^{ i \mathbf{s} U} \}_{\mathbf{s} \in \mathbb{R}}$ will be non-trivial.
  \end{itemize}
  
 In fact, as we discuss in the Appendix~\ref{app:center}, there is a natural class of states  $\sigma_\mu$ determined by any finite measure $\mu$ on $\mathbb{R}$. 
The abelian von Neumann algebra associate to the GNS representation $\pi_{\sigma_\mu}$ is given by $L^\infty(\mathbb{R}, \mu)$. States on this folium correspond to other measures $\nu$ on $\mathbb{R}$ that are absolutely continuous with respect to $\mu$: $\nu \ll \mu$.

We can work out which central state to use by taking the relevant limit.
For a thermal state:
\begin{align}
\sigma_N^\beta( e^{i s (X^{(N)}- N_T^{-1/2} \bar{E}_N )}) &= 
 \int dE \rho(E) \exp( - \beta E + i s N_T^{-1/2} (E-\bar{E}_N))
 \end{align}
 where $\bar{E}^\beta_N$ is the average energy of the canonical ensemble at temperature $1/\beta$
 and for $T > T_c$ we can approximate the density of states via $\rho(E) \propto e^{S_N(E)}$
 with $S_N(E) \sim  N_T^{\frac{1}{d}} E^{\frac{d-1}{d}}$.
Taking the limit we find \eqref{gaussianmu} with $c = \mathcal{O}(T^{-(d+1)})$ determined by the limiting specific heat. 
The resulting center represents the fluctuations of the energy $\delta E \sim N_T^{1/2}$ in the thermal state. 
 These fluctuations are large enough that a bulk energy operator
that can resolve sub-leading $\mathcal{O}(1)$ energy differences is not present here. 
Such a large center was predicted to arise in holographic codes already back in \cite{Harlow:2016vwg,Dong:2018seb,Akers:2018fow} where $U$ is thought of as the (re-scaled) area operator in the bulk. Although we point out that such an area operator will only be relevant for boundary regions that realize type-I algebras at fixed $N$ in the microscopic theory. Otherwise we do not expect an exact center to arise even in the limit of large $N$.

In the thermal setting we can also discuss a different family of representations, often called fixed area states where again we find a $\delta$ function measure. This will give an example where a different folium of states
arises for the same Weyl algebra. Consider a generalization of the canonical thermal state:
\begin{equation}
\label{fixedg}
\sigma_N^g(\cdot) = (Z_N^f)^{-1} {\rm Tr}_{\mathscr{K}_N} e^{ - \beta H^{(N)} }  f( N_T^{-1/4}(H^{(N)} - E_g) ) (\cdot)
\end{equation}
where $f(x)$ is a fixed Gaussian peaked around $0$ with standard deviation of $\mathcal{O}(1)$. We pick $E_g = \bar{E}^{(N)}_\beta + g N_T^{1/2}$ where
$ \bar{E}^{(N)}_\beta$ is the average energy of the canonical thermal state at fixed temperature and $g$ is an offset that labels some superselection sector. 
The function $f$ modulates the thermal ensemble on the scale of $\delta E \sim N_T^{1/4}$ with average offset by $g N_T^{1/2}$. Any scale $\delta E \sim N_T^{\alpha}$
with $0 < \alpha < 1/2$ would have been sufficient.\footnote{A different choice was used in \cite{Chandrasekaran:2022eqq} since they wanted to study $\mathcal{O}(1)$ energy fluctuations that could then be probed by an energy operator. The inclusion of this energy operator in the bulk algebra gave rise to a crossed product algebra. We do not include such an $\mathcal{O}(1)$ energy operator in our analysis so the crossed product does not arise. This is a perfectly valid choice for a code that is blind to such energy fluctuations. See \cite{toappear} where we show how to include such an energy operator in the context of our asymptotically isometric codes.} Since $U$ resolves energies on the scale of $N_T^{1/2}$ the gaussian limits to a delta function
so the limiting state has measure $ \mu_g(U) = \delta( U- g)$. Since the $\delta$ function measure is not absolutely continuous with respect to the Gaussian measure, these
states are in a different folium for $\mathcal{W}$ as compared to the limit of the canonical ensemble.

These states have the interpretation as a fixed area state with fluctuations that are sub leading to the canonical ensemble, but the
time uncertainties $\delta t \sim N_T^{-1/4}$ are still small enough that we can trust there could be some emergent notion of a classical bulk dual. The large $N$
limit of these states give rise to the von Neumann algebra: $\mathcal{C}_g = \pi_{\sigma^g}(\mathcal{W})''$ 
acting on $\mathscr{H}_g$ with no center $\mathcal{Z}(\mathcal{C}_g) = \mathbb{C} 1$ (ignoring any other charges). Since shifts in $g$
correspond to sub-leading shifts in the average microcanonical energy - these algebras are all unitarily equivalent for different values of $g$. 
Hence it is more natural to label the algebra $\mathcal{C}_0 = \mathcal{C}_g$ and Hilbert space $\mathscr{H}_0 = \mathscr{H}_g$.  Based on general discussions found in \cite{Leutheusser:2021frk,Leutheusser:2021qhd}, we expect that
$\mathcal{C}_0$ is a type-III$_1$ factor. 
 Since the $\delta E \sim N_T^{1/4} \rightarrow \infty$ the energy fluctuations
are still large, however we do not see these fluctuations on the code since there is simply no energy operator. 
The von Neumann algebra for the canonical ensemble $\mathcal{C}$ and of the fixed area state $\mathcal{C}_0$ 
are related via $\mathcal{C} = \mathcal{C}_0\, \overline{\otimes}\, L^\infty(\mathbb{R})$
with $\mathscr{H} = \mathscr{H}_0\, \overline{\otimes} \, L^2(\mathbb{R})$.

 \subsubsection{General considerations}
 
\label{sec:deg2}
 
Now let us consider the more general case, with more than one symmetry generator.  We first remind the reader of a subtlety we have ignored for now: it may be that $\beta$ is non-degenerate but under norm completion of the vector space $H$
the complete $\beta$ may be degenerate. For this reason we will work with the completed quantities for the rest of this section. 
To complete the vector space we replace $H$ by it's closure induced by the inner product given by $\alpha$.
It is possible that $\alpha$ is already degenerate on the degenerate subspace  of $\beta$, in which case we pick some other $\alpha' \geq \alpha$ where $\alpha'$ is non-degenerate.\footnote{
Positive symmetric forms have an order relation given by $\alpha'(h,h) \geq \alpha(h,h)$ for all $h \in H$. An example where $\alpha$ is degenerate is the vacuum state.
In which case we could pick $\alpha' = \alpha + \epsilon \alpha_\beta$ where $\alpha_\beta$ derives from the anti-commutator Green's function \eqref{acgreen} for a thermal state above the
deconfinement temperature and $\epsilon > 0$ is any number. The specific form of $\alpha'$ will not matter for our construction. }
We use this to complete $H$ as well as $\beta,\alpha$. For the rest of the section we use the same symbol $H$ and $\beta,\alpha$ to denote this completion.

The degenerate directions of $\beta$ are determined by the unbroken symmetry generators about some background. This includes angular momentum charges and any other global charges. The symplectic form $\beta$ 
will have a vector subspace $\H_0 \subset \H$ which is degenerate
with $\beta(h_0,h) = 0$ for $h_0 \in \H_0$ and for all $h \in \H$. 
We expect this will be a finite dimensional vector space $\H_0 \cong \mathbb{R}^{n_Q}$ for the number
of symmetry generators.\footnote{$n_Q$ will be different for vacuum and thermal codes. 
For thermal codes we expect $n_Q = 1 + d(d-1)/2 + {\rm dim} \mathfrak{g}$ for the unbroken energy, $SO(d)$ angular momentum plus the dimension of the Lie algebra $\mathfrak{g}$
for the global symmetry group. And for the vacuum code
we expect $n_Q = (d+2)(d+1)/2 +  {\rm dim}  \mathfrak{g} $ for the unbroken conformal generators plus global symmetry generators.  }

All of these symmetry generators will have large fluctuations in sufficiently high temperature thermal states
and so the same story as the energy fluctuations will apply. 
We can also consider a state $\sigma^g$ that fixes all the potentially central operators with values determined by $g = \{ g_i \}_i$.
In the vacuum state all the charges will automatically be fixed.

We consider the quotient space $\H/\H_0$ where $[h] = [h+ h_0]$ for $h_0 \in H_0$. The symplectic form is well defined on $\H/\H_0$ being non-degenerate there:
$\beta([h_1],[h_2]) \equiv \beta(h_1,h_2)$. The quotient space has a natural net structure with isotony and causality given by $h \in \H/\H_0(O)$ if there
exists some $h \in [h]$ with $h \in \H(O)$. 
We have the short exact sequence:
\begin{equation}
0  \rightarrow \H_0  \mathop{\rightarrow}^{\iota} \H \mathop{\rightarrow}^{[\cdot]} \H/ \H_0 \rightarrow 0
\end{equation}
$\H_0$ is a closed linear subspace of $\H$. Thus we can decompose $\H = \H_0 \oplus \H_0^\perp$ for the closed perpendicular subspace with respect to 
the inner product $( \cdot, \cdot)_{\H}' = \alpha'(\cdot,\cdot)$. Then  define $p_0 : H \rightarrow H_0$ by orthogonal projection onto $H_0$, using  $( \cdot, \cdot)_{\H}'$. This satisfies $p_0 \circ \iota ={\rm Id}_{H_0}$.
Then the splitting lemma implies that $ \H_0 \oplus \H_0^\perp = \H \cong {\rm im}( \iota)  \oplus {\rm ker}( \iota)
= \H_0 \oplus \H/\H_0 $  where the isomorphism $\varphi : \H \rightarrow \H_0 \oplus \H/\H_0$ satisfies 
$\varphi \circ \iota (h_0) = h_0 \oplus 0$ and $[\cdot] \circ \varphi^{-1} (h_0 \oplus h_\perp)= h_\perp$. While this construction depends on our choice $\alpha'$ 
it is important to note that we are not modifying the state of interest here, we are just ensuring the existence of $\varphi$. Since there is
no canonical choice, any other choice will do just as well. Note that $\varphi$ does not preserve the net structure in the
sense that $\varphi^{-1}( 0 \oplus [h]) \notin H(O)$ with $[h] \in \H/\H_0(O)$ and this could lead to difficulties. 

Since the test function space along with the symplectic form $\beta$ has this direct sum structure, the Weyl algebra $\mathcal{W}_0$ will be a direct product \cite{Hollands:2017dov} of the Weyl algebra
based on $\H/\H_0$ and an abelian Weyl algebra based on $H_0$. This is similarly true for
the von Neumann algebra:
\begin{equation}
\label{natis}
\mathcal{C} \cong \mathcal{C}_0 \overline{\otimes} L^\infty(\mathbb{R}^{n_Q})
\end{equation}
represented on $\mathscr{H} \cong \mathscr{H}_0 \overline{\otimes} L^2(\mathbb{R}^{n_Q})$. Again we note that both $\mathcal{C}_0$  and $\mathcal{C}$ have
the net structure over $K$. But this net structure is not compatible, implying for example that the isomorphism $\Phi$ in \eqref{natis} induced by $\varphi$ does
not map $\Phi^{-1}( c \otimes 1) \notin \mathcal{C}(O)$
for $c \in \mathcal{C}_0(O)$.  Or in other words $\Phi(\mathcal{C}) \subset \mathcal{C}_0 \overline{\otimes} L^\infty(\mathbb{R}^{n_Q})$ defines a different net over $K$
as compared to $\mathcal{C}_0 \otimes 1$. 
To see this difference in another way, note that $\bigvee_O \mathcal{C}_0(O) \otimes 1$ for all local $O$
does not generate the entire algebra. Whereas $\bigvee_O \mathcal{C}(O)$ does.

Note that even for the vacuum representation we have to confront this degeneracy issue. 
 In particular since there are no fluctuations of the symmetry generators, they will induce only the fixed charge
sector $\pi_g$ for some $g_i =0$. 
One might hope then that one can exclusively work with $H/H_0$ and correspondingly $\mathcal{W}_0$ and $\mathcal{C}_0$. However 
this is not always possible since we need to map the Weyl operators to
 to the finite $N$ theory causing a problem since the quotient effectively arises only dynamically as $N \rightarrow \infty$. 
In particular this issue will need to be addressed when constructing our encoding maps $V_N$ in the presence of these charges.
Here we need a way to map the quotient space $\H/\H_0$  back to $\H$, since the test functions at finite $N$ live naturally in $\H$ without the quotient. 
To do this we need to specify a unique representative $p : \H/\H_0 \rightarrow \H$ such that
the kernel of this map is trivial. We use $p(\cdot) = \varphi^{-1} ( 0 \oplus \cdot)$. 

Consider the representation $\pi_{\sigma^g}$ based on the GNS
construction for states with fixed charges:
\begin{equation}
\sigma^g( w( h_i) ) = e^{ i h_i g^i }  \, \qquad h_i \in \H_0
\end{equation}
then:
\begin{equation}
\pi_{\sigma^g}( w(h) )  = \pi_{\sigma^g}( w(h + h_0))
\end{equation}
for any  $h_0 \in H_0$. This implies that $\pi_{\sigma^g}$ is not a faithful representation. 
In particular $\pi_{\sigma^g}( w(h) )  = \pi_{\sigma^g}( w \circ p[h]  )$.

\subsection{Formal definitions and assumptions}

Let us attempt to formalize the discussion so far with some definitions. Firstly, associated to the space of test functions $\H$ over the single trace fields, there is commutative $^\star$-algebra generated by unitary Weyl operators $\mathcal{W}_u \subset \mathcal{W}_\times$, with commutative product \eqref{timesproduct},
associated to $ h \in \H$. That is the span of a finite number of elements from $\mathcal{W}_u$
\begin{equation}
\mathcal{W}_\times = {\rm Span} (w(h), h \in \H)
\end{equation}
which are linearly independent for distinct $h$. The Weyl unitaries maintain the net structure with $\mathcal{W}_\times(O)$ generated by unitaries $\mathcal{W}_u(O)$ associated to test
functions $\H(O) \subset \H$. Based on this we give the following definition:

\begin{definition}[Single trace algebra]
\label{def:sta}
Consider a sequence of CFTs described by the net $\mathcal{M}_N$. 
We say that this sequence has a single trace algebra if the pair $(\H,\mu_N)$ exists,
where $\H : K \supset O \rightarrow \H(O) $ is an isotonic net of real vector spaces valued in the single trace test functions on the Lorentzian cylinder $K$,  \emph{compactly} supported inside the the causally complete sub-regions $O$ (possibly equal to $K$), and $\mu_N$
is a sequence of unitary preserving $^\star$-maps on the associated set of Weyl unitaries $\mu_N: \mathcal{W}_u(O)
\rightarrow \mathcal{M}_N(O)$ such that $ \mu_N(w(h)^\star)  = \mu_N(w(h))^\star$ for all $h \in \H$.
\end{definition}

Given the above structure we set:
\begin{equation}
w_N(h) \equiv \mu_N( w(h))  \qquad \, \qquad \forall \, h \in \H
 \end{equation}
 where $w_N$ was used in the previous subsection.\footnote{While it would be consistent with the definition that these maps are trivial (with $\mu_N(w) = 1$) this case will be excluded once we demand the existence of a single trace sector associated to a sequence of states.} 
 The map $\mu_N$ obviously does not preserve the commutative algebra of $\mathcal{W}_\times$ but since it maps unitaries to unitaries we expect it to be somewhat well behaved. 
We might be tempted to turn these maps into linear maps via linear extension. However we will need to subtract the one point functions before we do this, so this
will have to wait. Instead:

\begin{definition}[Large-$N$ sector] 
\label{def:largeN}
Consider a sequence of CFTs with a single trace algebra as in Definition~\ref{def:sta}. We say that this sequence has a large-$N$ sector for a sequence of normal states $\sigma_N
\in (\mathcal{M}_N)_\star$, if there is a limiting (possibly degenerate) symplectic form $\beta(\cdot, \cdot) : \H \times  \H  \rightarrow \mathbb{R}$ and a limiting positive symmetric form
$\alpha(\cdot, \cdot) : \H \times  \H  \rightarrow \mathbb{R}$
such that:
\begin{itemize}
\item[(a)]  The following limit applies: 
\begin{equation}
\lim_{N \rightarrow \infty} \frac{\sigma_N (w^1_N w^2_N  \ldots w^k_N )}{\sigma_N((w^1\times w^2 \times \ldots w^k)_N)}
= \prod_{i < j} \exp( i \beta(h^i,h^j) ) \qquad w^i = w(h^i)
\end{equation}
for all finite subsets of Weyl unitaries $\forall \{ w^1,w^2 \ldots w^k \} \in \mathcal{W}_u$ with $k \geq 2$.
Here we have set $w_N^i = \mu_N(w^i)$ 
and $(w^1\times w^2 \ldots w^k)_N = \mu_N( w^1 \times w^2 \ldots w^k)$.

\item[(b)] The following limit applies: 
\begin{equation}
\lim_{N \rightarrow \infty} \frac{\sigma_N( w^1_N) \sigma_N(w^2_N)}{ \sigma_N( (w^1 \times w^2)_N)} 
= \exp( \alpha(h^1,h^2)) \qquad  \forall w_{1,2} \in \mathcal{W}_u
\end{equation}
\item[(c)]  Upon restricting $\alpha,\beta$ to $H(O) \subset H$ for a local region $O$ then $(H(O),\alpha|_O,\beta|_O)$ forms a factorial standard subspace. See \cite{Longo:2021rag} and a summary in Appendix~\ref{app:oneparticle}. 
\end{itemize}
\end{definition}

It is important we only demand (a) for a fixed number $k$ of the Weyl operators that is independent of $N$, which is a standard consideration when taking the large-$N$ limit.
While we allow for degenerate $\beta$ for future considerations, for now we will continue to assume that $\beta$ (and its completion) is non-degenerate. 

As a corollary to these definitions, if the limits apply, we can endow the linear span $\mathcal{W}$ with an algebraic structure as in \eqref{weylalg}.
This $C^\star$-algebraic structure inherits the causality requirement of the net structure of $\mathcal{M}_N(O)$ and the Weyl unitaries $\mathcal{W}_u(O)$ since
when $w_1 \in \mathcal{W}_u(O_1), w_2 \in \mathcal{W}_u(O_2)$ for $O_1 \subset O_2'$, then:
\begin{equation}
\beta(h_1,h_2)  = \beta(h_2,h_1)\,\, \implies \,\, \beta(h_1,h_2) = 0
\end{equation}
implying that $[w_1,w_2] =0$. Isotony is also inherited from isotony of $\mathcal{W}_u$. Finally additivity arises since we can always decompose a smooth
function supported in the compact $O$ into a sum of smooth functions supported in $O_i$ with $i \in I$. Hence $w(h_O) = \times_{i \in I} w(h_{O_i})$ which can be turned
into the $C^\star$ algebra product over local Weyl elements $w(h_{O_i})$.  Hence we take $\mathcal{W}$ to be an additive net of $C^\star$-algebras
Definition~\ref{def:C}.

Similarly, with such a single trace sector, we can prove the compatibility bound \eqref{bdab} holds without further assumption. Note that:
\begin{equation}
\left[ \frac{ \sigma_N( (w^i_N)^\star  w^j_N)}{ \sigma_N(w_N^i)^\star  \sigma_N(w_N^j) }\right]_{ij} > 0
\end{equation}
for any finite set of $h^i$ due to positivity of the state $\sigma_N$. This
implies that (after taking the limit):
\begin{equation}
\left[  \exp(  i \beta(h^i, h^j) + \alpha(h^i, h^j)  ) \right]_{ij} \geq 0
\end{equation}
If we take a further scaling limit on $h^i \rightarrow h^i \tau$, with $\tau \rightarrow 0$, we can expand the exponential. The leading matrix then degenerates with all entries $1$
- such a matrix has one non-zero eigenvalue. 
Positivity constrains the leading correction in $\tau$  on the remaining degenerate subspace:
\begin{equation}
\tau^2 \sum_{ij} (p_i)^\star p_j ( i \beta(h^i, h^j) + \alpha(h^i, h^j) ) \geq 0
\end{equation}
if $\sum p_i = 0$. Since $h^i$ were arbitrary, we may drop this later constraint on $p_i$.
This implies that $ i \beta(h^i,h^j) +\alpha(h^i,h^j)$ is a positive self adjoint matrix. 
Cauchy-Schwarz  $| M_{ij} |^2 \leq M_{ii} M_{jj}$ for any two fixed $i,j$. Thus:
\begin{equation}
 \beta(h^1,h^2)^2 \leq | i \beta(h^1,h^2) + \alpha(h^1,h^2) |^2\leq
\alpha(h^1,h^1) \alpha(h^2,h^2) 
\end{equation}
Thus the positivity statement on \eqref{bdab} is automatically satisfied
and $\sigma$ can be extended linearly from $\mathcal{W}_u$ to the entire $\mathcal{W}$ where it becomes a quasi-free state.

Given a large-$N$ sector we can now define:
 \begin{equation}
 \gamma_N( w_i) = \frac{\mu_N(w_i) |  \sigma_N((w_i)_N)| }{ \sigma_N((w_i)_N)}\, \qquad w_i \in \mathcal{W}_u\, \qquad \gamma_N(1) = 1
 \end{equation}
 and extend this linearly to the subalgebra of $\mathcal{W}$ consisting of finite sums of Weyl unitaries. 
 Note that $\gamma_N(w) = \gamma_N(w)^\star$. This definition makes sense because the Weyl unitaries are linearly independent. 
 There is a Banach $\star$-algebra 
  $\mathcal{W}_B \subset \mathcal{W}$ given by infinite sums $w = \sum_i c_i w_i$
for unique $w_i \in \mathcal{W}_u$ and with $ \| w \| \leq \| w\|_B \equiv \sum_i | c_i | < \infty$. This is a Banach algebra
with respect to the norm $ \| w\|_B$ - in particular it is complete with respect to this norm. However this norm will not satisfy the $C^\star$ identity
and so is different from $\| \cdot \|$. In fact $\mathcal{W}_B$ is
$\| \cdot \|$-norm dense in $\mathcal{W}$, that is $\overline{\mathcal{W}_B}^{\| \cdot \|} = \mathcal{W}$, and the unitarity
of $\mu_N(w_i)$ for $w_i \in \mathcal{W}_u$ implies that we can extend $\gamma_N$ to maps on $\mathcal{W}_B$ with the estimate:
\begin{equation}
\| \gamma_N(w) \| \leq \| w \|_B
\end{equation}
which is notably independent of $N$. See Appendix~\ref{app:weyl}. We say that $\gamma_N$ is uniformly in $N$ pointwise
bounded on  $\mathcal{W}_B$. 

Now $\mathcal{W}_B$ has a net structure that derives from $\mathcal{W}_u$ with $w \in \mathcal{W}(O)$ if each Weyl unitary in the decomposition of $w$ satisfies $w_i \in \mathcal{W}_u(O_i)$ with $O_i \subset O$. This is also preserved under completion $\overline{\mathcal{W}_B(O)}^{\| \cdot \|} = \mathcal{W}(O)$. 
It is clear that these linear $\star$ maps preserve the net structure in maping $\mathcal{W}_B(O) \rightarrow \mathcal{M}_N(O)$. 
Furthermore  we can use Definition~\ref{def:largeN} to show that:
 \begin{equation}
 \sigma_N \circ \gamma_N(w) \rightarrow \sigma(w) \qquad \forall w \in \mathcal{W}_B
 \end{equation}
 for the state $\sigma(w(h))$ defined in (\ref{assumpo}), where the linear extension of both of these is consistent. 
 Similarly the limit on $\gamma_N$ extends to finite products:
  \begin{equation}
  \label{ash}
 \sigma_N ( \gamma_N(w_1) \gamma_N(w_2)  \ldots \gamma_N(w_k) ) \rightarrow \sigma(w_1 w_2 \ldots w_k) \qquad \forall w_1, w_2, \ldots w_k \in \mathcal{W}_B
 \end{equation}

Let us discuss the kernel of $\gamma_N$. We expect this is non-empty. That is the linear subspace:
\begin{equation}
\mathcal{Z}_N = \{ a \in \mathcal{W}_B : \gamma_N(a) = 0 \}
\end{equation}
will be non-trivial.
Note that this is not an ideal - if $\gamma_N$ was a quantum channel, it would be an ideal.  We expect this kernel is an important aspect
of the non-isometric nature of this code.  We will see that it determines the null states of the bulk - arising from 
different Weyl operators that map to the same operator at fixed $N$.

We now discuss how to use the above structure to define the code subspace. 
Define the GNS Hilbert space $\mathscr{H}$ and representation $\pi_\sigma$ for the net with representative vector
denoted $\left| [a] \right>$ for $a \in \mathcal{W}$ such that $\pi_\sigma(b) \left| [a] \right> = \left| [b a] \right>$.
Denote $\left| [1] \right> \equiv \left| \eta \right>$.
The resulting net of von Neumann algebras is:
\begin{equation}
\label{cw}
\mathcal{C}(O) = (\pi_\sigma(\mathcal{W}(O)))''
\end{equation}
Note that $\pi_\sigma$ restricts consistently to local algebras $\mathcal{W}(O)$ and also that we can replace $\mathcal{W}(O) \rightarrow \mathcal{W}_B(O)$ in \eqref{cw}. 

Since $\mathcal{W}$ is an additive net of C$^\star$ algebras it is possible to show that $\mathcal{C}$ has the properties of isotony (1), causality (2) and additivity (4) for von Neumann algebra nets given above Definition~\ref{def:vN}. The standard
property (3) follows from Definition~\ref{def:largeN} (c) as discussed in Appendix~\ref{app:oneparticle}. 
Thus we henceforth take $\mathcal{C}$ to be an additive net as defined in Definition~\ref{def:vN}. 
It is possible that the center is non-trivial for the entire net $\mathcal{C} = \mathcal{C}(K)$ and that $\sigma$ is not faithful for 
for this algebra. These properties have only been shown for the local algebras. In many situations we expect the hyperfinite (7) property for $\mathcal{C}$ and in Section~\ref{sec:crhd} we will explicitly assume this.
 For now it will not be necessary.

Out next goal is to construct the encoding maps $V_N$. We start with the simple case where the state $\sigma$ is faithful for $\mathcal{W}$,
in the subsection after we will see the more general case. 

\subsubsection{$V_N$: Faithful case} 
\label{sec:faith}

As a warm up let us start by assuming that $\pi_\sigma$ is faithful for $\mathcal{W}(K) = \mathcal{W}$. That is $\pi_\sigma(w) = 0 \implies w =0$. Then the cyclic vector $\eta$ arising from the GNS construction is also separating. We will return to the non-faithful case after the simpler discussion that follows. 

Without loss of generality we can take $\sigma_N$ to derive from a vector $= \omega_{ \psi_N} $ with $ \psi_N \in \mathscr{K}_N$. We can do this
by replacing $\mathscr{K}_N$ with a larger purifying Hilbert space (the GNS representation of $\sigma_N$ would do) and working with
the representation of the net $\mathcal{M}_N$ on this larger Hilbert space. In a slight abuse of notation we continue to refer to this representation and the new larger Hilbert space as $\mathcal{M}_N$ and $\mathscr{K}_N$ respectively. 
Once we have made this replacement we note that $\mathcal{M}_N \subset \mathcal{B}(\mathscr{K}_N)$ is no longer a complete net since  $\mathcal{M}_N(K) \neq \mathcal{B}(\mathscr{K}_N)$
and Haag duality is violated on the boundary by the extra purifier. We will thus relax these conditions on $\mathcal{M}_N$ moving forward.

Consider the densely defined linear operator $V_N : \mathscr{H} \rightarrow \mathscr{K}_N$
\begin{equation}
\label{defV}
V_N  \left| [a] \right> = V_N \pi_\sigma(a) \left| \eta \right>  =  \gamma_N(a) \left| \psi_N \right>\, \qquad a \in \mathcal{W}_B
\end{equation}
This is a densely defined operator by our assumption
on the separating property of $\eta$ for $\mathcal{C}(K)$. In particular the definition is consistent since no operator $\pi_\sigma(a) \in  \pi_\sigma(\mathcal{W}(K))$ can annihilate $\eta$. Note that the kernel of $V_N$ contains $ \pi_\sigma(\mathcal{Z}_N) \left| \eta \right>$. 

The (possibly unbounded) operator $V_N$ limits to an isometry on a dense subspace with $a,b \in \mathcal{W}_B$:
\begin{equation}
\label{dense-isom}
\lim_{N \rightarrow \infty} \left( V_N \left| [a] \right> ,V_N \left| [b] \right> \right) = \left< [a] \right. \left| [b] \right>
\end{equation}
We can also derive:
\begin{equation}
\label{dense-reconstruct}
\lim_{N \rightarrow \infty} \left( \gamma_N(a) V_N - V_N \pi_\sigma(a) \right) \left| [b] \right> =0
\end{equation}
which follows from the asymptotic homomorphism property \eqref{ash}. While this seems to be the desired form of an asymptotic code, it has
some drawbacks. $V_N$ being only densely defined makes it technically difficult to work with. In particular if it is unbounded then we have to be very careful with the applicability of these codes beyond the dense subspace. Rather we will assume that the above results for $V_N$ extend to the full Hilbert space as follows:

\begin{definition}
\label{def:fc}
A large-$N$ sector, as given in Definition~\ref{def:largeN}, for which the limiting state $\sigma$ is faithful for $\mathcal{W}$, is called \emph{fully convergent} if there exists
a uniformly bounded sequence of operators $\tilde{V}_N \in \mathcal{B}(\mathscr{H}, \mathscr{K}_N)$ such that:
\begin{equation}
\label{tvn}
\lim_N (V_N - \tilde{V}_N) \left| [w] \right> =0  \qquad \forall w \in \mathcal{W}_B 
\end{equation}
\end{definition}

This definition gives rise to the following.
Consider a single trace algebra $(\H, \mu_N)$, Definition~\ref{def:sta} and a fully convergent large-$N$ sector, Definition~\ref{def:fc}, determined by the sequence $\sigma_N = \omega_{\psi_N}$ 
and that limits to a faithful state $\sigma$ for $\mathcal{W}$ then:
\begin{align}
&wo-\lim_N \tilde{V}_N^\dagger \tilde{V}_N  = 1 \\
&so-\lim_N \tilde{V}_N \pi_\sigma(w) - \gamma_N(w) \tilde{V}_N =0\, \qquad \forall w \in \mathcal{W}_B
\end{align}
Which are defining features of an
asymptotically isometric code $(\mathcal{C}, V_N, \gamma_N)$ that we may as well define and study in its own right. 
We will do this in Section~\ref{sec:aic} after we return to the non-faithful case.

Up until now we have relied on what might be thought of as kinematical properties of the $\gamma_N$ maps and correlation functions in the state $\psi_N$ which arise from the large-$N$ limits of matrix like theories.  In order to establish a given map $\gamma_N$ satisfies the fully convergent property Definition~\ref{def:fc}, we need to know more details on \emph{how} 
these correlation functions approach the large-$N$ answers. There could be be several different physical mechanisms that lead to \eqref{tvn}, so we have instead
kept this definition broad. 
In Appendix~\ref{app:uniform} we give one scenario involving an improved convergence
requirement on the large-$N$ correlation functions, $\sigma_N(\gamma_N(w_1) \ldots \gamma_N(w_k))$, that we call ``uniform operator system convergence''.
This requires the existence of a sequence of operator subsystems  of $\mathcal{W}_B$ (these are linear spaces of operators that need not form
an algebra) whose union is norm dense in $\mathcal{W}$ and for which the convergence properties of the correlation function is uniform on a fixed system. 
See Definition~\ref{uosc} for more details. Physically we imagine the operator system allows us to work with linear spans of the unitary operators $w_N(h)$ - that have bounded amplitude and allow only bounded frequency and wavelength in the smearing function $h$, and so for sufficiently high $N$ such operators should have all converged. This later assumption  sounds at least plausible and so happily it can be used to prove the  \emph{fully convergent} condition given in Definition~\ref{def:fc}. See Lemma~\ref{lem:uosc} for details.

\subsubsection{$V_N$ : Non-faithful case}

There are naturally two possible ways for $\sigma$ to not be faithful. 

The existence of degenerate subspace $H_0 \subset H$ for $\beta$ (after norm completion)  gives rise
to a center for $\mathcal{W}$. 
Then we can imagine states $\sigma$ where this
center is not faithfully represented on the GNS Hilbert space (such as the fixed energy/charge states $\sigma = \sigma_g$ envisioned in Section~\ref{sec:deg}).
 We deal
with this in the definition of $V_N$ by explicitly projecting out the center/degeneracies using $p([h])$ defined in Section~\ref{sec:deg2}:
\begin{equation}
\label{defV2}
V_N  \left| [w(h)] \right> = V_N \pi_\sigma(w(h)) \left| \eta \right>  =  \gamma_N(w(p[h])) \left| \psi_N \right>
\end{equation}
for all $w(h) \in \mathcal{W}_u$ and $h \in \H$. In particular \eqref{defV2} should now replace \eqref{defV}.
The linear extension to $\mathcal{W}_B$ is automatic. This definition is consistent now (assuming that the non-faithfulness is only due
to the central modes) since $\pi_\sigma(w(h )) = \pi_\sigma(w(p[h]))$ and if some linear combination vanishes such as $\pi_\sigma(w(h)) - \pi_{\sigma}(w(h + h_0) )= 0$,
where $h_0 \in \H_0$, then $ \gamma_N(w(p[h])) -  \gamma_N(w(p[h + h_0])) = 0$ also vanishes trivially. Also since $\omega_{\psi_N}$ limits to $\sigma$,
the fluctuations of the ($N_T^{1/2}$ rescaled) charges will vanish for $\psi_N$ in the limit and thus, inside correlation functions of $\psi_N$, $\gamma_N(w(h))$ and $\gamma_N(w(h +h_0))$ will have the same
limiting $N \rightarrow \infty$ behavior. Hence \eqref{dense-isom} and \eqref{dense-reconstruct} will be preserved.  
In particular, for the later equation we do not modify $\gamma_N(w(h)) 
\rightarrow \gamma_N(w(p[h]))$ since this would not preserve the net structure and
lead to unacceptable violations of boundary causality. 

The second way that $\sigma$ may not be faithful is the case where $\mathcal{C}$ is a complete net, with $\mathcal{C}' =  \mathbb{C} 1$ 
then $\eta$ clearly cannot be separating.\footnote{The case $\mathcal{C}' = \mathbb{C} 1$ is the extreme example. In general we will need a combination of both cases studied in Section~\ref{sec:faith} and here. For example in the single particle Hilbert space discussed in the Appendix~\ref{app:oneparticle}, there is a subspace that gives rise to an  irreducible part of the Fock space representation ($D^2 =-1$)  and the orthogonal complement that gives the faithful part of the Fock space. These two orthogonal subspace become tensor products on the Fock space and so can be treated independently.}
This would be the case for a code based on a sequence of vacuum states (after dealing with the center). We can proceed if we assume the existence of another state that is generated from $\sigma$ and that is faithful. 

Consider a sequence of operators $a_i \in \mathcal{W}_B$ such that
\begin{equation}
\label{faitht}
\tilde{\sigma}(\cdot) = \sum_i \sigma( a_i^\dagger \cdot a_i )
\end{equation}
is a faithful state on $\mathcal{W}$. We assume this arises from the limit of states on $\mathcal{M}_N(O)$ with:
\begin{equation}
\label{limitfaitht}
\tilde{\sigma}_N = \sum_i \sigma_N( \gamma_N(a_i)^\dagger \cdot \gamma_N(a_i) )
\end{equation}
where we assume the sum converges for each $N$.

Using this we can construct a new representation $\tilde{\pi} = \pi_{\sigma} \otimes 1_R$ on $\mathscr{H} \overline{\otimes} \mathscr{H}_R$
for some separable $\mathscr{H}_R$. Then
$\tilde{\eta} = \sum_k \left| \left[ a_k \right] \right> \otimes \left| k \right>_R$ is  separating
for $\pi_\sigma(\mathcal{W})'' \otimes 1_R = \mathcal{B}(\mathscr{H}) \otimes 1_R$
 where $\left| k \right>_R$ is an orthonormal basis on the reference
and $a_k \in \mathcal{W}_B$ are the infinite sequence of operators appearing in \eqref{faitht}. 
$\tilde{\eta}$ is also cyclic for the same algebra since it is separating for the commutant. 
The sequence of states  $\tilde{\psi}_N = \sum_k \gamma_N(a_k) \left| \psi_N \right> \otimes \left| k
\right>_R$ are normalizable. 
We then define:
\begin{equation}
W_N  \pi_\sigma(a) \otimes 1_R \left| \tilde{\eta} \right> = \gamma_N(a) \otimes 1_R \big| \tilde{\psi}_N \big>
\qquad a \in \mathcal{W}_B
\end{equation}
This is again a densely defined operator $W_N : \mathscr{H}  \overline{\otimes} \mathscr{H}_R \rightarrow \mathscr{K}_N  \overline{\otimes} \mathscr{H}_R$
by the separating property for $\tilde{\eta}$.  We find the asymptotic behavior:
\begin{equation}
\label{dense-isom2}
\lim_{N \rightarrow \infty} \left( W_N \big| \widetilde{[a]} \big> , 1\otimes E_i W_N \big| \widetilde{[b]} \big> \right) = \big< \widetilde{[a]} \big| E_i \big| \widetilde{[b]} \big>
\end{equation}
for all minimal projections $E_i = \left| i \right> \left< i \right| \in \mathcal{B}(\mathscr{H}_R)$ and all $a,b \in \mathcal{W}_B$
where $\big| \widetilde{[b]} \big>
= \tilde{\pi}(b) \left|\tilde{\eta}\right>$. And similarly:
\begin{equation}
\label{dis3}
\lim_{N \rightarrow \infty} E_i \left( \gamma_N(a) \otimes 1_R \,\, W_N\big| \widetilde{[b]} \big> - W_N \pi_\sigma(a) \otimes 1_R  \big| \widetilde{[ b]} \big> \right) = 0
\end{equation}
for all $i$ and $a,b \in \mathcal{W}_B$.

Then:
\begin{definition}
\label{def:fcnf}
A large-$N$ sector, as given in Definition~\ref{def:largeN}, is called \emph{fully convergent} if there exists a faithful state as given in \eqref{faitht},
arising from a limit of the states given in \eqref{limitfaitht}  and such that there is 
a uniformly bounded sequence of operators $\tilde{W}_N : \mathcal{B}(\mathscr{H}  \overline{\otimes} \mathscr{H}_R, \mathscr{K}_N  \overline{\otimes} \mathscr{H}_R)$ with:
\begin{equation}
\lim_N (W_N - \tilde{W}_N) \big| \widetilde{[w]} \big>  \qquad \forall w \in \mathcal{W}_B 
\end{equation}
where $W_N$ was defined above. 
\end{definition}

Assuming such a fully convergent large-$N$ sector we can demonstrate that:
\begin{align}
&wo-\lim_{N \rightarrow \infty}  \tilde{W}_N^\dagger E_i \tilde{W}_N=E_i   \\
&so-\lim_{N \rightarrow \infty}E_i \left( \gamma_N(a) \otimes 1_R \tilde{W}_N -\tilde{W}_N \pi_\sigma(a) \otimes 1_R \right) = 0
\end{align}
After which we can take matrix elements $\left< i \right| \tilde{W}_N \left| i \right> : \mathscr{H} \rightarrow \mathscr{K}_N$
and use this as our encoding isometry. The result is again an asymptotically isometric code. 
We expect a similar derivation, as with the faithful case, 
of a fully convergent large-$N$ sector (Definition~\ref{def:fcnf}) from the uniform operator system convergence condition (Definition~\ref{uosc}) discussed in Appendix~\ref{app:uniform} and the assumption of the existence of a state \eqref{faitht},
arising from a limit of the states \eqref{limitfaitht}.

\subsection{Comments on AdS/CFT}

\label{sec:comments}

In AdS/CFT we have the extrapolate dictionary which is $\gamma_N$ restricted to $\mathcal{W}_B(O)$ since the Weyl operators are by definition smeared near the
boundary of AdS. We have the HKLL\cite{Hamilton:2006az} reconstruction
maps which involve the equality between $\mathcal{C}(O)$ and the weak closure $\pi_\sigma(\mathcal{W}_B(O))''$. This is HKLL because it relates
the bulk operators $\mathcal{C}(O)$ to the boundary operators under limits.\footnote{Actually this is really HKLL for the causal domain. See Appendix~\ref{app:causal} and the discussion in Section~\ref{sec:aic}.} In particular since $\mathcal{C}(O)$ is computed using the double commutant and the single trace sector is describing
bulk quantum fields - the  commutant operation respects bulk causality. Thus $\mathcal{C}(O)$ can only involve fields in the causal wedge/domain.

HKLL then (implicitly) applies the extrapolate dictionary \cite{Banks:1998dd}, which in our picture are the maps $\gamma_N$, to complete
the operator reconstruction as a boundary operator. In principle it might be possible to give a simple extension of $\gamma_N$ from $\mathcal{W}_B(O)$  to the weak closure $\mathcal{C}(O)$.
However we emphasize that this is not automatic due to an order of limits issue: we need to take large-$N$ last and the weak closure first rather than the other way around. 
In earlier
drafts of this paper we had essentially assumed this extension of $\gamma_N$ to $\mathcal{C}(O)$ is possible, however realized that this is not necessarily true or even needed. In particular in AdS/CFT it has never been established that HKLL works at this level - given a formal expression for some bulk operator as a limit of sums of near boundary operators $\mathcal{W}_B(O)$  one simply assumes the extrapolate dictionary $\gamma_N$ factors through this expression. 
Since arguments in favour of HKKL are usually formal anyway such an expression is sufficient. However in reality the convergence property through this map is not obvious. And actually we will demonstrate it is not important. For one thing, entanglement wedge reconstruction will be available to us, as we establish in Section~\ref{sec:crhd} and such reconstructions do not have this convergence issue. Furthermore the reconstruction at the level of $\mathcal{W}_B(O)$ is already sufficiently constraining from a causality point of view.

 We then have the global bulk to boundary map $V_N$ which were constructed using the extrapolate dictionary and a linear span of the global boundary insertions of operators.
 This is of course the usual procedure \cite{Almheiri:2014lwa,Cotler:2017erl,Jafferis:2017tiu}. 

\section{Asymptotically isometric codes}

\label{sec:aic}

We are now ready to define the main axiomatic framework of interest in this paper.
Consider a sequence of CFTs labelled by $N \in \mathbb{N}$ each with an additive net $K \supset O \rightarrow \mathcal{M}_N(O) \subset \mathcal{B}(\mathscr{K}_N)$ in the vacuum sector.

\begin{definition}
\label{def:code}
Given such a sequence of theories we define an \emph{asymptotically isometric code} as the triple $(\mathcal{C}, V_N, \gamma_N)$ where $V_N \in \mathcal{B}(\mathscr{H},\mathscr{K}_N)$
is a bounded encoding map, $\mathcal{C}$ is an additive net of von Neumann algebras over $K$, 
and $\gamma_N$ is a sequence of uniformly in $N$, pointwise bounded unital $^\star$-maps, $\gamma_{N} : \mathcal{A} \rightarrow \mathcal{M}_N $ defined on a weakly dense domain $\mathcal{A}(O)$ of $\mathcal{C}(O)$ for all $O$.\footnote{That is $\mathcal{A}(O)'' = \mathcal{C}(O)$ for all $O$.} The algebras $\mathcal{A}$ are a sub-net $\mathcal{A}(O) \subset \mathcal{C}(O)$ of $^\star$-algebras and the $\gamma_N$ maps are consistent
with the common net structure shared by the domain and range. Altogether  these must satisfy:
\begin{align}
& wo-\lim_{N \rightarrow \infty} V_N^\dagger V_N = 1 \\
\label{ssv}
& so-\lim_{N \rightarrow \infty} (\gamma_N(a) V_N - V_N a) =0
\end{align}
for all $a \in \mathcal{A}$.

We say that the sequence of states $\psi_N \in \mathscr{K}_N$ is \emph{represented} on the code if there is some $\psi \in \mathscr{H}$ such that
$\| \psi_N - V_N \psi \| \rightarrow 0$. 
A \emph{standard asymptotically isometric code} has a sequence of vectors $\varPhi_N \in \mathscr{K}_N$ represented by $\eta \in \mathscr{H}$ on the code where $\eta$ (resp. $\varPhi_N$) is cyclic and separating for $\mathcal{C}(O)$ (resp. $\mathcal{M}_N(O)$) for all local regions $O \subset K$. 

\end{definition}

All codes that we consider will be standard codes, and we may forget to explicitly state this.

Note that we have not stipulated that $\mathcal{M}_N$ is a complete net. This allows for vector states $\psi_N \in \mathcal{K}_N$ that are not pure
states for $\mathcal{M}_N$ which implies that $\mathcal{M}_N$ is not irreducibly represented on $\mathscr{K}_N$.  This could occur if, relative to the original CFT net,
we include a reference $\mathscr{K}_N = \mathscr{K}_N^0 \overline{\otimes} \mathscr{R}_N$ where $\mathscr{K}_N^0$ now refers to the CFT Hilbert space with no reference. 
Similarly, while we often consider $K =  K_0 \equiv S_{d-1} \times \mathbb{R}$ we can also consider the QFTs to live on more general regions. 
For example it is sometimes necessary to include a reference as an explicit boundary region, as is done for the Page curve computations in \cite{Penington:2019npb,Almheiri:2019psf,Penington:2019kki,Almheiri:2019qdq}.
In which case there might be a code based on the larger causal net $K =  K_0 \times \mathbb{Z}_2$
setting $(O \times 0 )' = O' \times 1$ and
\begin{equation}
\mathcal{M}_N(K_0 \times 1) =  \mathcal{M}_N(K_0)  \bar{\otimes}  \mathcal{B}(\mathscr{R}_N)\,
\qquad \mathcal{M}_N(K_0 \times 0) =  \mathcal{M}_N(K_0)  \bar{\otimes} 1
\end{equation}
And this could allow us to maintain a complete (Haag dual) net for the microscopic theories.  
We could also define codes for the causal net $K = K_0 \times K_0$ a doubling of the original CFT $\mathcal{M}_N= \mathcal{M}_N(K_0) \overline{\otimes} \mathcal{M}_N (K_0)$
acting on the doubled Hilbert space $\mathscr{K}_N = \mathscr{K}_N^0 \overline{\otimes} \mathscr{K}_N^0$. This  would allow us to study the sequence of thermo-field double states while maintaining locality of the purifying subsystem. 
While these operations of tensoring in a new Hilbert space factor, and picking entangled states on this larger Hilbert space, by definition keeps us in the vacuum sector of the $\mathcal{M}_N(K_0)$ theories, in the large-$N$ limit these operations can easily give rise to different codes. Hence superselection sectors arise from the limit. See the discussion Section~\ref{sec:d}. 

\begin{remark}
\label{rem:one}
We have established in Section~\ref{sec:sta} that a single trace algebra (Definition~\ref{def:sta}) and a large-$N$ sector (Definition~\ref{def:largeN}) 
with the fully convergent condition (Definition~\ref{def:fcnf}) gives rise to an asymptotically isometric code as defined above. Where in particular the weakly dense domain of $\gamma_N$ is $\mathcal{A}(O) = \pi_\sigma(\mathcal{W}_B(O))$.  The code is standard
if the state $\psi_N$ that the large-$N$ sector is based on is cyclic and separating for $\mathcal{M}_N(O)$ which will typically be the case. 
\end{remark}

In general the strategy we have followed is to define a code that is as minimally constrained as possible. 
The goal was to do this while maintaining Remark~\ref{rem:one}, and at the same time
allowing for a strong enough condition that will allow us to prove that these codes have nice quantum error correcting properties.
  In particular the maps $\gamma_N$ have a very
weak condition imposed on them. We do not require they are quantum channels, ultraweakly continuous, or even bounded maps. 
They just have to reconstruct on a dense $^\star$-subalgebra $\mathcal{A}$, that may not even be a $C^\star$ algebra. 
And as we will see this is sufficient in Theorem~\ref{thm:qec}.

We show a simple consequence of these definitions:
\begin{lemma}
\label{lem:C}
Given an asymptotically isometric code $(\mathcal{C},V_N,\gamma_N)$ then the bounded linear operator $V_N$ are in-fact uniformly bounded:
\begin{equation}
\| V_N^\dagger V_N \|  \leq C \qquad \forall N
\end{equation}
for some fixed $C$.
\end{lemma}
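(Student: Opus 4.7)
The plan is a short application of the Banach--Steinhaus (uniform boundedness) principle to the family of positive bounded operators $\{V_N^\dagger V_N\}_N \subset \mathcal{B}(\mathscr{H})$ acting on the fixed Hilbert space $\mathscr{H}$.

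First, I would unpack the hypothesis $wo-\lim_N V_N^\dagger V_N = 1$. By the conventions in Section~\ref{sec:aqft}, this means $\langle \psi_1 | V_N^\dagger V_N | \psi_2\rangle \to \langle \psi_1 | \psi_2 \rangle$ for every pair $\psi_1, \psi_2 \in \mathscr{H}$. Fixing $\psi_2 = \psi$ and letting $\psi_1$ range, we see that the sequence $V_N^\dagger V_N \psi \in \mathscr{H}$ converges weakly to $\psi$. Since weakly convergent sequences in a Hilbert space are norm bounded (a standard corollary of UBP via Riesz), it follows that for each fixed $\psi \in \mathscr{H}$,
\begin{equation}
\sup_N \| V_N^\dagger V_N \psi \| < \infty.
\end{equation}

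Second, having obtained this pointwise bound, I would apply UBP one more time to the family of bounded operators $\{V_N^\dagger V_N\}_N$ on the Banach space $\mathscr{H}$. The pointwise boundedness established above upgrades to the uniform bound $\sup_N \|V_N^\dagger V_N\| \leq C$, which is the desired conclusion.

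I do not anticipate any real obstacle here; this is essentially the textbook statement that any weakly operator convergent sequence in $\mathcal{B}(\mathscr{H})$ is uniformly norm bounded. The only mild technical point worth flagging is that the encoding maps $V_N$ themselves have $N$-dependent codomains $\mathscr{K}_N$, so UBP cannot be applied directly to the $V_N$ as a family of operators between two fixed Banach spaces; this is sidestepped by working throughout with the self-adjoint composites $V_N^\dagger V_N$, all of which live in the fixed algebra $\mathcal{B}(\mathscr{H})$.
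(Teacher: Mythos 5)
Your argument is correct and is essentially identical to the paper's own proof: two applications of the Banach--Steinhaus principle, first to deduce that $V_N^\dagger V_N \psi$ is norm bounded for each fixed $\psi$ from weak operator convergence, and then to upgrade this pointwise bound to a uniform bound on $\|V_N^\dagger V_N\|$ (see the comment under Corollary~\ref{cor-ubp} in Appendix~\ref{app:convergence}). Your remark about working with $V_N^\dagger V_N \in \mathcal{B}(\mathscr{H})$ rather than the $V_N$ themselves, whose codomains vary with $N$, is exactly the right technical point and matches how the paper handles it.
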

\begin{proof}
Follows since a sequence of operators $V_N^\dagger V_N$ is weakly convergent to some fixed operator ($1$ in our case) iff the sequence uniformly bounded by the uniform boundedness principle. See comment under Corollary~\ref{cor-ubp} in Appendix~\ref{app:convergence}. 
\end{proof}

We will interpret $\mathcal{C}(O)$ as the causal wedge for $O$. We give two pieces of evidence for this. Firstly the net $\mathcal{C}$ is additive by assumption, and also
by construction from the previous section. Additivity means
that we can generate $\mathcal{C}(O)$ by using a boundary spacetime cover of $O$ by small diamonds $O_i$, that is $O = \cup_i O_i$
implies that $\mathcal{C}(O) = \vee_i \mathcal{C}(O_i)$. We claim this additivity property is the hallmark of the causal wedge, since the bulk dual to these
local diamond regions $O_i$ cover a region near the boundary that generates the bulk causal wedge region via causal completion. See Appendix~\ref{app:causal} for a detail discussion of this matter, including
a slightly different definition of the causal wedge, dubbed the bulk causal domain, that is more naturally suited to the causal structure of bulk quantum fields. The bulk causal domain is typically slightly larger than the causal wedge but satisfies many of its properties. We will however continue to refer to $\mathcal{C}(O)$ as the causal wedge algebra or simply the causal wedge. See Figure~\ref{fig:bulk-add}.

 \begin{figure}[h!]
\centering
\includegraphics[scale=.4]{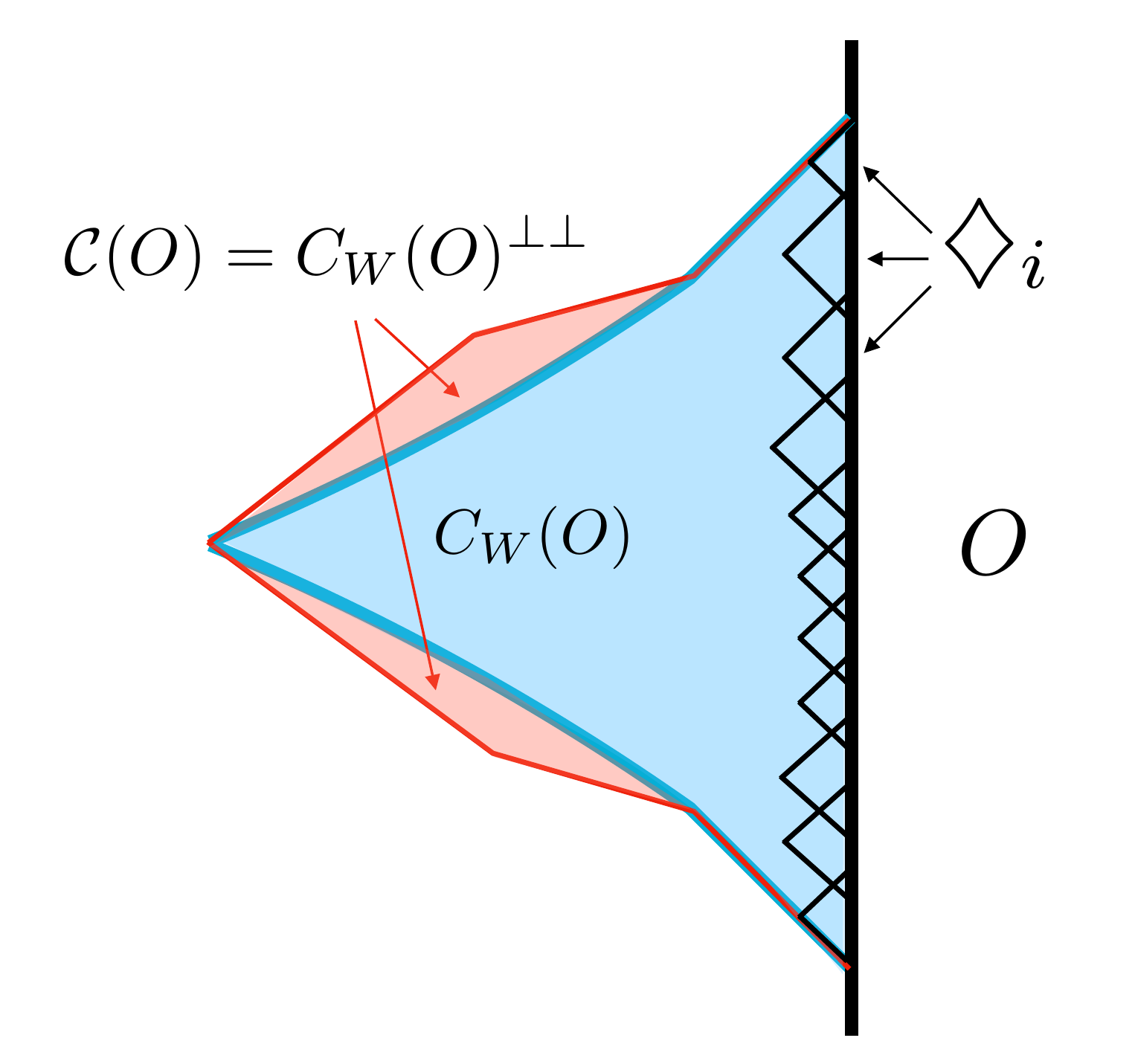}
\caption{In AdS/CFT the additivity property of the causal domain comes from a conjecture for the dynamics of general bulk QFT which is supported by Borchers timelike tube theorem. See Appendix~\ref{app:causal}. \label{fig:bulk-add}}
\end{figure}

Secondly, by definition, the reconstruction maps $\gamma_N$ are simple.  This is because in specific models we can explicitly write  down these maps and they do not
involve anything complicated like the boundary density matrix, or modular operators. The causal wedge is of course expected to be simple to reconstruct.
While the maps need only work on a weakly dense sub-algebra $\mathcal{A}(O)$ this should be sufficient for our purposes.  It seems plausible the maps we constructed in Section~\ref{sec:sta} could be extended further to $\mathcal{C}(O)$, however this need not be the case.
In Section~\ref{sec:comments} we gave an interpretation of 1. $\gamma_N$ as the extrapolate dictionary and  2. the statement $\mathcal{A}(O)'' = \mathcal{C}(O)$ as the Rindler-like HKKL reconstructions, yet we left open the question of convergence when composing these procedures to get a final operator in the microscopic theory.
Baring the convergence question, both of these procedures are simple (allowing for fixed small errors.)
We will later prove Theorem~\ref{thm:qec} that gives a quantum channel reconstructing of all of $\mathcal{C}(O)$, however we no longer have reason to expect
this is simple. 

Note that even if the defining net $\mathcal{M}_N$ is complete then
$\mathcal{C}$ need not be complete. In particular the commutant $\mathcal{C}' = \mathcal{C}(K)'$ may be non-trivial. 
Such behavior is related to the emergence of black holes in the bulk, with $\mathcal{C}'$  describing operators behind the causal horizon
of the full boundary theory. An example of where this case arises is a pure state black hole formed from collapse.
 A natural question then is whether we can
extend the causal wedge algebra for $K$ to a larger subalgebra of $\mathcal{B}(\mathscr{H})$. Where by extension we mean, with the ability
to reconstruct this larger sub-algebra from the net $\mathcal{M}_N = \mathcal{M}_N(K)$.
In-spite of the fact that all the information is in principle available to the entire boundary $K$, there is no reason for the map $V_N$ to be invertible at fixed $N$, so this turns out to be a subtle question. If we simply require a version of the asymptotic reconstruction statement \eqref{ssv}, then a simple map that in principle does the trick is $\beta_N : n \rightarrow V_N n V_N^\dagger$ for all $n \in \mathcal{B}(\mathscr{H})$.\footnote{There is very likely a connection to the physical construction developed by Papadodimas-Raju \cite{Papadodimas:2012aq,Papadodimas:2013jku} which considered
a similar setup. See also comments in \cite{Chandrasekaran:2022eqq}. Note that we do not need to use
 mirror operators to reconstruct $\mathcal{C}'$ since these are already defined by the map $V_N$. 
}
 It is easy to show that this works if we demand that the code has strong operator convergence, instead of weak: $so-\lim_N V_N^\dagger V_N = 1$. 
Indeed we derived strong operator convergence for $V_N^\dagger V_N$ in Appendix~\ref{app:uniform} based on the strengthened ``uniform operator system'' convergence assumption. We could have just as easily assumed strong operator convergence in our definition of an asymptotic code. We chose not to simply because it plays a minimal role in this paper. We expect there might be other reasons to impose strong convergence, such as while studying different large $N$ sectors. Future work might need to impose this condition. 
Note that the $\beta_N$ maps given above are not unital and so their physical status are not so clear. We will do slightly better in Lemma~\ref{lem:global}. 
 
We now consider a generalization of this non-completeness for local boundary regions $O$ rather than all $K$. This is more robust compared to the global reconstuctions due to the presence of large amounts of entanglement and the physical interpretation in terms of quantum error correction.  We can prove the main theorems only assuming weak operator convergence of $V_N^\dagger V_N$. 
Consider the following inclusion of von Neumann algebras:
\begin{equation}
\mathcal{C}(O') \subset \mathcal{C}(O)'
\end{equation}
Suppose the inclusion is proper then we say that Haag duality is violated for the bulk theory. This generalizes the previous non-completeness since
$\mathcal{C}(K') = \mathcal{C}(\emptyset) = \mathbb{C} 1$. 
Violations of Haag duality are not limited to the existence
of horizons in the bulk, and are expected to be the generic situation. See Figure~\ref{fig:haag}. In Section~\ref{sec:crhd} we discuss extensions of $\mathcal{C}(O)$, the constraints on such extensions,
and when maximal extensions exists.  A maximal extension defines the notion of an entanglement wedge. For recent discussions of the relation between Haag duality, additivity in the algebraic approach to AdS/CFT see \cite{Casini:2019kex,Benedetti:2022aiw}.
In \cite{toappear} we plan to discuss a few settings that provably do not have Haag duality violations so that the causal wedge is already maximal. This then implies equality of the causal and entanglement wedges. These settings are well known from AdS/CFT and always involve codes that realize boundary geometric symmetries that hold fixed the causal wedge.

 \begin{figure}[h!]
\centering
\includegraphics[scale=.4]{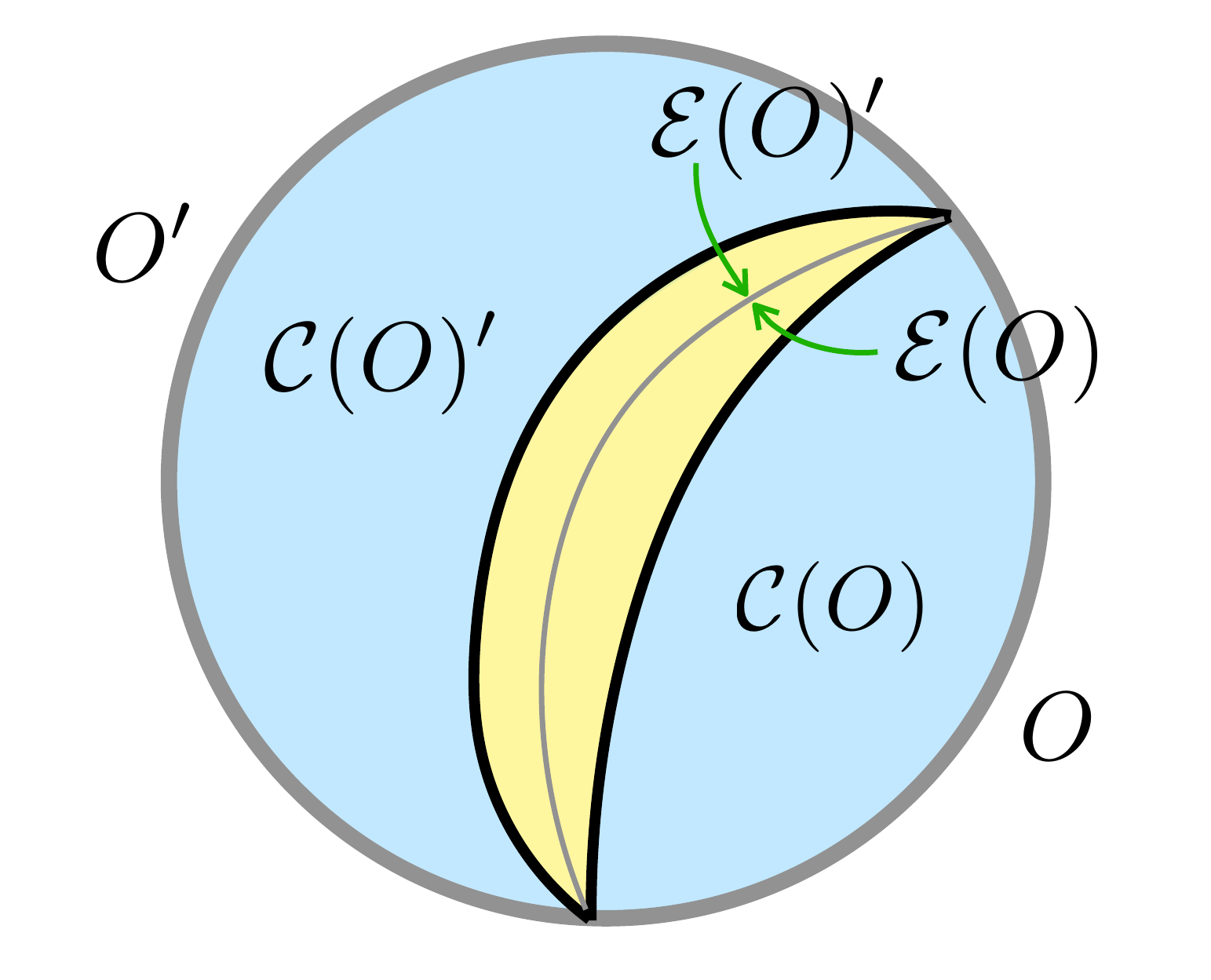}
\caption{The extension of the causal wedge to the entanglement wedge can give back Haag duality $\mathcal{E}(O') = \mathcal{E}(O)'$. 
\label{fig:haag}}
\end{figure}

\subsection{Hyperfinite condition}

An important assumption that we make in Section~\ref{sec:crhd} is that the bulk algebras $\mathcal{C}(O)$ are all hyperfinite (see Definition~\ref{def:vN}).
A von Neumann algebra is hyperfinite if it can be generated by an increasing sequence of finite-dimensional von Neumann subalgebras.  
One way to establish this is via the split property: if the property (8) given above Definition~\ref{def:vN} holds for the net $\mathcal{C}(O)$, then the hyperfinite condition applies.

We expect the split property applies to CFTs at fixed $N$. 
There is an important question of whether the split property is maintained by the net $\mathcal{C}$. It is often stated that generalized free fields
do not satisfy the split property. We disagree with this, or at least we have in mind a different assignment of operators to regions $O$ than is usually assumed.
We are choosing to assign the causal wedge $\mathcal{C}(O)$ to regions to make the following statements. 

If the generalized free fields are just that of QFTs in higher dimensions then we expect that these will have algebras that satisfy the split property. 
In particular these QFTs have perfectly good high temperature thermodynamic properties.
Of course the thermodynamics cannot be that of a $d$ dimension QFT, since the fields live in higher dimensions. This is where the holographic bounds come into play, but this is independent of the question of the existence of a split for $\mathcal{C}(O)$. In AdS/CFT for the bulk Hilbert space ordered by energy, thermal AdS typically involves excitations that live in $AdS_{d+1} \times M_{D-1-d}$ for some compact $M_{D-1-d}$.  The gravity modes
propagate in this space and correspond to an infinite tower of lower dimensional Kaluza Klein fields - each of which we might include in the test function space $H$. Treating the compact space $M_{D-1-d}$ as part of the causal wedge we expect it is the thermal properties of these higher dimensional fields that will determine the existence or not of a split. Since there is no Hagedorn growth at this point we expect the split will apply \cite{Buchholz:1989bj,Buchholz:2006hp}.
Of course the bulk thermal states for such modes will be thermodynamically unstable for $T \gtrsim 1/R_{AdS}$  in the canonical ensemble, however this is not actually the relevant ensemble here. We are simply working with a bulk Hilbert space description and we do not care if the bulk thermal state maps to the actual boundary thermal state. 
Of course eventually at high enough energies the gas
will give way to stringy excitations which will give way to small localized black holes and then big black holes where the high energy thermal behavior is that expected
of a $d$ dimensional QFT \cite{Aharony:1999ti}. These later effects occurs at an $N$ dependent energy so we can ignore them in our vacuum codes. We are imagining a scenario where we pick the string scale $\ell_s$ to scale to zero as $\ell_P \rightarrow 0$
in the large-$N$ limit. 

However if the code subspace is actually a string theory we expect that
the split property will break due to Hagadorn like growth of the density of states. 
So it is not clear the hyperfinite assumption applies to stringy codes. However since we expect a countable number of stringy modes then it is possible that the split property applies to each mode, so that the countable infinite tensor product of these will still be hyperfinite. Thus we view the hyperfinite assumption as plausibly applying to a large class of holographic theories, stringy or not.

It is natural to take the algebras $\mathcal{C}(O)$ for compact $O$ to be the unique hyperfinite type-III$_1$ factor. This can often be established from the properties
we have already discussed.  The von Neumann algebras for non compact geometric regions, $\mathcal{C}(K)$ or $\mathcal{C}(K_0 \times 0)$ etc. are allowed to have a non-trivial center.

\subsection{Some examples}

\label{sec:examples}

A short list of codes we might consider include\footnote{For each code we include some expected properties of these codes, that we envision could be derived from an explicit construction of these
codes using Weyl algebras as in Section~\ref{sec:sta}, perhaps under some additional technical assumptions. Unless otherwise stated the relevant causal regions
will be the net $K = K_0 \equiv S_{d-1} \times \mathbb{R}$.}:
\begin{enumerate}
\item The \emph{vacuum} code. Here the CFT vacuum $\Omega_N$ are represented on the code by some vector $\eta \in \mathscr{H}$. Additionally the net $\mathcal{C}$ is in the vacuum sector with $(u(g), \eta)$. 
The net $\mathcal{C}$ is irreducibly represented on $\mathscr{H}$ with $\mathcal{C}' = \mathbb{C} 1$.

\item A \emph{thermal} code.  We work with a non-complete net $\mathcal{M}_N \subset \mathcal{B}(\mathscr{K}_N)$ allowing for a purifying reference. 
The canonical ensembles $\sigma_N^\beta$ are represented with vectors $\left| \beta_N \right> \in \mathscr{K}_N$, 
such that $\omega_{\beta_N}|_{\mathcal{M}_N} = \sigma_N^\beta$. 
The $\beta_N$ satisfy the KMS condition for $\mathcal{M}_N$
and time translations $U_N(t) \in \mathcal{M}_N$.  
 The $\beta_N$ are in turn represented on the code by $\beta \in \mathscr{H}$.

For small temperatures  $T< T_c$ we expect this code
to be in the same sector  as the vacuum code (the net $\mathcal{C}$ is isomorphic to the vacuum code), with conformal symmetries including time translations unitarily represented on $\mathscr{H}$. In particular $\mathcal{C}$ will be of type-I$_\infty$. 
For temperatures above the deconfinement transition $T>T_c$ the net $\mathcal{C}$ will develop a center for  $\mathcal{Z}(\mathcal{C}(K)) = L^\infty(\mathbb{R}^{n_Q})$ and will be of type-III$_1$
implying $\mathcal{C}$ cannot be the vacuum code. In particular $\mathcal{C}'$ represents operators behind the horizon of a black hole. 
Time translations will still be unitarily represented, although they will have an outer action on $\mathcal{C}$. The Hamiltonian is now not affiliated to $\mathcal{C}$. 

\item A \emph{microcanonical code}. This is a high temperature thermal code where the energies and charges are fixed asymptotically, such as in \eqref{fixedg}, so that
there is a purification $\big| \beta_N^g \big> \in \mathscr{K}_N$ of $\sigma_N^g$ that is represented by $\left| \beta^g \right>$ on the code. These will have very similar properties to the high temperature thermal codes with $\mathcal{C}$ a type-III$_1$ factor (with the center absent.)  Again time translations will be unitarily represented but only as an outer action.  
\item \emph{Time shifted} codes. These are based on the above microcanonical codes, but where now $U_N(t) \big| \beta_N^g \big>$ is represented on the code. Above the deconfinement temperature these codes 
will be disjoint/perpendicular to the original microcanonical codes for any fixed $t \neq 0$. 

\item A \emph{thermofield double} code. By enlarging the net to $K_d = K_0 \times K_0$ we can track the locality of the purifying system.
We expect these codes $\mathcal{C}$ over $K_d$ will be complete with $\mathcal{C}(K \times \emptyset)$ behaving like a thermal code.
To avoid the appearance of a center we should work with the appropriate microcanonical TFD state.
In the low temperature phase we will have $\mathcal{C}(K_d) \cong \mathcal{C}(K  \times \emptyset) \overline{\otimes}
\mathcal{C}(\emptyset \times K)$. At high temperatures this factorization will be absent. Although we do expect $\mathcal{C}(K_d)' = \mathbb{C} 1$.

\item A \emph{pure state black hole} formed from collapse. Consider applying a heavy ``dust operator''  smeared near a fixed time slice with energy $\sim N_T$ sufficient
to form a large Black Hole - for a precise microscopic realization \cite{Anous:2016kss}. 
Then consider time evolving this state for some time so that the Black Hole settles to equilibrium. Correlation functions
will agree with the thermal correlation functions at late time, thus we wait times of order $N^\#$ for some small $\#$ to achieve this in the limit.  The large-$N$ correlators would look thermal in the limit $N \rightarrow \infty$, so we expect that $\mathcal{C}$ would look like
the thermal code, possibly with large energy fluctuations. Note that the main difference to the high temperature thermal code is that here $\mathcal{M}_N' = \mathbb{C} 1$
while $\mathcal{C}'$ will be non-trivial type-III$_1$ algebras in both cases. Hence it is possible to imagine attempting to reconstruct operators behind $\mathcal{C}'$ from $\mathcal{M}_N$ in this case, by extending the entanglement wedge of $K$ to the full bulk algebra. 

\item An \emph{energy eigenstate}. Consider the sequence of states $\left| E_N \right>$ where we order the energy eigenstates and pick the
$(e^{S_N})'th$ one holding fixed the entropy of a large black hole $S_N = S_N(T)$ for $T>T_c$. By assuming the eigenstate thermalization hypothesis we expect one can construct codes for such states
since the correlation functions will look thermal. The large-$N$ limit will give thermal correlators in the single trace sector, so the putative bulk will have an equilibrium horizon with $\mathcal{C}'$ non-trivial and $\mathcal{M}_N' = \mathbb{C} 1$, similar to the previous code. The central energy fluctuations  will be absent in $\mathcal{C}$. 
\end{enumerate}

\section{QEC, complementary recovery and Haag duality}

\label{sec:crhd}

Having defined our codes, we now wish to study their quantum information properties.
Our first task is to prove that extensions of the causal wedges maintain boundary causality. 
It turns out that this constraint comes from a well known phenomenon in quantum information theory called information-disturbance tradeoff \cite{kretschmann2008information}. 
We will then use this to extract causal properties such as entanglement wedge nesting.
Our second task will be to discuss various conditions under which we can prove the existence of a maximal extension of the causal wedge, culminating
in an asymptotic version of the JLMS \cite{Jafferis:2015del} condition and an asymptotic version of Takesaki's theorem \cite{takesaki1972conditional} on the stability of modular flow. 

Compared to the semi-informal results in Section~\ref{sec:sta} this Section will be based on completely rigorous results. Proofs of the Theorems stated here will be given in Section~\ref{sec:p}.
These proofs rely in some way on the assumption that the bulk algebras are hyperfinite. 
While the theorems we establish might also be true in the non-hyperfinite case, our proof techniques cannot be applied then.
It is a logical possibility that these results are simply not be true in the non-hyperfinite case. Indeed there are many well known
conditions that are equivilent to the hyperfinite condition such as amenable and injective von Neumann algebras \cite{takesaki2003theory}. The applicability of Theorem~\ref{thm:qec} might simply be another such condition.  

In the theorem statements and proofs we will use dual maps and Petz maps. These are defined as follows:

\begin{definition}
\label{def:dual}
Given a faithful quantum channel $\beta : \mathcal{N}\subset \mathcal{B}(\mathscr{H})  \rightarrow \mathcal{L} \subset \mathcal{B}(\mathscr{K})$ 
and a cyclic and separating vector $H \in \mathscr{K}$ with
$\mathcal{N}$ in standard form (i.e. there exist some cyclic and separating vector $\eta \in \mathscr{H}$ for $\mathcal{N}$), the $H$-\emph{dual channel} $\delta'_{\beta,H} :  \mathcal{L}' \rightarrow \mathcal{N}'$ is defined by the matrix elements:
\begin{equation}
\left< h  \right| n \delta_{\beta,H}'(m') \left| h \right> = \left< H \right| \beta(n) m' \left| H \right> \, \qquad \forall n \in \mathcal{N}\,, m' \in \mathcal{L}_N' 
\end{equation}
where $h \in \mathcal{P}_{\eta; \mathcal{N}}^\natural \subset \mathscr{H}$ is uniquely defined by the above equation and the choice that it lives in the natural self-dual cone (\ref{natural-cone}) \cite{araki1974some} associated to $\eta$ and $\mathcal{N}$.
This channel has nice properties, including the fact that it is faithful normal unital completely positive, see \cite{accardi1982conditional}
Given the same setup the \emph{Petz map} $\delta_{\beta,H}(\cdot) : \mathcal{L} \rightarrow \mathcal{N}$, is:
\begin{equation}
\delta_{\beta,H}(\cdot) = J_{\eta;\mathcal{N}} \delta_{\beta,H}'( J_{H; \mathcal{L}}\cdot J_{H; \mathcal{L}}) J_{\eta;\mathcal{N}}
\end{equation}
where $J$ is the modular conjugation operator, defined below in Section~\ref{sec:maxext}, \eqref{TTresults}.
\end{definition}

We use some of the properties in the proof. 

\subsection{Asymptotic Information Disturbance Tradeoff}

Given a code let us ask about what information in the bulk, accessible to some algebra $\mathcal{N} \subset \mathcal{B}(\mathscr{H})$, can be reconstructed from $\mathcal{L}_N \equiv \mathcal{M}_N(O)$.
In this case $\mathcal{L}'_N$ can be thought of as the errors in the sense of quantum error correction and any information contained in $\mathcal{N}$ that leaks to $\mathcal{L}'_N $ cannot be extracted without disturbance and so cannot be accessible to $\mathcal{L}_N$ ( $\lnot(4) \implies \lnot(1)$ in Theorem~\ref{thm:qec} ). Equivilently, all errors in $\mathcal{L}'_N$ cannot disturb $\mathcal{N}$ if all the information is available to $\mathcal{L}_N$ ($(1) \implies (4)$ in Theorem~\ref{thm:qec}).
Conversely if no errors in $\mathcal{L}'_N$ disturb $\mathcal{N}$, then the information in $\mathcal{N}$ can be extracted from $\mathcal{L}_N$
($(4) \implies (1)$ in Theorem~\ref{thm:qec}).
 These statements are well known and are very important for the general theory of quantum error correction.
 The new result that we establish here is a version of information-tradeoff that
holds for asymptotically isometric codes.

While we eventually have in mind the triple $(\mathcal{C},V_N, \gamma_N)$ 
with some $\mathcal{N} \subset \mathcal{B}(\mathscr{H})$ that can be reconstructed from $\mathcal{L}_N = \mathcal{M}_N(O)$, for the next Theorem~\ref{thm:main} and associated Definition~\ref{def:main} we actually only need
a sequence of bounded linear maps $V_N : \mathscr{H} \rightarrow \mathscr{K}_N$ such that $wo-\lim_N V_N^\dagger V_N = 1$ and:

\begin{theorem}
\label{thm:main}
\label{thm:qec}
Given a hyperfinite von Neumann algebra $\mathcal{N} \subset \mathcal{B}(\mathscr{H})$ with a cyclic vector $\eta \in \mathscr{H}$ then the following conditions pertaining to some sequence of von Neumann algebras $\mathcal{L}_N \subset \mathcal{B}( \mathscr{K}_N)$ are equivalent:
\begin{itemize}
\item[(1)] There exists a sequence of quantum channels 
$\beta_N : \mathcal{N} \rightarrow \mathcal{L}_N$ satisfying:
\begin{equation}
so-\lim_{N \rightarrow \infty} \left( \beta_N(n) V_N - V_N n  \right) = 0
\end{equation}
for all $n \in \mathcal{N}$.
\item[(2)] 
There exists a sequence of uniformly in $N$ pointwise bounded unital $^\star$-maps $\gamma_N : \mathcal{D} \rightarrow \mathcal{L}_N$
from a weakly dense unital $\star$-algebra $\mathcal{D} \subset \mathcal{N}$ with $\mathcal{D}'' = \mathcal{N}$, satisfying:
\begin{equation}
\label{so-d}
so-\lim_{N \rightarrow \infty} \left( \gamma_N(d) V_N - V_N d  \right) = 0
\end{equation}
for all $d \in \mathcal{D}$.
\item[(3)] For all unitaries $u \in \mathcal{N}$ and all $\rho \in \mathcal{B}(\mathscr{H})_\star^+$
\begin{equation}
\lim_{N \rightarrow \infty} \| \left( \rho \circ {\rm Ad}_{V_N} ( \cdot) - \rho_u \circ {\rm Ad}_{V_N} ( \cdot) \right)|_{\mathcal{L}_N'} \| = 0
\end{equation}
where $\rho_u(\cdot ) = \rho( u^\dagger \cdot u )$. 
\item[(4)] For all bounded 
sequences $m_{N}' \in \mathcal{L}_{N}'$ the following limit applies:
\begin{equation}
wo-\lim_{N \rightarrow \infty} [V_N^\dagger m_{N}' V_N, n ] =0
\end{equation}
for each $n \in \mathcal{N}$.
\end{itemize}

If any of the conditions (1-4) are satisfied and $\mathcal{N}$ is in standard form (there exists a cyclic and separating
vector $\eta$ for $\mathcal{N}$) then
$\beta_N$ in (1) can be chosen to be faithful and to fix $\eta$ uniformly: 
\begin{equation}
\label{uniform1}
\lim_{N \rightarrow \infty} \| \omega_{V_N \eta} \circ \beta_N -  \omega_\eta |_{\mathcal{N}} \| = 0
\end{equation}
Furthermore if the sequence $\varPhi_N$ is cyclic and separating for $\mathcal{L}_N$ and represents $\eta$
on the code, then 
the $\varPhi_N$-dual channel for $\beta_N$, $\alpha_N' \equiv \delta'_{ \beta_N, \varPhi_N } : \mathcal{L}_N' \rightarrow \mathcal{N}'$ (Definition~\ref{def:dual}) satisfies:
\begin{equation}
\label{alphaV}
wo-\lim_{N \rightarrow \infty} ( \alpha'_N(m_N') - V_N^\dagger m_N' V_N) =0
\end{equation}
for all bounded sequences $m_N' \in \mathcal{L}_N'$. 
\end{theorem}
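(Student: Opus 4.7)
The plan is to close the cycle $(1) \Rightarrow (2) \Rightarrow (4) \Rightarrow (1)$, augmented by direct proofs of $(1) \Rightarrow (4)$ and $(3) \Leftrightarrow (4)$, with hyperfiniteness entering only at $(4) \Rightarrow (1)$. The uniform-fixing and Petz-dual statements will then fall out of the specific construction of $\beta_N$.

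The implications $(1) \Rightarrow (2)$ (set $\mathcal{D} = \mathcal{N}$, $\gamma_N = \beta_N$) and $(1) \Rightarrow (4)$ (expand using $[\beta_N(n), m_N'] = 0$) are direct. For $(2) \Rightarrow (4)$, given $d \in \mathcal{D}$ and a bounded sequence $m_N' \in \mathcal{L}_N'$, I would use the identity
\[
[V_N^\dagger m_N' V_N, d] = V_N^\dagger m_N' (V_N d - \gamma_N(d) V_N) + (\gamma_N(d^*) V_N - V_N d^*)^\dagger m_N' V_N,
\]
which exploits $[\gamma_N(d), m_N'] = 0$; each term is WOT-null against fixed vectors by \eqref{ssv} together with the uniform bound of Lemma~\ref{lem:C}. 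To promote from $d \in \mathcal{D}$ to $n \in \mathcal{N}$, I would use Kaplansky density to find a strong$^*$-convergent net $d_\lambda \to n$ in $\mathcal{D}$ with $\|d_\lambda\| \le \|n\|$, splitting the error into an $N$-piece (killed by the estimate above) and a $\lambda$-piece (killed by uniform boundedness of $V_N^\dagger m_N' V_N$). For $(3) \Leftrightarrow (4)$: $(3)\Rightarrow(4)$ follows from $\rho = \omega_\psi$ since $(\omega_\psi - \omega_\psi \circ \mathrm{Ad}_u)(x) = -\langle u\psi\,|\,[x,u]\psi\rangle$, plus polarisation and the fact that unitaries span $\mathcal{N}$. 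Conversely, (4) as stated over all bounded sequences $m_N'$ implies $\sup_{\|m'\|\le 1,\, m'\in\mathcal{L}_N'} |\langle \psi_1 | [V_N^\dagger m' V_N, n] \psi_2 \rangle| \to 0$ (otherwise a sup-achieving subsequence contradicts (4)); a trace-class expansion $\rho = \sum_i \lambda_i \omega_{\phi_i}$ with tail bounded uniformly in $N, m'$ via Lemma~\ref{lem:C} then yields (3).

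The substantive step is $(4) \Rightarrow (1)$. Since $\mathcal{N}$ is hyperfinite, $\mathcal{N}'$ is injective and admits a unital CP projection $E : \mathcal{B}(\mathscr{H}) \to \mathcal{N}'$. I would define the uniformly bounded CP map $\alpha_N' : \mathcal{L}_N' \to \mathcal{N}'$ by $\alpha_N'(m') = E(V_N^\dagger m' V_N)$. Condition (4) forces $V_N^\dagger m_N' V_N$ to asymptotically commute with $\mathcal{N}$, hence to lie asymptotically in $\mathcal{N}'$ where $E$ acts as the identity; this yields $wo\text{-}\lim_N \bigl(\alpha_N'(m_N') - V_N^\dagger m_N' V_N\bigr) = 0$ for any bounded sequence, which is exactly \eqref{alphaV}. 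To obtain $\beta_N : \mathcal{N} \to \mathcal{L}_N$ I would renormalise $\alpha_N'$ into an honestly unital CP map (the defect $\alpha_N'(1) - 1$ is weakly controlled by $V_N^\dagger V_N \to 1$, modulo a careful treatment of the possible non-normality of $E$) and then take its dual channel in the sense of Definition~\ref{def:dual} using the cyclic-separating pair $(\eta, \varPhi_N)$ supplied by the standard-form hypothesis. The strong reconstruction $\beta_N(n) V_N - V_N n \to 0$ then follows by dualising the weak approximation of $\alpha_N'$ and pairing against vectors in the natural cone, using the defining matrix-element identity for the dual channel.

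The uniform-fixing claim \eqref{uniform1} is a normal-state upgrade: the construction arranges that $\omega_{V_N \eta} \circ \beta_N$ and $\omega_\eta|_{\mathcal{N}}$ agree weakly in the limit as positive normal functionals on $\mathcal{N}$, and pointwise convergence of positive normal functionals on a von Neumann algebra automatically upgrades to norm convergence. The Petz-dual identity \eqref{alphaV} is then essentially built in, since by the symmetry of the dual-channel correspondence in Definition~\ref{def:dual} the $\varPhi_N$-dual of $\beta_N$ is exactly $\alpha_N'$. The main obstacle I expect is the renormalisation step in $(4) \Rightarrow (1)$: producing a genuinely unital CP $\alpha_N'$ (and a correspondingly unital $\beta_N$) without damaging the weak approximation. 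Since $V_N^\dagger V_N$ need not be bounded below at finite $N$, $E$ need not be normal, and the standard polar-decomposition trick is delicate here, the correction must be shown to introduce only weakly-null errors that can be absorbed into the limit, which is where the uniform $N$-bound of Lemma~\ref{lem:C} and the standardness of $\varPhi_N$ (ensuring a well-defined dual at every $N$) will have to do the heavy lifting.
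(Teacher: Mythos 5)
Your peripheral implications are fine (the commutator identity for $(2)\Rightarrow(4')$, the Kaplansky upgrade from $\mathcal{D}$ to $\mathcal{N}$, and the trace-class argument for $(4)\Rightarrow(3)$ all work, and in places are cleaner than the paper's route $(2)\Rightarrow(3')\Rightarrow(4')$), but the core step $(4)\Rightarrow(1)$ has a genuine gap. You propose to take an injectivity projection $E:\mathcal{B}(\mathscr{H})\to\mathcal{N}'$ and argue that, since condition (4) makes $x_N:=V_N^\dagger m_N' V_N$ asymptotically commute with $\mathcal{N}$, $x_N$ "lies asymptotically in $\mathcal{N}'$ where $E$ acts as the identity," hence $E(x_N)-x_N\to 0$ weakly. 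This does not follow: weak-operator asymptotic commutation with each fixed $n\in\mathcal{N}$ gives no control, in any topology that $E$ respects, of the distance of $x_N$ from $\mathcal{N}'$, and $E$ is only a norm-one completely positive projection, generically non-normal, so weak limits cannot be passed through it. Converting asymptotic commutation into asymptotic membership in the commutant is precisely the hard quantitative step, and it is exactly where the paper uses hyperfiniteness: it reduces to the finite-dimensional subfactors $\widetilde{\mathcal{N}}_q$, where the conditional expectation is a Haar average over a compact unitary group, turning (4') into the norm (cb) estimate of Lemma~\ref{lem:smallce}; it then invokes the privacy/correctability correspondence of Crann et al.\ to produce genuine recovery channels $\beta_{N,q}$ on the small codes, and stitches these together along a diagonal $q=q(N)$ using the Petz conditional expectations $\mathcal{E}_{q,\eta}$ and Lemma~\ref{lem:gencond} (Lemma~\ref{lem:beta}). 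Even if your weak statement for $\alpha_N'$ were granted, "dualising" it cannot yield (1): the dual channel of Definition~\ref{def:dual} needs a normal input (non-normality of $E$ is not a technicality to be absorbed), and a single-operator weak limit does not upgrade to the strong reconstruction $so\text{-}\lim(\beta_N(n)V_N-V_N n)=0$; by Lemma~\ref{lem:seq} one needs weak convergence of \emph{products} $V_N^\dagger\beta_N(n_1)\beta_N(n_2)V_N\to n_1n_2$, i.e.\ an approximate multiplicativity which in the paper comes from Lemma~\ref{homo} applied to the small codes and which your duality argument never produces. In the paper, \eqref{alphaV} is a consequence derived after $\beta_N$ is built and after \eqref{uniform1} is established, not the mechanism for building $\beta_N$.

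Two further problems. First, your upgrade of \eqref{uniform1} rests on the claim that pointwise convergence of positive normal functionals on a von Neumann algebra implies norm convergence; this is false already for the abelian algebra $L^\infty[0,1]$ (densities $1+\sin(2\pi N x)$ converge weakly to $1$ but not in $L^1$), and $\mathcal{N}$ here is typically type III$_1$, so no Schur-type property is available. The paper instead gets the uniform statement for free from the construction, because $\mathcal{E}_{q,\eta}$ exactly fixes $\omega_\eta$ and the norm estimate feeding Lemma~\ref{lem:beta} is itself uniform. Second, conditions (1)--(4) assume only a cyclic $\eta$, while your construction of $\beta_N$ invokes the standard-form pair $(\eta,\varPhi_N)$ from the outset; the merely-cyclic case (handled in the paper by tensoring in a reference and proving invertibility of the operator $\mu'$) is not addressed by your argument at all.
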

\begin{proof}
See Section~\ref{sec:small}. 
\end{proof}

\begin{definition}[Reconstructible]
\label{def:main}
Consider an algebra $\mathcal{N} \subset \mathcal{B}(\mathscr{H})$ assumed hyperfinite
with a cyclic vector. We say that $\mathcal{N}$ is \emph{reconstructible} from $\mathcal{L}_N$ if it satisfies any one of the equivalent conditions in (1-4) of Theorem~\ref{thm:main}.
In short $\mathcal{N} \rightsquigarrow \mathcal{L}_N$. If $V_N$ is associated to an asymptotically isometric code, Definition~\ref{def:code},
with $\mathcal{L}_N = \mathcal{M}_N(O)$ for some $O$ then we will simply write $\mathcal{N} \rightsquigarrow O$. 

If additionally there is a vector $\eta \in \mathscr{H}$ represented on the code by the sequence $\varPhi_N$ and such that both $\eta$ and $\varPhi_N$ are cyclic and separating for $\mathcal{N}$ and $\mathcal{L}_N$ respectively, then we say the reconstruction is standard and write $\mathcal{N} \rightarrowtail \mathcal{L}_N$
or $\mathcal{N} \rightarrowtail  O$ if $\mathcal{L}_N = \mathcal{M}_N(O)$. 
\end{definition}

\begin{remark}
The fact that (2) implies the other conditions is remarkable. In particular one can start with a map $\gamma_N$ from an asymptotically isometric code, with minimal constraints, and end up
with an approximation to it - $\beta_N$ from condition (1) that is a normal unital completely positive map. It is somewhat like a double dual map: applying the dual map defined in Definition~\ref{def:dual} twice.  Such a procedure allows one to pass from a (completely) positive non-normal map to a (completely) positive normal map \cite{accardi1982conditional}. However our result here is seemingly considerably stronger since $\gamma_N$ need not be (completely) positive and only needs to be defined on a dense subspace, yet the result is a normal unital completely positive map. This comes at the expense of 
only approximating $\gamma_N$ when  restricted to $\mathcal{D}$ with:
\begin{equation}
\label{apsame}
so-\lim_{N \rightarrow \infty} (\gamma_N(d) - \beta_N(d) )V_N = 0\,, \qquad \forall d \in \mathcal{D}
\end{equation}
 It seems reasonable also that the ability to extend from $\gamma_N$ to $\beta_N$ is dependent on the hyperfinite nature of $\mathcal{N}$. 
\end{remark}

\begin{remark}
\eqref{apsame} is an important result for these codes, since it implies that we can have multiple reconstruction maps and they need not be the same. In particular cyclicity  of $\mathcal{M}_N(O)'$ for $V_N \eta$ (assuming a standard code) was used previously \cite{Kelly:2016edc,Faulkner:2020hzi}, for exact codes, to show the maps must be exactly the same. However we cannot apply such cyclicity argument through the strong convergence statement above since operator norms $m'$ diverge when approximating arbitrary states in the Hilbert space. The best we can do, is that for any $\epsilon$ there is some $N_\star$ such that:
\begin{equation}
\left\| (\gamma_N(c) - \beta_N(c) ) \left| \psi \right> \right\| < \epsilon \qquad \psi \in \mathscr{R}_N^{(1)}(O)  \subset \mathscr{K}_N\, \qquad \forall N > N_\star
\end{equation}
where $\mathscr{R}_N^{(1)}(O) = \{ m' V_N \left| \eta \right>, m' \in \mathcal{M}_N(O)' : \| m' \| \leq 1 \}$ is a \emph{subset} of the Hilbert space. 
\end{remark}

\begin{remark}
The non cyclic case can be treated after projecting the code further to some smaller Hilbert space. For example, this is treated
for the small codes discussed in Section~\ref{sec:small}. We do not develop this case here - in this paper it is used as a proof technique.
Standard asymptotically isometric codes will always give rise to standard reconstructions when $O$ is local. We expect this
will always be the case for holographic theories. If $O= K$, the entire Lorentzian
cylinder or equivalent, then the reconstruction need not be standard. We still have a nice theorem that applies in this case, although
in Section~\ref{sec:maxext} we will only work with standard reconstructions. 
\end{remark}

We now explore some immediate consequences of Definition~\ref{def:main}. These results are analogous to some of the results proven in \cite{Faulkner:2020hzi} for exact
error correcting codes. 
We have some simple results assuming henceforth the existence of an underlying asymptotically isometric code Definition~\ref{def:code}:

\begin{lemma}
Given an asymptotically isometric code then:  \label{older}
\begin{itemize}
\item[(i)] For any non-empty $O$ the causal wedge $\mathcal{C}(O)$ is reconstructible from $\mathcal{M}_N(O)$. That is $\mathcal{C}(O) \rightsquigarrow O$.
\item[(ii)] If $ \mathcal{N}  \rightsquigarrow O_1$ then $\mathcal{N} \rightsquigarrow  O_2$ for all $O_1 \subset O_2$. 
\item[(iii)] Suppose that $\mathcal{N}_1 \subset \mathcal{N}$ has a cyclic vector, then $\mathcal{N} \rightsquigarrow O \implies \mathcal{N}_1 \rightsquigarrow O$. 
\end{itemize}
\end{lemma}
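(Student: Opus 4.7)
The plan is to reduce each of the three statements to a direct application of Theorem~\ref{thm:qec} combined with the basic inclusion relations of the boundary net.

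For (i), the data of the asymptotically isometric code, as summarized in Definition~\ref{def:code}, already hands us exactly the input of Theorem~\ref{thm:qec} condition (2): the maps $\gamma_N$ are uniformly in $N$ pointwise bounded unital $\star$-maps defined on the weakly dense $\star$-subalgebra $\mathcal{A}(O) \subset \mathcal{C}(O)$, and (\ref{ssv}) supplies the required strong operator convergence. The standing hyperfiniteness assumption of this section, together with cyclicity of $\eta$ for $\mathcal{C}(O)$ (which is part of standardness when $O$ is local, and follows by isotony $\mathcal{C}(O_0) \subset \mathcal{C}(O)$ for any local $O_0 \subset O$ otherwise), provide the hypotheses of the theorem. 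The equivalence (2) $\Leftrightarrow$ (1) then produces a sequence of quantum channels establishing $\mathcal{C}(O) \rightsquigarrow O$.

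For (ii), I would work with characterization (4). If $\mathcal{N} \rightsquigarrow O_1$ then for every bounded sequence $m_N' \in \mathcal{M}_N(O_1)'$ and every $n \in \mathcal{N}$ we have $wo\text{-}\lim_N [V_N^\dagger m_N' V_N, n] = 0$. Isotony of $\mathcal{M}_N$ gives $\mathcal{M}_N(O_2)' \subset \mathcal{M}_N(O_1)'$, so the same asymptotic commutativity holds automatically for every bounded sequence in $\mathcal{M}_N(O_2)'$. Invoking the equivalence (4) $\Leftrightarrow$ (1) in reverse yields $\mathcal{N} \rightsquigarrow O_2$. Equivalently, one could just compose the reconstructing quantum channels with the inclusion $\mathcal{M}_N(O_1) \hookrightarrow \mathcal{M}_N(O_2)$, which preserves both the normal unital completely positive property and the strong intertwining of (1).

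For (iii), I would start from characterization (1) and use restriction: given faithful quantum channels $\beta_N : \mathcal{N} \to \mathcal{M}_N(O)$ reconstructing $\mathcal{N}$, the restriction $\beta_N|_{\mathcal{N}_1}$ is still a normal unital completely positive map from $\mathcal{N}_1$ to $\mathcal{M}_N(O)$, and the strong operator reconstruction statement survives trivially when the input is restricted to $n_1 \in \mathcal{N}_1$. Together with the assumed cyclic vector for $\mathcal{N}_1$ (and the hyperfiniteness implicit in Definition~\ref{def:main}), this verifies condition (1) for $\mathcal{N}_1$ and yields $\mathcal{N}_1 \rightsquigarrow O$. None of the three arguments is difficult; all are essentially formal consequences of Theorem~\ref{thm:qec} together with isotony, the only piece of bookkeeping being the cyclicity check in (i) for non-local $O$, which the inclusion remark above dispatches.
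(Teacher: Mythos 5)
Your proposal is correct and follows essentially the same route as the paper: (i) is condition (2) of Theorem~\ref{thm:qec} applied to the code data, (ii) follows from the inclusion $\mathcal{M}_N(O_1)\subset\mathcal{M}_N(O_2)$ (the paper phrases this via condition (1), i.e.\ your "equivalently" remark, while your primary route via condition (4) and commutant inclusion is an immediate equivalent), and (iii) is restriction of $\beta_N$ to $\mathcal{N}_1$ in condition (1). The extra bookkeeping you supply (cyclicity of $\eta$ for $\mathcal{C}(O)$ via isotony, hyperfiniteness hypotheses) is consistent with the paper's standing assumptions and does not change the argument.
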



\begin{proof}
(i) By definition with condition (2) in Theorem~\ref{thm:main}. (ii) Using condition (1) of Theorem~\ref{thm:main} and noting that $\mathcal{M}_N(O_1) \subset \mathcal{M}_N(O_2)$.  (iii) We restrict $\beta_N$ to $\mathcal{N}_1$ in condition (1). 
\end{proof}

We can also use different reconstruction quantum channels at the same time (this was proven for finite dimensional approximate codes in \cite{Cotler:2017erl}):
\begin{lemma}
Suppose that $\mathcal{N}_ i  \rightsquigarrow O_i$, for $i = 1, \ldots k$, then for any $\beta_N^i : \mathcal{N}^i \rightarrow \mathcal{M}_N(O_i)$ given in (1)
of Theorem~\ref{thm:main} for each of these regions $O_i$ we have:
\begin{equation}
wo-\lim_{N \rightarrow \infty} V_N^\dagger \beta^1_N(n_1) \beta^2_N(n_2) \ldots \beta^k_N(n_k) V_N =  n_1 n_2 \ldots n_k
\end{equation}
for any $n_i \in \mathcal{N}_i$. 
\end{lemma}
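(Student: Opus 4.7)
The plan is to iteratively push $V_N$ through the product from right to left using condition (1) of Theorem~\ref{thm:qec}, and then close with weak convergence of $V_N^\dagger V_N$. Fix vectors $\psi,\phi\in\mathscr{H}$; we want to show
\[
\big\langle \psi\big| V_N^\dagger \beta^1_N(n_1)\cdots\beta^k_N(n_k) V_N \big|\phi\big\rangle \;\longrightarrow\; \big\langle \psi\big| n_1\cdots n_k\big|\phi\big\rangle.
\]
The key input is that each $\beta^i_N$ is a quantum channel (normal unital completely positive), hence a contraction: $\|\beta^i_N(n_i)\|\le\|n_i\|$. Combined with Lemma~\ref{lem:C}, this gives uniform-in-$N$ norm bounds on every partial product $\beta^1_N(n_1)\cdots\beta^j_N(n_j)$ and on $V_N$.

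First I would establish, by induction on $j$ from $j=k$ down to $j=1$, the strong limit
\[
so\text{-}\lim_{N\to\infty} \Big(\beta^{j}_N(n_{j})\beta^{j+1}_N(n_{j+1})\cdots\beta^{k}_N(n_{k})\, V_N - V_N\, n_j n_{j+1}\cdots n_k\Big) = 0 .
\]
The base case $j=k$ is exactly condition (1) of Theorem~\ref{thm:qec} applied to $\beta^k_N$ on $n_k\in\mathcal{N}_k$. For the inductive step, write the difference at level $j$ as
\[
\beta^j_N(n_j)\Big[\beta^{j+1}_N(n_{j+1})\cdots\beta^{k}_N(n_k) V_N - V_N n_{j+1}\cdots n_k\Big] + \Big[\beta^j_N(n_j) V_N - V_N n_j\Big] n_{j+1}\cdots n_k .
\]
Applied to any fixed $\phi\in\mathscr{H}$, the first bracket tends to $0$ in norm by the inductive hypothesis, and the uniform bound $\|\beta^j_N(n_j)\|\le\|n_j\|$ preserves this. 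The second bracket applied to the fixed vector $n_{j+1}\cdots n_k|\phi\rangle$ tends to $0$ in norm by Theorem~\ref{thm:qec}(1) applied to $\beta^j_N$.

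Taking $j=1$ gives $\|\beta^1_N(n_1)\cdots\beta^k_N(n_k) V_N|\phi\rangle - V_N n_1\cdots n_k|\phi\rangle\|\to 0$. Pairing with $V_N\psi$ and using Cauchy--Schwarz together with the uniform bound $\|V_N\psi\|\le C^{1/2}\|\psi\|$ from Lemma~\ref{lem:C}, we obtain
\[
\big\langle\psi\big|V_N^\dagger\beta^1_N(n_1)\cdots\beta^k_N(n_k) V_N\big|\phi\big\rangle - \big\langle\psi\big|V_N^\dagger V_N\, n_1\cdots n_k\big|\phi\big\rangle \;\to\; 0 .
\]
Finally, the defining weak convergence $wo\text{-}\lim_N V_N^\dagger V_N = 1$ of the asymptotic code sends the second term to $\langle\psi|n_1\cdots n_k|\phi\rangle$, completing the proof.

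There is no real obstacle here: the argument is a routine telescoping that leans on (i) the $so$-reconstruction statement from Theorem~\ref{thm:qec}(1), (ii) the contractivity of UCP maps to keep the propagated errors under uniform control, and (iii) Lemma~\ref{lem:C} to bound $V_N$. The only mild subtlety worth flagging is that it is crucial to use the quantum channel reconstructions $\beta^i_N$ from condition (1) rather than the raw $\gamma^i_N$ maps of Definition~\ref{def:code}: the latter are only uniformly pointwise bounded on a dense $^\star$-subalgebra and need not be contractive on arbitrary elements of $\mathcal{N}_i$, so the inductive norm estimate above would fail if one tried to run it with $\gamma^i_N$ in place of $\beta^i_N$.
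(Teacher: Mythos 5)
Your proof is correct and is essentially the paper's argument: the paper simply cites Lemma~\ref{lem:seq} (the appendix equivalence of strong and simultaneous weak convergence for uniformly bounded sequences), and your telescoping induction, using contractivity of the unital completely positive $\beta^i_N$, Lemma~\ref{lem:C}, and $wo\text{-}\lim_N V_N^\dagger V_N=1$, is precisely the content of that lemma unpacked inline. Your closing remark about why the raw $\gamma^i_N$ maps would not suffice is a sensible observation but not needed for the statement.
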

\begin{proof}
Follows from Lemma~\ref{lem:seq}. 
\end{proof}

A basic input to the causality results we discuss below is:

\begin{lemma}[Causality]
\label{lem:caus}
If $\mathcal{N}_1$ is reconstructible from $\mathcal{M}(O_1)$ and
if $\mathcal{N}_2'$ is reconstructible from $\mathcal{M}(O_2')$ and $O_1 \subset O_2$
then $\mathcal{N}_1 \subset \mathcal{N}_2$. 
\end{lemma}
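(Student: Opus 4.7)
The plan is to invoke condition (1) of Theorem~\ref{thm:qec} on both reconstructibility hypotheses, producing sequences of quantum channels $\beta_N^1 : \mathcal{N}_1 \rightarrow \mathcal{M}_N(O_1)$ and $\beta_N^2 : \mathcal{N}_2' \rightarrow \mathcal{M}_N(O_2')$ that asymptotically intertwine the encoding map in the strong operator topology. Since $O_1 \subset O_2$ one has $O_2' \subset O_1'$, and causality of the microscopic net $\mathcal{M}_N$ gives $\mathcal{M}_N(O_2') \subset \mathcal{M}_N(O_1)'$. Consequently $\beta_N^1(n_1)$ and $\beta_N^2(n_2')$ commute \emph{exactly} inside $\mathcal{B}(\mathscr{K}_N)$, for every $n_1 \in \mathcal{N}_1$, $n_2' \in \mathcal{N}_2'$ and every $N$. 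This exact commutation on the boundary is the only place boundary causality enters the argument.

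Now fix $\psi \in \mathscr{H}$ together with $n_1 \in \mathcal{N}_1$ and $n_2' \in \mathcal{N}_2'$. I will show the norm limits
\begin{equation}
\beta_N^1(n_1)\beta_N^2(n_2') V_N \psi \longrightarrow V_N n_1 n_2' \psi, \qquad \beta_N^2(n_2')\beta_N^1(n_1) V_N \psi \longrightarrow V_N n_2' n_1 \psi.
\end{equation}
For the first, split
\begin{equation}
\beta_N^1(n_1)\beta_N^2(n_2') V_N \psi - V_N n_1 n_2' \psi = \beta_N^1(n_1)\bigl(\beta_N^2(n_2') V_N - V_N n_2'\bigr)\psi + \bigl(\beta_N^1(n_1) V_N - V_N n_1\bigr) n_2' \psi.
\end{equation}
The second summand vanishes in norm by the intertwining property of $\beta_N^1$ applied to the fixed vector $n_2' \psi$. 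The first summand is of the form ``bounded operator times a norm-vanishing vector'', where the uniform bound $\|\beta_N^1(n_1)\| \leq \|n_1\|$ comes from the fact that quantum channels are contractive in operator norm, while the vector vanishes in norm by the intertwining property of $\beta_N^2$. The second limit follows by the symmetric argument. Subtracting the two limits and exploiting the exact commutation $[\beta_N^1(n_1),\beta_N^2(n_2')]=0$ from the first paragraph yields $\|V_N [n_1,n_2']\psi\| \rightarrow 0$.

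To finish, apply the asymptotic isometry $wo\text{-}\lim_N V_N^\dagger V_N = 1$ to the fixed vector $\xi := [n_1,n_2']\psi$, which gives
\begin{equation}
\|V_N \xi\|^2 = \langle \xi | V_N^\dagger V_N | \xi \rangle \longrightarrow \|\xi\|^2.
\end{equation}
Combined with the previous vanishing this forces $\xi = 0$, that is $[n_1, n_2']\psi=0$. Since $\psi \in \mathscr{H}$ and $n_1,n_2'$ were arbitrary, $[\mathcal{N}_1,\mathcal{N}_2']=0$, and the double commutant then gives $\mathcal{N}_1 \subset (\mathcal{N}_2')' = \mathcal{N}_2$. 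I anticipate no real obstacle: the only technical point is uniform boundedness of $\beta_N^1(n_1)$ in $N$, needed to pull a bounded operator through a strong-convergence estimate, and this is built into the channel property provided by Theorem~\ref{thm:qec}(1). The whole argument is thus a direct translation of exact boundary causality into approximate bulk causality through the two reconstructing channels.
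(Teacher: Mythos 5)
Your proof is correct, but it takes a mildly different route from the paper's. The paper's argument is a one-liner that mixes the two characterizations in Theorem~\ref{thm:qec}: it uses condition (1) only for $\mathcal{N}_1$, observes that the bounded sequence $\beta_N^1(n_1)\in\mathcal{M}_N(O_1)\subset\mathcal{M}_N(O_2)\subset\mathcal{M}_N(O_2')'$ consists of admissible ``errors'' for the reconstruction of $\mathcal{N}_2'$, and then invokes condition (4) for $\mathcal{N}_2'$ together with $wo\text{-}\lim_N V_N^\dagger\beta_N^1(n_1)V_N=n_1$ to get $[n_1,n_2']=0$ directly. You instead take condition (1) for \emph{both} algebras, use boundary causality to get exact commutation of $\beta_N^1(n_1)$ and $\beta_N^2(n_2')$, re-derive (in effect) the simultaneous strong convergence of products that the paper delegates to Lemma~\ref{lem:seq}, and replace the appeal to condition (4) by the combination ``$\|V_N[n_1,n_2']\psi\|\to 0$ plus $wo\text{-}\lim_N V_N^\dagger V_N=1$ forces $[n_1,n_2']\psi=0$''. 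Your uniform-boundedness point $\|\beta_N^1(n_1)\|\leq\|n_1\|$ (contractivity of unital completely positive maps) is exactly what is needed to push the bounded operator through the strong limit, so there is no gap. What the paper's route buys is brevity, since condition (4) is precisely the non-disturbance statement that encodes the final step; what your route buys is a symmetric, self-contained argument that uses only the channel form of reconstruction and the weak isometry, making the physical mechanism (exact boundary commutation descending to the code) completely explicit.
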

\begin{proof}
Fix some $n_1 \in \mathcal{N}_1$. Use condition (1) in Theorem~\ref{thm:main} and the operators $\beta_N^1(n_1) \in \mathcal{M}(O_1) \subset \mathcal{M}(O_2)$
and apply it to condition (4) for $\mathcal{N}_2'$. That is:
\begin{equation}
0= wo-\lim_{N \rightarrow \infty} [ V_N^\dagger \beta_N^1(n_1) V_N, n_2']  = [n_1,n_2']
\end{equation}
for all $n_2' \in \mathcal{N}_2'$. Thus $\mathcal{N}_1 \subset \mathcal{N}_2$. 
\end{proof}

\begin{lemma}
\label{lem:cons}
Consider a local region $O$. If $\mathcal{N}_1 \rightsquigarrow O$ and $\mathcal{N}_2 \rightsquigarrow O$ then $\mathcal{N}_1 \vee \mathcal{N}_2  \rightsquigarrow O$.
\end{lemma}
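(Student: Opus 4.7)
The plan is to verify condition (4) of Theorem~\ref{thm:qec} for $\mathcal{N}_1 \vee \mathcal{N}_2$ and then check that the standing hypotheses of Definition~\ref{def:main} survive the join. First, by Lemma~\ref{lem:C} the operators $V_N$ are uniformly norm-bounded, so for any bounded sequence $m_N' \in \mathcal{M}_N(O)'$ the sequence $A_N := V_N^\dagger m_N' V_N$ is uniformly bounded by some $C < \infty$. Since $\mathcal{N}_i \rightsquigarrow O$ for $i=1,2$, condition (4) already gives $wo-\lim_N [A_N, n] = 0$ for each $n \in \mathcal{N}_1 \cup \mathcal{N}_2$. The core task is to propagate this vanishing commutator property to the full von Neumann algebra generated by the two.

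Let $\mathcal{D}$ denote the unital $*$-subalgebra of $\mathcal{B}(\mathscr{H})$ algebraically generated by $\mathcal{N}_1 \cup \mathcal{N}_2$. For any word $w = n^{(1)} n^{(2)} \cdots n^{(k)}$ with each $n^{(j)} \in \mathcal{N}_1 \cup \mathcal{N}_2$, the Leibniz identity gives
\begin{equation}
[A_N, w] = \sum_{j=1}^{k} n^{(1)} \cdots n^{(j-1)}\, [A_N, n^{(j)}]\, n^{(j+1)} \cdots n^{(k)}.
\end{equation}
Weak convergence to zero is preserved under left and right multiplication by \emph{fixed} bounded operators, so each summand and hence the full sum converges weakly to zero; by linearity, condition (4) holds on all of $\mathcal{D}$. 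To extend to the weak closure $\mathcal{D}'' = \mathcal{N}_1 \vee \mathcal{N}_2$, fix $n$ in the join with $\|n\| \leq 1$ and arbitrary vectors $\phi, \psi \in \mathscr{H}$. By the Kaplansky density theorem applied to the $*$-subalgebra $\mathcal{D}$, there exists a net $n_\alpha \in \mathcal{D}$ with $\|n_\alpha\| \leq 1$ and $n_\alpha \to n$ in the strong-$*$ topology. Decomposing
\begin{equation}
\langle \phi, [A_N, n]\, \psi\rangle = \langle \phi, [A_N, n_\alpha]\, \psi\rangle + \langle \phi, A_N (n - n_\alpha) \psi\rangle - \langle (n - n_\alpha)^* \phi,\, A_N \psi\rangle,
\end{equation}
the last two terms are bounded in absolute value by $C \bigl( \|\phi\| \|(n - n_\alpha)\psi\| + \|\psi\| \|(n - n_\alpha)^* \phi\| \bigr)$, which can be made arbitrarily small for a suitable $\alpha$ independently of $N$; the first term then vanishes in the $N \to \infty$ limit using the result already proven on $\mathcal{D}$. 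This is the main technical step.

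It remains to verify the prerequisites of Definition~\ref{def:main} for $\mathcal{N}_1 \vee \mathcal{N}_2$. Cyclicity of $\eta$ is immediate, since $(\mathcal{N}_1 \vee \mathcal{N}_2)\, \eta \supset \mathcal{N}_1\, \eta$ is already dense in $\mathscr{H}$. The genuine obstacle I anticipate is hyperfiniteness: the join of two hyperfinite von Neumann algebras need not be hyperfinite in general. In the present setting, where $\mathcal{N}_1, \mathcal{N}_2$ are expected to sit inside the hyperfinite causal wedge algebra $\mathcal{C}(O)$, one hopes an approximating structure inherited from $\mathcal{C}(O)$ suffices; however this step seems to require either an additional compatibility assumption on the pair $(\mathcal{N}_1, \mathcal{N}_2)$, or a direct argument exploiting the particular structure of reconstructible subalgebras produced by asymptotically isometric codes, and is where I would focus the most care.
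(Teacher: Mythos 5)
Your proof is correct and proves what the paper proves, but the final step takes a genuinely different route. The paper's own argument is: (a) the same Leibniz/$[X,n_1n_2]$ expansion you use, establishing the vanishing commutator on the unital $^\star$-algebra $\mathcal{D}$ of finite words in $\mathcal{N}_1\cup\mathcal{N}_2$, and then (b) a direct appeal to the auxiliary condition (4$'$) of Theorem~\ref{thm:qec} (the dense-subalgebra variant of (4), stated and proven equivalent to (1)--(4) in Section~\ref{sec:p} via the hyperfinite small-code machinery), which immediately yields $\mathcal{N}_1\vee\mathcal{N}_2\rightsquigarrow O$. You instead upgrade the statement from $\mathcal{D}$ to the full weak closure by hand, using Kaplansky density together with the uniform bound $\sup_N\|V_N^\dagger m_N' V_N\|\le C\sup_N\|m_N'\|$ (Lemma~\ref{lem:C}), so that you verify condition (4) itself for the join. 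Your $3\epsilon$ argument is sound: the error terms are controlled uniformly in $N$ by the strong-$^\star$ approximation, so the order of limits works. What the paper's route buys is brevity, since (4$'$)$\Rightarrow$(1) is already part of its toolbox; what your route buys is self-containedness at this point in the text, at the cost of redoing (in the special case at hand) part of what the (4$'$) equivalence encapsulates.

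On the hyperfiniteness worry you raise at the end: this is a legitimate mathematical point (the join of two hyperfinite algebras need not be hyperfinite in general --- $L(F_2)$ is generated by two abelian subalgebras), but the paper does not address it either. In the paper's conventions, hyperfiniteness and the existence of a cyclic vector are standing hypotheses built into Definition~\ref{def:main} and the preamble of Theorem~\ref{thm:main}; the paper's proof of this lemma only checks cyclicity of $\eta$ for the join (exactly as you do) and tacitly carries hyperfiniteness of $\mathcal{N}_1\vee\mathcal{N}_2$ as an assumption, consistent with the blanket hyperfiniteness assumption announced for Section~\ref{sec:crhd}. So you should not regard this as a gap your proof must close relative to the paper; flagging it is fair, but no additional argument is supplied (or expected) there.
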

\begin{proof} 
As part of the assumptions there exists $\eta$ cyclic for $\mathcal{N}_1$ which is then cyclic for  $\mathcal{N}_1 \vee \mathcal{N}_2$. 

Consider $\mathcal{D}$ as the set of finite products of $\mathcal{N}_1$ and $\mathcal{N}_2$. This is a weakly dense unital $^\star$-subalgebra of $\mathcal{N}_1 \vee \mathcal{N}_2$. From (4) of Theorem~\ref{thm:main} for each $\mathcal{N}_1$ and $\mathcal{N}_2$ we find that for all uniformly bounded sequences $m_N' \in \mathcal{M}_N(O)'$ then:
\begin{equation}
wo-\lim_N [ V_N^\dagger m_N' V_N , d ] =0 \qquad \forall d \in \mathcal{D}
\end{equation}
by repeated applications of the formula $[X,n_1 n_2] = [X,n_1] n_2 + n_1 [X,n_2]$. Then apply the condition (4') of Theorem~\ref{thm:qec} given in Section~\ref{sec:p} to conclude that
$\mathcal{N}_1 \vee \mathcal{N}_2 \rightsquigarrow O$.
\end{proof}

As a corollary to Lemma~\ref{lem:caus} we learn that all reconstructible algebras $\mathcal{N} \rightsquigarrow O $, 
satisfy $\mathcal{N} \subset \mathcal{C}(O')'$. 
We now generalize our definition of reconstructible:
\vspace{.5cm}

\noindent {\bf Definition~\ref{def:main}'} (Reconstructible')
Suppose that we no longer demand that $\mathcal{N}$ has a cyclic vector. 
Then we can define the notion of reconstructible
with an underlying assymptotic code, Definition~\ref{def:code}, as $\mathcal{N} \rightsquigarrow O$ iff $\mathcal{N} \vee \mathcal{C}(O) \rightsquigarrow O$. 
Finally we we always write $\mathbb{C} 1 \rightsquigarrow \emptyset$ as a trivial reconstruction (recall that $K' = \emptyset$.)
\vspace{.5cm}

This
definition is consistent with the previous definition for algebras with a cyclic vector because of Lemma~\ref{lem:cons}
and Lemma~\ref{older} (i,iii). In this new case we can consider very \emph{small} algebras. It seems these need to be ``dressed'' using the causal wedge in order to apply
the reconstruction theorems. 

Given the above it is natural to define the \emph{reconstruction wedges} \cite{Dong:2016eik,Hayden:2018khn}:
\begin{equation}
\mathcal{E}(O) =\bigvee_{ \mathcal{N} \ensuremath{\rightsquigarrow} O } \mathcal{N} \, ,
\qquad  \mathcal{E}(O') = \bigvee_{ \mathcal{N}' \rightsquigarrow O' } \mathcal{N}'
\end{equation}
with $\mathcal{E}(O) \rightsquigarrow O$  and $\mathcal{E}(O') \rightsquigarrow O'$. In particular for a standard asymptotically isometric code and local $O$ then
these reconstruction wedges become standard (since by assumption there is a vector $\eta$ cyclic and separating for $\mathcal{C}(O)$ and $\mathcal{C}(O')$):
$\mathcal{E}(O) \rightarrowtail O$  and $\mathcal{E}(O') \rightarrowtail O'$.

From this we learn:
\begin{lemma}
We have the following results for the reconstruction wedge:
\begin{itemize}
\item[(i)] \emph{For local $O$ the causal wedge is contained in the reconstruction wedge} and thus the reconstruction wedge is non-trivial:
\begin{equation}
\mathcal{C}(O) \subset \mathcal{E}(O) \subset \mathcal{C}(O')'
\end{equation}
\item[(ii)] \emph{Isotony}:
\begin{equation}
\mathcal{E}(O_1) \subset \mathcal{E}(O_2) \, \qquad O_1 \subset O_2
\end{equation}
\item[(iii)] \emph{Einstein causality}:
\begin{equation}
\mathcal{E}(O) \subset \mathcal{E}(O')'
\end{equation}
\end{itemize}
\end{lemma}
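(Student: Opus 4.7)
The plan is that all three parts follow by direct application of the results already proved in this subsection, namely Lemma~\ref{older}, Lemma~\ref{lem:caus}, and Lemma~\ref{lem:cons}, with no need to revisit Theorem~\ref{thm:qec} itself.

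For (i), the inclusion $\mathcal{C}(O) \subset \mathcal{E}(O)$ is immediate: Lemma~\ref{older}(i) gives $\mathcal{C}(O) \ensuremath{\rightsquigarrow} O$ for non-empty $O$, so $\mathcal{C}(O)$ is one of the algebras appearing in the join defining $\mathcal{E}(O)$. For the right inclusion $\mathcal{E}(O) \subset \mathcal{C}(O')'$, I would pick an arbitrary $\mathcal{N} \ensuremath{\rightsquigarrow} O$ and invoke the causality Lemma~\ref{lem:caus} with $O_1 = O_2 = O$, $\mathcal{N}_1 = \mathcal{N}$, and $\mathcal{N}_2 = \mathcal{C}(O')'$. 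The hypothesis $\mathcal{N}_2' = \mathcal{C}(O') \ensuremath{\rightsquigarrow} O'$ holds by Lemma~\ref{older}(i), so the lemma yields $\mathcal{N} \subset \mathcal{C}(O')'$; taking the join over all such $\mathcal{N}$ completes the argument.

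For (ii), I would rely on the monotonicity Lemma~\ref{older}(ii). For algebras $\mathcal{N}$ with a cyclic vector contributing to $\mathcal{E}(O_1)$, it is immediate that $\mathcal{N} \ensuremath{\rightsquigarrow} O_1$ implies $\mathcal{N} \ensuremath{\rightsquigarrow} O_2$. For algebras reconstructible in the sense of Definition~\ref{def:main}', one has $\mathcal{N} \vee \mathcal{C}(O_1) \ensuremath{\rightsquigarrow} O_1$, which by Lemma~\ref{older}(ii) gives $\mathcal{N} \vee \mathcal{C}(O_1) \ensuremath{\rightsquigarrow} O_2$. Since $\mathcal{C}(O_1) \subset \mathcal{C}(O_2)$ by isotony of the net $\mathcal{C}$ and since $\mathcal{C}(O_2) \ensuremath{\rightsquigarrow} O_2$, Lemma~\ref{lem:cons} allows us to enlarge to $\mathcal{N} \vee \mathcal{C}(O_2) \ensuremath{\rightsquigarrow} O_2$, i.e.\ $\mathcal{N} \ensuremath{\rightsquigarrow} O_2$ in the primed sense. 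Hence every algebra contributing to the join for $\mathcal{E}(O_1)$ also contributes to the join for $\mathcal{E}(O_2)$, establishing the inclusion.

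For (iii), by the very definition of the reconstruction wedges, $\mathcal{E}(O) \ensuremath{\rightsquigarrow} O$ and $\mathcal{E}(O') \ensuremath{\rightsquigarrow} O'$. Applying Lemma~\ref{lem:caus} once more, with $O_1 = O_2 = O$, $\mathcal{N}_1 = \mathcal{E}(O)$, and $\mathcal{N}_2 = \mathcal{E}(O')'$ (so that the hypothesis $\mathcal{N}_2' = \mathcal{E}(O') \ensuremath{\rightsquigarrow} O'$ is exactly the reconstructibility of the complementary reconstruction wedge), immediately delivers $\mathcal{E}(O) \subset \mathcal{E}(O')'$. The only mild subtlety to guard against is the treatment of algebras lacking a cyclic vector in (ii); this is handled cleanly by combining Lemma~\ref{lem:cons} with the isotony of $\mathcal{C}$, so no serious obstacle arises and the argument is essentially formal once Theorem~\ref{thm:qec} is in hand.
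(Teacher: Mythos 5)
Your proposal is correct and follows essentially the same route as the paper: (i) is immediate from Lemma~\ref{older}(i) together with the corollary of Lemma~\ref{lem:caus} already noted in the text, (ii) is Lemma~\ref{older}(ii), and (iii) is Lemma~\ref{lem:caus} with $O_1=O_2=O$, $\mathcal{N}_1=\mathcal{E}(O)$, $\mathcal{N}_2'=\mathcal{E}(O')$. Your extra care in (ii) with the dressed (non-cyclic) case via Lemma~\ref{lem:cons} and isotony of $\mathcal{C}$ is a welcome bit of detail the paper leaves implicit, but it is the same argument.
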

\begin{proof}
(i) Immediate. (ii) From Lemma~\ref{older} (ii). (iii)  From Lemma~\ref{lem:caus} with $O_1 = O_2 = O$
and $\mathcal{N}_1 = \mathcal{E}(O)$ and  $\mathcal{N}_2' = \mathcal{E}(O')$. 
\end{proof}

This is as far as we can go for a general asymptotically isometric code, without any further assumptions.
There are known counter examples to always having a maximal Haag dual extension (at least in finite dimensions \cite{Hayden:2018khn,Akers:2020pmf}) and so we don't expect
general codes to have such extensions. There is however something special about holographic theories with a semi-classical gravity dual, that would imply such Haag dual extensions
always exist for local $O$. We do not have a full understanding on exactly what is special about holographic theories
- although we make some speculative comments in the Discussion~\ref{sec:bulklocality}. We can however write down equivalent conditions characterizing such maximal extensions. Such conditions are well known to be true of holographic theories
with a weakly coupled gravity limit \cite{Jafferis:2015del,Faulkner:2017vdd}. 
We explore these equivalent conditions next. 

\subsection{Maximal extensions and complementary recovery}
\label{sec:maxext}

If we assume the extension $\mathcal{E}(O)$ satisfies Haag duality \cite{haag2012local}, or equivalently complementary recovery \cite{Harlow:2016vwg}, we define:
\begin{definition}
A reconstruction wedge for $O$ is called an \emph{entanglement wedge} if it satisfies Haag duality:
\begin{equation}
\mathcal{E}(O) = \mathcal{E}(O')'
\end{equation} 
More generally $\mathcal{N}$ is an entanglement wedge for $\mathcal{L}_N$
if $\mathcal{N} \rightsquigarrow \mathcal{L}_N$ and  $\mathcal{N}' \rightsquigarrow \mathcal{L}_N'$.
A \emph{standard entanglement wedge} is an entanglement wedge where the reconstructions are standard  $\mathcal{N} \rightarrowtail \mathcal{L}_N$ and  $\mathcal{N}' \rightarrowtail \mathcal{L}_N'$.

An asymptotically isometric code with \emph{complementary recovery} is a code where for all local boundary regions $O$ the reconstruction wedge is an entanglement wedge.
\end{definition}
\begin{remark}
In this case isotony simply becomes the statement of \emph{entanglement wedge nesting} \cite{Wall:2012uf}. 
Entanglement wedges, for standard asymptotically isometric codes with local $O$, are standard. 
\end{remark}

For the global algebra we have:
\begin{lemma}
\label{lem:global}
Suppose that $\mathcal{M}_N' \equiv \mathcal{M}_N(K)' = \mathbb{C} 1$ then $\mathcal{E}(K) = \mathcal{B}(\mathscr{H}) = \mathcal{E}(\emptyset)'$. 
That is the extension is maximal and $\mathcal{E}(K)$ is a (non-standard) entanglement wedge. 
\end{lemma}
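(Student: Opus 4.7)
The plan is to apply Theorem~\ref{thm:qec} directly to $\mathcal{N} = \mathcal{B}(\mathscr{H})$ with $\mathcal{L}_N = \mathcal{M}_N(K) = \mathcal{M}_N$, and verify condition (4). The hypotheses of the theorem are easy to check: $\mathcal{B}(\mathscr{H})$ is type-I$_\infty$ and hence hyperfinite, while any non-zero vector in $\mathscr{H}$ (for instance the code vector $\eta$) is cyclic for $\mathcal{B}(\mathscr{H})$ trivially, since $\mathcal{B}(\mathscr{H})\eta = \mathscr{H}$.

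The key step is condition (4). Because $\mathcal{M}_N' = \mathbb{C} 1$ by assumption, every element $m_N' \in \mathcal{M}_N'$ is a scalar $c_N \cdot 1$, and a uniformly bounded sequence $m_N' \in \mathcal{M}_N'$ just means $|c_N| \leq C$ for some $C$ independent of $N$. Then for any $n \in \mathcal{B}(\mathscr{H})$ and any $\psi_1, \psi_2 \in \mathscr{H}$,
\begin{equation}
\left\langle \psi_1 \right| [V_N^\dagger m_N' V_N, n] \left| \psi_2 \right\rangle
= c_N \Bigl( \left\langle \psi_1 \right| V_N^\dagger V_N\, n \psi_2 \bigr\rangle - \bigl\langle n^\dagger \psi_1 \bigr| V_N^\dagger V_N \left| \psi_2 \right\rangle \Bigr).
\end{equation}
By the defining property $wo\text{-}\lim_N V_N^\dagger V_N = 1$, both matrix elements in the parenthesis converge to the same limit $\left\langle \psi_1 \right| n \left| \psi_2 \right\rangle$, so their difference tends to zero; multiplying by the uniformly bounded scalar $c_N$ preserves this, and condition (4) is satisfied. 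Theorem~\ref{thm:qec} then yields $\mathcal{B}(\mathscr{H}) \rightsquigarrow K$.

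By the definition of the reconstruction wedge, $\mathcal{E}(K) = \bigvee_{\mathcal{N} \rightsquigarrow K} \mathcal{N}$, so $\mathcal{E}(K) \supset \mathcal{B}(\mathscr{H})$ and hence $\mathcal{E}(K) = \mathcal{B}(\mathscr{H})$. The second equality $\mathcal{B}(\mathscr{H}) = \mathcal{E}(\emptyset)'$ is immediate from the convention $\mathbb{C} 1 \rightsquigarrow \emptyset$, giving $\mathcal{E}(\emptyset) = \mathbb{C} 1$ and commutant $\mathcal{B}(\mathscr{H})$. The wedge is in general only a \emph{non-standard} entanglement wedge because $\eta$ need not be separating for $\mathcal{B}(\mathscr{H})$ (any projection onto a subspace orthogonal to $\eta$ annihilates $\eta$), consistent with the parenthetical remark.

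I do not anticipate a real obstruction here: the entire content of the lemma is that when the commutant shrinks to scalars, condition (4) of Theorem~\ref{thm:qec} becomes vacuous, so the only work is to check that the weak convergence of $V_N^\dagger V_N$ is strong enough to kill the commutators when paired against a bounded scalar sequence — which follows from sesquilinearity of the matrix element.
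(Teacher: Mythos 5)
Your proof is correct and follows essentially the same route as the paper's one-line argument: the paper simply notes that condition (4) of Theorem~\ref{thm:qec} is automatically satisfied for $\mathcal{N}=\mathcal{B}(\mathscr{H})$ when $\mathcal{M}_N'=\mathbb{C}1$, which is exactly the computation you spell out with the scalar sequence $c_N$ and the weak convergence of $V_N^\dagger V_N$. Your added checks (hyperfiniteness and cyclicity of $\mathcal{B}(\mathscr{H})$, and the identification $\mathcal{E}(\emptyset)'=\mathcal{B}(\mathscr{H})$) are implicit in the paper and correctly handled.
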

\begin{proof}
Note that condition (4) of Theorem~\ref{thm:main} is automatically satisfied for $K$ in this case for all operators $\mathcal{N} =  \mathcal{B}(\mathscr{H})$. 
\end{proof}

We now prove some results about complementary recovery for one wedge. 
Note we have:
\begin{lemma}
\label{lem:ab}
If $\mathcal{N}$ is a standard entanglement wedge for $\mathcal{L}_N$ then there exists
faithful quantum channels 
$\beta_N :\mathcal{N} \rightarrow \mathcal{L}_N,\,\, \alpha_N :  \mathcal{L}_N \rightarrow \mathcal{N} $ and
$\beta_N' :\mathcal{N}' \rightarrow \mathcal{L}_N', \,\, \alpha_N' :  \mathcal{L}_N' \rightarrow \mathcal{N}' $
as defined in Theorem~\ref{thm:main} for $\mathcal{N}$ and $\mathcal{N}'$, and
 such that: 
\begin{align}
wo-\lim_{N \rightarrow \infty} \alpha_N \circ \beta_N (n) = n \, \qquad \forall n \in \mathcal{N}  \\
wo-\lim_{N \rightarrow \infty} \alpha_N' \circ \beta_N' (n') = n' \, \qquad \forall n' \in \mathcal{N}'
\end{align}
\end{lemma}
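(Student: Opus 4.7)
The plan is to obtain the four channels by two parallel applications of Theorem~\ref{thm:main} --- one for $\mathcal{N} \rightarrowtail \mathcal{L}_N$ and one for the complementary reconstruction $\mathcal{N}' \rightarrowtail \mathcal{L}_N'$ --- and then to verify the composition identity by a single three-term decomposition.

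First I would invoke Theorem~\ref{thm:main} on $\mathcal{N} \rightarrowtail \mathcal{L}_N$ to produce a faithful quantum channel $\beta_N : \mathcal{N} \rightarrow \mathcal{L}_N$ satisfying the strong-operator intertwining $so\text{-}\lim_N(\beta_N(n) V_N - V_N n) = 0$ for all $n \in \mathcal{N}$. Because $\mathcal{N}$ is a \emph{standard} entanglement wedge, the vectors $\eta$ and $\varPhi_N$ required to form the $\varPhi_N$-dual of Definition~\ref{def:dual} are supplied. Applying the same theorem to $\mathcal{N}' \rightarrowtail \mathcal{L}_N'$ delivers a faithful $\beta_N' : \mathcal{N}' \rightarrow \mathcal{L}_N'$ with the analogous intertwining on $\mathcal{N}'$. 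I then define $\alpha_N' : \mathcal{L}_N' \rightarrow \mathcal{N}'$ as the $\varPhi_N$-dual of $\beta_N$, and $\alpha_N : \mathcal{L}_N \rightarrow \mathcal{N}$ as the $\varPhi_N$-dual of $\beta_N'$; both are faithful quantum channels, and the asymptotic compression identity~\eqref{alphaV} of Theorem~\ref{thm:main}, applied to each side, reads
\[
wo\text{-}\lim_N \big(\alpha_N(m_N) - V_N^\dagger m_N V_N\big) = 0, \qquad wo\text{-}\lim_N \big(\alpha_N'(m_N') - V_N^\dagger m_N' V_N\big) = 0,
\]
for every uniformly bounded sequence $m_N \in \mathcal{L}_N$, respectively $m_N' \in \mathcal{L}_N'$.

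To verify $wo\text{-}\lim_N \alpha_N \circ \beta_N(n) = n$, fix $n \in \mathcal{N}$ and vectors $\psi_1, \psi_2 \in \mathscr{H}$. The crucial uniform estimate is $\|\beta_N(n)\| \leq \|n\|$, which holds because $\beta_N$ is unital and completely positive; therefore the sequence $m_N := \beta_N(n)$ is uniformly bounded in $\mathcal{L}_N$ and the first dual identity above applies, giving
\[
\lim_N \big( \psi_1, \alpha_N(\beta_N(n)) \psi_2 \big) - \big( \psi_1, V_N^\dagger \beta_N(n) V_N \psi_2 \big) = 0.
\]
Rewriting the retained term as
\[
V_N^\dagger \beta_N(n) V_N \psi_2 = V_N^\dagger V_N \, n \psi_2 + V_N^\dagger \big(\beta_N(n) V_N - V_N n\big) \psi_2,
\]
the first summand pairs with $\psi_1$ and tends to $(\psi_1, n \psi_2)$ by the defining weak-operator convergence $V_N^\dagger V_N \to 1$, while the second summand tends to zero in norm because $\|V_N^\dagger\|$ is uniformly bounded by Lemma~\ref{lem:C} and $(\beta_N(n) V_N - V_N n) \psi_2 \to 0$ by strong-operator convergence. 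Summing these three limits yields $(\psi_1, \alpha_N \beta_N(n) \psi_2) \to (\psi_1, n \psi_2)$, which is exactly the claimed weak-operator statement.

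The complementary limit $wo\text{-}\lim_N \alpha_N' \circ \beta_N'(n') = n'$ for $n' \in \mathcal{N}'$ follows by the identical argument with primed objects throughout, using the intertwining of $\beta_N'$ and the second dual identity. I do not anticipate a serious obstacle: the only nontrivial input beyond bookkeeping is already packaged inside Theorem~\ref{thm:main}, and the delicate point is simply that the uniform bound $\|\beta_N(n)\| \leq \|n\|$ is precisely what allows the sequence $m_N := \beta_N(n)$ to be fed into the dual-channel approximation~\eqref{alphaV}. No further hyperfiniteness hypothesis is invoked beyond the one already used to produce $\beta_N$ and $\beta_N'$ themselves.
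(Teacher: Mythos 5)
Your proposal is correct and follows essentially the same route as the paper: both sides are obtained from Theorem~\ref{thm:main} (existence of faithful $\beta_N$, $\beta_N'$ plus the $\varPhi_N$-dual channels satisfying~\eqref{alphaV}), and the composition limit then reduces to $wo\text{-}\lim_N V_N^\dagger \beta_N(n) V_N = n$, which your three-term decomposition with the uniform bound $\|\beta_N(n)\|\le\|n\|$ establishes exactly as intended. The paper states this in one line; your write-up simply makes the bookkeeping explicit, with no gap.
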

\begin{proof}
From (1) and (4) of Theorem~\ref{thm:main} as well as the last part of that theorem applicable to standard entanglement wedges. 
\end{proof}

The next theorem we prove is the JLMS condition \cite{Jafferis:2015del}.
To discuss this we need to introduce Tomita-Takeskai modular operators and relative entropy. For more details see \cite{Witten:2018lha,Ceyhan:2018zfg,Faulkner:2020iou}, and we follow the notation in the later work. For simplicity we state our definitions for algebras on the code subspace $\mathscr{H}$ - there are also equivalent definitions for algebras on $\mathcal{K}_N$.
The modular data for $\eta \in \mathscr{H}$,
and an algebra $\mathcal{N} \subset \mathcal{B}(\mathscr{H})$ is denoted $J_{\eta;\mathcal{N}}, \Delta_{\eta;\mathcal{N}}$. For the most part we will assume that $\eta$ is cyclic and separating for $\mathcal{N}$, although similar definitions work when this is not the case. In this case $\Delta_{\eta;\mathcal{N}}$ is a densely defined closed operator
and $J_{\eta;\mathcal{N}}$ is antiunitary. 
We have $J_{\eta;\mathcal{N}} \Delta_{\eta;\mathcal{N}}^{1/2} n \left|\eta \right> = n^\dagger \left| \eta \right>$ which arises from the defining equation. 
Then the important results of Tomita-Takesaki theory are:
\begin{align}
\label{TTresults}
J_{\eta;\mathcal{N}} n J_{\eta;\mathcal{N}} \in \mathcal{N}' \qquad \forall n \in \mathcal{N}  \\
\Delta_{\eta;\mathcal{N}}^{-is} n \Delta_{\eta;\mathcal{N}}^{is}
 \equiv \sigma_{\varrho}^s(n) \in \mathcal{N} \qquad \forall n \in \mathcal{N}
\end{align}
where $\varrho = \omega_{\eta}|_{\mathcal{N}}$
and it is possible to show that the later modular flow only depends on the state $\varrho$ and not any particular purification of it in $\mathscr{H}$ (working in the standard representation.) In the proofs we need some more details on modular theory, such as positive natural cones and relative modular operators. See Section~\ref{proof:jlms} for some brief discussion.

We also need to introduce relative entropy.
The standard definition of relative entropy that works for normal states $\rho,\sigma \in \mathcal{N}_\star$ of a general von Neumann algebra $\mathcal{N}$ was given by Araki using the relative modular operator \cite{araki1976relative,araki1977relative}. 
We need a definition of relative entropy when the states are not normalized. In this case:
\begin{equation}
\label{eq:correctnorm}
S_{\rm rel}(\rho | \sigma) = \rho(1)\left( S_{\rm rel}( \rho/\rho(1) | \sigma/\sigma(1)) + \ln \rho(1)/ \sigma(1) + \sigma(1)/\rho(1) - 1 \right)\, \in \mathbb{R}_{\geq 0} \cup \infty
\end{equation}
where we use the standard Araki definition for the normalized states. We quote some important properties of $S_{\rm rel}(\rho | \sigma)$ that we need. There is monotonicity of relative entropy under the action of a quantum channel $\alpha$:
\begin{equation}
S_{\rm rel}(\rho | \sigma) \geq S_{\rm rel}(\rho \circ \alpha | \sigma \circ \alpha )
\end{equation}
which still applies even for non-normalized states.
There is lower semi-continuity of relative entropy:
\begin{equation}
\label{lowersemi}
\liminf_{n \rightarrow \infty} S(\rho_n |\sigma_n ) \geq S(\rho | \sigma)
\end{equation}
where $\rho_n \rightarrow \rho$ and $\sigma_n \rightarrow \sigma$ in the pointwise sense. This is a very weak sense of convergence (\ref{ultraweak}). It is convergence with respect to the weak topology on $\mathcal{N}_\star$
when thought of as a Banach space. Such a weak form of lower semi-continuity is crucial to our results and is consequence of Kosaki's variational formula for relative entropy, see \cite{ohya2004quantum,kosaki1986relative}.

Finally there is the Pinsker's inequality which becomes:
\begin{equation}
\label{eq:correctp}
S_{\rm rel}(\rho | \sigma)  \geq \frac{\rho(1)}{2} \left( \| \rho/\rho(1) - \sigma/\sigma(1) \|^2 + (1 - \exp(-| \ln \sigma(1)/\rho(1) |))^2  \right)
\end{equation}
so that the vanishing of the relative entropy entails $\rho= \sigma$. We allow such non-normalized states in \eqref{se} and monotonicity still applies
for unital completely positive maps since the normalization is preserved by the channel.

Relative entropy is however a little tricky to work with because it need not be finite and in particular it may
not be continuous under limits. This makes it particularly hard to make statements about relative entropy under the $\lim_{N\rightarrow \infty}$. One way to fix this is to define the smoothed relative entropy for two states $\rho,\sigma$
\begin{equation}
\label{se}
S_{\epsilon}( \rho | \sigma ) = \inf_{ \tilde{\rho},\tilde{\sigma} : \| \rho - \tilde{\rho} \| , \| \sigma - \tilde{\sigma} \| \leq \epsilon}
S_{\rm rel}( \tilde{\rho} | \tilde{\sigma} )
\end{equation}
This quantity is monotonic under the action of a quantum channel $\alpha$:
\begin{equation}
S_{\epsilon}( \rho | \sigma)  \geq \inf_{ \tilde{\rho},\tilde{\sigma} : \| \rho - \tilde{\rho} \| , \| \sigma - \tilde{\sigma} \| \leq \epsilon}
S_{\rm rel}( \tilde{\rho} \circ \alpha | \tilde{\sigma}  \circ \alpha )  \geq S_{\epsilon}( \rho \circ \alpha | \sigma \circ \alpha) 
\end{equation}
since $: \| \rho - \tilde{\rho} \| \geq  \| \rho \circ \alpha - \tilde{\rho} \circ \alpha \|$. 
We will use this to characterize the existence of an entanglement wedge, as in JLMS.

It is sometimes convenient to work with states satisfying $\rho \leq \lambda \sigma$ where $\lambda \geq 1$ 
for normalized $\rho,\sigma \in \mathcal{N}^+_\star$. By definition this condition is the same as the requirement that the max relative entropy is finite $S_{\rm max}(\rho|\sigma) \leq \ln \lambda$ which implies that the relative entropy $S_{\rm rel}$ is finite. For cyclic and separating $\eta \in \mathscr{H}$ inducing $\sigma$ when restricted to $\mathcal{N}$ this condition implies that $\rho$ can be represented by $\psi \in \mathscr{H}$ with $\psi = n' \eta$ and $\| n' \|^2 \leq \lambda$. 

When considering states on a larger algebra (say on $\mathcal{B}(\mathscr{H})$) we will use the following notation for relative entropy:  $S_{{\rm rel}/\epsilon}(\rho|\sigma;\mathcal{A}) \equiv S_{{\rm rel}/\epsilon}(\rho|_{\mathcal{A}}|\sigma_{\mathcal{A}})$  for $\rho,\sigma \in \mathcal{B}(\mathscr{H})_\star$
and $\mathcal{A}\subset \mathcal{B}(\mathscr{K})$ a von Neumann subalgebra.

\begin{theorem}[JLMS]
\label{thm:JLMS}
\label{thm:JLMSplus}
Consider a hyperfinite von Neumann algebra $\mathcal{N} \subset \mathcal{B}(\mathscr{H})$
and von Neumann algebra $\mathcal{L}_N \subset \mathcal{B}(\mathscr{K}_N)$ with a vector $\eta \in \mathscr{H}$ represented by $\varPhi_N$
on the code and such that $\eta$ (resp. $\varPhi_N$) is cyclic and separating for $\mathcal{N}$ (resp.
$\mathcal{L}_N$) then the following conditions are equivalent:
\begin{itemize}
\item[(i)] $\mathcal{N}$ is a standard entanglement wedge for $\mathcal{L}_N$. 
\item[(ii)] For all pairs of normalized states $\rho, \sigma \in   \mathcal{B}(\mathscr{H})_\star^+$
with $\rho|_{\mathcal{N}} \leq \lambda \sigma|_{\mathcal{N}}$ for some $1 \leq \lambda < \infty$,
there exists a monotonic function $\epsilon_N \rightarrow 0$ such that the following limit applies:
\begin{equation}
\label{rhv}
 \lim_{N \rightarrow \infty} S_{\delta_N}( \rho \circ {\rm Ad}_{V_N} |\sigma \circ {\rm Ad}_{V_N}  ; \mathcal{L}_N)  = S_{\rm rel}(\rho|\sigma; \mathcal{N} )
\end{equation}
for all sequences $\delta_N \geq \epsilon_N$ with $\delta_N \rightarrow 0$. 
\item[(iii)] $\mathcal{N} \rightarrowtail \mathcal{L}_N$ 
and one of the following applies:
\begin{itemize}
\item[(iii)$_a$] The modular operator acts covariantly:
\begin{equation}
so-\lim_{N \rightarrow \infty} 
\Delta^{is}_{\varPhi_N 
; \mathcal{L}_N} V_N -
V_N \Delta^{is}_{\eta;\mathcal{N}} = 0 \,,\qquad s \in \mathbb{R}
\end{equation}
\item[(iii)$_b$]
The modular automorphism group for $\eta$ acts covariantly through the code:
\begin{equation}
\label{bbbff2}
so-\lim_{N \rightarrow \infty}  \sigma_{\omega_{\varPhi_N}
|_{\mathcal{L}_N}}^s( \beta_N(n)) V_N
- V_N \sigma_{\omega_\eta|_\mathcal{N}}^s(n) =0 \qquad \forall n \, \in \mathcal{N} \qquad \forall \, s\in \mathbb{R}
\end{equation}
\item[(iii)$_c$] The modular conjugation acts covariantly: 
\begin{equation}
so-\lim_{N \rightarrow \infty} 
J_{\varPhi_N;
\mathcal{L}_N} V_N -
V_N J_{\eta;\mathcal{N}} = 0
\end{equation}
\end{itemize}
\end{itemize}
If any of these conditions is satisfied then
(ii) applies for \emph{all} pairs of normalized states $\rho,\sigma \in \mathcal{B}(\mathscr{H})_\star^+$ (without  the condition of finite max relative entropy) and for all  $\rho,\sigma \in \mathcal{N}_\star^+$
we have:
\begin{equation}
\lim_{N \rightarrow \infty} S_{\rm rel}( \rho \circ \alpha_N | \sigma \circ \alpha_N) = S_{\rm rel}(\rho|\sigma)
\end{equation}
Also $(iii)$ applies for any vector $\psi \in \mathscr{H}$ which is cyclic and separating for $\mathcal{N}$ and any sequence $\Psi_N$  representing $\psi$ on the code, which are cyclic and separating for $\mathcal{L}_N$. 
\end{theorem}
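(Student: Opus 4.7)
My plan is to establish the cycle $(i)\Rightarrow(ii)\Rightarrow(iii)\Rightarrow(i)$, and then obtain the two extensions stated after condition $(iii)$ by approximation arguments.

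For $(i)\Rightarrow(ii)$ I would use Lemma~\ref{lem:ab} to produce channels $\beta_N:\mathcal{N}\to\mathcal{L}_N$ and $\alpha_N:\mathcal{L}_N\to\mathcal{N}$, choosing $\alpha_N=\delta'_{\beta_N',\varPhi_N}$ so that \eqref{alphaV} of Theorem~\ref{thm:qec} holds. For the lower bound in \eqref{rhv}, take approximants $(\tilde\omega_{1,N},\tilde\omega_{2,N})$ realizing $S_{\delta_N}$, pull them back through $\beta_N$, and note that $\tilde\omega_{i,N}\circ\beta_N$ converges pointwise on $\mathcal{N}$ to $\rho|_{\mathcal{N}},\sigma|_{\mathcal{N}}$ by combining the $\delta_N$-norm control on $\tilde\omega_{i,N}$ with the strong convergence $V_N^\dagger\beta_N(n)V_N\to n$ and normality of $\rho,\sigma$. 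Monotonicity of $S_{\rm rel}$ under $\beta_N$ followed by lower semicontinuity \eqref{lowersemi} gives $\liminf S_{\delta_N}\geq S_{\rm rel}(\rho|\sigma;\mathcal{N})$. For the upper bound, take $\rho\circ\alpha_N,\sigma\circ\alpha_N\in\mathcal{L}_{N*}$ as the approximants; asymptotic norm closeness of these to $(\rho\circ\mathrm{Ad}_{V_N})|_{\mathcal{L}_N},(\sigma\circ\mathrm{Ad}_{V_N})|_{\mathcal{L}_N}$ (which is stronger than the pointwise statement in \eqref{alphaV}) is extracted from the majorization hypothesis $\rho|_\mathcal{N}\leq\lambda\sigma|_\mathcal{N}$ and \eqref{uniform1} via a natural-cone vector representation, and monotonicity of $S_{\rm rel}$ under $\alpha_N$ then closes the inequality.

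For $(ii)\Rightarrow(iii)$ I would invoke Petz's work on sequences of channels. Specializing \eqref{rhv} to $\sigma=\omega_\eta$ and $\rho=\omega_{u\eta}$ for unitaries $u\in\mathcal{N}$, the asymptotic saturation of monotonicity under $\beta_N$ implies, via Kosaki's variational formula, that the Petz maps $\delta_{\beta_N,\varPhi_N}$ asymptotically invert $\beta_N$ on these states. The intertwining between Petz maps and relative modular conjugation then gives $so-\lim(J_{\varPhi_N;\mathcal{L}_N}V_N-V_N J_{\eta;\mathcal{N}})=0$, which is $(iii)_c$. The equivalences $(iii)_a\Leftrightarrow(iii)_b\Leftrightarrow(iii)_c$ are standard Tomita--Takesaki statements: the defining relation $\Delta_{\eta;\mathcal{N}}^{1/2}a\eta = J_{\eta;\mathcal{N}}a^\dagger\eta$ makes covariance of $J$ and of $\Delta^{1/2}$ equivalent, polar decomposition gives covariance of $\Delta^{is}$ from that of $\Delta^{1/2}$, and conjugating $n\in\mathcal{N}$ by $\Delta^{is}$ yields the modular-automorphism form $(iii)_b$.

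For $(iii)\Rightarrow(i)$, the hypothesis already includes $\mathcal{N}\rightarrowtail\mathcal{L}_N$, so Theorem~\ref{thm:qec}(1) furnishes a channel $\beta_N:\mathcal{N}\to\mathcal{L}_N$. I would define the complementary channel
\[
\beta_N'(n') := J_{\varPhi_N;\mathcal{L}_N}\,\beta_N\!\left(J_{\eta;\mathcal{N}}\,n'\,J_{\eta;\mathcal{N}}\right)\,J_{\varPhi_N;\mathcal{L}_N},\qquad n'\in\mathcal{N}',
\]
which takes values in $\mathcal{L}_N'$ because $J_{\varPhi_N;\mathcal{L}_N}\mathcal{L}_N J_{\varPhi_N;\mathcal{L}_N}=\mathcal{L}_N'$. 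Using $(iii)_c$ to pass $J$ through $V_N$ and the intertwining $\beta_N(n)V_N\approx V_N n$, a direct calculation yields $so-\lim(\beta_N'(n')V_N-V_N n')=0$, so $\mathcal{N}'\rightsquigarrow\mathcal{L}_N'$ by Theorem~\ref{thm:qec}(1), completing the definition of a standard entanglement wedge. The extension of $(iii)$ to general cyclic-separating $(\psi,\Psi_N)$ then follows from the Connes cocycle relating relative modular data for different reference vectors, and the extension of $(ii)$ to arbitrary normalized $\rho,\sigma$ follows from monotone approximation by majorized pairs together with lower semicontinuity. The principal obstacle is $(ii)\Rightarrow(iii)$: converting an asymptotic equality of \emph{smoothed} relative entropies into genuine operator convergence of modular data. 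Petz's classical rigidity theorem yields operator conclusions only from exact saturation of monotonicity, so in the asymptotic setting the smoothing scale $\delta_N$ must be chosen consistently with the rate of relative-entropy convergence and the Petz-recovery approximants must be shown to converge strongly in $\mathscr{H}$. Hyperfiniteness of $\mathcal{N}$ is indispensable here, via its use in Theorem~\ref{thm:qec}, and Petz's analysis of sequences of channels is what makes it possible to use the bare Petz map as the recovery map without any extra twirling.
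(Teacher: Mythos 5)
Your outer architecture matches the paper's: the cycle through (i), (ii), (iii), the use of Lemma~\ref{lem:ab} and the commutant reconstruction channel $\alpha_N$, monotonicity plus lower semicontinuity for (i)$\Rightarrow$(ii), and the definition $\beta_N=j_{\varPhi_N;\mathcal{L}_N}\circ\beta_N'\circ j_{\eta;\mathcal{N}}$ for the return to (i). However, the proposal has a genuine gap exactly where you flag "the principal obstacle": you never actually supply the argument for (ii)$\Rightarrow$(iii). Saying that Petz maps "asymptotically invert $\beta_N$ on these states" and that "the intertwining between Petz maps and relative modular conjugation then gives" $so\mbox{-}\lim(J_{\varPhi_N;\mathcal{L}_N}V_N-V_NJ_{\eta;\mathcal{N}})=0$ is not a proof: state-level asymptotic recovery does not by itself yield strong operator convergence of $N$-dependent modular conjugations. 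In the paper this step occupies the bulk of the proof: first the relative-entropy convergence (via Lemma~\ref{lem:preJLMS}) is converted, through Petz's resolvent analysis of the difference $(t+\Delta)^{-1}-W_N^\psi(t+\Delta_N)^{-1}(W_N^\psi)^\dagger$, into strong convergence of Connes cocycles $\alpha_N(u_s^N)\to u_s$; then a two-function complex interpolation argument (the vectors $\Gamma_N(z)$ and the scalar functions $f_N(z)$, $g_N(z)$ on adjacent strips, controlled by Phragm\'en--Lindel\"of and Vitali--Porter) transports the known convergence at the bottom edge $z=it$ to the top edge $z=1/2+it$, which is what produces the $J$-covariance and, from it, $(iii)_a$. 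None of this machinery, nor a substitute for it, appears in your proposal, so the central implication is asserted rather than proven.

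A second, related problem: you treat the equivalences $(iii)_a\Leftrightarrow(iii)_b\Leftrightarrow(iii)_c$ as "standard Tomita--Takesaki statements" obtained from the polar decomposition of the Tomita operator. In the exact (fixed-algebra) setting that would be fine, but here both the algebras and the modular data depend on $N$ and the convergences are only strong-operator for real modular parameter $s$; passing to $\Delta^{1/2}$ or to $J$ involves unbounded operators and an analytic continuation that must be justified. The paper proves $(iii)_b\Rightarrow(iii)_c$ by a KMS-analyticity argument with the uniformly bounded matrix-element functions $F_N(s)$, convergence on the strip boundary, Vitali-type uniform convergence on compacts, and evaluation at $s=i/2$; and it needs the separate continuity Lemma~\ref{lem:contmod} (proved by resolvent convergence and a Stone--Weierstrass approximation of $\Delta^{is}$ by functions of the resolvent) merely to exchange the reference vectors $H_N$ and $\varPhi_N$. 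Your sketch skips these steps, and your proposed approximation route for extending (ii) to arbitrary normalized states is also unnecessary (and delicate, since $S_{\rm rel}$ is only lower semicontinuous): the paper simply observes that its proof of (i)$\Rightarrow$(ii) never used the majorization hypothesis. In short, the skeleton is right, but the analytic core of (ii)$\Rightarrow$(iii) and the asymptotic versions of the modular-theory equivalences are missing.
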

\begin{proof}
See Section~\ref{proof:jlms}. 
\end{proof}

We have phrased  conditions (ii) and (iii) so they are very weak. This is to make it easy to prove existence of an entanglement wedge by utilizing this theorem. The more general statements that follow from these conditions are in the postamble of the theorem.  It was recently pointed out in \cite{Kudler-Flam:2022jwd}, that given an entanglement wedge there could be states $\rho,\sigma \in \mathcal{B}(\mathscr{H})_\star^+$ on the code with infinite relative entropy $S_{\rm rel}(\rho|\sigma;\mathcal{N}) = \infty $ while $S_{\rm rel}(\rho \circ {\rm  Ad}_{V_N}|\sigma \circ {\rm  Ad}_{V_N})$ is finite. Seemingly violating the JLMS condition.
The theorem above tells us that in this case the large $N$ limit of such relative entropies must diverge. This resolution was discussed in \cite{Kudler-Flam:2022jwd}.

\section{Proofs}

\label{sec:small}
\label{sec:p}

Our first task is to prove the QEC Theorem~\ref{thm:qec}. In addition to the statements in the theorem we add two more equivalent statements
that we will prove at intermediate steps. 
\begin{itemize}
\item[(3')] For all unitaries $u \in \overline{\mathcal{D}}^{\| \cdot \|}$, where $\mathcal{D}$ is weakly dense unital $\star$-subalgebra $\mathcal{D} \subset \mathcal{N}$, and all $\rho \in \mathcal{B}(\mathscr{H})_\star^+$ then:
\begin{equation}
\lim_{N \rightarrow \infty} \| \left( \rho \circ {\rm Ad}_{V_N} ( \cdot) - \rho_u \circ {\rm Ad}_{V_N} ( \cdot) \right)|_{\mathcal{L}_N'} \| = 0
\label{3prime}
\end{equation}
where $\rho_u(\cdot ) = \rho( u^\dagger \cdot u )$. 
\item[(4')] For all bounded 
sequences $m_{N}' \in \mathcal{L}_{N}'$ the following limit applies:
\begin{equation}
\label{4prime}
wo-\lim_{N \rightarrow \infty} [V_N^\dagger m_{N}' V_N, d] =0
\end{equation}
for each $d \in \mathcal{D}$ (with $\mathcal{D}$ as in (3')).
\end{itemize}

\subsection{Finite dimensional preliminaries} 
\label{finited}

Consider a sequence of bounded operators $V_N \in \mathcal{B}(\mathscr{H}, \mathscr{K}_N)$ that converges to an isometry in the sense that $wo-\lim_{N \rightarrow \infty} V_N^\dagger V_N \rightarrow 1$ and hence with $\|V_N^\dagger V_N \| < C$, see Lemma~\ref{lem:C}. Assume that $\mathcal{N} \subset \mathcal{B}(\mathscr{H})$ 
is hyperfinite sometimes referred to as approximately finite dimensional and $\mathscr{H}$ comes equipped with a vector $\eta$ cyclic and separating for $\mathcal{N}$. 
This is almost the context of Theorem~\ref{thm:main}, albeit with the additional setup assumption that
$\eta$ is cyclic and separating. We will take this to be the case for this Section~\ref{finited}. We will return to the non-separating case in a later Section after \eqref{aftnonsep}. 

 We aim to use these results to prove Theorem~\ref{thm:main} in the cyclic and separating case. We first must introduce some machinery.\footnote{We thank S. Hollands for pointing out the ease with which some theorems can be proven in the hyperfinite case. See for example Longo \cite{longo1978simple} where the basic theorems of Tomita-Takesaki are established with this method.} 

The hyperfinite condition implies the existence of
a sequence of type-I$_{2^q}$ von Neumann factors (these can always chosen to be factors, see \cite{elliott1976equivalence}) with $\mathcal{N}_q \subset \mathcal{N}_{q+1} \subset \mathcal{N}$ (with common unit) such that:
\begin{equation}
\label{hypfin}
\mathcal{N} = ( \cup_q \mathcal{N}_q )''
\end{equation}
For fixed $q$ we can work on the new code subspace determined by $\mathcal{N}_q \left| \eta \right> = \pi_q'
\in \mathcal{N}_q'$ . Define the bounded operator: 
\begin{equation}
V_{N,q} \equiv V_N \pi_q' : \mathscr{H}_q ( \equiv \pi_q' \mathscr{H})  \rightarrow \mathscr{K}_N
\end{equation}
Define the reduced algebra $\widetilde{\mathcal{N}}_q = \mathcal{N}_q \pi_q'$ and $\widetilde{\mathcal{N}}_q' = \pi_q' \mathcal{N}_q'  \pi_q'$ which act
on the Hilbert space $\mathcal{H}_q$ with unit $1_q = \pi_q'$. These algebras are factors and additionally $\pi_q' \eta = \eta$ is cyclic and separating for $\widetilde{\mathcal{N}}_q$
as can be easily confirmed.  Hence the von Neumann algebra $\widetilde{\mathcal{N}}_q $ represented on $\mathscr{H}_q$ is in standard form. Thus
$\mathscr{H}_q \cong L_q \otimes R_q$ where $L_q,R_q$ are finite dimensional Hilbert spaces of dimension $2^q$ and where:
 \begin{equation}
 \widetilde{\mathcal{N}}_q \cong  \mathcal{B}(L_q) \otimes 1_{R_q} \qquad  \widetilde{\mathcal{N}}'_q \cong 1_{L_q} \otimes \mathcal{B}(R_q)
 \end{equation}
Define the conditional expectation:
\begin{equation}
\label{LRq}
P_q (\cdot) =  \frac{1}{{| L_q| }}1_{L_q} \otimes {\rm Tr}_{L_q} (\cdot)\qquad \, P_q : \mathcal{B}(\mathscr{H}_q) \rightarrow \widetilde{\mathcal{N}}'_q 
\end{equation}

We need the isomorphism $\Phi_q :\mathcal{N}_q \rightarrow \widetilde{\mathcal{N}}_q $ defined via:
\begin{equation}
\label{theisom}
\Phi_q(n)= n \pi_q'
\end{equation}
It is injective due to the separating property of $ \left| \eta \right>$ ($0 =  \Phi_q(n)\left| \eta \right> = n \left| \eta \right>
\implies n =0$) and this implies that $\Phi_q$ is an isomorphism.
Define the inclusion $\iota_q = \Phi_q^{-1} : \widetilde{\mathcal{N}}_q \rightarrow \mathcal{N} $.

Finally given the cyclic and separating $\eta$ for $\mathcal{N}$ we can construct the Petz map, Definition~\ref{def:dual}, associated to the inclusion $\iota_q :\widetilde{\mathcal{N}}_q \rightarrow \mathcal{N}$ (since $\widetilde{\mathcal{N}}_q$ is in standard form):
\begin{equation}
\label{Eq}
\mathcal{E}_{q,\eta} \equiv \delta_{\iota_q,\eta}: \mathcal{N} \rightarrow \widetilde{\mathcal{N}}_q 
\end{equation}
This is a faithful quantum channel that satisfies $\omega_\eta \circ \iota_q \circ \mathcal{E}_{q,\eta}  = \omega_\eta$ \cite{accardi1982conditional}. 

We also need the following convergence statement:
\begin{lemma}[Hiai, Tsukada \cite{hiai1984strong}]
\label{lem:gencond}
For all $ n \in \mathcal{N}$ then:
\begin{equation}
so-\lim_{q \rightarrow \infty} \iota_q \circ \mathcal{E}_{q,\eta}(n) = n
\end{equation}
\end{lemma}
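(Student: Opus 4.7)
The strategy is to show that $E_q := \iota_q \circ \mathcal{E}_{q,\eta}: \mathcal{N} \to \mathcal{N}$ is an $\omega_\eta$-preserving conditional expectation onto $\mathcal{N}_q$, and then reduce strong convergence on all of $\mathcal{N}$ to pointwise convergence on the strongly dense subalgebra $\cup_p \mathcal{N}_p$.

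The first task is the \emph{recovery property} $\mathcal{E}_{q,\eta} \circ \iota_q = \mathrm{id}_{\widetilde{\mathcal{N}}_q}$. Unpacking Definition~\ref{def:dual}, the dual channel $\delta'_{\iota_q, \eta}: \mathcal{N}' \to \widetilde{\mathcal{N}}_q'$ is determined by $\langle h | x\, \delta'_{\iota_q,\eta}(n')| h \rangle = \langle \eta | \iota_q(x) n' |\eta\rangle$ for $x \in \widetilde{\mathcal{N}}_q$, $n' \in \mathcal{N}'$, where $h \in \mathcal{P}^\natural_{\eta;\widetilde{\mathcal{N}}_q}$. Since $\eta$ is cyclic and separating for $\widetilde{\mathcal{N}}_q$ on $\mathscr{H}_q$, we may take $h = \eta$, and then $\langle \eta | \iota_q(x) n'|\eta\rangle = \langle x^*\eta, \pi_q' n' \eta\rangle_{\mathscr{H}_q}$, identifying $\delta'_{\iota_q,\eta}(n') = \pi_q' n' \pi_q'$ as the simple compression of $n'$ to $\mathscr{H}_q$. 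Applying the Tomita--Takesaki conjugation on both sides, $\mathcal{E}_{q,\eta}$ acts as the identity on $\iota_q(\widetilde{\mathcal{N}}_q) = \mathcal{N}_q$. So $E_q$ is a unital completely positive idempotent onto $\mathcal{N}_q$, hence a conditional expectation by Tomiyama's theorem, and it preserves $\omega_\eta$ by the defining property $\omega_\eta \circ \iota_q \circ \mathcal{E}_{q,\eta} = \omega_\eta$.

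Next I would invoke the Kadison--Schwarz inequality $E_q(n)^* E_q(n) \leq E_q(n^* n)$, which combined with state-preservation yields the GNS contraction
\begin{equation}
\| E_q(n) \eta \|^2 = \omega_\eta(E_q(n)^* E_q(n)) \leq \omega_\eta(E_q(n^* n)) = \| n \eta \|^2.
\end{equation}
Now fix $n \in \mathcal{N}$ and $\epsilon > 0$; by Kaplansky density applied to the strongly dense $^\star$-algebra $\cup_p \mathcal{N}_p \subset \mathcal{N}$, there exist $p$ and $n_p \in \mathcal{N}_p$ with $\| n_p \| \leq \| n \|$ and $\| (n - n_p) \eta \| < \epsilon$. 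The recovery property forces $E_q(n_p) = n_p$ for all $q \geq p$, so
\begin{equation}
\| (E_q(n) - n) \eta \| \leq \| E_q(n - n_p) \eta \| + \| (n_p - n) \eta \| \leq 2 \| (n - n_p) \eta \| < 2\epsilon,
\end{equation}
using the GNS contraction on the first summand. Thus $E_q(n) \eta \to n \eta$ in norm.

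Finally, I would upgrade to strong convergence on all of $\mathscr{H}$. For $\psi = m' \eta$ with $m' \in \mathcal{N}'$, the fact that both $E_q(n)$ and $n$ lie in $\mathcal{N}$ and therefore commute with $m'$ gives $(E_q(n) - n) \psi = m' (E_q(n) - n) \eta$, which tends to zero in norm. For arbitrary $\psi \in \mathscr{H}$, cyclicity of $\eta$ for $\mathcal{N}'$ (a consequence of $\eta$ being separating for $\mathcal{N}$) allows approximation by vectors of the form $m' \eta$, and the uniform bound $\| E_q(n) - n \| \leq 2 \| n \|$ completes a routine three-$\epsilon$ argument. The main obstacle is the first step: extracting the recovery property cleanly from the definitions of the Petz map and of the natural cone. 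Once that is in hand, the rest is a martingale-style approximation argument relying only on Kadison--Schwarz, Kaplansky density, and the defining identity $\mathcal{N} = (\cup_q \mathcal{N}_q)''$.
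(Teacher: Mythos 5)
There is a genuine gap, and it sits exactly where you flagged "the main obstacle": the recovery property $\mathcal{E}_{q,\eta}\circ\iota_q=\mathrm{id}_{\widetilde{\mathcal{N}}_q}$ is false in general. Your identification of the dual channel as the compression, $\delta'_{\iota_q,\eta}(n')=\pi_q' n'\pi_q'$, is correct, but the step "applying the Tomita--Takesaki conjugation on both sides" does not yield the identity on $\mathcal{N}_q$. Writing it out, for $n_q\in\mathcal{N}_q$ one gets $\iota_q\circ\mathcal{E}_{q,\eta}(n_q)\,\eta = J_{\eta;\widetilde{\mathcal{N}}_q}\,\pi_q'\,\Delta_{\eta;\mathcal{N}}^{1/2}\,n_q^{\dagger}\eta$, and this equals $n_q\eta$ only if the compression of the modular data of $(\mathcal{N},\eta)$ to $\mathscr{H}_q$ agrees with the modular data of $(\widetilde{\mathcal{N}}_q,\eta)$. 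That is precisely the condition, by Takesaki's theorem (and the Accardi--Cecchini characterization), that $\mathcal{N}_q$ be globally invariant under the modular automorphism group $\sigma^{\omega_\eta}$ of the ambient algebra, equivalently that a genuine $\omega_\eta$-preserving conditional expectation onto $\mathcal{N}_q$ exist. For a generic cyclic separating $\eta$ and the finite-dimensional factors $\mathcal{N}_q$ of the hyperfinite exhaustion this fails, so $E_q=\iota_q\circ\mathcal{E}_{q,\eta}$ is a state-preserving unital completely positive map but \emph{not} an idempotent, and the key identity $E_q(n_p)=n_p$ for $q\geq p$ on which your martingale argument rests collapses.

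The rest of your scheme (Kadison--Schwarz plus $\omega_\eta\circ E_q=\omega_\eta$ to get the GNS contraction, Kaplansky density, the commutant trick and uniform boundedness to upgrade from $\eta$ to all of $\mathscr{H}$) is sound and would finish the proof \emph{if} the $E_q$ were conditional expectations; this is exactly why the statement is attributed to Hiai and Tsukada rather than proved in the paper. Their theorem is the non-trivial assertion that the \emph{generalized} conditional expectations still converge strongly to the identity along an increasing generating net of subalgebras, and its proof goes through convergence of the associated modular objects (e.g.\ strong resolvent convergence of $\Delta_{\eta;\mathcal{N}_q}$ to $\Delta_{\eta;\mathcal{N}}$, in the spirit of the continuity Lemma~\ref{lem:contmod} of this paper) rather than through a projection-based martingale argument. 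To repair your write-up you would either have to add the modular-invariance hypothesis (which is not available here) or replace step one by an argument controlling $\pi_q' J_{\eta;\mathcal{N}}\, n_q\, J_{\eta;\mathcal{N}}\pi_q' - J_{\eta;\widetilde{\mathcal{N}}_q}\, n_q\pi_q'\, J_{\eta;\widetilde{\mathcal{N}}_q}$ as $q\to\infty$, which is essentially the content of the cited result.
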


Apply the polar decomposition:
\begin{equation}
V_{N,q} = \widetilde{V}_{N,q} (V_{N,q}^\dagger  V_{N,q} )^{1/2}
\end{equation}
to define the partial isometry $\widetilde{V}_{N,q}$ we have:

\begin{lemma}
\label{lemma:small}
Assuming weak operator convergence of $V_N^\dagger V_N$ to $1$ then
at fixed $q$ the small codes based on $V_{N,q}$ satisfy:
\begin{equation}
\lim_{N \rightarrow \infty} \| \widetilde{V}_{N,q} - V_{N,q} \| = 0
\end{equation}
and $\widetilde{V}_{N,q}$ is an isometry for sufficiently large $N$.
\end{lemma}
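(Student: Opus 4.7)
The plan is to exploit the finite-dimensionality of $\mathscr{H}_q$. Because $\mathcal{N}_q$ is a type-I$_{2^q}$ factor, its dimension as a complex vector space is $4^q$, so the projection $\pi_q'$ onto $\overline{\mathcal{N}_q \eta}$ has finite rank (at most $4^q$), and $\mathscr{H}_q = \pi_q' \mathscr{H}$ is finite dimensional. This means every bounded operator supported on $\mathscr{H}_q$ lives in a finite-dimensional algebra where the weak, strong, and norm topologies all coincide, which is the crucial engine of the argument.

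First I would observe that $V_{N,q}^\dagger V_{N,q} = \pi_q' V_N^\dagger V_N \pi_q'$ has range contained in $\mathscr{H}_q$ and vanishes on $(1-\pi_q')\mathscr{H}$, so it can be identified with an element of $\mathcal{B}(\mathscr{H}_q)$. The hypothesis $wo-\lim_N V_N^\dagger V_N = 1$ gives pointwise convergence of every matrix element $\langle \psi_1 | V_{N,q}^\dagger V_{N,q} | \psi_2 \rangle \to \langle \psi_1 | \pi_q' | \psi_2 \rangle$. Since $\mathcal{B}(\mathscr{H}_q)$ is finite dimensional this upgrades automatically to norm convergence, so $\| V_{N,q}^\dagger V_{N,q} - \pi_q' \| \to 0$. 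Continuous functional calculus for the square-root function, which is norm continuous on positive bounded operators, then yields $\| (V_{N,q}^\dagger V_{N,q})^{1/2} - \pi_q' \| \to 0$.

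Next, once this norm is strictly less than $1$, the operator $(V_{N,q}^\dagger V_{N,q})^{1/2}|_{\mathscr{H}_q}$ has spectrum bounded away from $0$ and is therefore invertible on $\mathscr{H}_q$. Consequently the polar partial isometry $\widetilde{V}_{N,q}$ has initial space exactly $\mathscr{H}_q$, so $\widetilde{V}_{N,q}^\dagger \widetilde{V}_{N,q} = \pi_q'$; equivalently, $\widetilde{V}_{N,q}$ is isometric on $\mathscr{H}_q$ for all sufficiently large $N$, giving the second claim of the lemma.

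Finally, from the polar identity $V_{N,q} = \widetilde{V}_{N,q} (V_{N,q}^\dagger V_{N,q})^{1/2}$ together with $\widetilde{V}_{N,q} \pi_q' = \widetilde{V}_{N,q}$, I obtain $\widetilde{V}_{N,q} - V_{N,q} = \widetilde{V}_{N,q}\bigl( \pi_q' - (V_{N,q}^\dagger V_{N,q})^{1/2} \bigr)$, whose norm is bounded above by $\| \pi_q' - (V_{N,q}^\dagger V_{N,q})^{1/2} \|$ because a partial isometry is a contraction. By the previous step this upper bound tends to zero. I do not foresee any real obstacle: the whole argument is a soft upgrade of the weak convergence hypothesis using the fact that $\pi_q'$ is finite rank, which collapses all operator topologies on the relevant finite-dimensional corner.
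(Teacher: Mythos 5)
Your proof is correct and follows essentially the same route as the paper: both upgrade the weak operator convergence of $V_N^\dagger V_N$ to norm convergence on the finite-dimensional corner $\mathscr{H}_q$ (the paper does this explicitly via a Hilbert--Schmidt bound, you via the coincidence of topologies in finite dimensions), deduce invertibility of $(V_{N,q}^\dagger V_{N,q})^{1/2}$ on $\mathscr{H}_q$ so the polar part is an isometry for large $N$, and then estimate $\|\widetilde{V}_{N,q}-V_{N,q}\|$ by $\|\pi_q' - (V_{N,q}^\dagger V_{N,q})^{1/2}\|$. The only cosmetic difference is that you invoke norm continuity of the operator square root where the paper uses the algebraic bound $\|1-(V_{N,q}^\dagger V_{N,q})^{1/2}\| \leq \|(1+(V_{N,q}^\dagger V_{N,q})^{1/2})^{-1}\|\,\|1-V_{N,q}^\dagger V_{N,q}\|$; both are valid.
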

\begin{proof}
We apply the weak convergence condition:
\begin{equation}
wo-\lim_{N \rightarrow \infty} V_N^\dagger V_N \rightarrow 1
\end{equation} 
to the vectors in $\mathscr{H}_q$. Define the operator $(V_{N,q})^\dagger V_{N,q} - 1_q = Z_q$. We have the relation between the Hilbert-Schmidt norm and the operator norm:
\begin{equation}
\label{HSON}
\| Z_q \|^2 \leq {\rm Tr} Z_q^\dagger Z_q  = \sum_{\alpha,\beta} |\left< \alpha \right| Z_q \left| \beta \right>|^2 
\leq 2^{4q} \max_{\alpha, \beta}  |\left< \alpha \right| Z_q \left| \beta \right>|^2
\end{equation}
where $\alpha$ labels an orthonormal basis for $\left| \alpha \right> \in \mathscr{H}_q$.
Now for all $\epsilon$ we can pick an $N(\alpha,\beta,\epsilon)$ such that $ | \left< \alpha \right| Z_q \left| \beta \right> | < \epsilon$ for $N > N(\alpha,\beta,\epsilon)$. Set:
\begin{equation}
\label{maxim}
N_1(\delta) \equiv \max_{\alpha,\beta} N(\alpha,\beta, 2^{-2 q}\delta)
\end{equation}
Now if $V_{N,q}^\dagger V_{N,q}$ has a non-zero kernel with projection $\Pi$ to the kernel then:
\begin{equation}
\| Z_q \|^2 = \| V_{N,q}^\dagger V_{N,q} - 1_{q} \|  \geq \| \Pi (V_{N,q}^\dagger V_{N,q} - 1_{q}) \Pi \|   = \| \Pi \| = 1
\end{equation} 
whenever $\Pi$ is non-trivial. So choosing $\delta < 1$ then the kernel must vanish  for $N > N_1(\delta)$ by a contradiction argument. 
Now consider:
\begin{equation}\label{vtd}
\widetilde{V}_{N,q} = V_{N,q} (V_{N,q}^\dagger V_{N,q})^{-1/2}
\end{equation}
which is an isometry by the vanishing of the kernel (or using a polar decomposition). Compute:
\begin{align}
\| \widetilde{V}_{N,q}  - V_{N,q} \| & \leq \| \widetilde{V}_{N,q} \|  \| 1 -  (V_{N,q}^\dagger V_{N,q})^{1/2} \| \\ &  \leq
 \| (1 +  (V_{N,q}^\dagger V_{N,q})^{1/2})^{-1}\| \| 1 -  V_{N,q}^\dagger V_{N,q}\|  < \delta
\end{align}
for $N > N_1(\delta)$. Taking the limit we find for all $\delta > 0$:
\begin{equation}
\limsup_{N \rightarrow \infty} \| \widetilde{V}_{N,q}  - V_{N,q} \| < \delta
\end{equation}
which implies the result. 
\end{proof}

Define:
\begin{equation}
\alpha'_{N,q} = \left.  {\rm Ad}_{\widetilde{V}_{N,q}} \right|_{\mathcal{L}_N'}
\end{equation}
By Lemma~\ref{lemma:small} this is a quantum channel (normal unital completely positive) for sufficiently large $N$
at fixed $q$.  We now apply condition (4') given in \eqref{4prime} and work out the consequences for this small code:

\begin{lemma}
\label{lem:smallce}
Given condition (4') as for Theorem~\ref{thm:main} as stated in \eqref{4prime}, then for fixed $q$:
\begin{equation}
\lim_{N \rightarrow \infty} \| \alpha_{N,q}' - P_q \circ \alpha_{N,q}' \|_{cb} = 0
\end{equation}
\end{lemma}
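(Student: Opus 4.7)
The plan is to reduce $\alpha'_{N,q} - P_q\circ\alpha'_{N,q}$ to a finite sum of commutators with matrix units of $\widetilde{\mathcal{N}}_q$, and to drive each of these commutators to zero in operator norm, uniformly over the unit ball, using condition (4') combined with Lemma~\ref{lemma:small}. First I would establish the matrix-unit averaging identity
\begin{equation*}
x - P_q(x) \;=\; \frac{1}{|L_q|} \sum_{ij} (e_{ij}^\dagger \otimes 1_{R_q})\,[\,e_{ij} \otimes 1_{R_q},\,x\,]
\end{equation*}
by direct computation on $\widetilde{\mathcal{N}}_q \cong \mathcal{B}(L_q) \otimes 1_{R_q}$, with $\{e_{ij}\}$ a matrix unit basis of $\mathcal{B}(L_q)$. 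Since the codomain $\mathcal{B}(\mathscr{H}_q) \cong M_{d_q}$ with $d_q = |L_q|^2 = 2^{2q}$ is finite-dimensional, Smith's theorem equates the cb norm with the operator norm of the $d_q$-amplification, so it suffices to prove uniform operator-norm convergence to zero for each of the $d_q$ commutator maps $m\mapsto [e_{ij}\otimes 1_{R_q},\,\widetilde{V}_{N,q}^\dagger m \widetilde{V}_{N,q}]$ and for their $d_q$-amplifications.

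Next I would translate each commutator with $e_{ij} \otimes 1_{R_q} \in \widetilde{\mathcal{N}}_q$ into a commutator with its preimage $n_{ij} := \iota_q(e_{ij} \otimes 1_{R_q}) \in \mathcal{N}_q \subset \mathcal{N}$. Lemma~\ref{lemma:small} lets me replace $\widetilde{V}_{N,q}$ by $V_N \pi_q'$ up to a norm-vanishing error uniform in $\|m\|\leq 1$, and since $n_{ij}$ commutes with $\pi_q' \in \mathcal{N}_q'$, a direct calculation produces
\begin{equation*}
[\,e_{ij} \otimes 1_{R_q},\;\pi_q'\, V_N^\dagger m V_N\, \pi_q'\,] \;=\; \pi_q'\,[\,n_{ij},\,V_N^\dagger m V_N\,]\,\pi_q'.
\end{equation*}
This reduces the task to showing $\|\pi_q'\,[n_{ij},\,V_N^\dagger m V_N]\,\pi_q'\| \to 0$ uniformly over $\|m\|\leq 1$, along with the obvious matrix-valued version applied entrywise to $M \in M_{d_q}(\mathcal{L}_N')$ for the cb-norm piece.

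For this last step I would combine Kaplansky density with the sequential form of (4'). Since $\mathcal{D}$ is weakly dense in $\mathcal{N}$, Kaplansky produces, for any $\epsilon > 0$ and any fixed orthonormal basis $\{\xi_\alpha\}_{\alpha=1}^{d_q}$ of $\mathscr{H}_q$, some $d \in \mathcal{D}$ with $\|d\|\leq \|n_{ij}\|$ approximating $n_{ij}$ in strong-$^\star$ topology on these vectors; combined with the uniform bound on $\|V_N\|$ from Lemma~\ref{lem:C}, this forces
\begin{equation*}
|\langle \xi_\alpha|\,[V_N^\dagger m V_N,\,n_{ij}-d]\,|\xi_\beta\rangle| \;\leq\; C\,\epsilon\,\|m\|
\end{equation*}
uniformly in $N$, so (4') applied to $d$ transfers to $n_{ij}$. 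A standard sequence-contradiction argument --- crucially using that (4') holds for \emph{every} bounded sequence $m_N'$ and not just single operators --- then upgrades the resulting pointwise weak convergence into $\sup_{\|m\|\leq 1} |\langle \xi_\alpha | [V_N^\dagger m V_N, n_{ij}] | \xi_\beta \rangle| \to 0$ for each fixed $(\alpha,\beta)$. Finite-dimensionality of $\mathscr{H}_q$ then bounds $\|\pi_q' X \pi_q'\|$ by $d_q$ times the maximum matrix element over this basis, delivering the desired uniform operator-norm convergence; the identical argument applied entrywise to $M \in M_{d_q}(\mathcal{L}_N')$ handles the amplification and, by Smith's theorem, the cb norm. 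The hard part is precisely this final interplay: converting the very weak, pointwise, subalgebra-level information provided by (4') into a norm estimate uniform over the unit ball requires Kaplansky to bridge $\mathcal{D}$ and $\mathcal{N}_q$, the sequential formulation of (4') to extract uniformity in $m$, and the finite-dimensionality of $\mathscr{H}_q$ to turn weak convergence into norm convergence --- each deployed in a carefully compatible order.
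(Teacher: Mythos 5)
Your proof is correct, and it reaches the result by a route that differs from the paper's in two technical respects, though the conceptual skeleton (reduce to the finite-dimensional compression via Lemma~\ref{lemma:small}, pass from $\mathcal{D}$ to $\widetilde{\mathcal{N}}_q$ by strong-$^\star$ density on the finitely many basis vectors, exploit that (4') holds for \emph{every} bounded sequence to get uniformity over the unit ball of $\mathcal{L}_N'$, and use finite-dimensionality to upgrade matrix-element convergence to norm and then to cb-norm convergence) is the same. The paper writes $P_q$ as a Haar average $\int [du]\, u^\dagger(\cdot)u$ over the unitary group of $\widetilde{\mathcal{N}}_q$, proves $\lim_N \| \mathrm{Ad}_u\circ\alpha'_{N,q}(m_N') - \alpha'_{N,q}(m_N')\| =0$ for each fixed unitary $u$ applied to a near-optimal sequence $m_N'$ realizing the channel norm, and then averages using dominated convergence and Jensen, finishing with the crude bound $\|\Phi\|_{cb}\leq \dim\cdot\|\Phi\|$; your version replaces the Haar integral by the exact finite algebraic identity $x-P_q(x)=\tfrac{1}{|L_q|}\sum_{ij}(e_{ij}^\dagger\otimes 1)[e_{ij}\otimes 1,x]$ (which is correct), replaces the near-optimizing sequence by a contradiction subsequence (both are legitimate ways to use the sequential form of (4')), and replaces the dimension bound on the cb norm by Smith's theorem plus an entrywise estimate on $M_{d_q}(\mathcal{L}_N')$, which also works since entries of a contraction are contractions and the matrix norm costs only a fixed factor $d_q$. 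What your route buys is the elimination of the measure-theoretic step (no dominated convergence or Jensen needed); what the paper's route buys is slightly less bookkeeping at the cb stage. Two cosmetic points: Kaplansky density gives the norm bound $\|d\|\leq\|n_{ij}\|$ only for $d$ in the norm closure of $\mathcal{D}$, not in $\mathcal{D}$ itself, but this is harmless because your matrix-element estimate never uses $\|d\|$, only the vector approximations $\|(n_{ij}-d)\xi_\beta\|$, $\|(n_{ij}-d)^\dagger\xi_\alpha\|$ together with $\sup_N\|V_N\|^2\leq C$ (exactly as in the paper, where plain s$^\star$ot density of $\mathcal{D}$ suffices); and the Smith-theorem detour could be skipped entirely in favor of the finite-dimensional inequality $\|\Phi\|_{cb}\leq d_q\|\Phi\|$ that the paper invokes.
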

where cb refers to the completely bounded norm.
\begin{proof}
By the definition of the regular channel norm, for all $\delta >0$ there is a bounded sequence $m_N' \in \mathcal{L}_N'$ such that:
\begin{equation}\label{mnd}
\left| \| \alpha_{N,q}' - P_q \circ \alpha_{N,q}' \| - \| \alpha_{N,q}'(m_N') - P_q \circ \alpha_{N,q}'(m_N') \|  \right| < \delta
\end{equation}
Apply this sequence to condition (4'). Fix some $u \in \widetilde{\mathcal{N}}_q$ and some $\delta_1 > 0$. By the s$^\star$ot density of $\mathcal{D}$ (recall that the double commutant $\mathcal{D}'' = \mathcal{N}$ is equal to closure of $\mathcal{D}$ in the strong$^\star$ operator topology
for unital $^\star$-algebras) there exists a $d \in \mathcal{D}$ such that $\| (d -  u) \left| \alpha \right> \| =  \| (d - \Phi_q^{-1}(u) ) \left| \alpha \right> \| < 2^{-q} \delta_1$  and $\| (d -  u)^\dagger \left| \alpha \right> \|   =  \| (d - \Phi_q^{-1}(u) )^\dagger \left| \alpha \right> \|  < 2^{-q} \delta_1$ for some finite orthonormal basis $\alpha$ for $\mathscr{H}_q$. Then $ \| (d -  u) \left| \zeta \right> \| \leq \sum_\alpha |c_\alpha| \| (d -  u) \left| \alpha \right> \|
\leq \| \zeta \| \delta_1 $ where $\left| \zeta \right> = \sum_\alpha c_\alpha \left| \alpha \right>$.
Hence the operator norm satisfies  $\| d \pi'_q - u  \| = \| d \pi'_q - u \pi_q' \| < \delta_1$. Similarly $\| \pi'_q d  - u \| < \delta_1$.

Using the same argument as in the proof of Lemma~\ref{lemma:small} we can turn (4') into an operator norm statement on the finite Hilbert space: for all $\epsilon >0$ there exists some $N(\epsilon)$ such that:
\begin{equation}
\|  \pi'_q [ V_{N}^\dagger  m_N' V_{N}, d]  \pi'_q \| < \epsilon 
\end{equation}
for all $N > N(\epsilon)$. 
But:
\begin{equation}
\| [ V_{N,q}^\dagger  m_N' V_{N,q}, u] \|  \leq \|  \pi'_q [ V_{N}^\dagger  m_N' V_{N}, d]  \pi'_q \| + 2 \delta_1 \| m_N' \|  C
\end{equation}
Thus:
\begin{equation}
\limsup_{N \rightarrow \infty} \| u^\dagger \widetilde{V}_{N,q}^\dagger  m_N' \widetilde{V}_{N,q} u -  \widetilde{V}_{N,q}^\dagger  m_N' \widetilde{V}_{N,q}  \| \leq 2 \delta_1 \sup_N \| m_N' \|  C
\end{equation}
using Lemma~\ref{lemma:small}. Since this is true for all $\delta_1 > 0$ then:
\begin{equation}
\label{domc}
\lim_{N \rightarrow \infty} \| u^\dagger \widetilde{V}_{N,q}^\dagger  m_N' \widetilde{V}_{N,q} u -  \widetilde{V}_{N,q}^\dagger  m_N' \widetilde{V}_{N,q}  \| = 0
\end{equation}

Now consider the Haar integral over the space of unitaries in $ \widetilde{\mathcal{N}}_q  \subset \mathcal{B}(\mathscr{H}_q)$.
This gives an alternative way to write the conditional expectation:
\begin{equation}
P_q(\cdot) = \int_{u \in  \widetilde{\mathcal{N}}_q } [ d u] u^\dagger ( \cdot ) u
\end{equation}
since $P_q$ clearly fixes $ \widetilde{\mathcal{N}}_q'$, $[P_q(n), v] =0$ for any unitary $v \in \widetilde{\mathcal{N}}_q$ and so $P_q(n) \in  \widetilde{\mathcal{N}}_q'$
and also ${\rm Tr}_{\mathscr{H}_q} P_q(n) = {\rm Tr}_{\mathscr{H}_q} n  $.  These properties uniquely fix $P_q$.

Using dominated convergence, we can take the Haar integral inside the limit of \eqref{domc}:
 \begin{equation}
 \label{alphadom}
0 = \lim_{N \rightarrow \infty} \int [du] \| u^\dagger \alpha_{N,q}' (m_N') u -  \alpha_{N,q}' (m_N' )  \| 
\geq \lim_N \| P_q \circ \alpha_{N,q}' ( m_N' )  -  \alpha_{N,q}' ( m_N' )  \| 
\end{equation}
and we used Jensen's inequality to take the integral inside of the norm. The dominating function to establish the left equality in \eqref{alphadom} is $ 2 \sup_N \| m_N' \| $  which is integrable over the normalized measure, and bounds the norm
difference in \eqref{domc}.

Thus:
\begin{equation}
\limsup_N \| \alpha_{N,q}' - P_q \circ \alpha_{N,q}' \|_{cb} <  2^{q} \limsup_N \| \alpha_{N,q}' - P_q \circ \alpha_{N,q}' \|
\leq 2^q \delta
 \end{equation}
 where we used the relation between the channel norm and the completely bounded norm valid in finite dimensions and also \eqref{mnd}. Since the argument started with an arbitrary $\delta >0$ we can take the limit $\delta \rightarrow 0$ to give the result. 
 
\end{proof}

We can also derive condition (1) from the properties of these small codes:

\begin{lemma}
\label{lem:beta}
Suppose that there is a sequence of faithful quantum channels $\beta_{N,q} :  \widetilde{\mathcal{N}}_q \rightarrow \mathcal{L}_N$ 
such that for all fixed $q$:
\begin{equation}
\label{defaN}
\lim_{N \rightarrow \infty} \| \alpha_{N,q} \circ \beta_{N,q}   - {\rm Id}_{\widetilde{\mathcal{N}}_q} \| = 0\, \qquad \alpha_{N,q} \equiv  \left. {\rm Ad}_{\widetilde{V}_{N,q}}\right|_{\mathcal{L}_N}
\end{equation} 
then property (1) in Theorem~\ref{thm:main} holds and also 
$\beta_N$ in (1) can be chosen to be faithful and to fix $\eta$ uniformly: 
\begin{equation}
\label{uniform2}
\lim_{N \rightarrow \infty} \| \omega_{V_N \eta} \circ \beta_N -  \omega_\eta |_{\mathcal{N}} \| = 0
\end{equation}
\end{lemma}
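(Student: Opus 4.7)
The plan is to set $\beta_N = \beta_{N, q(N)} \circ \mathcal{E}_{q(N), \eta}$ for a carefully chosen diagonal sequence $q(N) \to \infty$, where the Petz conditional expectation $\mathcal{E}_{q,\eta} : \mathcal{N} \to \widetilde{\mathcal{N}}_q$ from \eqref{Eq} shrinks the global algebra down to the finite-dimensional subfactor, and $\beta_{N,q}$ then embeds a copy of $\widetilde{\mathcal{N}}_q$ inside $\mathcal{L}_N$. Since both factors are faithful normal UCP maps, $\beta_N$ is automatically a faithful quantum channel. The real work is in (i) turning \eqref{defaN} into a strong-operator recovery statement at fixed $q$, (ii) controlling the tail error as $q \to \infty$, and (iii) combining them through a diagonal argument.

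For (i), at fixed $q$ I claim that for every $\tilde n \in \widetilde{\mathcal{N}}_q$ and $\xi \in \mathscr{H}_q$,
\begin{equation*}
\lim_{N\to\infty} \bigl\| \beta_{N,q}(\tilde n)\,\widetilde{V}_{N,q}\,\xi - \widetilde{V}_{N,q}\,\tilde n\,\xi \bigr\| = 0.
\end{equation*}
Expanding the squared norm and applying the Kadison--Schwarz inequality $\beta_{N,q}(\tilde n)^\dagger \beta_{N,q}(\tilde n) \leq \beta_{N,q}(\tilde n^\dagger \tilde n)$, together with $\widetilde{V}_{N,q}^\dagger \widetilde{V}_{N,q} = 1_q$ for large $N$, bounds the square above by
\begin{equation*}
\langle \xi, (\widetilde{V}_{N,q}^\dagger \beta_{N,q}(\tilde n^\dagger \tilde n) \widetilde{V}_{N,q} - \tilde n^\dagger \tilde n)\xi\rangle - 2\,\mathrm{Re}\,\langle \tilde n \xi, (\widetilde{V}_{N,q}^\dagger \beta_{N,q}(\tilde n) \widetilde{V}_{N,q} - \tilde n)\xi\rangle,
\end{equation*}
both terms of which vanish in the $N \to \infty$ limit by \eqref{defaN}. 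Lemma~\ref{lemma:small} then allows me to swap $\widetilde{V}_{N,q}$ for $V_{N,q} = V_N \pi_q'$ with an arbitrarily small operator-norm error.

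For (ii) and (iii), I use that $\pi_q' \to 1$ strongly (since $\eta$ is cyclic for $\mathcal{N} = (\cup_q \mathcal{N}_q)''$) and the Hiai--Tsukada convergence $\iota_q\circ\mathcal{E}_{q,\eta}(n) \to n$ strongly as $q \to \infty$ (Lemma~\ref{lem:gencond}). Writing $\beta_N^{(q)} \equiv \beta_{N,q} \circ \mathcal{E}_{q,\eta}$, a triangle inequality gives
\begin{equation*}
\limsup_{N\to\infty} \|\beta_N^{(q)}(n) V_N \xi - V_N n \xi\| \leq \sqrt{C}\bigl( \|n\|\,\|(1-\pi_q')\xi\| + \|(\iota_q\circ\mathcal{E}_{q,\eta}(n) - n)\xi\|\bigr),
\end{equation*}
and the right-hand side tends to zero as $q \to \infty$ for every fixed pair $(n,\xi)$. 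A standard diagonal argument using separability of $\mathscr{H}$ together with the uniform bound $\|(\beta_N^{(q)}(n) V_N - V_N n)\xi\| \leq 2\sqrt{C}\,\|n\|\,\|\xi\|$ then selects $q(N) \to \infty$ slowly enough that $\beta_N \equiv \beta_N^{(q(N))}$ delivers $so$-$\lim_N (\beta_N(n) V_N - V_N n) = 0$ for every $n \in \mathcal{N}$.

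Finally, for the uniform state estimate \eqref{uniform2}, the defining Petz property $\omega_\eta \circ \iota_q \circ \mathcal{E}_{q,\eta} = \omega_\eta$ on $\mathcal{N}$ combined with contractivity of pull-back by the UCP map $\mathcal{E}_{q,\eta}$ reduces the problem to
\begin{equation*}
\|\omega_{V_N \eta}\circ\beta_N - \omega_\eta|_{\mathcal{N}}\| \leq \|\omega_{V_N \eta}\circ\beta_{N,q(N)} - \omega_\eta|_{\widetilde{\mathcal{N}}_{q(N)}}\|.
\end{equation*}
Because \eqref{defaN} is uniform on the unit ball of $\widetilde{\mathcal{N}}_q$ and $\widetilde{V}_{N,q} \to V_N \pi_q'$ in operator norm (Lemma~\ref{lemma:small}) with $V_N\eta = V_N \pi_q' \eta$, the right-hand side vanishes as $N \to \infty$ at each fixed $q$; slowing $q(N)$ further if needed (compatibly with the previous diagonal) closes the argument. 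The main technical subtlety is precisely this bookkeeping — ensuring a single sequence $q(N)$ simultaneously controls the strong-operator reconstruction for all $(n,\xi)$ and the state-norm bound — which is routine given the separability of $\mathscr{H}$ but is where the genuine care lies.
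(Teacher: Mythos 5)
Your construction is essentially the paper's: compose the small-code recovery channels $\beta_{N,q}$ with the Petz conditional expectation $\mathcal{E}_{q,\eta}$ (your $\beta_{N,q}\circ\mathcal{E}_{q,\eta}$ equals the paper's $\beta_{N,q}\circ\Phi_q\circ\iota_q\circ\mathcal{E}_{q,\eta}$), prove a fixed-$q$ recovery estimate — your Kadison--Schwarz computation is in substance the $k=2$ case of Lemma~\ref{homo}, which the paper notes is all that is needed — control the tail with Lemma~\ref{lem:gencond} and $\pi_q'\to 1$, let $q(N)\to\infty$ slowly, and get \eqref{uniform2} from monotonicity under $\mathcal{E}_{q,\eta}$. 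Faithfulness as a composition of faithful channels also matches.

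The one genuine soft spot is the final assembly, where you appeal to ``a standard diagonal argument using separability of $\mathscr{H}$''. Separability of $\mathscr{H}$ is in fact available here (hyperfiniteness of $\mathcal{N}$ together with the cyclic vector $\eta$), but a diagonal over a countable family cannot by itself deliver $so$-$\lim_N(\beta_N(n)V_N-V_Nn)=0$ for \emph{every} $n\in\mathcal{N}$: the $q$-dependent tail error $\|(\iota_q\circ\mathcal{E}_{q,\eta}(n)-n)\xi\|$ converges only strongly (Lemma~\ref{lem:gencond}), with no uniformity over the unit ball of $\mathcal{N}$, and an infinite-dimensional von Neumann algebra has no countable norm-dense subset; strong density of a countable operator family does not allow the needed interchange of limits, since normality of $\beta_N$ gives no uniformity in $N$. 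The correct mechanism — which your own step (i) already supplies — is that the fixed-$q$, $N\to\infty$ error is controlled at the level of operator norm: Kadison--Schwarz gives $\|\beta_{N,q}(\tilde n)\widetilde{V}_{N,q}\xi-\widetilde{V}_{N,q}\tilde n\xi\|\leq \bigl(3\,\|\alpha_{N,q}\circ\beta_{N,q}-{\rm Id}_{\widetilde{\mathcal{N}}_q}\|\bigr)^{1/2}\|\tilde n\|\,\|\xi\|$, and Lemma~\ref{lemma:small} controls $\|\widetilde{V}_{N,q}-V_{N,q}\|$. Hence $q(N)$ must be chosen from these purely numerical sequences — this is exactly the paper's $N_{\max}(q)$ and $q(N)$ construction — independently of any pair $(n,\xi)$; once $q(N)\to\infty$ with this property, the $(n,\xi)$-dependent terms $\|n\|\,\|(1-\pi_{q(N)}')\xi\|$ and $\|(\iota_{q(N)}\circ\mathcal{E}_{q(N),\eta}(n)-n)\xi\|$ vanish simply because $q(N)\to\infty$, with no rate-matching against $N$ required, and no diagonal over vectors or operators is needed at all. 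The paper flags precisely this point in a footnote: the channel $\beta_N$ may not be tailored to the operator it is fed or to the matrix elements being probed. Replace the separability/diagonal language by this norm-based choice of $q(N)$ and your argument closes, including the ``slow $q(N)$ further if needed'' step for \eqref{uniform2}, which is likewise a purely norm-level requirement thanks to the uniformity of \eqref{defaN} on the unit ball of $\widetilde{\mathcal{N}}_q$.
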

\begin{proof} 
By the statement, for all $\delta >0$ there exists some $N_1(q,\delta)$ such that:
\begin{equation}
 \| \alpha_{N,q} \circ \beta_{N,q}   - {\rm Id}_{\widetilde{\mathcal{N}}_q} \|  < \delta
\end{equation}
for all $N > N_1(q,\delta)$. 
We use the result of Swingle, Penington et al. that establishes: 
\begin{equation}
\| \alpha_{N,q} \circ \left( \beta_{N,q}(n_1) \ldots \beta_{N,q}(n_k)\right)  -n_1 \ldots n_k \| <  \frac{1}{2} \delta k(3k-1) \| n_1 \|  \ldots \| n_k \|
\end{equation}
for finite subsets $\{ n_1, \ldots n_k \} \in \widetilde{\mathcal{N}}_q$ and $N > N_1(q,\delta)$. This was derived in a finite dimensional setting and so we prove it again in the more general setting in Lemma~\ref{homo}.  Lemma~\ref{lemma:small} guarantees the existence of some $N_2(q,\delta)$ such that:
\begin{equation}
\| V_{N,q} - \widetilde{V}_{N,q} \| < \delta
\end{equation}
with $\widetilde{V}_{N,q} $ an isometry for all $N > N_2(q,\delta)$ if we additionally assume $\delta < 1$. Using this:
\begin{equation}
\label{finvnq}
\| V_{N,q}^\dagger \beta_{N,q}(n_1) \ldots \beta_{N,q}(n_k) V_{N,q}  -n_1 \ldots n_k \| <  \left( \frac{1}{2} \delta k(3k-1) + \delta(2+\delta) \right) \| n_1 \|  \ldots \| n_k \|
\end{equation}
for all $N > N(q,\delta) \equiv \max\{ N_1(q,\delta), N_2(q,\delta) \}$.
We set:
\begin{equation}
\beta_{N} = \beta_{N,q} \circ \Phi_{q} \circ  \iota_q \circ \mathcal{E}_{q, \eta}  
\end{equation}
for some $q = q(N)$ still to be determined.
 Here $\mathcal{E}_{q, \eta} : \mathcal{N} \rightarrow \widetilde{\mathcal{N}}_q$ are the generalized conditional expectations
discussed above around Lemma~\ref{lem:gencond} and $\Phi_q$ is the isomorphism given in \eqref{theisom} and $ \Phi_{q} \circ  \iota_q  = {\rm Id}_{ \widetilde{\mathcal{N}}_q}$ . 
Note that $\beta_N$ is faithful since it is a composition of faithful channels.
Fix:
\begin{equation}
n_i^q \equiv  \iota_q \circ \mathcal{E}_{q, \eta}  (n_i)\, \qquad i = 1, \ldots k
\end{equation}
From Lemma~\ref{lem:gencond} we know that:
\begin{equation}
\label{str1}
so-\lim_{q \rightarrow \infty} n_i^q = n_i \qquad so-\lim_{q \rightarrow \infty} \pi_q' = 1
\end{equation}
Since the $\| n_i^q \| \leq \| n_i \|$ are uniformly bounded (the Petz map is a quantum channel), we also have:
\begin{equation}
\label{str2}
so-\lim_{q \rightarrow \infty} n_1^q n_2^q \ldots n_k^q = n_1 n_2 \ldots n_k
\end{equation}
Fix two $\xi_{1,2} \in \mathscr{H}$ and write:
\begin{align}
\label{allterms}
& \left< \xi_1 \right| \left(V_N^\dagger \beta_N(n_1) \ldots \beta_N(n_k) V_N  -  n_1 \ldots n_k \right) \left| \xi_2 \right> \\ & =\left< \xi_1 \right|  \left(  V_{N,q}^\dagger \beta_{N,q} \circ \Phi_q(n_1^q) \ldots \beta_{N,q} \circ \Phi_q(n_k^q) V_{N,q} 
-   \Phi_q(n_1^q) \ldots \Phi_q(n_k^q) \right)  \left| \xi_2 \right> \\
\label{other1}
&\qquad + \left< \xi_1 \right| (n_1^q \ldots n_k^q - n_1 \ldots n_k) \left| \xi_ 2 \right> \\
\label{other2}
& \qquad + \left< \xi_1 \right|(1- \pi_q')  V_N^\dagger \beta_N(n_1) \ldots \beta_N(n_k) V_N \left| \xi_2 \right> \\
\label{other3}
&\qquad + \left< \xi_1 \right| \left( V_N^\dagger \beta_N(n_1) \ldots \beta_N(n_k) V_N - n_1^q \ldots n_k^q \right) (1-\pi_q') \left| \xi_2 \right> 
\end{align}
Now define:
\begin{equation}
N_{\max}(q) = \max\{q, \max_{0\leq q' \leq q} N(q', 1/(q'+2)) \} 
\end{equation}
for $ q \in \mathbb{N}$. 
We designed this function so that: (a) $N_{\max}(q) < \infty$ for all $q$ (b) it is non-decreasing with $q$
(c) for all $N > N_{\max}(q)$ the estimate \eqref{finvnq} applies at fixed $q$ with $\delta = 1/(q+2) < 1$ and finally (d) $\lim_{q \rightarrow \infty} N_{\max}(q) = \infty$.
Then define (see Figure~\ref{fig:q-N} for a cartoon of this function)
\begin{equation}
q(N) =\begin{cases}  \max\{ q \in \mathbb{N} : N_{\max}(q) < N \}\,&  N >  N_{\max}(0) \\
0 \, & N \leq N_{\max}(0)  \end{cases}
\end{equation}
which satisfies: (a') it exists and is finite for all $N \in \mathbb{N}$ (b')  it is non-decreasing 
(c') $N > N_{\max}(q(N))$ if $N > N_{\max}(0)$ (d')  $\lim_{N \rightarrow \infty} q(N) = \infty$ by (a) and (d).
 Condition (d') implies that for all $\epsilon > 0$ there exists a sufficiently large $N_\star $ (that we can take $> N_{\max}(0)$) such that $q(N) > 1/\epsilon$ for all $N > N_\star$ and by (c') $N > N_{\max}(q(N))$ applied to (c) we have the estimate \eqref{finvnq}  with $\delta = 1/(q(N)+2) < \epsilon/(1+ 2 \epsilon)$
 for $N > N_\star$. In \eqref{finvnq} we have substituted $n_i \rightarrow \Phi_q(n_i^q)$. This establishes that:
\begin{equation}
\lim_{N \rightarrow\infty} \left. \left< \xi_1 \right|  \left(V_{N,q}^\dagger \beta_{N,q} \circ \Phi_q(n_1^q) \ldots \beta_{N,q} \circ \Phi_q(n_k^q) V_{N,q}
-  \Phi_q(n_1^q) \ldots \Phi_q(n_k^q) \right) \left| \xi_2 \right>\right|_{q = q(N)} = 0
\end{equation}
for this particular choice of $q(N)$. Note that because we applied norm convergence in our construction of the curve $(q=q(N),N)$,
this particular curve does not depend on $\xi_{1,2}$ or the operators $n_i$, and this is crucial for the argument to work.

\begin{figure}[h!]
\centering
\includegraphics[scale=.35]{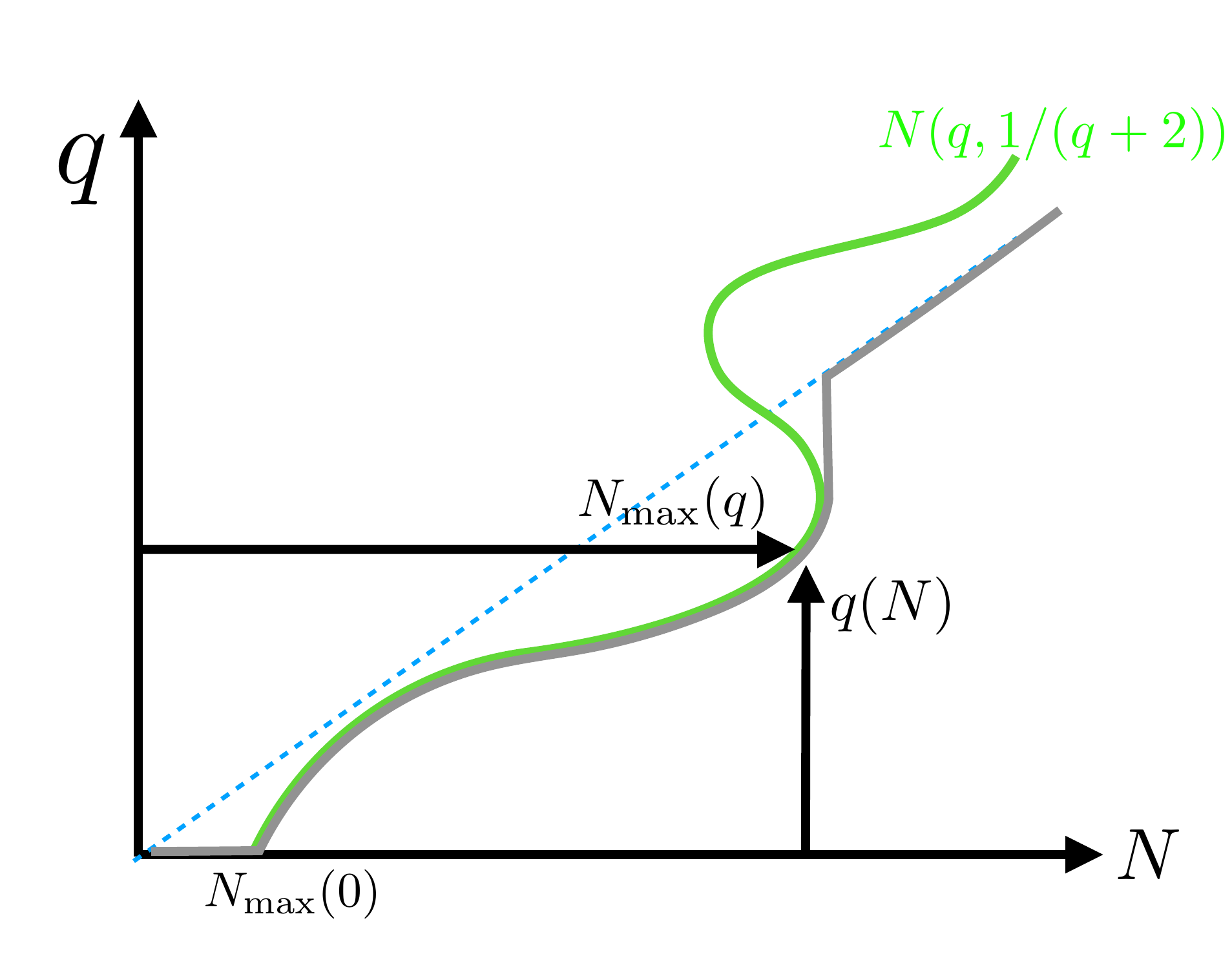}\includegraphics[scale=.35]{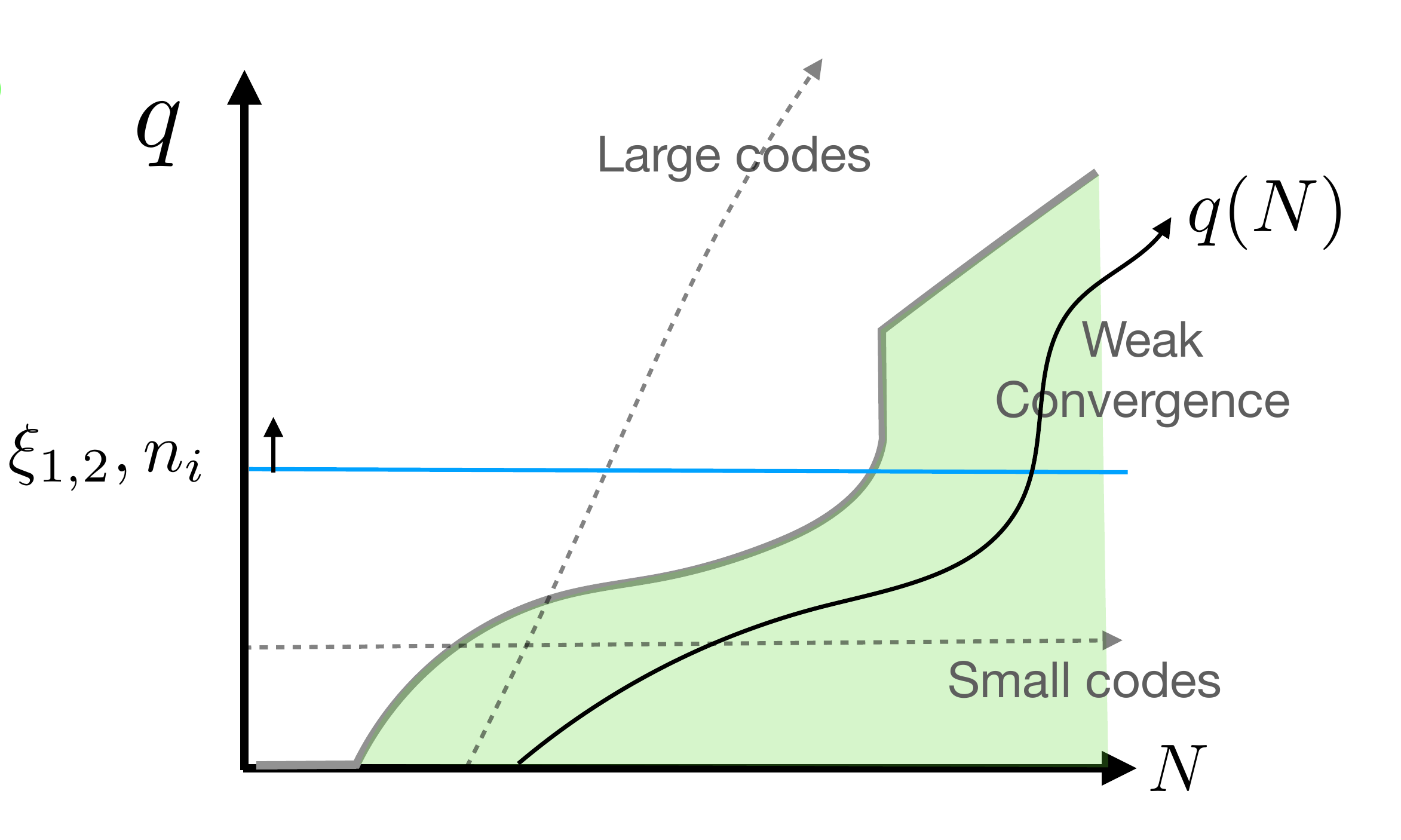}
\caption{ (\emph{left}) Cartoon of requirements on the size of the approximating type-I factor $q(N)$ as a function of $N$. If the function is chosen to lie on the gray line, then we can use the maps  $\beta_{N,q(N)}$ to give asymptotic reconstruction maps. The function must lie at or below the gray curve since larger code subspaces are problematic at fixed $N$ due to holographic bounds.
(\emph{right}) Different possibilities for codes in holographic theories. Most discussions of QEC in AdS/CFT have centered around small codes. Large codes have also been considered more recently \cite{Hayden:2018khn,Akers:2021fut}, and involve signification modifications to the standard error correcting the story. We straddle the two worlds along the weak convergence line. 
\label{fig:q-N}}
\end{figure}

The other terms \eqref{other1}-\eqref{other3} can be estimated to vanish in the limit $q \rightarrow \infty$ using the strong convergence given in \eqref{str1} and \eqref{str2}  and uniform boundedness of $V_N$. Since $q(N) \rightarrow \infty$ as $N \rightarrow \infty$ this establishes that:
\begin{equation}
\label{onlyk2}
\lim_{N \rightarrow \infty} \left< \xi_1 \right| V_N^\dagger \beta_N(n_1) \ldots \beta_N(n_k) V_N \left| \xi_2 \right> - \left< \xi_1 \right| n_1 \ldots n_k \left| \xi_2 \right> =0 
\end{equation}
Using Lemma~\ref{lem:seq} we get the required strong operator convergence statement (1) in Theorem~\ref{thm:main}.
\footnote{Note that the convergence of later \eqref{other1}-\eqref{other3}  will depend on $\xi_{1,2}$ or $n_i$, the key point being they don't simultaneously depend on us taking $N$
large, since otherwise we would have been forced to pick a $q(N)$ which is $\xi_{1,2}$ dependent - and this would have
spoiled the argument. That is because the constructed quantum channel $\beta_N$ should not depend on the operator that we feed it or the matrix elements we compute of it.}
Note that this Lemma only requires the $k=2$ case of \eqref{onlyk2}, and hence we only needed the $k=2$ case in Lemma~\ref{homo}. 

For the final part Lemma~\ref{lem:beta}, note that, since $\mathcal{E}_{q, \eta}$ is a Petz recovery map constructed using $\eta$, it trivially fixes
the state $\omega_\eta$. Also $\eta \in \mathscr{H}_q$
for all $q$. Thus $\omega_\eta|_{\mathcal{N}} = \omega_{\eta}\circ  \mathcal{E}_{q, \eta}$ such that:
\begin{align}
\| \omega_{V_N \eta} \circ \beta_{N}  - \omega_{\eta}|_{\mathcal{N}} \|
& = \left.  \| \omega_{V_{N,q} \eta} \circ \beta_{N,q}  \circ \mathcal{E}_{q, \eta}   - \omega_{\eta} \circ \mathcal{E}_{q, \eta}  \| \right|_{q = q(N)} \\
& \leq  \left. \sup_{n_q \in \widetilde{\mathcal{N}}_q} \frac{| \omega_{\eta} \left( V_{N,q}^\dagger \beta_{N,q}(n_q) V_{N,q}    - n_q  \right) |}{\| n_q\|}
\right|_{q = q(N)}
\leq (3 \delta + \delta^2) 
\end{align}
where  $\delta < 1/(q(N) +2)$. 
 Where in the first inequality we used monotonicity of the ``trace distance'' under the action of a quantum channel  $\mathcal{E}_{q, \eta}$ 
and in the second inequality we used \eqref{finvnq} and the properties (c) and (c') for $q(N)$ discussed above.
Taking the limit $N \rightarrow \infty$ establishes the uniform limit \eqref{uniform1} (establishing \eqref{uniform2}) for $\omega_\eta|_{\mathcal{N}}$ in statement (1) of Theorem~\ref{thm:main}.  
\end{proof}

\subsection{Proof of QEC Theorem~\ref{thm:main}} 

After having setup Lemma~\ref{lem:smallce} and Lemma~\ref{lem:beta} that allow us to pass respectively in and out of the small code
description we are now in a position to prove Theorem~\ref{thm:main}. 

\begin{proof}
{\bf Separating case}: We start with the case with a cyclic and separating $\eta$. 
We will prove the following $(1) \implies (2) \implies (3') \implies (4') \implies (1)$ and $(1) \implies (3) \implies (4) \implies (1)$. 

$(1) \implies (2)$ is trivial. $(2) \implies (3')$.  Take $\rho(1) = 1$ without loss of generality. Fix an $\epsilon >0$ and take $u \in \overline{\mathcal{D}}^{\| \cdot \|}$ unitary from the concrete $C^\star$ algebra.
 There exists a sequence of positive operators $p_N' \leq 1$ with $p_N' \in \mathcal{L}_N'$ that approximates the ``trace distance'':
\begin{equation}
\label{approxlf}
\left| \| \left( \rho \circ {\rm Ad}_{V_N} ( \cdot) -  \rho_u \circ {\rm Ad}_{V_N}  ( \cdot) \right)|_{\mathcal{M}_N(O)'} \| 
 - | \rho \circ {\rm Ad}_{V_N} ( p_N') -\rho_u \circ {\rm Ad}_{V_N}  ( p_N' )|  \right| < \epsilon
\end{equation}
For all $\delta_1 > 0$ there exists some $d\in \mathcal{D}$ such that $\| u - d\| < \delta_1$. 
Set $m_N = \gamma_N(d)$ from statement $(2)$. Computing:
\begin{align} 
\nonumber
& |   \rho \circ {\rm Ad}_{V_N}  ( p_N') -   \rho_u \circ {\rm Ad}_{V_N}  ( p_N')| 
 = |\rho( V_N^\dagger p_N'  ( V_N u^\dagger u - m_N^\dagger m_N V_N) )\non\\
 &\qquad\qquad\qquad\qquad- \rho((u^\dagger V_N^\dagger - V_N^\dagger m_N^\dagger) p_N' V_N u )
+  \rho( V_N^\dagger m_N^\dagger p_N'  (m_N V_N - V_N u) ) |
 \label{eachterm}
\end{align}
If we were to replace $u$ to $d$ to use strong convergence of $so-m_NV_N-V_Nd=0$ on the right-hand side of \eqref{eachterm}, we could estimate each of these terms as $N \rightarrow \infty$ using, triangle, Cauchy-Schwarz, boundedness of $p_N'$ and $V_N$. Explicitly, we have

\begin{align}
\label{a-cs}
&\limsup_{N \rightarrow \infty} | \rho( V_N^\dagger p_N'  ( V_Nu^\dagger u- m_N^\dagger m_N V_N ) )|\non\\
\leq& C^{1/2}\lim_{N \rightarrow \infty}\left( \rho (\lvert V_N d^\dagger d- m_N^\dagger m_N V_N\rvert^2)\right)^{1/2}+\sup_{N }\norm{ V_N^\dagger p_N  V_N }\norm{u^\dagger u-d^\dagger d }\non\\
\leq& C\norm{u^\dagger u-d^\dagger d}\leq C\delta_1(\delta_1+2)
\end{align}
where $\lvert V_N d^\dagger d- m_N^\dagger m_N V_N\rvert^2:= ( d^\dagger d V_N^\dagger  - V_N^\dagger  m_N^\dagger m_N)  ( V_N d^\dagger d- m_N^\dagger m_N V_N) $. Now the argument of $\rho$ in the right hand side of \eqref{a-cs} vanishes in the weak operator topology because of the strong operator limit \eqref{so-d} assumed in (2)
and Lemma~\ref{lem:seq}.
Uniform
boundedness in $N$ of $\gamma_N(d)$ assumed in (2) implies that the argument of $\rho$ is  uniformly bounded.
This is enough to prove that the limit of  \eqref{a-cs}  vanishes; see Lemma~\ref{ultratonot} in Appendix~\ref{app:convergence}.  The next two terms in (\ref{eachterm}) reads
\begin{align}
  &\limsup_{N \rightarrow \infty}  \lvert \rho((u^\dagger V_N^\dagger - V_N^\dagger m_N^\dagger) p_N' V_N u )\non\\
  \leq&\limsup_{N \rightarrow \infty}\lvert \rho((d^\dagger V_N^\dagger-V_N^\dagger m_N^\dagger)p_N'V_Nu)\rvert+  \sup_N \norm{u-d} \norm{  V_N^\dagger p_N' V_N} \norm{u} \rvert\leq C\delta_1\non
\end{align}
similarly,
\begin{align}
    &\limsup_{N \rightarrow \infty}\lvert\rho( V_N^\dagger m_N^\dagger p_N'  (m_N V_N - V_N u) )\rvert\non\\
    \leq&\limsup_{N \rightarrow \infty}\lvert\rho( V_N^\dagger m_N^\dagger p_N'  (m_N V_N - V_N d) )\rvert +\delta_1 C^{1/2} \limsup_N \rho( V_N^\dagger m_N^\dagger m_N V_N)^{1/2}\non
\end{align}
on both of the above inequalities, the first term of the first inequalities will be approximated by Cauchy-Schwarz inequality and vanish based on $so-m_NV_N-V_Nd=0$ as in (\ref{a-cs}).

Similar estimates for each term in \eqref{eachterm} with the replacement $u \rightarrow d$ show that these all vanish in the limit. Hence, also using \eqref{approxlf}
\begin{align}
&\limsup_{N \rightarrow \infty} \| \left( \rho \circ {\rm Ad}_{V_N} ( \cdot) - \rho_u \circ {\rm Ad}_{V_N} ( \cdot) \right)|_{\mathcal{L}_N'} \|   < \epsilon 
\\& \qquad \qquad + \delta_1 C^{1/2} \left( (3+\delta_1) C^{1/2}+   \limsup_{N\rightarrow \infty} \rho( V_N^\dagger \gamma_N(d^\dagger) \gamma_N(d) V_N)^{1/2} \right)
\end{align}
The later $\delta_1$ term comes from approximating $u$ by $d$ in norm along with Cauchy-Schwarz and boundedness of $V_N$ and $p_N'$.  The limit in the second
term can be computed and gives $\rho(d^\dagger d) \leq \| d \|^2  \leq (\| u \| + \| d- u \| )^2 = (1+ \delta_1)^2$. Sending $\epsilon \rightarrow 0$ and $\delta_1 \rightarrow 0$ gives the proof.

\noindent $(3') \implies (4')$ \,\, Fix some $m_N'$ as per (4). Fix two vectors $\xi_{1,2} \in \mathscr{H}$. Consider some $u \in \overline{\mathcal{D}}^{\|\cdot \|}$. The linear functional $\omega_{\xi_1,u^\dagger \xi_2} = \left<\xi_1 \right| \cdot u^\dagger \left| \xi_2 \right>$ can be written as a sum of four positive elements  $\rho_i \in \mathcal{B}(\mathscr{H})_\star^+$. 
Apply (3')  to the four  positive elements of $\mathcal{B}(\mathscr{H})_\star^+$, and we find:
\begin{align}
\lim_{N \rightarrow \infty} &| \left<\xi_1 \right| ( V_N^\dagger m_N' V_N u^\dagger - u^\dagger V_N^\dagger m_N' V_N ) \left| \xi_2 \right>| 
 \\ & \qquad \qquad \leq \lim_{N \rightarrow \infty}  \| m_N'\| \sum_{i=1}^4 \|  \rho_i \circ {\rm Ad}_{V_N} ( \cdot) -  \rho_i \circ {\rm Ad}_{V_N u} ( \cdot) \| = 0
\end{align}
Now, since any bounded operator $d \in \mathcal{D} \subset  \overline{\mathcal{D}}^{\|\cdot \|}$ can be written as a linear combination of four scaled unitaries in 
the $C^\star$ algebra  $\overline{\mathcal{D}}^{\|\cdot \|}$, we are done.

\noindent $(4') \implies (1)$ \,\, 
Consider any fixed $q \in \mathbb{N}$ and use Lemma~\ref{lem:smallce}. Fix a $\delta > 0$
there exists some $N(\delta)$ such that:
\begin{equation}
\| \alpha_{N,q}' - P_q \circ \alpha_{N,q}' \|_{cb} < \delta \qquad \forall N > N(\delta)
\end{equation}
We use the privacy correctability correspondence \cite{crann2016private}, see also~\cite{kretschmann2008continuity}. Since $P_q \circ \alpha_{N,q}' : \mathcal{L}_N' \rightarrow \widetilde{\mathcal{N}}_q'$
we find that $\widetilde{\mathcal{N}}_q$ is $\delta$-private for $ \alpha_{N,q}'$. 
Now $\alpha_{N,q} \equiv {\rm Ad}_{\widetilde{V}_{N,q}}|_{\mathcal{L}_N}$ is a complementary channel by definition. The privacy correctability result, Theorem~4.7 of \cite{crann2016private}, tells
us that  $\widetilde{\mathcal{N}}_q$ is $2 \delta^{1/2}$ correctible for $\alpha_{N,q}$. That is there is a sequence of quantum channel (normal unital completely positive) $\beta_{N,q} :  \widetilde{\mathcal{N}}_q \rightarrow \mathcal{L}_N$ with:
\begin{equation}
\| \alpha_{N,q} \circ \beta_{N,q}  - {\rm Id}_{\widetilde{\mathcal{N}}_q} \|_{cb} < 2 \delta^{1/2}
\end{equation}
for all $N > N(\delta)$. Taking the limit (at fixed $q$) gives:
\begin{equation}
\label{flimit}
\lim_{N \rightarrow \infty} \| \alpha_{N,q} \circ \beta_{N,q}  - {\rm Id}_{\widetilde{\mathcal{N}}_q} \|_{cb}  = 0
\end{equation}
We need $\beta_{N,q}$ to be faithful. We can simply redefine:
\begin{equation}
\beta_{N,q} (\cdot) \rightarrow (1 - \lambda_N) \beta_{N,q}(\cdot) + 1 \lambda_N \omega_{\eta}(\cdot)
\end{equation}
where $0 < \lambda_N < 1$ is a sequence with $\lim_N \lambda_N = 0$. Since $ \omega_{\eta}|_{\widetilde{\mathcal{N}}_q}$
is faithful, the new channel must be faithful for each fixed $N$. It also still satisfies \eqref{flimit}. 
Since this applies for all $q$ we can then use Lemma~\ref{lem:beta} which implies (1)
and the uniformity statement in the last part of Theorem~\ref{thm:main} - see \eqref{uniform1}.

The other set of implications $(1) \implies (3) \implies (4) \implies (1)$ are  simpler versions of the above argument working
with the von Neumann algebra $\mathcal{N}$ rather than $\mathcal{D}$.

We now establish the rest of the last part of Theorem~\ref{thm:main}. 
For each $N \in \mathbb{N}$ there is a $\varPhi_N$-dual quantum channel (normal unital completely positive) $\alpha_N' : \mathcal{L}_N' \rightarrow \mathcal{N}'$ defined as follows:
\begin{equation}
\label{def:pm}
\left< \varPhi_N \right| \beta_N(n) m_N' 
\left| \varPhi_N \right> = \left< \eta_N \right| \alpha_N'(m_N') n \left| \eta_N \right>
\end{equation}
where $\eta_N \in \mathscr{H}$ chosen in the natural cone for $\eta$ such that $\omega_{\eta_N}|_{\mathcal{N}} = \omega_{\varPhi_N} \circ \beta_N $. 
Since $\varPhi_N$ is represented by $\eta$ on the code:
\begin{equation}
    \|(\omega_{\varPhi_N} - \omega_{V_N \eta})|_{\mathcal{L}_N}\| \rightarrow 0
\end{equation}
using monotonicty of the linear functional 
norm under the inclusion $\mathcal{L}_N \subset \mathcal{B}(\mathscr{K}_N)$ and one of the  Fuchs–van de Graaf inequalities.
Hence from  \eqref{uniform1} we can take $\beta_N(n)$ such that (uniformly):
\begin{equation}
\lim_{N} \| (\omega_{\eta_N} - \omega_{\eta})|_{\mathcal{N}} \| = 0 
\end{equation}
Since $\eta,\eta_N$ are both in the same natural cone of $\mathcal{N}$ we also have:
\begin{equation}
\lim_{N}\| \eta_N - \eta \| = 0
\end{equation}
Hence from \eqref{def:pm} and $V_N \eta -\varPhi_N \rightarrow 0$
\begin{equation}
\lim_{N \rightarrow \infty} \left< \eta \right| n_1 \left( \alpha_N'( m_N' ) - V_N^\dagger m_N' V_N \right) n_2 \left| \eta \right> =0
\end{equation}
for bounded sequences $m_N' \in \mathcal{L}_N'$. Since we have weak convergence on a dense subspace, uniform
boundedness extends this to weak operator convergence on all $\mathscr{H}$:
\begin{equation}
\label{aftnonsep}
wo-\lim_{N \rightarrow \infty}\left( \alpha_N'( m_N' ) - V_N^\dagger m_N' V_N \right)  =0
\end{equation}

\noindent {\bf Non-separating case:}  We observe that the proof given above of $(1) \implies (2) \implies (3') \implies (4')$ does 
not require the separating property of $\eta$. Nor does $(1) \implies (3) \implies (4)$. We just have to complete these circles of implications. 

We form:
\begin{equation}
\left| \hat{\eta} \right> = \sum_i  \lambda_i \left|\eta_i \right> \otimes \left| i \right>_R
\end{equation}
on
some infinite dimensional reference with orthonormal $ \left| i \right>_R, i = 0,1,\ldots$ basis
and such that $\eta_0 = \eta$ the cyclic vector for $\mathcal{N}$.
Now we pick this state to be cyclic on $\mathcal{B}(\mathscr{H}_R)$ which amounts to picking all $\lambda_i \neq 0$.
Then $\hat{\eta}$ will be automatically separating for $\hat{\mathcal{N}}$. 
Now let us compute the support projection:
\begin{equation}
e' = \overline{\hat{\mathcal{N}} \left| \hat{\eta} \right> } \in \mathcal{N}' \otimes \mathcal{B}(\mathscr{H}_R)
\end{equation}
if $e'=1$ then $\hat{\eta}$ would be cyclic. We do not expect this to be the case. 
However we do know that $e' \in \mathcal{N}' \otimes \mathcal{B}(\mathscr{H}_R)$. 
We also note that:
\begin{equation}
\left._R\left< 0 \right| e' \left| 0 \right>_R \right. \equiv \mu'
\end{equation}
is a bounded positive invertible operator in $\mathcal{N}'$ with $\| \mu' \| \leq 1$. It is invertible by contradiction: suppose there is 
a kernel - then there is some $\chi \in \mathscr{H}$ such that $\mu' \left| \chi \right> =0$. Which implies that:
\begin{equation}
\| e' \left| \chi \right> \left| 0\right>_R \| =0 \implies e' \left| \chi \right> \left| 0\right>_R = 0
\end{equation}
Thus for all $n \in \mathcal{N}$: 
\begin{equation}
\left< \chi \right| \left._R \left< 0 \right| \right. n \otimes 1_R \left| \hat{\eta} \right> =0
\end{equation}
That is $\lambda_0 \left< \chi \right| n \left| \eta \right> =0$ for all $n$
and this implies that $\chi =0$ by the cyclic property of $\eta$ which is a contradiction.

Hence Theorem~\ref{thm:main} applies in the case of the sequence of operators
$\hat{V}_N = (V_N \otimes 1_R) e' 
: e' (\mathscr{H} \otimes \mathscr{H}_R) \rightarrow \mathscr{K}_N \otimes \mathscr{H}_R$ the bulk algebra $\hat{\mathcal{N}} e' $
and the boundary algebra
$\hat{\mathcal{L}}_N = \mathcal{L}_N \otimes 1_R$. We can apply this theorem
since we have proven it in the cyclic and separating case and since $\hat{\eta} \in e' (\mathscr{H} \otimes \mathscr{H}_R)$ is cyclic and separating  for $\hat{\mathcal{N}} e'$. 
Thus if any of the conditions are filled we have the other which we denote in this
hatted form as $(\hat{1}) \implies (\hat{2}) \implies (\hat{3}')
\implies (\hat{4}') \implies (\hat{1})$ and $(\hat{1})  \implies (\hat{3})
\implies (\hat{4}) \implies (\hat{1})$.   Hence our proof is
complete if we can show that $(4') \implies (\hat{4}')$, $(4) \implies (\hat{4})$ 
and $(\hat{1}) \implies (1)$.

\noindent $(4') \implies (\hat{4}')$ and $(4) \implies (\hat{4})$  are essentially the same argument. We pick a dense set of states for $\xi_{1,2} \in \mathscr{H} \otimes \mathscr{H}_R$
which involves finite sums of basis elements $\left| \eta_i \right> \otimes \left| j \right>_R$ given above. Then:
\begin{equation}
\left< \hat{\xi}_1 \right| \left[ \hat{V}_N^\dagger \hat{m}_N'  \hat{V}_N, d \otimes 1_R \right] \left| \hat{\xi}_2 \right>
= \sum_{i_1,j_1, i_2,j_2} (c^1_{i_1j_1})^\star c^2_{i_2j_2}
\left< \eta_{i_1} \right| \left[ V_N^\dagger  \left< j_1 \right| \hat{m}_N' \left| j_2\right>  V_N, d \right] \left|  \eta_{i_2} \right>
\end{equation}
for uniformly bounded $\hat{m}_N' \in \hat{\mathcal{L}}_N'$. We can now take the limit term by term to conclude that:
\begin{equation}
\lim_N \left< \hat{\xi}_1 \right| \left[ \hat{V}_N^\dagger \hat{m}_N'  \hat{V}_N, d \otimes 1_R \right] \left| \hat{\xi}_2 \right> =0
\end{equation}
But since the external states are a dense subspace of $\mathscr{H} \otimes \mathscr{H}_R$ and the operator is uniformly bounded
we finally prove the weak operator convergence on $\mathscr{H} \otimes \mathscr{H}_R$ which implies the weak operator
convergence on the projector subspace of $e'$ and thus $(\hat{4}')$.

\noindent $(\hat{1}) \implies (1)$. We have the channel $B_N : \hat{\mathcal{N}} e' \rightarrow \mathcal{L}_N \otimes 1_R$ which satisfies:
\begin{equation}
\label{soB}
so-\lim_N B_N(\hat{n} ) (V_N \otimes 1_R) e' - (V_N \otimes 1_R) e' \hat{n} =0
\end{equation}
for all $\hat{n} \in \hat{\mathcal{N}} e' $. By the separating property of $\hat{\eta}$ we can define the isomorphism $\Phi : \mathcal{N} \otimes 1_R \rightarrow
\hat{\mathcal{N}} e'$ and then $B_N \circ \Phi(\cdot \otimes 1_R) = \beta_N(\cdot) \otimes 1_R$ where $\beta_N : \mathcal{N} \rightarrow \mathcal{L}_N$ is a quantum channel. 
Then \eqref{soB} becomes:
\begin{equation}
so-\lim_N  ( \beta_N(n) V_N \otimes 1_R) e' - (V_N n \otimes 1_R) e' =0
\end{equation}
taking matrix elements $_R\left< 0 \right| \cdot \left| 0 \right>_R$ which preserves the strong operator convergence statement we find:
\begin{equation}
so-\lim_N  \beta_N(n) V_N \mu'  - V_N n \mu' =0
\end{equation}
Now since we proved that $\mu'$ is a bounded invertible operator we can simply remove this. We arrive at (1). 
\end{proof}

We needed:
\begin{lemma}
\label{homo}
Consider two completely positive unital maps:
\begin{equation}
\| \alpha \circ \beta  - {\rm Id}_{\mathcal{N}} \| < \epsilon 
\end{equation}
where $\alpha : \mathcal{M} \rightarrow \mathcal{N}$ and $\beta: \mathcal{N} \rightarrow \mathcal{M}$ are between unital $C^\star$-algebras,
and we assume nothing about the dimensionality of these algebras. 
Then:
\begin{equation}
\| \alpha \left( \beta(n_1) \ldots \beta(n_k)\right)  -n_1 \ldots n_k \| <  \frac{1}{2} \epsilon k(3k-1) \| n_1 \|  \ldots \| n_k \|
\end{equation}
for finite subsets $\{ n_1, \ldots n_k \} \in \mathcal{N}$. 
\end{lemma}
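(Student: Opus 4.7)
The natural approach is induction on $k$, with base case $k=1$ being the hypothesis $\|\alpha\circ\beta-\mathrm{Id}\|<\epsilon$. For the step $k-1\to k$, I would use the splitting
\begin{align*}
\alpha(\beta(n_1)\cdots\beta(n_k)) - n_1\cdots n_k
&= \bigl[\alpha(\beta(n_1)Y) - \alpha(\beta(n_1))\alpha(Y)\bigr] \\
&\quad+ \alpha(\beta(n_1))[\alpha(Y) - n_2\cdots n_k] + [\alpha(\beta(n_1))-n_1]\,n_2\cdots n_k,
\end{align*}
with $Y=\beta(n_2)\cdots\beta(n_k)$. The last term is at most $\epsilon\prod_i\|n_i\|$ by hypothesis, and the middle term is at most $\delta_{k-1}\prod_i\|n_i\|$ by the inductive hypothesis (using $\|\alpha(\beta(n_1))\|\leq\|n_1\|$, since completely positive unital maps are contractive). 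The first term, which is the multiplicativity defect $D_\alpha(\beta(n_1^*),Y) := \alpha(\beta(n_1)Y) - \alpha(\beta(n_1))\alpha(Y)$ of the 2-positive map $\alpha$, I would estimate via the generalized Kadison--Schwarz inequality
\[
\|D_\alpha(\beta(n_1^*),Y)\|^2 \leq \|\Delta_\alpha(\beta(n_1^*))\|\cdot\|\Delta_\alpha(Y)\|,
\]
where $\Delta_\alpha(x)=\alpha(x^*x)-\alpha(x)^*\alpha(x)\geq 0$ is the Schwarz gap; this inequality follows from positivity of the $2\times 2$ matrix obtained by applying $\alpha^{(2)}$ to $\binom{\beta(n_1^*)}{Y}^*\binom{\beta(n_1^*)}{Y}$.

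The essential auxiliary estimate is that the Schwarz gap of $\alpha$ on elements of the form $\beta(n)$ is already small: $\|\Delta_\alpha(\beta(n))\|\leq (3\epsilon+\epsilon^2)\|n\|^2$. This follows from the identity
\[
\Delta_\alpha(\beta(n))+\alpha(\Delta_\beta(n))=\alpha(\beta(n^*n))-\alpha(\beta(n))^*\alpha(\beta(n)),
\]
where both summands on the left are non-negative (Kadison--Schwarz applied to $\alpha$ and to $\beta$ respectively) while the right-hand side, rewritten as $[\alpha(\beta(n^*n))-n^*n]-[\alpha(\beta(n))^*\alpha(\beta(n))-n^*n]$, has norm at most $(3\epsilon+\epsilon^2)\|n\|^2$ by direct application of the hypothesis and the triangle inequality. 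A positive operator bounded above by a self-adjoint operator of this norm inherits the norm bound.

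The main obstacle is controlling the second factor $\|\Delta_\alpha(Y)\|$: since $Y^*Y$ is a product of $2(k-1)$ factors of $\beta$, one must apply the inductive statement at index $2(k-1)$ to estimate $\|\alpha(Y^*Y)-(n_2\cdots n_k)^*(n_2\cdots n_k)\|\leq\delta_{2(k-1)}\prod_i\|n_i\|^2$, and combine with $\|\alpha(Y)^*\alpha(Y)-(n_2\cdots n_k)^*(n_2\cdots n_k)\|\leq (2\delta_{k-1}+\delta_{k-1}^2)\prod_i\|n_i\|^2$ to conclude $\|\Delta_\alpha(Y)\|=O(\epsilon)$ as well. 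This forces a strong/simultaneous induction involving indices up to roughly $2K$, and some delicate bookkeeping will be needed to close the induction and reproduce the precise constant $\tfrac{1}{2}k(3k-1)$. The Kadison--Schwarz route naively introduces $\sqrt{\cdot}$ factors that conspire to give a linear $O(\epsilon)$ rather than $O(\sqrt{\epsilon})$ bound only because both factors in the product are small; extracting the exact coefficient $(3k-2)$ for the inductive increment $\delta_k-\delta_{k-1}$ may require a sharper direct argument (for instance, polarizing the identity above and tracking the positive and negative parts separately, or invoking a Stinespring dilation) and this final combinatorial optimization is the step I expect to be most delicate.
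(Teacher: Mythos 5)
Your $k=2$ argument is correct and is essentially the paper's own: the paper likewise bounds the Schwarz gap $\alpha(\beta(n^\dagger)\beta(n))-\alpha(\beta(n^\dagger))\alpha(\beta(n))$ by $3\epsilon\|n\|^2$ (operator Schwarz inequality for $\beta$ and for $\alpha$ plus the hypothesis), and then converts this diagonal estimate into the off-diagonal bound $\|\alpha(\beta(n_1)\beta(n_2))-\alpha(\beta(n_1))\alpha(\beta(n_2))\|\le 3\epsilon\|n_1\|\|n_2\|$ via positivity of a $2\times2$ operator matrix (writing $E_{12}=E_{11}^{1/2}\,y\,E_{22}^{1/2}$ with $\|y\|\le1$), which is exactly the Cauchy--Schwarz inequality for the multiplicativity defect that you invoke; adding the $2\epsilon$ from replacing each $\alpha(\beta(n_i))$ by $n_i$ gives $5\epsilon$, matching $\tfrac{1}{2}\epsilon k(3k-1)$ at $k=2$. (Your $(3\epsilon+\epsilon^2)$ can be tightened to $3\epsilon$ by using contractivity of $\alpha\circ\beta$, as you already do elsewhere.)

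For $k\ge 3$, however, your induction does not close, and this is a genuine gap rather than mere bookkeeping. To control the defect term you need $\|\Delta_\alpha(Y)\|=O(\epsilon)$ for $Y=\beta(n_2)\cdots\beta(n_k)$, and since $Y^\dagger Y$ is a product of $2(k-1)$ factors of $\beta$, the only route you offer is the lemma at index $2(k-1)$. But $2(k-1)>k$ for $k\ge 3$, so the recursion escalates ($k\to 2(k-1)\to 4k-6\to\cdots$) and never terminates; a ``strong induction up to $2K$'' would itself demand indices up to $4K$, and so on. The only self-contained fallback inside your scheme is the trivial bound $\|\Delta_\alpha(Y)\|\le\|Y\|^2\le\prod_{i\ge 2}\|n_i\|^2$, which degrades the defect estimate to $O(\sqrt{\epsilon})$ and loses the claimed linear-in-$\epsilon$ bound with the stated constant. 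This is exactly why the paper does not run such an induction: it proves $k=2$ directly (which is all that its application in Lemma~\ref{lem:beta} requires) and for general $k$ appeals to the Stinespring-dilation argument of Cotler et al., which carries over verbatim to unital $C^\star$-algebras because every completely positive map admits a Stinespring representation. The dilation route you mention only in passing at the end is therefore not an optional refinement but the missing idea: it avoids having to bound the Schwarz gap of $\alpha$ on long products of $\beta$'s altogether.
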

\begin{proof}
The proof in \cite{Cotler:2017erl} uses Stinespring in finite dimensions. Since there is a general Stinespring representation for completely positive maps \cite{paulsen2002completely}, the
proof remains the same. We give here a short proof for $k=2$ using well known properties of completely positive maps:

For any $n\in\N$, one has
\begin{align}\nonumber
    \norm{\alpha(\beta(n^\dagger n))-\alpha(\beta(n^\dagger ))\alpha(\beta( n))}&\leq \norm{\alpha(\beta(n^\dagger n))-n^\dagger n}+\norm{n^\dagger n-\alpha(\beta(n^\dagger ))\alpha(\beta( n))}\non\\ & \leq3\epsilon\norm{n}^2
    \label{ndn}
\end{align}
where the second inequlity follows from:
\begin{equation} 
\norm{n^\dagger n-\alpha(\beta(n^\dagger ))\alpha(\beta( n))}\leq\norm{\alpha\beta(n^\dagger)}\norm{n-\alpha(\beta(n))}+\\ \norm{n}\norm{n^\dagger-\alpha(\beta(n^\dagger))}\leq 2\epsilon\norm{n}^2
\end{equation}
after using the basic monotonicity result $\norm{\beta(n)}\leq\norm{n}$. Based on properties of completely positive maps:
\begin{align}\label{postive}
    \alpha(\beta(n^\dagger n))-\alpha(\beta(n^\dagger ))\alpha(\beta( n))\geq \alpha(\beta(n^\dagger )\beta( n))-\alpha(\beta(n^\dagger ))\alpha(\beta( n))\geq 0,
\end{align}
the first inequlity being the operator Schwarz inequality on $\beta$, and the positivity of $\alpha$; and the second inequlity is simply the operator Schwarz inequality for $\alpha$. Therefore we can estimate the norm of the middle positive operator using \eqref{postive} and \eqref{ndn}:
\begin{align}\label{choi-Jensen-David}
   \norm{\alpha(\beta(n^\dagger )\beta( n))-\alpha(\beta(n^\dagger ))\alpha(\beta( n))}\leq3\epsilon\norm{n}^2
\end{align}
Consider the positive matrix \cite{bhatia2009positive}
\begin{align}\label{pos-M_3}
    \begin{pmatrix}
    \beta(n_1^\dagger)\beta(n_1) & \beta(n_1^\dagger)\beta(n_2) &\beta(n_1^\dagger)\\
    \beta(n_2^\dagger)\beta(n_1) & \beta(n_2^\dagger)\beta(n_2) & \beta(n_2^\dagger) \\
    \beta(n_1) & \beta(n_2) &1
    \end{pmatrix}=
    \begin{pmatrix}
    \beta(n_1^\dagger) & 0 &0\\
    \beta(n_2^\dagger) & 0 & 0 \\
    1 & 0 &0
    \end{pmatrix}
    \begin{pmatrix}
    \beta(n_1) & \beta(n_2) &1\\
    0 & 0 & 0 \\
    0 & 0 &0
    \end{pmatrix}
\end{align}
A hermitian block matrix $\begin{pmatrix}
    A& X\\
   X^\dagger & B
    \end{pmatrix}$ with block diagonal terms greater than zero is positive iff $A\geq X B^{-1} X^\dagger$, therefore acting with the quantum channel $\alpha\otimes\mathbbm{I}_3$ on (\ref{pos-M_3}) we find this remains  positive iff
    \begin{eqnarray}
        \begin{pmatrix}
    \alpha(\beta(n_1^\dagger)\beta(n_1)) & \alpha(\beta(n_1^\dagger)\beta(n_2)) \\
    \alpha(\beta(n_2^\dagger)\beta(n_1)) & \alpha(\beta(n_2^\dagger)\beta(n_2) )
    \end{pmatrix}\geq\begin{pmatrix}
   \alpha( \beta(n_1^\dagger)) & 0 \\
    \alpha(\beta(n_2^\dagger) )& 0 
    \end{pmatrix}
    \begin{pmatrix}
    1 &0\\
    0 & 0 
    \end{pmatrix}
    \begin{pmatrix}
    \alpha(\beta(n_1)) &\alpha( \beta(n_2) )\\
    0 & 0 
    \end{pmatrix}
    \end{eqnarray}
   or
     \begin{eqnarray}
        \begin{pmatrix}
    \alpha(\beta(n_1^\dagger)\beta(n_1)) - \alpha( \beta(n_1^\dagger)) \alpha( \beta(n_1))& \alpha(\beta(n_1^\dagger)\beta(n_2))-\alpha(\beta(n_1^\dagger))\alpha(\beta(n_2)) \\
    \alpha(\beta(n_2^\dagger)\beta(n_1))-\alpha(\beta(n_2^\dagger))\alpha(\beta(n_1)) & \alpha(\beta(n_2^\dagger)\beta(n_2) )-\alpha(\beta(n_2^\dagger))\alpha(\beta(n_2) )
    \end{pmatrix}\geq0
    \end{eqnarray}
    It is known that semi-positivity of such a Hermitian operator in $\N \otimes\mathbbm{M}_2(\mathbbm{C})$, with diagonal terms greater than or equal to $0$, requires there to exist some $y,\norm{y}\leq 1$, such that $E_{12}=E_{11}^{1/2}yE_{22}^{1/2}$, therefore based on (\ref{choi-Jensen-David}) we have 
\begin{align}
   \norm{{\alpha(\beta(n_1)\beta(n_2)) -\alpha(\beta(n_1))\alpha(\beta(n_2))}}\leq 3\epsilon\norm{n_1}\norm{n_2}
\end{align}
and based on $\norm{\alpha(\beta(n_1))\alpha(\beta(n_2))-n_1n_2}\leq 2\epsilon\norm{n_1}\norm{n_2}$, one has
\begin{align}
   \norm{\alpha(\beta(n_1)\beta(n_2)) -n_1n_2}\leq 5\epsilon\norm{n_1}\norm{n_2}
\end{align}

Technically all we used in the proof  of Lemma~\ref{lem:beta} was the $k=2$ case. So the argument above is sufficient here. The $k > 2$ requires us to
use the Strinespring argument in \cite{Cotler:2017erl} which generalizes to the general $C^\star$ algebra setting. 
\end{proof}

\subsection{Proof of JLMS condition Theorem~\ref{thm:JLMS}}

\label{proof:jlms}

We need some more results in Modular theory that we simply quote here. These can be found in textbooks \cite{stratila2020modular}. We also refer to \cite{Ceyhan:2018zfg} (albeit beware of a notation swtich $\Delta_{\eta|\psi}=\Delta_{\psi,\eta}$).
The natural positive cone \cite{araki1974some} is defined with respect to the non-relative modular data
introduced around \eqref{TTresults}: 
\begin{equation}\label{natural-cone}
\mathcal{P}^\natural_{\eta;\mathcal{N}} = \overline{\{ n J_{\eta;\mathcal{N}} n J_{\eta;\mathcal{N}} \left| \eta \right>; n \in \mathcal{N}\}} = \overline{\{ \Delta_{\eta;\mathcal{N}}^{1/4} n \left| \eta \right>; n \in \mathcal{N}^+\}} \subset \mathscr{H}
\end{equation}
All cyclic and separating vectors in the natural cone have the same modular conjugation operators $J_{\eta;\mathcal{N}}$. 
Any state $\rho \in \mathcal{N}_\star^+$ has a unique representative vector from this cone.
Thus any vector $\psi \in \mathscr{H}$ can be rotated inside $\theta \psi  \in \mathcal{P}^\natural_{\eta;\mathcal{N}}$
using a unique partial isometry $\theta$. 
Relative versions of this modular data exist
for two states $\psi,\eta$. If $\psi$ is in the natural cone of $\eta$ then the relative modular data takes a particularly simple form, so it is convenient to work in the natural cone for this reason.
With $\psi \in \mathcal{P}^\natural_{\eta;\mathcal{N}}$, assumed cyclic and separating for $\mathcal{N}$, then the defining equation of the relative modular data becomes:
$J_{\eta;\mathcal{N}} \Delta_{\eta,\psi;\mathcal{N}}^{1/2} n \left|\psi \right> = n^\dagger \left| \eta \right>$. Where the non-relative version is simply $\Delta_{\psi;\mathcal{N}} = \Delta_{\psi,\psi;\mathcal{N}}$.
We have some useful relations that we use in the proofs, and that we simply quote here
for any two cyclic and separating vectors $\eta,\psi \in \mathscr{H}$:
\begin{align}\label{tomita-opts}
&J_{\eta;\mathcal{N}} \Delta_{\psi,\eta;\mathcal{N}}^{z} = \Delta_{\eta,\psi;\mathcal{N}}^{-\bar{z}} J_{\eta;\mathcal{N}} \,, \qquad \psi \in \mathcal{P}^\natural_{\eta;\mathcal{N}} \\\nonumber 
&\Delta_{\psi,\eta;\mathcal{N}'}^{z} = 
\Delta_{\eta,\psi;\mathcal{N}}^{-z}\,,  \qquad
\Delta_{u u' \psi, \eta;\mathcal{N}}
= u \Delta_{\psi, \eta;\mathcal{N}} u^\dagger\,, \qquad J_{u u' \psi;\mathcal{N}}
= u u' J_{\psi;\mathcal{N}} (u u')^\dagger
\end{align}
 for unitaries $u \in \mathcal{N}$ and $u' \in \mathcal{N}'$; $z\in\mathbb{C}$. Only the first equation
requires $\psi$ to be in the natural cone. 
See \cite{Ceyhan:2018zfg} for a derivation
and a general definition of the relative modular operators when the vectors are not in the same natural cone (beware the notation change in that paper $\Delta_{\psi|\eta} = \Delta_{\eta,\psi}$.)

We start with the following, which allows us to pass from convergence of the smooth relative entropy with the action of ${\rm Ad}_{V_N}$, to convergence in relative entropy
for the channel $\alpha_N$ that results from reconstruction on the commutant: 
\begin{lemma}
\label{lem:preJLMS}
 Assuming condition (ii) of Theorem~\ref{thm:JLMS} then:
 \begin{enumerate}
 \item $\mathcal{N}' \rightarrowtail \mathcal{L}_N' $
 \item For  $\alpha_N$ given by the complement\footnote{That is applied to $(\mathcal{N}',\mathcal{L}_N')$ rather than
 the original version for $(\mathcal{N},\mathcal{L}_N)$.} version of the reconstruction Theorem~\ref{thm:main}:
\begin{equation}
\lim_{N \rightarrow \infty} S( \rho \circ \alpha_N |\sigma \circ \alpha_N ) = S(\rho|\sigma)
\end{equation}
for all states $\rho, \sigma \in \mathcal{N}_\star^+$ with $\rho < \lambda \sigma$ for some real $1\leq \lambda < \infty$. 
\end{enumerate}
\end{lemma}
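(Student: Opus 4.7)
My strategy is to verify condition~(3) of Theorem~\ref{thm:main} applied to the complementary pair $(\mathcal{N}',\mathcal{L}_N')$. Given any unitary $u'\in\mathcal{N}'$ and any $\rho\in\mathcal{B}(\mathscr{H})_\star^+$, since $u'$ commutes with $\mathcal{N}$ one has $\rho|_\mathcal{N} = \rho_{u'}|_\mathcal{N}$, so the hypothesis $\rho|_\mathcal{N}\le\lambda\rho_{u'}|_\mathcal{N}$ holds with $\lambda=1$ and $S_{\rm rel}(\rho|\rho_{u'};\mathcal{N})=0$. Assumption~(ii) then forces $S_{\delta_N}(\rho\circ{\rm Ad}_{V_N}|\rho_{u'}\circ{\rm Ad}_{V_N};\mathcal{L}_N)\to 0$, and Pinsker's inequality~\eqref{eq:correctp} applied to any near-optimizer of the smoothed infimum, combined with the triangle inequality for $\delta_N\to 0$, upgrades this to $\|(\rho-\rho_{u'})\circ{\rm Ad}_{V_N}|_{\mathcal{L}_N}\|\to 0$. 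This is condition~(3) of Theorem~\ref{thm:main} applied with $(\mathcal{N}',\mathcal{L}_N')$ in place of $(\mathcal{N},\mathcal{L}_N)$. Hence $\mathcal{N}'\rightsquigarrow\mathcal{L}_N'$; cyclic-and-separating-ness of $\eta$ and $\varPhi_N$ for $\mathcal{N}$ and $\mathcal{L}_N$ transfers to the commutants, giving the standard reconstruction $\mathcal{N}'\rightarrowtail\mathcal{L}_N'$.

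\textbf{Proof plan, Part~(2).} Part~(1) and the last clause of Theorem~\ref{thm:main} furnish a faithful quantum channel $\beta_N':\mathcal{N}'\to\mathcal{L}_N'$, whose $\varPhi_N$-dual is a quantum channel $\alpha_N:\mathcal{L}_N\to\mathcal{N}$ satisfying $wo$-$\lim(\alpha_N(m_N)-V_N^\dagger m_N V_N)=0$. Data processing under $\alpha_N$ is immediate and yields $\limsup_N S(\rho\circ\alpha_N|\sigma\circ\alpha_N;\mathcal{L}_N)\le S(\rho|\sigma;\mathcal{N})$. For the reverse direction I would use $(\rho\circ\alpha_N,\sigma\circ\alpha_N)$ as test functionals inside the smoothed infimum on $\mathcal{L}_N$: extend $\rho,\sigma$ to $\tilde\rho,\tilde\sigma\in\mathcal{B}(\mathscr{H})_\star^+$ by their vector representatives in the natural cone $\mathcal{P}^\natural_{\eta;\mathcal{N}}$; apply (ii) to get $\lim_N S_{\delta_N}(\tilde\rho\circ{\rm Ad}_{V_N}|\tilde\sigma\circ{\rm Ad}_{V_N};\mathcal{L}_N)=S(\rho|\sigma)$; and \emph{provided} one can establish the norm convergence
\begin{equation}
\bigl\|\tilde\rho\circ{\rm Ad}_{V_N}|_{\mathcal{L}_N} - \rho\circ\alpha_N\bigr\|_{\mathcal{L}_N}\to 0,
\end{equation}
and analogously for $\tilde\sigma$, enlarge $\delta_N$ to dominate these errors so that $(\rho\circ\alpha_N,\sigma\circ\alpha_N)$ becomes admissible in the $\delta_N$-ball. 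Then $S_{\delta_N}(\cdots)\le S_{\rm rel}(\rho\circ\alpha_N|\sigma\circ\alpha_N)$, and passing $N\to\infty$ delivers the liminf bound.

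\textbf{The main obstacle} is exactly this norm convergence, since Theorem~\ref{thm:main} supplies only \emph{weak operator} convergence of $\alpha_N(m)-V_N^\dagger m V_N$, which is pointwise in $m$ rather than uniform on the unit ball of $\mathcal{L}_N$. My plan is to exploit the Petz-like defining identity $\omega_{h_N}(n'\alpha_N(m))=\omega_{\varPhi_N}(\beta_N'(n')m)$, together with $\|h_N-\eta\|\to 0$ which follows from the uniformity~\eqref{uniform1} applied to $\beta_N'$ and the fact that $h_N,\eta$ both lie in the common natural cone of $\eta$ for $\mathcal{N}$ and $\mathcal{N}'$. When the vector representative of $\rho$ has the form $\psi=p'\eta$ with $p'\in\mathcal{N}'_+$ (i.e.\ $\rho$ has finite max divergence with respect to $\omega_\eta|_\mathcal{N}$), choosing $n'=(p')^2$ in the dual identity expresses $\omega_\psi\circ\alpha_N$ in terms of $\omega_{\varPhi_N}(\beta_N'((p')^2)\,\cdot)$ up to errors uniform in $m$; a Kadison--Schwarz estimate shows the Choi defect $\beta_N'((p')^2)-\beta_N'(p')^2$ vanishes in norm against $V_N\eta$, and the strong convergence $\beta_N'(p')V_N\to V_N p'$ then identifies the limit with $\omega_{V_N\psi}|_{\mathcal{L}_N}$. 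For general $\rho$ not dominated by $\omega_\eta$ (the typical type~III$_1$ scenario), my plan is to first replace $\rho,\sigma$ by their finite-dimensional Petz-truncations $\rho\circ\iota_q\circ\mathcal{E}_{q,\eta}$ from Section~\ref{finited}, where domination becomes automatic, and then pass to a diagonal limit $q=q(N)\to\infty$ reminiscent of the proof of Lemma~\ref{lem:beta}. The delicate part will be ensuring the diagonal limit interacts correctly with the $\delta_N$ smoothing scale, which demands simultaneous control of both parameters.
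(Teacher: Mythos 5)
Your Part (1) is correct and is essentially the paper's argument: compare $\rho$ with $\rho_{u'}$ for a unitary $u'\in\mathcal{N}'$, note they agree on $\mathcal{N}$ so the bulk relative entropy vanishes and the domination hypothesis holds with $\lambda=1$, then use Pinsker on near-optimizers of the smoothed infimum to get condition (3) of Theorem~\ref{thm:main} for the pair $(\mathcal{N}',\mathcal{L}_N')$. The skeleton of your Part (2) — the $\varPhi_N$-dual channel $\alpha_N$, data processing for the upper bound, and admissibility of $(\rho\circ\alpha_N,\sigma\circ\alpha_N)$ in the $\delta_N$-smoothed infimum for the lower bound — also coincides with the paper's proof.

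The genuine gap is that you never close the step you yourself flag as ``the main obstacle,'' and the obstacle is in fact illusory. The complement version of \eqref{alphaV} is stated for \emph{arbitrary bounded sequences} $m_N\in\mathcal{L}_N$, not for a fixed $m$. For each $N$ the quantity $\|\rho\circ\alpha_N-\hat\rho\circ{\rm Ad}_{V_N}|_{\mathcal{L}_N}\|$ is a supremum over the unit ball of $\mathcal{L}_N$; choosing a near-optimizer $m_N$ for each $N$ produces a uniformly bounded sequence, \eqref{alphaV} gives $wo$-$\lim_N\bigl(\alpha_N(m_N)-V_N^\dagger m_N V_N\bigr)=0$, and Lemma~\ref{ultratonot} lets you evaluate this uniformly bounded sequence against the normal state $\hat\rho$, so the supremum tends to zero. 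This yields exactly the norm convergence you need, for \emph{any} normal extension $\hat\rho$ of $\rho$, with no domination relative to $\omega_\eta$, no Petz-identity/Kadison--Schwarz manipulations, and no finite-dimensional truncation; this is precisely how the paper obtains \eqref{distance}. By contrast, your proposed workaround is incomplete at its hardest point and mismatched to the hypothesis: the case you can treat directly requires $\psi=p'\eta$, i.e.\ domination of $\rho$ by $\omega_\eta|_{\mathcal{N}}$, which is not the lemma's hypothesis $\rho\le\lambda\sigma$ for a general $\sigma$; and the diagonal limit $q=q(N)$ for general states must be meshed not only with the smoothing scale $\delta_N$ but with the fact that hypothesis (ii) supplies the threshold $\epsilon_N$ only for a \emph{fixed} pair of states, so it cannot be invoked for an $N$-dependent family of truncated pairs without a further uniformity argument you do not provide. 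As written, Part (2) is therefore unproven, even though the architecture is right and the missing step has a one-line fix via the near-optimizer trick.
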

\begin{proof}
(1) In \eqref{rhv} consider $\rho = \sigma_{u'}$ for some unitary $u' \in \mathcal{N}'$ which induces the same state on $\mathcal{N}$ i.e. 
$\rho|_{\mathcal{N}} = \sigma|_{\mathcal{N}}$. Hence 
the right hand side of \eqref{rhv} vanishes. 
Also note that the states on the left hand side are asymptotically normalized:
\begin{equation}
 \lim_N ( \sigma \circ {\rm Ad}_{V_N}(1) -1) = \lim_N \sigma( V_N^\dagger V_N -1) =0
\end{equation}
due to ultraweak convergence which agrees with weak operator convergence on uniformly bounded sequences of operators, see Lemma~\ref{ultratonot} in Appendix~\ref{app:convergence}.  
This is also true for $\sigma_{u'} \circ {\rm Ad}_{V_N}(1)$.  Thus we can apply Pinsker's inequality \eqref{eq:correctp} to the smoothed relative entropy \eqref{rhv} to show that:
\begin{equation}
\label{tollm}
\|( \sigma_{u'} \circ {\rm Ad}_{V_N} - \sigma \circ {\rm Ad}_{V_N} ) |_{\mathcal{L}_N}\| 
\rightarrow 0
\end{equation}
and this is condition (3) of Theorem~\ref{thm:main} for the commutant algebra $\mathcal{N}'$.  Hence $\mathcal{N}' \rightarrowtail \mathcal{L}_N' $.

(2) Given two states $\rho, \sigma \in \mathcal{N}_\star^+$ with $\rho < \lambda \sigma$ it is always possible to find states $\hat{\rho},\hat{\sigma} \in \mathcal{B}(\mathscr{H})_\star^+$ that restrict to these states on $\mathcal{N}$. Apply $\hat{\rho},\hat{\sigma}$ to (ii) of Theorem~\ref{thm:JLMS}. 
From (1) above we learn there is a reconstruction map $\beta_N'$. 
 Then combining condition (4) of Theorem~\ref{thm:main} with the last part of this same theorem, then we know that the $\varPhi_N$-dual channel to $\beta_N'$, labelled as $\alpha_N (\equiv \delta'_{\beta'_N,\varPhi_N}$ in Definition~\ref{def:dual}) satisfies:
\begin{equation}
\lim_{N \rightarrow \infty}  \| \hat{\rho} \circ \alpha_N(m_N) - \hat{\rho} \circ {\rm Ad}_{V_N}(m_N) \| = 0
\end{equation}
for any bounded sequence $m_N$. We can turn this into:
\begin{equation}
\label{distance}
\lim_{N \rightarrow \infty}  \| \rho \circ \alpha_N - \hat{\rho} \circ {\rm Ad}_{V_N} |_{\mathcal{L}_N} \| = 0
\end{equation}
by picking an appropriate sequence $m_N$. This is similarly true for $\hat{\rho} \rightarrow \hat{\sigma}$. 
Thus we can always find a monotonic decreasing function $\epsilon(N)$ with $\epsilon(N) \rightarrow 0$ such that:
\begin{equation}
\label{rsc}
\| \rho \circ \alpha_N - \hat{\rho} \circ {\rm Ad}_{V_N} |_{\mathcal{L}_N} \| , \| \sigma \circ \alpha_N - \hat{\sigma} \circ {\rm Ad}_{V_N} |_{\mathcal{L}_N} \|  < \epsilon(N)
\end{equation}
Pick the monotonic sequence $\delta_N =  \max\{ \epsilon_N, \epsilon(N) \}$ in  (ii) of Theorem~\ref{thm:JLMS}
then since for fixed $N$ we have
$\epsilon(N) \leq \delta_N$
this implies that  $\sigma \circ \alpha_N, \rho \circ \alpha_N$ are  included in the smoothing procedure for each $N$:
\begin{align}\label{tochain}
 \liminf_{N \rightarrow \infty} S_{\delta_N}( \hat{\rho} \circ {\rm Ad}_{V_N} |\hat{\sigma} \circ {\rm Ad}_{V_N}  ; \mathcal{L}_N)  &\leq  \liminf_{N \rightarrow \infty} S( \rho \circ \alpha_N |\sigma \circ \alpha_N  ; \mathcal{L}_N)  \\ & \leq \limsup_{N \rightarrow \infty} S( \rho \circ \alpha_N |\sigma \circ \alpha_N  ; \mathcal{L}_N)  \leq  S(\rho|\sigma; \mathcal{N} )
\end{align}
where in the last inequality we used monotonicity of relative entropy.  
After using \eqref{rhv} we get equality through the chain implying the limit of $ S( \rho \circ \alpha_N |\sigma \circ \alpha_N  ; \mathcal{L}_N)$ exists
and equals $S(\rho|\sigma; \mathcal{N} )$.  
\end{proof}

\begin{proof}[\bf Proof of Theorem~\ref{thm:JLMS}] 
Since condition (ii) and (iii) are asymmetric with respect to $\mathcal{N} \leftrightarrow \mathcal{N}'$
and $\mathcal{L}_N \leftrightarrow \mathcal{L}_N'$ the commutant versions are also part of the theorem statement since $(i) \iff (i)'$, using a hopefully clear notation. We will prove the following sequence of implications:   $(i) \implies (ii) \implies (iii)_a  \implies (iii)_b \implies (iii)_c  \implies (i)$.

\vspace{.4cm}
\noindent $(i) \implies (ii)$ Pick any two normalized states $\rho,\sigma \in \mathcal{B}(\mathscr{H})_\star^+$ as in the statement of (ii). 
Now from condition (i) and Theorem~\ref{thm:main} for $\mathcal{N}'$ we have the quantum channel $\alpha_N : \mathcal{L}_N \rightarrow \mathcal{N}$ with:
\begin{equation}
\lim_{N \rightarrow \infty}  \| \rho \circ \alpha_N - \rho \circ {\rm Ad}_{V_N} |_{\mathcal{L}_N} \| = 0
\end{equation}
and similarly for $\sigma$.  Hence there is some $\epsilon(N) > 0$ monotonic and limiting to zero which  satisfies \eqref{rsc}.
Pick $\epsilon_N = \epsilon(N)$ in statement (ii). Since for any $\delta_N \geq \epsilon_N$, also vanishing as $N \rightarrow \infty$, the pair $\rho \circ \alpha_N , \sigma \circ \alpha_N $ is included in the smoothing
procedure, so:
\begin{align}
 \nonumber
 \liminf_{N \rightarrow \infty} S_{\delta_N}( \rho \circ {\rm Ad}_{V_N} |\sigma \circ {\rm Ad}_{V_N}  ; \mathcal{L}_N)  & \leq  \liminf_{N \rightarrow \infty} S( \rho \circ \alpha_N |\sigma \circ \alpha_N  ; \mathcal{L}_N) \\ &\leq \limsup_{N \rightarrow \infty} S( \rho \circ \alpha_N |\sigma \circ \alpha_N  ; \mathcal{L}_N)  \leq S(\rho|\sigma; \mathcal{N} )
 \label{tochainb}
\end{align}
where the last inequality uses monotonicity of relative entropy.

 We next show that:
\begin{equation}
\label{tochaina}
\liminf_{N \rightarrow \infty}  S_{\delta_N}( \rho \circ {\rm Ad}_{V_N} \circ \beta_N |\sigma \circ {\rm Ad}_{V_N}  \circ  \beta_N ; \mathcal{N}) 
\geq S( \rho  |\sigma ; \mathcal{N}) 
\end{equation}
by an application of lower semi-continuity of the relative entropy in the weak topology of the predual $\mathcal{N}_\star$. See \eqref{lowersemi}.

We give some more details for \eqref{tochaina}: for all $\epsilon > 0$ by the definition of $\inf$ in $S_\delta$ \eqref{se} we know that there exists a sequence $\tilde{\rho}_N,\tilde{\sigma}_N  \in \mathcal{N}_\star^+$ with:
\begin{equation}
\label{eq:infe}
S( \tilde{\rho}_N |  \tilde{\sigma}_N )  < S_{\delta_N}( \rho \circ {\rm Ad}_{V_N} \circ \beta_N |\sigma \circ {\rm Ad}_{V_N}  \circ  \beta_N ; \mathcal{N}) + \epsilon
\end{equation}
and
\begin{equation}
\label{dcons}
\| \tilde{\rho}_N - \rho\circ {\rm Ad}_{V_N} \circ \beta_N  \|, \| \tilde{\sigma}_N - \sigma\circ {\rm Ad}_{V_N} \circ \beta_N  \|  < \delta_N
\end{equation}
This last property implies that:
\begin{equation}
\lim_{N \rightarrow \infty} \tilde{\rho}_N (n) = \rho(n), \qquad \lim_{N \rightarrow \infty} \tilde{\sigma}_N (n) = \sigma(n) \qquad \forall n\in \mathcal{N}
\end{equation}
due to the standard reconstructible statement in (i) and Lemma~\ref{lem:ab}.
Since we have pointwise convergence of the states we can apply the lower semi-continuity statement \eqref{lowersemi} to show:
\begin{equation}
\liminf_{N \rightarrow \infty}  S( \tilde{\rho}_N |  \tilde{\sigma}_N ) \geq S( \rho  |\sigma ; \mathcal{N}) 
\end{equation}
Combining with \eqref{eq:infe} we find:
\begin{equation}
S( \rho  |\sigma ; \mathcal{N})  < \liminf_{N \rightarrow \infty}  S_{\delta_N}( \rho \circ {\rm Ad}_{V_N} \circ \beta_N |\sigma \circ {\rm Ad}_{V_N}  \circ  \beta_N ; \mathcal{N})  + \epsilon
\label{seeafter}
\end{equation}
true for all $\epsilon>0$ so sending $\epsilon \rightarrow 0$ gives \eqref{tochaina}.
Equality through the chain of \eqref{tochaina}, \eqref{tochainb} gives condition (ii) for each such pair of normal states $\rho,\sigma$. We point out that at no point in this proof did we use the condition $\rho|_{\mathcal{N}} < \lambda \sigma|_{\mathcal{N}}$ so (ii) is true more generally than stated. However we will next use the weaker condition with the constraint $\rho|_{\mathcal{N}} < \lambda \sigma|_{\mathcal{N}}$ to prove (iii).

\vspace{.4cm}
\noindent $(ii) \implies (iii)'_a$. We have from Lemma~\ref{lem:preJLMS} for any two states $\rho < \lambda \sigma \in \mathcal{N}_\star$
\begin{equation}
\label{proven}
\lim_{N \rightarrow \infty} S( \rho \circ \alpha_N |\sigma \circ \alpha_N  ; \mathcal{L}_N) = S(\rho|\sigma; \mathcal{N} )
\end{equation}
where $\alpha_N$ is the $\varPhi_N$-dual map for $\beta_N'$, where $\varPhi_N$ was given in the preamble of the current theorem. 

\eqref{proven} allows us to apply the Petz sufficiency for nets \cite{ohya2004quantum,petz1994discrimination} (net here simply refers to a generalized sequence, not the net of algebras in AQFT). We follow this paper closely, although ultimately they have a different setup since they study ``asymptotic sufficiency'' for as little as two states. Here we get more stringent results since we have asymptotic sufficiency for all the states on the code subspace. 

\vspace{.2cm}
\noindent{\underline{Setup:}}
\vspace{.2cm}

For ease of notation we will write $J_N \equiv J_{\varPhi_N;\mathcal{L}_N}$ and $J = J_{\psi;\mathcal{N}}$.

We  take $\sigma = \omega_\eta|_{\mathcal{N}}$
and consider $\rho = \omega_\psi |_{\mathcal{N}}$ 
for some cyclic and separating vector $\psi$. The condition $\rho < \lambda \sigma$ implies that $\psi = n' \eta$
for some $n' \in \mathcal{N}'$. $\psi$ is cyclic and separating implies that if $n' \xi =0$ then $\xi =0$. Thus $n'$ is invertible. 
We additionally assume that $\psi$ is in the natural cone of $\eta$ for $\mathcal{N}$. We can of course do this while maintaining the form $\psi = n' \eta$ - for example we could apply an appropriate unitary in $\mathcal{N}'$ to take us to the natural cone.  

That is
$ \psi \in \mathscr{P}_{\eta;\mathcal{N}}$. 
Define the vectors $H_N, \widehat{\Psi}_N \in \mathcal{K}_N$:
\begin{equation}
\omega_{H_N}|_{\mathcal{L}_N} =\sigma \circ \alpha_N \qquad \omega_{\widehat{\Psi}_N}|_{\mathcal{L}_N} = \rho \circ \alpha_N
\end{equation}
where $H_N,\widehat{\Psi}_N \in \mathscr{K}_N$ are chosen in the natural cone associated to the cyclic and separating vector $\varPhi_N$ for $\mathcal{L}_N$.

Note that $V_N \psi$ need not be in this same natural cone as $\widehat{\Psi}_N$,
even if it induces approximately the same  state on $\mathcal{L}_N$.
However there is a partial isometry $\Theta'_N \in \mathcal{L}_N'$ that takes it to the cone $\Theta'_N V_N \psi \in \mathscr{P}_{\varPhi_N,\mathcal{L}_N}$.  Where:
\begin{equation}
(\Theta'_N)^\dagger \Theta'_N
= \pi_{\mathcal{L}_N'}( V_N \psi)
\qquad \Theta'_N (\Theta'_N)^\dagger
= J_N \pi_{\mathcal{L}_N}( V_N \psi) J_N
\end{equation}
Define $\Psi_N = (\Theta'_N)^\dagger \widehat{\Psi}_N$ and note that:
\begin{equation}
\omega_{\widehat{\Psi}_N}(1-\Theta'_N (\Theta'_N)^\dagger)
= \rho \circ \alpha_N( 1- \pi_{\mathcal{L}_N}( V_N \psi))  \rightarrow 0
\end{equation}
where we used \eqref{alphaV} of the complement version of Theorem~\ref{thm:main}.
Hence:
\begin{equation}
\lim_N (1-\Theta'_N (\Theta'_N)^\dagger) \widehat{\Psi}_N = 0
\end{equation}
And this implies that: 
\begin{equation}
\lim_N \omega_{\Psi_N}|_{\mathcal{L}_N} - \rho \circ \alpha_N = \lim_N \omega_{\Psi_N}|_{\mathcal{L}_N} - \omega_{V_N \psi} = 0
\end{equation}
in the linear functional norm.  
One then shows that:
\begin{equation}
\label{equalH}
\lim_N \| \varPhi_N - H_N \| =0 
\qquad \lim_N \| V_N \psi - \Psi_N \| =0
\end{equation}
(recall also that $\lim_N \| V_N \eta - \varPhi_N \| =0$ by definition.)
We prove the above equation, in each case, using the fact that the two vectors inside the norm difference induce linear functionals on $\mathcal{L}_N$ that uniformly approach each other due to \eqref{alphaV} of the complement version of Theorem~\ref{thm:main} (with an appropriate choice of the sequence $m_N$.) The choice of natural cone discussed above then guarantees the vectors approach each other \cite{araki1974some}.\footnote{Explicitly $\widehat{\Psi}_N - \Theta_N' V_N \psi \rightarrow 0$ by this argument. Applying $(\Theta'_N)^\dagger$ gives the second term in \eqref{equalH}.}

Define the contractions $W_N^{\eta,\psi}$ via:
\begin{equation}
W_N^\psi \ell \left| \widehat{\Psi}_N \right> = \alpha_N(\ell) \left| \psi \right>\,
\qquad W_N^\eta \ell \left| H_N \right> = \alpha_N(\ell) \left| \eta \right>\qquad \forall \ell \in \mathcal{L}_N 
\end{equation}
and one can show that $(W_N^\psi)^\dagger W_N^\psi \left| \widehat{\Psi}_N\right> =  \left| \widehat{\Psi}_N\right>$
and also $(W_N^\eta)^\dagger W_N^\eta \left| H_N\right> =  \left| \H_N\right>$. 
Consider any bounded sequence $\ell_N \in \mathcal{L}_N$ then:
\begin{equation}
\label{aweaklimit}
w-\lim_N (W_N^\eta -V_N^\dagger ) \ell_N \left| H_N \right>
= w-\lim_N ( \alpha_N(\ell_N) \left| \eta \right>
- V_N^\dagger \ell_N V_N \left| \eta \right> )
=0
\end{equation}
where we use the weak Hilbert space limit.
The first equality is from \eqref{equalH}
and the second use (the complement version of) \eqref{alphaV}.

\vspace{.2cm}
\noindent{\underline{Asymptotic equality of Connes cocycle:}}
\vspace{.2cm}

Monotonicity of relative entropy under $\alpha_N$ can be expressed as the expectation in $\psi$ of an integral over a difference of two resolvents that are provably positive for each $t  \geq 0$:
\begin{equation}
\label{diffres}
(t+ \Delta)^{-1}  -W_N^\psi (t+ \Delta_N )^{-1} (W_N^\psi)^\dagger 
\end{equation}
where we defined the relative modular operators $\Delta_N = \Delta_{H_N, \widehat{\Psi}_N;\mathcal{L}_N}$ and $\Delta = \Delta_{\eta, \psi;\mathcal{N}}$. Due to \eqref{proven} and since \eqref{diffres} is positive and bounded for each $t > 0$ the expectation in $\psi$ must vanish in the limit. Positivity further gives: 
\begin{equation}
\lim_N W_N^\psi (t+ \Delta_N )^{-1} \left| \widehat{\Psi}_N \right> = (t+ \Delta)^{-1} \left| \psi \right>
\end{equation}
in the Hilbert space norm. One can now pick an appropriate integral over $t$ to extract the limit on relative modular flow:
\begin{equation}
\lim_N W_N^\psi \Delta_N^{is} \left| \widehat{\Psi}_N \right> = \Delta^{is} \left| \psi \right> \qquad \forall s \in \mathbb{R}
\end{equation}
Define the Connes cocycle $u_s^N = \Delta_N^{is} \Delta_{\widehat{\Psi}_N;\mathcal{L}_N}^{-is} \in \mathcal{L}_N$ and $u_s = \Delta^{is} \Delta_{\psi;\mathcal{N}}^{-is}$
so that:
\begin{equation}
\label{defineabove}
\lim_N \alpha_N(u_s^N) \left| \psi \right> = u_s \left| \psi \right> \qquad \forall s \in \mathbb{R}
\end{equation}
and so we can use the cyclicity property of $\psi$ for $\mathcal{N}'$ to extract:
\begin{equation}
\label{solim5}
so-\lim_N \alpha_N(u_s^N) = u_s 
\end{equation}
which is a result proven in \cite{petz1994discrimination}, where further details on the proof of \eqref{solim5} can be found.

\vspace{.2cm}
\noindent{\underline{Complex interpolation argument:}}
\vspace{.2cm}

Define the holomorphic vector:
\begin{equation}
\left| \Gamma_N(z) \right> =\Delta_{\eta,\psi}^{z} W_N^\psi \Delta_{H_N,\widehat{\Psi}_N}^{-z} \left| \widehat{\Psi}_N \right>
\end{equation}
where this is the same interpolation vector used in \cite{junge2018universal,Faulkner:2020iou,Faulkner:2020kit}.
In this part we will generally include the vector labels on the modular and relative modular operators, although we will drop the algebra label leaving only two possibilities $\mathcal{L}_N$ and $\mathcal{N}$ and these will be clear from the vector labels. We will also sometimes use the commutant algebra for the modular operators which we then label with a prime. 

This vector has some nice properties. It is holomorphic in the open strip $0 < {\rm Re}(z) < 1/2$ and strongly continuous on the closure of this strip. These results follow since: 
\begin{equation}
\label{itisbounded}
\Delta_{\eta,\psi}^{z} W_N^\psi \Delta_{H_N,\widehat{\Psi}_N}^{-z}
\end{equation}
is a bounded operator in this strip with norm $\leq 1$, see for example \cite{Faulkner:2020kit}.  We can evaluate it on the top of the strip:
\begin{equation}
\left| \Gamma_N(1/2+it) \right> = \Delta_{\eta,\psi}^{it}  (J W_N^\eta J_N)  \Delta_{H_N,\widehat{\Psi}_N}^{-it} \left| \widehat{\Psi}_N \right>
\label{Gahalf}
\end{equation}
and on the bottom of the strip we find: 
\begin{equation}
\left| \Gamma_N(it)\right> = \Delta_{\psi}^{it} u_{-t}^\dagger \alpha_N( u^N_{-t}) \left| \psi \right>
\end{equation}
where we have used the cocycles defined above \eqref{defineabove}. We see that:
\begin{equation}
\lim_N \left| \Gamma_N(it)\right> = \left| \psi \right> \qquad t \in \mathbb{R}
\end{equation}
Our goal now is to establish that this simple limit still applies in the holomorphic strip and in particular on the top edge $z = 1/2 + it$. This later task will be somewhat non-trivial. 
We first aim to compute the overlap $f_N(z) = \left( \psi, \Gamma_N(z) \right)$. This inherits the analyticity and continuity properties of $\Gamma_N(z)$. Furthermore since by assumption we have $\psi = n' \eta$ for some $n' \in \mathcal{N}'$, $\Delta'_{\psi,\eta}\equiv\Delta_{\psi,\eta;\N'}$ then:
\begin{equation}
f_N(z) = \left( (\Delta'_{\psi,\eta})^{-\bar{z}} n' \left| \eta \right>, W_N^\psi \Delta_{H_N,\widehat{\Psi}_N}^{-z} \left| \widehat{\Psi}_N \right> \right)
\end{equation}
for which the two vectors in the inner product are in the domains of the unbounded operator when $-1/2\leq {\rm Re} z \leq 0$, so indeed we can extend analyticity of $f_N(z)$ to $-1/2 < {\rm Re} z < 1/2$ with continuity on the closure. 
We know that $|f_N(z)| \leq 1$ in the top strip
$0 \leq {\rm Re} z \leq 1/2$ by the boundedness of the operator \eqref{itisbounded}. We have the estimate on the bottom strip:
\begin{equation}
|f_N(z) | \leq \max_{0\leq \theta \leq 1/2} \| (\Delta'_{\psi,\eta})^{\theta} n' \left| \eta \right> \| \| \Delta_{H_N,\widehat{\Psi}_N}^{\theta} \left| \widehat{\Psi}_N \right> \|\,,\qquad \theta = {\rm Re} z
\end{equation} 
which achieves a maximum since the supremum is over a compact space of a continuous function. 
We can estimate the function along the bottom edge $| f_N(-1/2+it) |^2 \leq \left< \psi \right| n' (n')^\dagger \left| \psi \right> \leq \| n' \|^2$ for all $t \in \mathbb{R}$ 
and on the top $| f_N(z) | \leq 1 \leq \| n' \|$ follows
from the boundedness of $\Gamma_N$ there. By the Phragmen–Lindelof theorem this implies
that $f_N(z)$ is uniformly (in $N$) bounded in the complex strip $-1/2 < {\rm Re} z < 1/2$. Hence by the Vitali-Porter theorem we have:
\begin{equation}
\lim_N f_N(z) = \left< \psi \right| \left. \psi \right> = 1
\end{equation}
uniformly in $z$ for all compact subsets $z \in S$ of the strip $S \subset \{ z: -1/2 < {\rm Re} z < 1/2\}$. This notably does not include the top of the strip. 
To deal with this we consider a new function:
\begin{equation}
g_N(z) = \left( \left| \psi \right>,
 J\big( \Delta_{\psi,\eta}^{-(\bar{z}-1/2)} W^\eta_N \Delta_{\widehat{\Psi}_N,H_N}^{(\bar{z}-1/2)} \ell \left| H_N \right> \big) \right)
\end{equation}
where $\ell = J_N \Theta'_N \beta'_N(n') J_N$. 
Boundedness of the operator:
\begin{equation}
\Delta_{\psi,\eta}^{-(\bar{z}-1/2)} W^\eta_N \Delta_{\widehat{\Psi}_N,H_N}^{(\bar{z}-1/2)}
\end{equation}
for $0 \leq {\rm Re} z \leq  1/2$ guarantees analyticity inside the strip $0 < {\rm Re} z <  1/2$ and continuity to the edge. We can also manipulate the equation for $z = 1/2 + i t$ to find:
\begin{equation}
g_N(z) = \left( \Delta_{\widehat{\Psi}_N,H_N}^{\bar{z}-1/2}  \ell \left| H_N \right> ,
(W_N^\eta)^\dagger (\Delta_{\eta,\hat{\psi}}')^{(z-1/2)} \big| \hat{\psi} \big> \right)
\end{equation}
from which it is clear we can further analytically continue this function to the  strip $1/2 \leq {\rm Re} z < 1$.
We can evaluate this function on the edge $z=0 + it $ to find:
\begin{equation}
g_N(i t) = \left< \psi  \right| \Delta_{\eta,\psi}^{it}  W_N^\psi \Delta_{H_N,\widehat{\Psi}_N}^{-it} \Theta_N' \beta'(n') \left| H_N \right>
\end{equation}
We use these forms to prove that:
\begin{equation}
\lim_N (g_N(it) - f_N( it) ) =0
\qquad \lim_N (g_N(1/2+it) - f_N(1/2+ it) ) =0
\end{equation}
We can use these limits to conclude that $\lim_N(g_N(z) - f_N(z)) =0$ for all $0 < {\rm Re} z < 1/2$. See Figure~\ref{fig:f-g}. But since we know that the limit on $f_N(z)$ exists and equals $1$ we conclude that:
\begin{equation}
\lim_N( g_N(z) -1) = 0\qquad 0 < {\rm Re} z < 1/2
\end{equation}
which, using the Vitali-Porter theorem implies that the limit vanishes uniformly on compacts subsets of the holomorphic domain:
\begin{equation}
\lim_N( g_N(z) -1) =0 \qquad 0 < {\rm Re} z < 1
\end{equation}
Hence equality of $g_N$ and $f_N$ along $z=1/2+it$ gives us the desired limit:
\begin{equation}
\lim_N f_N(1/2 + it) = \lim_N g_N(1/2+it) = 1
\end{equation}
Hence, setting $t=0$ in $f_N(1/2+it)$ and using \eqref{Gahalf} we have:
\begin{equation}
\lim_N\left< \psi \right| (J W_N^\eta J_N)  \left| \widehat{\Psi}_N \right> = 1
\end{equation}
Approximating $\lim_N(\left| \Psi_N \right> - \beta_N'(n') \left| H_N \right>)=0$ with \eqref{equalH} and then using
the fact that $J_N \Theta'_N \beta_N'(n') J_N \in \mathcal{L}_N$ we can use the weak limit in \eqref{aweaklimit}
to show that:
\begin{equation}
\lim_N\left< \psi \right|  J V_N^\dagger  J_N \Theta'_N \beta_N'(n')J_N \left| \H_N \right> = 1
\end{equation}
which we then turn back into, again using \eqref{equalH}
\begin{equation}
\label{crossJ}
\lim_N\left< \psi \right|  J V_N^\dagger  J_N \Theta'_N V_N  \left| \psi \right>  = 1
\end{equation}
Or:
\begin{equation}
\label{thetaphi}
\lim_N \|   \Theta'_N V_N  \left| \psi \right>  - V_N   \left| \psi \right> \| = \lim_N\|  J_N \Theta'_N V_N  \left| \psi \right>  - V_N J \left| \psi \right> \| =0 
\end{equation}
by direct computation with \eqref{crossJ}. Plugging the first limit into the second we have:
\begin{equation}
0 =\lim_N  \|  J_N  V_N  \left| \psi \right>  - V_N J  \left| \psi \right> \| = \lim_N \|  J_N  V_N  \big| \psi \big>  - V_N J  \big| \psi \big>\|
\label{limitta}
\end{equation}
for all states $\psi$ in the natural cone 
and of the form $\psi = n' \eta$ for some $n'$.
The elements in $\mathcal{N}'_\eta \subset \mathcal{N}'$
that are entire analytic for the modular automorphism group $\sigma_{\eta;\mathcal{N}'}^s(n_0')$ form a strongly$^\star$ dense sub-algebra \cite{hiai2020concise}. Now $\Delta_{\eta}^{1/4} (n_0')^\dagger n_0' \left| \eta\right>= n' \left| \eta \right>$ for all $n_0' \in \mathcal{N}'_\eta$ is a dense subset of the positive cone $\mathscr{P}_{\eta;\mathcal{N}}$. The entirety allows us to write these states in the form $n' \left| \eta\right>$ which are the states that we made use of in the complex interpolation of $\Gamma_N$. 

\begin{figure}[h!]
\centering
\includegraphics[scale=.4]{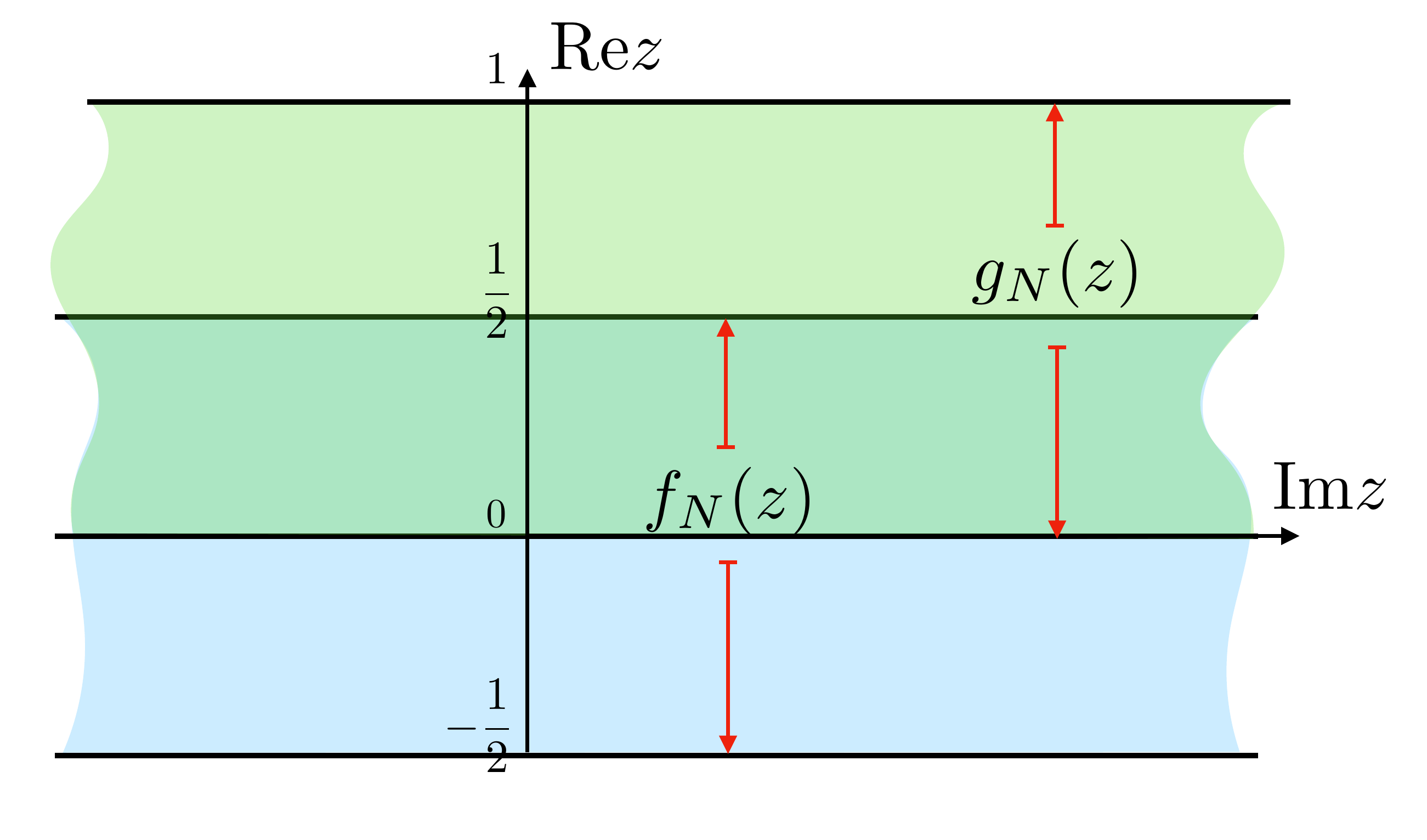}
\caption{The complex strips used in the proof for $(ii) \implies (iii)_a'$. The two functions are different but approach each other as $N \rightarrow \infty$.
Details of the proof, including using extended analytic strips with the same limiting behavior, have some intriguing similarities with \cite{Ceyhan:2018zfg}.
\label{fig:f-g}}
\end{figure}

Since every vector in $\mathscr{H}$ can be written as a sum of four elements in the positive cone $(\hat{\psi}_1 - \hat{\psi}_2) + i (\hat{\psi}_3 - \hat{\psi}_4)$ \cite{araki1974some} there is a dense subspace
of $\mathscr{H}$ where we use entire elements as above for each of the $\hat{\psi}_i$ in the natural cone. 
Thus using density (and, as usual uniform boundedness) we can turn \eqref{limitta} into the strong operator statement:
\begin{equation}
\label{modcon}
so-\lim_{N} \left( J_N V_N - V_N J \right) =0
\end{equation}

Set:
\begin{equation}
\beta_N( \cdot) \equiv J_N \beta_N'( J \cdot J) J_N
\end{equation}
It is easy to check that:
\begin{equation}
\label{newbeta}
\lim_N \beta_N(n) V_N - V_N n = 0
\end{equation}
for all $n \in \mathcal{N}$. We used \eqref{modcon}. 
[At this point we have actually shown that $(ii) \implies (iii)_c'$ and $(ii) \implies (i)$. However it is convenient to continue with $(ii) \implies (iii)_a'$ and prove the other parts follow from $(iii)_a'$.] 

Via \eqref{solim5} the cocycles map through the code
after using \eqref{alphaV} for the commutant.
\begin{equation}
wo-\lim_N V_N^\dagger \Delta_{H_N,\widehat{\Psi}_N}^{is} \Delta_{\widehat{\Psi}_N}^{-is} V_N = \Delta_{\eta,\psi}^{is} \Delta_\psi^{-is}
\end{equation}
Since these are unitaries we can turn this into:
\begin{equation}
so-\lim_N  \Delta_{H_N,\widehat{\Psi}_N}^{is} \Delta_{\widehat{\Psi}_N}^{-is} V_N - V_N \Delta_{\eta,\psi}^{is} \Delta_\psi^{-is}  =0
\end{equation}
by direct computation. We can pass the modular conjugation operator with \eqref{modcon} through these equations to find:
\begin{equation}
so-\lim_N  \Delta_{H_N}^{is} \Delta_{H_N,\widehat{\Psi}_N}^{-is} V_N - V_N \Delta_{\eta}^{is} \Delta_{\eta,\psi}^{-is}  =0
\end{equation}
Let us act this operator on $ \big| \psi \big>$
which gives:
\begin{equation}
\label{reltwo}
\lim_N (\Delta_{\widehat{\Psi}_N, H_N}^{is} V_N  -V_N \Delta_{\psi,\eta}^{is}) \big|\psi\big> = 0
\end{equation}
after using \eqref{thetaphi} and \eqref{equalH} repeatedly. 
Compute the inner product of \eqref{reltwo} with:
\begin{equation}
\beta_N'(n_2')^\dagger \beta_N'(n_1')^\dagger \Delta_{\widehat{\Psi}_N, H_N}^{is} V_N \big|\psi\big>
\end{equation}
for any two $n_1',n_2' \in \mathcal{N}'$ and take the limit leaving using \eqref{reltwo} on the second term:
\begin{align}
 \lim_N \big<\psi\big| V_N^\dagger \Delta_{H_N}^{is} \beta_N'(n_1') \beta_N'(n_2') \Delta_{H_N}^{-is} V_N
\big|\psi\big> =  \big<\psi\big| \Delta_{\eta}^{is} n_1' n_2' \Delta_{\eta}^{-is}
\big|\psi\big> 
\label{lastalign}
\end{align}
We can pass a unitary $u\in \mathcal{N}$ on the backside of the operators in this matrix element by inserting 
\begin{equation}
so-\lim_N \beta_N(u^\dagger) \beta_N(u) V_N - V_N =0
\end{equation}
on the right hand side and using \eqref{newbeta}. This allows us to move $\psi \rightarrow u \psi$ away from the natural cone. That is we were originally working with states in the natural cone of the form: $\psi = n' \eta = J n' J \eta$. Thus $\psi$ is also in the natural cone of the commutant with $\psi = n \eta$ for $n\in \mathcal{N}$. Picking a general $n_2 \in \mathcal{N}$
there is some unitary $u \in \mathcal{N}$ so that $u^\dagger n_2 \eta$ is in the natural cone. So we can use $u$ as above to work with $\psi = n_2 \eta$ for any $n_2 \in \mathcal{N}$ and such that:
\begin{equation}
 \lim_N \big<\psi\big| X_N 
\big|\psi\big> =0 \qquad X_N \equiv \left(V_N^\dagger \Delta_{H_N}^{is} \beta_N'(n_1') \beta_N'(n_2') \Delta_{H_N}^{-is} V_N - \Delta_{\eta}^{is} n_1' n_2' \Delta_{\eta}^{-is} \right)
\end{equation}
Consider:
\begin{align}
& 2\left<\psi_2 \right|X_N \left| \psi_1 \right> \\&= ( \left| \psi_1 \right> +  \left| \psi_2 \right>, X_N \left| \psi_1 \right> +  \left| \psi_2 \right>) 
- \left< \psi_1 \right| X_N \left| \psi_1 \right> -  \left< \psi_2 \right| X_N \left| \psi_2 \right> 
\\ & + i ( \left| \psi_1 \right> + i  \left| \psi_2 \right>, X_N \left| \psi_1 \right> +  i \left| \psi_2 \right>) 
- i \left< \psi_1 \right| X_N \left| \psi_1 \right> - i \left< \psi_2 \right| X_N \left| \psi_2 \right> \rightarrow 0
\end{align}
for all $\psi_{1,2} = n_{1,2} \eta$ for $n_{1,2} \in \mathcal{N}$. Since this is a dense subspace we can use uniform boundedness to conclude that $wo-\lim_N X_N =0$:
\begin{equation}
\label{toreturn}
wo-\lim_N V_N^\dagger \Delta_{H_N}^{is} \beta_N'(n_1') \beta_N'(n_2') \Delta_{H_N}^{-is} V_N - \Delta_{\eta}^{is} n_1' n_2' \Delta_{\eta}^{-is}=0
\end{equation}
which then implies:
\begin{equation}
\label{modflowproof2}
so-\lim_N  \Delta_{H_N}^{is} \beta_N'(n') \Delta_{H_N}^{-is} V_N - V_N \Delta_{\eta}^{is} n' \Delta_{\eta}^{-is}=0
\end{equation}
for all $n' \in \mathcal{N}$ and $s \in \mathbb{R}$.
Act with this equation on $\eta$ and use cyclicity and uniform boundedness to show that:
\begin{equation}
so-\lim_N  \Delta_{H_N}^{is}  V_N - V_N \Delta_{\eta}^{is}=0
\end{equation}
and by the continuity Lemma~\ref{lem:contmod} $(i) \implies (ii)$ we can change this to:
\begin{equation}
   so-\lim_N \Delta_{\varPhi_N}^{is}  V_N - V_N \Delta_{\eta}^{is}=0
\end{equation}
for all $s\in \mathbb{R}$ for $\varPhi_N$ in the theorem statement. 

\vspace{.4cm}
\noindent $(iii)_a \implies (iii)_b$ is clear. 

\vspace{.4cm}
\noindent $(iii)_b \implies (iii)_c$.

Fix some $\forall\, n'_{1,2}\in \mathcal{N}'$.
		Define the sequence of analytic functions with respect to a cyclic and separating vector $\eta\in \Hil$ for $\N$ where also $\varPhi_N$ is cyclic and separating for $\mathcal{L}_N$:
	\begin{align} 
	   &F_N(s)=
    \left< \varPhi_N \right| \beta_N'(n_1') \Delta_{\varPhi_N;\LL_{N}}^{is} \beta_N'(n_2') 
    \left| \varPhi_N \right>,\qquad 0\leq \mathrm{Im}s\leq 1
	\end{align}
which by standard modular theory are uniformly bounded analytic functions on the interior strip and continuous in the closure. The boundary values on the top boundary of the strip ${\rm Im} s=1$  is determined by the KMS condition and reads:
	\begin{align} 
	   &F_N(i+t)=
    \left< \varPhi_N \right| \beta_N'(n_2') \Delta_{\varPhi_N;\LL_{N}}^{-it} \beta_N'(n_1') 
    \left| \varPhi_N \right>,\qquad t \in \mathbb{R}
	\end{align}
The limit of $\lim_N F_N$ exists at the bottom of the strip $s = t \in \mathbb{R}$ and at the top $s= it$ for $t \in \mathbb{R}$ (pointwise). This is established by replacing the matrix elements of $\varPhi_N$ with $V_N \eta$ and using $(iii)_b$. Hence the limit exists in the middle of the strip (for example we can write an integral equation for $F_N(s)$ in terms
of its boundary values and the limit of these integrals converges by dominated convergence.)
Thus we have:
\begin{equation}
\lim_N F_N(s) = F(s) = \langle\eta|n_1' \Delta_{\eta;\N}^{is} n_2'|\eta\rangle
\end{equation}
uniform on compact subsets of the interior of the strip and pointwise on the edge. The function $F(s)$ has the same analytic properties as $F_N(s)$.

	Setting $s=i/2$ and approximating $V_N \eta$ by $\varPhi_N$ we find:
	\begin{align}
	   \lim_{N\rightarrow\infty} \langle\eta|n_1' V_N^\dagger J_{\varPhi_N;\LL_{N}}V_N (n_2')^\dagger|\eta\rangle = \langle\eta|n_1'  J_{\eta;\N} (n_2')^\dagger|\eta\rangle 
	\end{align}
	considered the cyclicity of $\eta$ for $\N'$, and uniform boundedness
	one has
	\begin{align}
	   wo-\lim_{N\rightarrow\infty}V_N^\dagger J_{\varPhi_N;\LL_{N}}V_N- J_{\eta,\N} =0
	\end{align}
	which is equvalent to
	\begin{align}\label{JV-VJ}
	    so-\lim_{N\rightarrow\infty} J_{\varPhi_N;\LL_{N}}V_N-V_N J_{\eta,\N} =0
	\end{align}
 as required.

\vspace{.4cm}
\noindent $(iii)_c \implies (i)$
Define:	
	\begin{align}
	   \beta_N=j_{\varPhi_N;\LL_{N}}\circ\beta'_N\circ j_{\eta,\N},
	\end{align}
where $j(\cdot) = J \cdot J$ etc. 
We can directly confirm that:
\begin{equation}
so-\lim_N \beta_N(n) V_N - V_N n =0 \qquad \forall\, n \in \mathcal{N}
\end{equation}
using $(iii)_c$. 
Hence $\N\rightarrowtail \LL_{\N}$ as required. 

The postamble of the theorem has already been established at intermediate steps above. For the statement about relative entropy see Lemma~\ref{lem:preJLMS}. Also for the construction in $(ii) \implies (iii)$ we could have used
any $\psi$ cyclic and separating for $\mathcal{N}$ represented on the code by $\Psi_N$ cyclic and separating for $\mathcal{L}_N$.
\end{proof}

We needed:

\begin{lemma}[Continuity of modular operator]
\label{lem:contmod}
Under the conditions of the preamble to Theorem~\ref{thm:JLMS} with $\mathcal{N} \rightarrowtail \mathcal{L}_N$ and:
\begin{equation}
\label{contJ}
so-\lim_N J_{\varPhi_N;\mathcal{L}_N}
V_N - V_N J_{\eta;\mathcal{N}} =0
\end{equation}
For any sequence of vectors $H_N 
$, cyclic and separating for $\mathcal{L}_N$, such that $H_N - \varPhi_N \rightarrow 0 $ then the following statements are equivalent: 
\begin{itemize}
\item[(i)]
\begin{equation}
so-\lim_{N \rightarrow \infty} (\Delta_{ H_N; \mathcal{L}_N}^{is}  V_N - V_N \Delta_{\eta;\mathcal{N}}^{is}) =0 \,, \qquad \forall s \in \mathbb{R}
\label{modi}
\end{equation}
\item[(ii)] \begin{equation}
so-\lim_{N \rightarrow \infty} (\Delta_{ \varPhi_N; \mathcal{L}_N}^{is} V_N - V_N \Delta_{\eta;\mathcal{N}}^{is}) =0  \,, \qquad \forall s \in \mathbb{R}
\label{modii}
\end{equation}
\end{itemize}
\end{lemma}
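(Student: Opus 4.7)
The plan is to prove the equivalence by bounding the difference $(\Delta_{H_N;\mathcal{L}_N}^{is} - \Delta_{\varPhi_N;\mathcal{L}_N}^{is})V_N$ in strong operator topology, using the Connes cocycle between the two states. The cocycle $u_s^N := (D\omega_{H_N}|_{\mathcal{L}_N} : D\omega_{\varPhi_N}|_{\mathcal{L}_N})_s$ is a unitary in $\mathcal{L}_N$ that encodes the discrepancy, and its convergence to $1$ (in a suitable asymptotic sense) is what drives the result.

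First I would reduce to the case where $H_N$ lies in the natural positive cone $\mathcal{P}^{\natural}_{\varPhi_N;\mathcal{L}_N}$. Let $\tilde{H}_N$ be the unique natural-cone representative of $\omega_{H_N}|_{\mathcal{L}_N}$. Araki's basic inequality $\|\tilde{H}_N-\varPhi_N\|^2 \le \|\omega_{H_N}|_{\mathcal{L}_N}-\omega_{\varPhi_N}|_{\mathcal{L}_N}\|$ together with $|\omega_{H_N}(\ell)-\omega_{\varPhi_N}(\ell)| \le \|\ell\|(\|H_N\|+\|\varPhi_N\|)\|H_N-\varPhi_N\|$ gives $\|\tilde{H}_N-\varPhi_N\|\to 0$, hence $\|H_N-\tilde{H}_N\|\to 0$. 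Because $J_{\tilde{H}_N;\mathcal{L}_N}=J_{\varPhi_N;\mathcal{L}_N}$ for vectors in the same natural cone, the assumed $J$-convergence transfers from $\varPhi_N$ to $\tilde{H}_N$. Writing $H_N = v'_N \tilde{H}_N$ with $v'_N \in \mathcal{L}_N'$ a partial isometry, the identity $(v'_N-1)\tilde{H}_N = H_N - \tilde{H}_N \to 0$ together with cyclicity of $\tilde{H}_N$ and uniform boundedness shows $v'_N$ becomes trivial on the range of $V_N$, so $\Delta_{H_N}^{is}V_N$ and $\Delta_{\tilde{H}_N}^{is}V_N$ agree asymptotically. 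Thus we may as well take $H_N=\tilde{H}_N$.

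With $H_N$ in the natural cone, Connes' theorem gives $\Delta_{H_N}^{is} = u_s^N \Delta_{\varPhi_N}^{is}$ with $u_s^N \in \mathcal{L}_N$ unitary. A direct computation shows
\begin{equation}
u_s^N \varPhi_N = \Delta_{H_N}^{is}\Delta_{\varPhi_N}^{-is}\varPhi_N = \Delta_{H_N}^{is}\varPhi_N,\qquad \|u_s^N\varPhi_N-\varPhi_N\| \le 2\|H_N-\varPhi_N\|\to 0,
\end{equation}
using $\Delta_{H_N}^{is}H_N=H_N$ and unitarity. Using the decomposition
\begin{equation}
\Delta_{H_N}^{is}V_N - V_N\Delta_\eta^{is} = u_s^N\bigl(\Delta_{\varPhi_N}^{is}V_N - V_N\Delta_\eta^{is}\bigr) + (u_s^N-1)V_N\Delta_\eta^{is},
\end{equation}
unitarity of $u_s^N$ reduces the whole problem to showing $(u_s^N-1)V_N\xi \to 0$ for each $\xi$ in a dense subspace of $\mathscr{H}$, which is the step I expect to be the main obstacle.

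For this, pick $\xi = n\eta$ with $n\in\mathcal{N}$ (dense by standardness). By $\mathcal{N}\rightarrowtail\mathcal{L}_N$ and Theorem~\ref{thm:main}(1) there is a quantum channel $\beta_N$ with $V_N n\eta \approx \beta_N(n)\varPhi_N$. Then
\begin{equation}
(u_s^N-1)V_N n\eta = \beta_N(n)(u_s^N-1)\varPhi_N + [u_s^N,\beta_N(n)]\varPhi_N + o(1).
\end{equation}
The first term vanishes by the previous paragraph and uniform boundedness. For the commutator, I would mimic the Vitali--Porter interpolation in the proof of Theorem~\ref{thm:JLMS}: for any $n_1,n_2\in\mathcal{N}$ form
\begin{equation}
g_N(z) = \bigl\langle \varPhi_N\bigm|\beta_N(n_1)^\dagger\,\Delta_{H_N;\mathcal{L}_N}^{z}\,\beta_N(n_2)\,\Delta_{\varPhi_N;\mathcal{L}_N}^{-z}\varPhi_N\bigr\rangle,
\end{equation}
which is bounded and analytic on $\{0\le\Re z\le 1/2\}$. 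On the imaginary axis reconstructibility gives $\lim_N g_N(it) = \langle\eta|n_1^\dagger\Delta_{\eta;\mathcal{N}}^{it}n_2\Delta_{\eta;\mathcal{N}}^{-it}\eta\rangle$; on the top edge $\Re z=1/2$ the hypothesis $so$-$\lim(J_{\varPhi_N}V_N - V_N J_\eta)=0$ together with $\|H_N-\varPhi_N\|\to 0$ yields the analogous limit. Vitali--Porter then transports the limit into the strip interior, giving $wo$-$\lim_N u_s^N\beta_N(n)(u_s^N)^\dagger - \beta_N(n) = 0$ at the level of matrix elements in $\varPhi_N$, which is exactly the vanishing of the commutator on $\varPhi_N$ needed to close the argument.

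The main obstacle is this commutator step: controlling $[u_s^N,\beta_N(n)]\varPhi_N$ without appealing to uniform continuity of modular automorphisms in the state (which fails in general). The interpolation route above is the cleanest substitute and directly reuses machinery already deployed in Section~\ref{proof:jlms}; the natural-cone reduction in Step~1 is the other technical point, where the role of the hypothesis $so$-$\lim J_{\varPhi_N}V_N - V_N J_\eta=0$ enters essentially.
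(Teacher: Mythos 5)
Your natural-cone reduction is fine and matches the paper's own first step, but the core of your argument rests on a false identity. You assert that ``Connes' theorem gives $\Delta_{H_N}^{is}=u_s^N\Delta_{\varPhi_N}^{is}$ with $u_s^N\in\mathcal{L}_N$ unitary.'' The cocycle theorem says $\Delta_{H_N,\varPhi_N;\mathcal{L}_N}^{is}=u_s^N\Delta_{\varPhi_N;\mathcal{L}_N}^{is}$ with the \emph{relative} modular operator on the left; the operator you actually use, $u_s^N=\Delta_{H_N}^{is}\Delta_{\varPhi_N}^{-is}$, factorizes as a product of the Connes cocycle of the restricted states in $\mathcal{L}_N$ and a second cocycle in $\mathcal{L}_N'$, and lies in neither algebra. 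Your estimate $\|u_s^N\varPhi_N-\varPhi_N\|\le 2\|H_N-\varPhi_N\|$ is valid for this product operator, and the algebraic decomposition of $\Delta_{H_N}^{is}V_N-V_N\Delta_\eta^{is}$ still holds, so the problem is reduced correctly to $(u_s^N-1)V_N n\eta\to 0$; but then everything hinges on $[u_s^N,\beta_N(n)]\varPhi_N\to 0$, and since $u_s^N$ is in neither $\mathcal{L}_N$ nor $\mathcal{L}_N'$, this commutator statement is essentially equivalent to the lemma itself (it amounts to saying $\Delta_{H_N}^{is}$ and $\Delta_{\varPhi_N}^{is}$ implement asymptotically the same flow on $\beta_N(n)$). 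Your proposed substitute, the interpolation function $g_N(z)=\langle\varPhi_N|\beta_N(n_1)^\dagger\Delta_{H_N}^{z}\beta_N(n_2)\Delta_{\varPhi_N}^{-z}\varPhi_N\rangle$, is not known to be bounded or even well defined on the strip: $\beta_N(n_2)\varPhi_N$ need not lie in the domain of $\Delta_{H_N}^{z}$, and strip-boundedness of operators of the form $\Delta_{\Phi}^{z}\,T\,\Delta_{\Psi}^{-z}$ holds for the specific relative-modular combinations used in the JLMS proof (e.g.\ $\Delta_{\eta,\psi}^{z}W_N^\psi\Delta_{H_N,\widehat{\Psi}_N}^{-z}$), not for plain modular operators of two different vectors. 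So the Vitali--Porter step has no foundation, and the gap is precisely the one you flag at the end: norm closeness $\|H_N-\varPhi_N\|\to 0$ does not by itself make modular flows or cocycles close.

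The paper closes this gap by a different mechanism, essentially Araki's perturbation argument run through the code. Since $H_N$ and $\varPhi_N$ share the modular conjugation $J_N$ on the natural cone, the identity $\Delta_{\psi}^{1/2}m\psi=J_N m^\dagger\psi$ converts $\|H_N-\varPhi_N\|\to 0$ into $\|\Delta_{H_N}^{1/2}\beta_N(n)H_N-\Delta_{\varPhi_N}^{1/2}\beta_N(n)\varPhi_N\|\to 0$, which together with $\beta_N(n)H_N\approx V_N n\eta$, the hypothesis $J_NV_N-V_NJ_\eta\to 0$, and density of $(i+\Delta_\eta^{1/2})\mathcal{N}\eta$ yields strong \emph{resolvent} convergence $\bigl((i+\Delta_{H_N}^{1/2})^{-1}-(i+\Delta_{\varPhi_N}^{1/2})^{-1}\bigr)V_N\to 0$. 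The flow statement (i) is converted to resolvent form by integrating against $1/\cosh\pi s$, combined with the above, and then converted back to the flow statement (ii) by Stone--Weierstrass approximation of $(-1+1/x)^{is}$ with window functions. If you want to salvage your route, you would have to supply an argument of this resolvent type anyway to control the cocycles, at which point you have reproduced the paper's proof.
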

\begin{proof}

We cannot directly use existing results in the literature, see for example \cite{araki1977relative}, since the modular operators above do not converge to a fixed operator. However we can still follow very closely \cite{araki1977relative}.  The transformation between strong resolvent convergence and modular operator convergence are used back and forth, the discussion of equivalence between strong resolvent convergence and dynamical convergences in chapter ten of \cite{alma99954910089005899} proved to be very useful. In the modular operators below we will drop the algebra label which is either $\mathcal{L}_N$ or $\N$ - and which one it is should be clear from the state labels.

Suppose that $\Theta_N' H_N \equiv \hat{H}_N \in \mathscr{P}^\natural_{\varPhi_N;\mathcal{L}_N}$ 
then:
\begin{equation}
\| \Theta_N' H_N -\varPhi_N \|
\leq \| (\omega_{\Theta_N' H_N} - \omega_{\varPhi_N})|_{\mathcal{L}_N} \|
\leq \| H_N - \varPhi_N\|^{1/2}
\end{equation}
where we used the equivalence of the linear functional distance and the distance for representatives on the natural cone \cite{araki1974some}.
Using the fact that both $H_N,\varPhi_N$ approximate $V_N \eta$ we have:
\begin{equation}
\label{thetaN}
\Theta_N' V_N \eta - V_N  \eta \rightarrow 0 \qquad (\Theta_N' )^\dagger V_N \eta - V_N  \eta \rightarrow 0
\end{equation}
Applying $\beta_N(n)$ and using density and uniform boundedness:
\begin{equation}
s^{\star}o- \Theta_N' V_N  - V_N =0
\end{equation}
We can use this to show that
if this Lemma is true with  $\hat{H}_N \in \mathscr{P}^\natural_{\varPhi_N;\mathcal{L}_N}$ then it is also true without this condition. For this we use
$\Delta_{H_N}^{is}  = (\Theta_N')^\dagger \Delta_{\hat{H}_N}^{is} \Theta_N'$.

Hence we can proceed by assuming that $H_N \in \mathscr{P}^\natural_{\varPhi_N;\mathcal{L}_N}$. Our first task is to use the limiting behavior of $H_N$ along with \eqref{contJ} to prove the limiting equality of the resolvent operators. 
Based on properties of Tomita operator, for all bounded sequence of $m_N \in \LL_N$
	\begin{eqnarray}
	    &\norm{\Delta_{H_N}^{1/2}m_N \left| H_N \right>- \Delta_{\varPhi_N}^{1/2}m_N \left| \varPhi_N \right>}=\norm{ J_{\varPhi_N} \Delta_{H_N}^{1/2}m_N H_N-J_{\varPhi_N} \Delta_{\varPhi_N}^{1/2}m_N \varPhi_N}\non\\
	    &\leq\| m^\dagger_N\|\norm{H_N-\varPhi_N}\rightarrow0,\qquad 
	\end{eqnarray}
	hence
	\begin{align}\label{norm0.1}
	    &\norm{\left((i+\Delta_{\varPhi_N}^{1/2})^{-1}-(i+\Delta_{H_N}^{1/2})^{-1}\right)(i+\Delta^{1/2}_{H_N})\beta_N(n)H_N}\non\\
	    =&\left\|(i+\Delta_{\varPhi_N}^{1/2})^{-1}\left(\Delta^{1/2}_{H_N}\beta_N(n)H_N-\Delta_{\varPhi_N}^{1/2}\beta_N(n) \varPhi_N + i \beta_N(n)(H_N-\varPhi_N)\right) \right. \non\\ &\qquad \qquad  \left. -\beta_N(n)(H_N-\varPhi_N)\right\|\non\\
	    \leq&  \norm{\left(\Delta^{1/2}_{H_N}\beta_N(n)H_N-\Delta_{\varPhi_N}^{1/2}\beta_N(n)\varPhi_N \right)}+2\norm{\beta_N(n)}\norm{H_N-\varPhi_N}
	\end{align}
 Implying the limit of the left hand side of the above equation vanishes. 

Use $\lim_N\norm{(\beta_N(n^\dagger)H_N-V_Nn^\dagger \eta)}\rightarrow 0$, and also \eqref{contJ},
then one has
	\begin{eqnarray}
	0&= \lim_{N} (J_{\Omega_N}^2\Delta^{1/2}_{H_N}\beta_N(n)H_N-V_NJ_{\eta}^2 \Delta_{\eta}^{1/2}n\eta) \non \\ &=
   \lim_{N} (\Delta^{1/2}_{H_N}\beta_N(n)H_N - V_N \Delta^{1/2}_{\eta}n\eta) 
	\end{eqnarray}
	 so that the vanishing of the left hand side of (\ref{norm0.1}) becomes:
	\begin{eqnarray}
	   \lim_{N\rightarrow \infty}\norm{\left((i+\Delta_{H_N}^{1/2})^{-1}-(i+\Delta_{\varPhi_N}^{1/2})^{-1}\right)V_N (i+\Delta^{1/2}_{\eta})n\eta} =0,
	\end{eqnarray}
Since $\N \eta$ is a core for  closed  unbounded operator
$\Delta_{ \eta}^{1/2}$ for any vector $\psi \in D(\Delta_{\eta}^{1/2})$ in the domain there is a sequence of $n_j \in \mathcal{N}$ such that:
\begin{equation}
\| n_j \left| \eta \right> - \left|\psi \right>\|^2 + \| \Delta_{\eta}^{1/2}(n_j \left| \eta \right> - \left|\psi \right>)\|^2 \rightarrow 0
\end{equation}
Given any $\xi \in \mathscr{H}$ define:
$\psi =(i+ \Delta_{\eta}^{1/2})^{-1} \xi$
which is then clearly in $\psi \in D(\Delta_{ \eta}^{1/2})$ so that the corresponding sequence satisfies:
\begin{equation}
\lim_j (i + \Delta_{\eta}^{1/2}) n_j \left| \eta \right> 
= \left| \xi \right> 
\end{equation}
Thus $(i + \Delta_{\eta}^{1/2})\mathcal{N}$ is dense in $\mathscr{H}$.  Then
	\begin{align}
 \label{iresolv}
	   so-\lim_{N\rightarrow\infty} \left((i+\Delta_{H_N}^{1/2})^{-1}-(i+\Delta_{\varPhi_N}^{1/2})^{-1}\right)V_N=0
	\end{align}
A similar argument gives the ``complex conjugrate'' $i \rightarrow -i$ statement.
This is the desired resolvent convergence statement. 

\noindent $(i) \implies (ii)$. Next we use the assumed convergence of modular flows in \eqref{modi} to pass to resolvent convergence in order to apply \eqref{iresolv}.

We integrate under this limit as follows. For any $\psi \in \mathscr{H}$
\begin{equation}
\lim_N \left\| \int_{-\infty}^{\infty} \frac{ ds}{\cosh{\pi s}} (\Delta_{H_N}^{is}  V_N - V_N \Delta_{\eta}^{is}) \left| \psi \right> \right\|  
\leq \lim_N \int_{-\infty}^\infty \frac{ds}{{\cosh{\pi s}}} \left\|(\Delta_{H_N}^{is}  V_N - V_N \Delta_{\eta}^{is}) \left| \psi \right> \right\|  =0
\end{equation}
where we pass the limit inside using dominated convergence with $\| (\Delta_{H_N}^{is}  V_N - V_N \Delta_{\eta}^{is}) \left| \psi \right> \| < 2C$. 
This then implies:
\begin{equation}
\label{resolvh}
so-\lim_N  (1+ \Delta_{H_N})^{-1} \Delta_{H_N}^{1/2} V_N - V_N (1+\Delta_{\eta})^{-1} \Delta_{\eta}^{1/2}=0
\end{equation}
Acting on $ n \left| \eta \right>$ for any $n \in \mathcal{N}$ and using $V_N n \eta - \beta_N(n) H_N \rightarrow 0$ on the first term twice and $J_N V_N - V_N J \rightarrow 0$  we can turn this into:
\begin{equation}
\lim_N \left( (1+ \Delta_{H_N})^{-1} V_N  - V_N (1+\Delta_{\eta})^{-1}\right) J n^\dagger \left| \eta \right>=0
\end{equation}
one more application of cyclicity for $n^\dagger \eta$ and uniform boundedness we have resolvent convergence:
\begin{equation}
\label{resolvc}
so-\lim_N  (1+ \Delta_{H_N})^{-1} V_N - V_N (1+\Delta_{\eta})^{-1}=0
\end{equation}
 adding/subtracting  the two limits $i \times$ \eqref{resolvc}
and \eqref{resolvh} we get resolvent convergence for $\Delta_{H_N}^{1/2}$: 
\begin{align}
\label{resolvch}
&so-\lim_N  ( i+ (\Delta_{H_N})^{1/2})^{-1} V_N - V_N ( i+(\Delta_{\eta})^{1/2})^{-1}=0
\end{align}
which, in conjunction with \eqref{iresolv} 
 and taking joint gives:
\begin{align}
\label{resolvch2}
&so-\lim_N  (\pm i+ (\Delta_{\Omega_N})^{1/2})^{-1} V_N - V_N (\pm i+(\Delta_{\eta})^{1/2})^{-1}=0
\end{align}
given that for any $\psi \in \mathscr{H}$:
	\begin{align}
	    &\lim_N\norm{\left((1+\Delta_{\Omega_N})^{-1} V_N-V_N (1+\Delta_{\eta})^{-1}\right)|\psi\rangle}\non\\
	    \leq&\lim_N\norm{(i+\Delta_{\Omega_N}^{1/2})^{-1}\left((-i+\Delta_{\Omega_N}^{1/2})^{-1} V_N- V_N (-i+\Delta_{\eta}^{1/2})^{-1}\right)|\psi \rangle}\non\\
	    &+\lim_N\norm{\left((i+\Delta_{\Omega_N}^{1/2})^{-1} V_N- V_N (i+\Delta_{\eta}^{1/2})^{-1}\right)(-i+\Delta_{\eta}^{1/2})^{-1}|\psi\rangle}\non
	\end{align}
	which implies that
\begin{align}
\label{topoly}
    so-\lim_N  (1+ (\Delta_{\Omega_N}))^{-1} V_N - V_N (1+(\Delta_{\eta}))^{-1}=0
\end{align}

The final task is to reverse this process to pass back to the modular flow statement of the theorem.  Now \eqref{topoly} is true for any polynomial of this resolvent since:
\begin{equation}
so-\lim_N  (1+ (\Delta_{\varPhi_N}))^{-k} V_N - V_N (1+(\Delta_{\eta}))^{-k}=0
\end{equation}
for all $k\in \mathbb{Z}_{\geq 0}$. Hence we can use the Stone–Weierstrass Theorem to approximate any continuous function $f$ on the compact interval $[0,1]$. So for all $\epsilon >0 $ there is some finite sequence $\{p_k\}_{k=0,\ldots K}$ such that:
\begin{equation}
\| f( (1+ (\Delta_{\varPhi_N}))^{-1} ) - \sum_k^K p_k (1+ (\Delta_{\varPhi_N}))^{-k} \| \leq \epsilon
\end{equation}
from which it is possible to show that:
\begin{equation}
so-\lim_N  f((1+ (\Delta_{\varPhi_N}))^{-1}) V_N - V_N f((1+(\Delta_{\eta}))^{-1})=0
\end{equation}
Consider a set of smooth window functions $0 \leq w_\Lambda(x)\leq 1$ on the closed interval $[0,1]$ parameterized by $\Lambda >0$.
We take these to be monotonically increasing as $\Lambda \rightarrow \infty$ and converging pointwise to $1$ on the open interval $(0,1)$. We pick them to be (smoothly) vanishing at $w_\Lambda(0)= w_\Lambda(1) =0$ for all fixed $\Lambda$. 
Define the functions $f_\Lambda(x) = (-1+1/x)^{i s} w_\Lambda(x)$ for $0< x <1$ and $f_\Lambda(0)=f_\Lambda(1)=0$.
These are continuous on the interval $[0,1]$ (they limit to zero at $x=0,1$ since $(-1+1/x)^{i s}$ is bounded.)  Define $R_N = ( 1 + \Delta_{\varPhi_N})^{-1}$ and $R = (1+ \Delta_\eta)^{-1}$. Fix some normalized $\psi \in \mathscr{H}$.
Given the above construction we note that:
\begin{equation}
\lim_{\Lambda \rightarrow \infty} \| (1- w_\Lambda(R) ) \left| \psi \right> \| =0
\end{equation}
by dominated convergence in the spectral representation of the resolvent. Hence for all $\epsilon >0$ there is some $\Lambda(\epsilon)$
with $ \| (1- w_\Lambda(R) ) \left| \psi \right> \| < \epsilon$ for all $\Lambda > \Lambda(\epsilon)$.  

Then:
\begin{align}
& \| (\Delta_{\varPhi_N}^{is}  V_N - V_N \Delta_{\eta}^{is}) \left| \psi \right> \|
\leq  \| (f_\Lambda(R_N)  V_N - V_N f_\Lambda(R)) \left| \psi \right> \|
\\& \qquad \qquad  +\| \Delta_{\varPhi_N}^{is} (1-w_\Lambda(R_N))  V_N \left| \psi \right> \|  + \|  V_N \Delta_{\eta}^{is} (1-w_\Lambda(R)) \left| \psi \right> \|
\end{align}
Taking the limit:
\begin{align}
& \limsup_N \| (\Delta_{\varPhi_N}^{is}  V_N - V_N \Delta_{\eta}^{is}) \left| \psi \right> \|
\leq    2 C \|  (1-w_\Lambda(R)) \left| \psi \right> \| \leq 2 C \epsilon
\end{align}
for all $\Lambda > \Lambda(\epsilon)$. Since $\epsilon$ was arbitrary to begin with we can take the limit $\epsilon \rightarrow 0$. We arrive at the desired result \eqref{modii}.

$(ii) \implies (i)$ Uses a very similar analysis as above with $H_N  \leftrightarrow \varPhi_N$. 
\end{proof}

\section{Discussion}
\label{sec:d}

In this paper we have put forward a new mathematical framework 
for studying AdS/CFT.  
This framework involves a confluence of ideas known to underpin holography: including
large-$N$ limits of matrix theories \cite{Banks:1996vh}, quantum error correcting codes \cite{Almheiri:2014lwa}, the paradigm of bulk reconstruction \cite{Hamilton:2006az,Bousso:2012sj,Czech:2012bh}, and the algebraic approach to QFTs \cite{Duetsch:2002hc,Casini:2019kex}. 
The asymptotically isometric codes are approximate quantum error correcting codes that confront the large-$N$ limit directly. We showed that these
codes have many properties in relation with gravitational theories.
We also very nearly derived these codes from the expected behavior of large-$N$ correlation functions in matrix like theories such
as $\mathcal{N}=4\,SU(N)$ SYM. The main unproven assumption on this derivation was a plausible strengthening of the mode of convergence of large-$N$
correlation function, see Definition~\ref{def:fc}. We gave some arguments for this assumption in Appendix~\ref{app:uniform}, based on an assumed uniform type of convergence on fixed operator systems that then generate the full algebra. These operator systems should presumably be constructed using restrictions on the the amplitude and frequency of the operator smearing functions. 
We expect we can show these operator systems enjoy uniform convergence based on some further input on the limiting behavior of correlation functions in large $N$ theories.

A feature of these codes is that the code subspace is the same for each fixed $N$ theory. In fact the code subspace can be thought of as the $N=\infty$ theory and the physical
Hilbert spaces are the QFTs at fixed integer $N$. 
This is the only possibility that  really makes sense for the physical Hilbert space.
Some recent discussion of such large-$N$ limits has discussed a rather different bulk to boundary dictionary \cite{Leutheusser:2021frk,Leutheusser:2021qhd} that in our story would be a tautology.\footnote{One might complain that exact dualities, such as might be expected for AdS/CFT, are exactly this kind of tautology.  In particular the algebraic approach tends to be blind to which duality frame you work in, so perhaps the tautological nature is natural here. Certainly we agree that at fixed $N$ if we can make sense of string theory in asymptotically AdS spaces as a quantum theory in its own right then the duality will be an exact duality. However even with this in hand it might not be obvious how to arrive at a local bulk description in certain limits. After all if the duality is tautological then we already know how to define string theory in asymptotically AdS spacetime - it is non-abelian gauge theory without gravity. Hence one of the more interesting questions is how such an effective theory of gravity can consistently arise from such a quantum stringy description. And in particular how bulk locality can emerge from this. We expect subregion-subregion duality to be one of the most constraining local properties that arises in the limit of small $G_N$ and this is exactly what our codes are studying. } That is, in those papers the bulk and boundary are effectively the same $N=\infty$ theory - perhaps with some formal $1/N$ power series corrections. Such $1/N$ correction will  not change the fact that a well defined error correcting code should be a mapping from this bulk theory to a fixed finite $N$ physical theory.
Running with this line of thought then naturally leads to the asymptotic isometric codes. It is also not so surprising that we need to have a code of this nature, with the microscopic fixed $N$ theories made explicit, in order to make sense of entanglement wedge reconstruction. Otherwise, working purely from a bulk perspective it should not be possible to discover the true meaning of the entanglement wedge - instead one is limited by the causal structure of the bulk theory. \footnote{The rules of euclidean quantum gravity do give a meaning to the entanglement wedge via the replica trick and subsequent developments. Error correcting codes seek a Hilbert space interpretation of such rules and so these cannot be an input.}

One goal that we have not quite achieved is the ability to compute entropies on these codes. Most boundary and corresponding bulk algebras of interest will be type-III$_1$ algebras and so do not have well defined entropies.
The crossed product construction of \cite{Witten:2021unn}, gives a type-II$_\infty$ factor for one half of the thermofield double state and this allows one to define entropies, up to an overall additive shift.  The microscopic entropies at fixed $N$ are also well defined because the algebra is type-I$_\infty$ in this case. In particular we then expect that our codes can be used to prove the entropy duality in that case. We expect to derive a formula where the the finite $N$ entropies limit (after an appropriate $N$ dependent subtraction) to the type-II$_\infty$ entropy. 

Aside from computing entropies, the most important next step is understanding the different possible sectors that arise in the large-$N$ limit. We summarize some work in progress and future work along these lines next.

\subsection{Different sectors}

As alluded to in this paper it is not possible that the asymptotically isometric codes we have constructed could encode for all possible bulk physics in AdS/CFT.
In particular the construction has been limited to perturbative quantum fluctuations about some classical geometry. Back reaction effects could possibly be included using
$1/N$ corrections, allowing for sources with $N$ dependence to shift the background and hence change the leading algebra. This seems to call for some new understanding of such formal perturbative algebras. Instead we might simply work with a new code for this shifted background and attempt to sew these codes together. At the end of the day one might hope for a code that is somehow fibered over the space of classical solutions, varying as the bulk solution varies. This also seems to require some new paradigm. Certain similarities to deformation quantization \cite{rieffel1989deformation,kontsevich2003deformation,binz2004field}\footnote{We think Antony Speranza for pointing this out and for various discussions on this subject.} are apparent where a similar dichotomy between formal $\hbar$ expansions and strict deformation quantization where such series expansions are not required. Before even studying this however, it is possible to focus on some special codes and learn as much from these as possible. 

In a paper under preparation \cite{toappear} we study vacuum codes and thermal codes in some detail. The main results we would like to advertise are (see Section~\ref{sec:examples} for a prelimniary definition of the codes below):
\begin{enumerate}
\item For a vacuum code, which is defined to have a representation of the conformal symmetries with positive energy, we prove that $\mathcal{C}(\diamondsuit) = \mathcal{E}(\diamondsuit)$, implying that the causal wedges are automatically Haag dual for double cone regions. Under the assumption that a code has a representative vector $\eta$ for the vacuum sequence $\Omega_N$, we prove that $\mathcal{C}(\diamondsuit) = \mathcal{E}(\diamondsuit)\,, \forall \diamondsuit$ is a necessary and sufficient condition for a code to be a vacuum code.  

\item For thermal codes, assuming that there exists a time translation symmetry acting on the code subspace that maps equivariantly to the microscopic time translation symmetry, we can prove that the code time translation satisfies the KMS condition and that $\mathcal{E}(K_0) = \mathcal{C}(K_0)$ where $K_0$ is $S^{d-1} \times \mathbb{R}$ (if the code is actually a thermofield double code then $K_0$ is either the left or right cylinder.)

\item Time shifted thermal codes are disjoint to the regular thermal codes if we are above the deconfinement temperature. And as long as we work in the microcanonical  thermal code with fixed area then we can build a larger code patching together all the time shifted codes. This extended code realizes the crossed product construction of \cite{Witten:2021unn} as the bulk algebra. Below the deconfinement temperature these codes are not disjoint and the time translation symmetry is represented as an inner action with a Hamiltonian that is affiliated to $\mathcal{C}(K_0)$. 

\end{enumerate}

Further studies are necessary. So called split codes, defined in the exact setting in \cite{Faulkner:2020hzi}, should facilitate studies of phase transitions between different codes. It should be possible to give a condition for a code
to be asymptotically vacuum, analogous to the asymptotically AdS spacetimes relevant to AdS/CFT. We would then make this a basic requirement for any code. 

\subsection{Bulk locality versus stringy codes}
\label{sec:bulklocality}

One important open question is to understand the conditions under which the codes we constructed are stringy and how to resolve such stringy physics. 
This question is related to the emergence of a local bulk theory. Some ideas/questions for attacking this problem include, studying the saturation of the modular chaos bound \cite{Faulkner:2018faa}, confronting the question of the split property \cite{Doplicher:1984zz} for the bulk algebras, and giving a better understanding for when the causal wedges can be extended to Haag dual nets.  The later question is potentially related to stringy physics since  there seems to be some difficulty in defining stringy notions of an entanglement wedges \cite{Chandrasekaran:2021tkb,Chandrasekaran:2022qmq}.

We end with a conjecture that ties these loose ends together. Suppose that a code is asymptotically vacuum and the single trace fields include a boundary stress tensor.
We firstly conjecture that the split property for $\mathcal{C}$ is tied to the existence of a local quantum field theory. The motivation for this is as follows. Without the
split the thermodynamics of the theory cannot be that of a local QFT (on some potentially higher dimensional space). On the other hand, with the split one can attempt to reconstruct
a local quantum like theory: we firstly conjecture that the modular chaos bound needs be saturated \cite{Faulkner:2018faa}, since graviton exchange leads to saturating chaos like growth yet an infinite tower of higher spin fields is necessary to change this growth to sub-maximal \cite{Camanho:2014apa,Caron-Huot:2017vep}, and so anything other than saturation would lead to a contradiction with the assumed  split property. Graviton exchange is dual to the stress tensor, hence the initial assumption on the existence of a single trace field for the stress tensor. 
It is then reasonable to further conjecture that theories saturating the modular chaos bound are local quantum theories since we can use this saturation to define local Lorentz frames in the bulk \cite{Faulkner:2018faa}.  
Hence the conjecture: split property for $\mathcal{C}$ $\iff$ local quantum field theory. 

To connect this to the question of Haag dual extensions we consider the deformation results in \cite{Lewkowycz:2018sgn}. The claim we can extract from this paper is that a theory with a local gravitational dual (this seems to be an implicit assumption in \cite{Lewkowycz:2018sgn} since they were using the covariant phase space formalism) then the equality of bulk and boundary modular Hamiltonians is maintained under deformations. These deformations include both state deformations, following ideas in \cite{Faulkner:2014jva,Faulkner:2017tkh}, and shape deformations $O$, following for example \cite{Faulkner:2015csl,Faulkner:2016mzt,Balakrishnan:2016ttg}. Here, compared to the relaxed assumptions in \cite{Lewkowycz:2018sgn}, we are assuming the equations of motion are satisfied and that the bulk algebra that defines the bulk modular Hamiltonian is associated to the extremal surface. Usually the equations of motion are assumed as
part of the formalism (the Weyl operators in our approach are based on a symplectic form that comes from a phase space of on-shell solutions). Certainly we are allowed to assume the extremality condition since the equality of modular Hamiltonians for any algebra is a necessary condition for an entanglement wedge, so if it is true for some wedge then we are
done. See Theorem~\ref{thm:JLMS}$(iii)_a$. Presumably this is also a sufficient condition if were to do similar perturbative computations of  the relative modular operator, or even just relative entropy as in Theorem~\ref{thm:JLMS}$(ii)$. We expect that these arguments then imply that $\mathcal{E}(O') = \mathcal{E}(O)'$ is maintained under deformation - so starting with the vacuum code where we can independently prove Haag duality for double cones \cite{toappear} we arrive at the conjecture:
\begin{align}
\nonumber
\text{split property for}\,\, \mathcal{C} & \iff \text{modular chaos saturation} \\&  \iff  \text{local quantum field theory}  \iff \mathcal{E}(O') = \mathcal{E}(O)' \,\, \forall O \sim \diamondsuit
\label{eq:conjecture}
\end{align}
where $O \sim \diamondsuit$ denotes all regions smoothly deformable to a diamond. 

To develop this we would then need to work out what it means to deform between different asymptotic codes along with the question of how to define an asymptotically vacuum code in the first place. We should at the very least include some conditions that allow us deform such codes to the actual vacuum code. 
Of course there were many hand-wavy uncertain steps here, so we emphasize this is a conjecture that could be wrong. Or perhaps a conjecture that needs some refinement. 
Finding simple counterexamples would be useful way to make such refinements, codes constructed using scaling limits near phase transition (see the next subsection) seem likely to violate some of these conjectures and necessitate refinement. Also by-passing the middle condition of \eqref{eq:conjecture} (which is anyway vague) might be a useful first step. 
We point out that \cite{Lewkowycz:2018sgn} argues for the absence of certain enhancements in the modular flow correlation functions based on expansions near the extremal surface and the covariant phase space formalism. This is
analogous to saturation of the modular chaos bound where new enhancements, violating $\lim_N \Delta_{\mathcal{M}_N(O)}^{is} V_N - V_N \Delta_{\mathcal{E}(O)}^{is}  \neq 0 $, might come from a tower of higher spin fields but these would then violate the split. For the remaining implication in \eqref{eq:conjecture} (without the middle condition) one would have to argue that these enhancements are necessarily there for a split violating code - in particular the fields that violate the split cannot simply decouple from the deformed modular operators. Presumably this decoupling from the modular operator is not allowed due to cyclicity type arguments. 

Ultimately one might hope that
this conjecture underpins Jacobson's program to relate  thermodynamics of local Rindler Horizons to the Einstein's equations \cite{Jacobson:1995ab,Lashkari:2013koa,Faulkner:2013ica,Bueno:2016gnv,Lewkowycz:2018sgn,Jacobson:2015hqa}. Here perhaps Haag duality can be thought of as some local equilibrium assumption (via the KMS condition on the modular automorphism group) that then gives rise to Einstein's equations. 

\subsection{Phase transitions}

Phase transitions between different entanglement wedges have given important input on our understanding of quantum information aspects of quantum gravity. More recently they have been used to study the Page curve of black hole evaporation in realistic gravitational theories. Thus understanding how to describe phase transitions in QEC codes is an important issue. In our asymptotic codes the physics of phase transitions will require more than one code to study. Each code being related to a different phase. This is familiar from the study of (first order) phase transitions in the thermodynamic limit of statistical mechanical systems. 

In the evaporating black hole case there is a phase transition between the entanglement wedge of the radiation as a function of how much radiation is collected. At the page time the entanglement between interior modes and the Hawking radiation is screened by the appearance of an island in the entanglement wedge, via an application of the island rule which determined the entanglement wedge in gravity.
Naively this entanglement persists in the effective bulk theory, despite the entanglement seemingly violating the rank condition on entanglement between a the black hole microstates and the radiation. 
In this case non-isometric codes are particularly natural  since the
bulk state that is used as an input to the island formula would only be possible if the quantum fields inside the black hole have more states than is suggested by the black hole area formula. For an isometric code this
would lead to a paradox for the rank condition, hence non-isometric codes become natural. We then claim such codes do not lead to large tensions with the unitarity of the bulk gravitational description as
long as we use the code to study states in the boundary theory that survive the large-$N$ limit.
More specifically, to describe the Page curve situation using asymptotic codes we need two codes for before and after the Page time. We then suppose that states in one code (say post page) can be written as states in the other code (pre page) via the application of bulk $N$ dependent unitaries. In this case the unitaries would entangle the Hawking radiation from the interior and lead to tensions with unitarity discussed above. However since these $N$ dependent states need not survive the large-$N$ limit there is no immediate issue.

Along these lines we propose the Hawking paradox arises due to a non-commutativity of the large-$N$ limit and the $\mathcal{O}(N^\#)$ semiclassical time evolution required to see the paradox. 
The importance of non-isometric codes for resolving the Hawking paradox was discussed with increasingly levels of urgency in \cite{Penington:2019npb,Akers:2021fut,Akers:2022qdl}. In our paper we do not attempt to define a code that simultaneously works for states that are pre and post the page time, and so our discussion is not directly connected to these works. One might conjecture that they are tied together at finite $N$, although we do not make this precise. Another possibility is to study a scaling limit where one sits exactly on top of the Page transition. These are important avenues for future study. 

\appendix

	\section{Modes of convergence}

\label{app:convergence}

Here we summarize some of the standard mathematical properties of topological vector spaces that are required. See text book discussions in \cite{kadison1997fundamentals,bratteli2012operator}. 

 Given a Hilbert space $\Hil$ there is natural topology on the vector space determined by the norm from the inner product of the Hilbert space. 
Convergence of a sequence of vectors:
\begin{equation}
\lim_N \left| \xi_N \right> = \left| \xi \right> \quad \iff \quad \lim_N \| \left| \xi_N \right> - \left| \xi \right> \| = 0
\end{equation}
Weak convergence applies to linear functionals on $\mathscr{H}$:
\begin{equation}
w-\lim_N \left| \xi_N \right> = \left| \xi \right> \quad \iff \lim_N \left< \eta \right| \left( \left| \xi_N \right> - \left| \xi \right>  \right) = 0 \qquad \forall \left< \eta \right| \in \mathscr{H}^\star
\end{equation}
This notion of weak and norm convergence applies to any Banach space. 


We now consider the algebra of bounded linear operators $\B(\Hil)$ as a topological vector space. 
There are various topologies we can consider.
The first is the uniform topology
defined by the norm $\| \cdot \|$:
	\begin{eqnarray}	 \norm{x}=\sup_{\xi\in\Hil}\frac{\norm{x\xi}}{\norm{\xi}}, \qquad \forall~\xi\in \Hil
	\end{eqnarray}
		for $x\in\B(\Hil)$.  One can check that $\norm{x^\dagger x}=\norm{x x^\dagger}=\norm{x}^2$. 
  A sequence of operators $x_N$ converges
  to an operator $x$ in the uniform topology iff $\lim_N \| x_N -x \| =0$. We denoate this $x_N \rightarrow x$.
The set of operators $\mathcal{B}(\Hil)$ is complete with respect to this norm making $\mathcal{B}(\Hil)$ a simple example of a C$^\star$ algebra. A subset $\mathcal{W} \subset \mathcal{B}(\Hil)$ is closed in the uniform topology iff all Cauchy sequences $\{ x_N\}_N \subset \mathcal{W}$ converge to an operators in $\mathcal{W}$.

Other topologies include the locally convex topologies  generated by the family of seminorms:
	\begin{itemize}
	\item \textit{weak operator topology} (wot)
	\begin{align}
	 x\mapsto\lvert( \xi,x\eta)\rvert,\qquad~\forall~\xi,\eta\in\Hil
		\end{align}
		\item \textit{strong operator topology} (sot)
		\begin{align}
	 x\mapsto\norm{ x\xi},\qquad~\forall~\xi\in\Hil
		\end{align}
		\item \textit{strong$^*$ operator topology} (s$^\star$ot)
		\begin{align}
	 x\mapsto\left(\norm{x\xi}^2+\norm{x^\dagger\xi}^2\right)^{1/2},\qquad\forall~\xi\in\Hil
		\end{align}
	\end{itemize}
These different topologies are defined by the fact that they are the weakest topology that makes respectively each of the above maps $\mathcal{B}(\Hil) \rightarrow \mathbb{R}_{\geq 0}$ continuous. 
Convergent sequences in each topology will respectively be denoted:
\begin{align}
wo-\lim x_N &= x \iff \lim_N \left< \xi \right| x_N \left| \eta \right> = 
\left< \xi \right| x \left| \eta \right>
 \qquad \forall \xi,\eta \in \mathscr{H} \\
 so-\lim x_N &= x \iff \lim_N  x_N \left| \xi \right> = x \left| \xi \right>
 \qquad \forall \xi \in \mathscr{H} \\
  s^\star o-\lim x_N &= x \iff \lim_N  x_N \left| \xi \right> = x \left| \xi \right> \,\, {\rm and}\,\,  \lim_N  x_N^\dagger \left| \xi \right> = x^\dagger \left| \xi \right>
 \qquad \forall \xi \in \mathscr{H}
\end{align}
Sequential completeness using the convergence of sequences with respect to the above semi-norms is not sufficient to define closed subsets of $\mathcal{B}(\Hil)$, in contrast to the uniform case above. Instead we would need the notion of a convergent net. The details of this will not be needed in this paper. 

 Unital $*$-subalgebras of $\mathcal{N} \subset \mathcal{B}(\mathscr{H})$ that are closed with respect to any of the above three topologies are von Neumann algebras.  They are equal to their double commutant $\mathcal{N} = \mathcal{N}''$. The set of all operators  $\mathcal{B}(\mathscr{H})$ is the simplest such example. 

There are a few more operator topologies that we will occasionally need. 
These will be based on the linear functionals $\sigma$ on $\mathcal{B}(\mathscr{H})$ given by $\sigma(\cdot) =  {\rm Tr}_{\mathscr{H}}T \cdot$ for the trace class operators $T$. 
This is a Banach space denoted $\mathcal{B}(\mathscr{H})_\star$
with respect to the linear functional norm:
\begin{equation}
\| \sigma \| = \sup_{x \in \mathcal{B}(\mathscr{H})}\frac{ |\sigma(x)|}{\| x\|}
\end{equation}
This space is not the same as the (uniform) continuous dual of $\mathcal{B}(\mathscr{H})$ which is generally a larger space. 

We can give a general description of such a $\sigma \in \mathcal{B}(\mathscr{H})_\star$ in terms of two infinite sequences of vectors $\eta_i$ and $\xi_i$ with:
\begin{equation}
\sigma(x) = \sum_i \left< \eta_i \right|  x \left| \xi_i \right>
\qquad \sum_i \| \xi_i \|^2, \sum_i \| \eta_i \|^2 < \infty
\end{equation}
Positive linear functionals determined by trace class operators $\rho \in \mathcal{B}(\mathscr{H})_\star^+$are called normal states on $\mathcal{B}(\mathscr{H})$ and take the general form:
\begin{equation}
\rho(x) = \sum_i \left< \xi_i \right|  x \left| \xi_i \right>
\qquad \sum_i \| \xi_i \|^2 < \infty
\end{equation}

 A new set of locally convex topologies on $\mathcal{B}(\mathscr{H})$  is determined by the following families of seminorms:
		\begin{itemize}
		    \item 	\textit{ultraweak topology}\label{ultraweak} is determined by
	\begin{align}
	    x\mapsto |\sigma(x)|
	\end{align}
 for all $\sigma \in \mathcal{B}(\mathscr{H})_\star$.
	\item  \textit{ultrastrong topology} is determined  by
	  \begin{eqnarray}
	    x\mapsto \rho(x^\dagger x)^{1/2}
	  \end{eqnarray}
	  for all positive $\rho\in \mathcal{B}(\mathscr{H})_*^+$.
	  \item 	  \textit{ultrastrong$^*$ topology} is determined by
	  \begin{align}
	     x\mapsto \left(\rho(x^\dagger x)+\rho( x x^\dagger)\right)^{1/2},\qquad\forall~\rho\in X_*^+
	  \end{align}
	  for all positive $\rho\in \mathcal{B}(\mathscr{H})_*^+$.
		\end{itemize}
  where again the topologies are the weakest topology such that the respective maps above are continuous.
We define convergence of sequences of operators with respect to these topologies in the standard way, although we do not introduce a notation for this since we will not need to use it as much. 

The pre-dual $\mathcal{N}_\star$ of a von Neumann algebra $\mathcal{N} \subset \mathcal{B}(\mathscr{H})$ is the Banach space of ultraweakly continuous linear functionals. The normal states $\mathcal{N}_\star^+$ are the positive versions of these linear functionals. 
It is called the pre-dual because the (norm) continuous dual of $\mathcal{N}_\star$ is the original von Neumann algebra $(\mathcal{N}_\star)^\star
= \mathcal{N}$. The weak topology on $\mathcal{N}_\star$ is the topology of pointwise convergence. That is if a sequence in the predual $\rho_N \in \mathcal{N}_\star$ converges to $\rho \in \mathcal{N}$ weakly we denote this:
\begin{equation}
w-\lim \rho_N = \rho \iff  \lim_N \rho_N(x) = \rho(x)\, \qquad \forall x\in \mathcal{N}
\end{equation}

	 We have the following relation between these various topologies with the following ordering from coarse to fine \cite{bratteli2012operator}:
	\begin{align}
	&\mathrm{ultraweak}~~\prec ~\mathrm{ultrastrong}\prec ~~\mathrm{ultrastrong}^\star~\prec\mathrm{uniform}\non\\
	&~~~~~~~\curlyvee~~~~~~~~~~~~\curlyvee~~~~~~~~~~~~~~~~~~\curlyvee\non\\
	&   ~~~~~~\mathrm{wot}~~~~\prec~~~~\mathrm{sot}~~~~~~~~\prec ~~~~~\mathrm{s}^\star\mathrm{ot}
	\end{align}
	where "$\prec$" means that the right hand side is finer than the left hand side, i.e.  if a set is closed in WOT, it is closed in SOT, etc, which indicates that if a sequence converges in the SOT it converges in the WOT, etc.

Generally a sequence that converges in the ``ultra'' version of some operator topology also converges in the non-ultra version as long as the sequence of operators is uniformly bounded. 
We will need this in the weak case and so we give a proof:
\begin{lemma}
\label{ultratonot}
Consider a sequence of uniformly bounded operators $x_N \in \mathcal{N}$ converging $x_N \rightarrow x$ to $x \in \mathcal{N}$ in the weak operator topology. Then for all $\sigma \in \mathcal{N}_\star$ the following limit exists:
\begin{equation}
\lim_N \sigma(x_N) = \sigma(x)
\label{toconcls}
\end{equation}
\end{lemma}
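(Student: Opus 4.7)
The plan is to reduce pointwise convergence against an arbitrary normal functional to termwise convergence of matrix elements, and then dominate the sum using the uniform bound on $\|x_N\|$. First, I would invoke the standard representation theorem for the predual of a von Neumann algebra: any $\sigma \in \mathcal{N}_\star$ can be written as
\begin{equation}
\sigma(x) = \sum_{i=1}^\infty \langle \eta_i | x | \xi_i \rangle,\qquad \sum_i \|\eta_i\|^2, \sum_i \|\xi_i\|^2 < \infty,
\end{equation}
for sequences of vectors $\eta_i,\xi_i \in \mathscr{H}$ (extending $\sigma$ to an ultraweakly continuous functional on $\mathcal{B}(\mathscr{H})$ and using its canonical trace-class form). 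This is the key structural input that converts the abstract predual pairing into a concrete sum of vector matrix elements.

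Next, I would apply the WOT hypothesis termwise: for each fixed $i$, $\langle \eta_i | x_N | \xi_i \rangle \to \langle \eta_i | x | \xi_i \rangle$ as $N \to \infty$. To justify exchanging the limit with the infinite sum, I would use uniform boundedness: by Lemma~\ref{lem:C} (or by hypothesis) there is a constant $C$ with $\|x_N\| \leq C$ for all $N$, so
\begin{equation}
|\langle \eta_i | x_N | \xi_i \rangle| \leq C \|\eta_i\|\|\xi_i\|,
\end{equation}
and by Cauchy--Schwarz $\sum_i \|\eta_i\|\|\xi_i\| \leq \bigl(\sum_i \|\eta_i\|^2\bigr)^{1/2}\bigl(\sum_i \|\xi_i\|^2\bigr)^{1/2} < \infty$. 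Thus the summable sequence $C\|\eta_i\|\|\xi_i\|$ dominates each $|\langle \eta_i | x_N | \xi_i \rangle|$ uniformly in $N$, and dominated convergence for sums (counting measure on $\mathbb{N}$) yields \eqref{toconcls}.

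There is no serious obstacle here; the only subtlety is ensuring the predual representation in the first step, which is a well-known fact for $\mathcal{N}_\star$. If one wished to avoid invoking this structural theorem, one could alternatively give a $3\epsilon$ argument: truncate the sum at $i \leq I$ chosen large enough that the tails $\sum_{i>I} \|\eta_i\|\|\xi_i\|$ contribute less than $\epsilon/(3C)$, apply WOT convergence on the finite initial segment to control the middle term, and then bound the tail for $x$ identically. Either route works, and in both cases uniform boundedness is essential — without it, WOT convergence of $x_N$ need not imply ultraweak convergence, as evidenced by standard counterexamples such as $x_N = N \, P_N$ for rank-one projections $P_N$ with $P_N \to 0$ weakly.
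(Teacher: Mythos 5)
Your proof is correct and follows essentially the same route as the paper: represent $\sigma$ via vector sequences $\eta_i,\xi_i$ with square-summable norms, dominate the termwise matrix elements by $C\|\eta_i\|\|\xi_i\|$ using the uniform bound and Cauchy--Schwarz, and conclude by dominated convergence for the sum. The alternative $3\epsilon$ truncation you sketch is a fine variant but not needed.
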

\begin{proof}
Since we can express $\sigma = \sum_i \left<\eta_i\right| \cdot \left|\xi_i\right>$ with $\sum_i \| \xi_i \|^2, \sum_i \| \eta_i \|^2 < \infty$ we have:
\begin{equation}
|\sigma(x_N -x)|
\leq \sum_i | \left< \eta_i \right| (x_N - x) \left| \xi_i \right> |
\leq (\| x \| + \sup_N \| x_N \|) 
\sum_i \| \xi_i \| \| \eta_i \|
< \infty
\end{equation}
where we used the assumed uniform bound on $x_N$ and the Cauchy-Schwarz
inequality on the infinite sum.
Hence $(\| x \| + \sup_N \| x_N \|)  \| \xi_i \| \| \eta_i \| \geq | \left< \eta_i \right| (x_N - x) \left| \xi_i \right> |$ is a dominating sequence whose sum converges. Hence by the dominated convergence theorem applied to sequences we can take the limit $\lim_N$ inside the sum and conclude \eqref{toconcls}.
\end{proof}

	\begin{theorem*}{Banach-Steinhaus theorem}[uniformly boundedness principle]
 \label{ubp}
	
	Let $A$ be Banach spaces, $B$ be a norm space, and $\{T_N\}_{N\in\mathbb{N}}\subset\B(A,B)$ are a sequence of bounded linear operators. Then the following statements are equivalent:
	\begin{itemize}
	    \item[(i)] (pointwise boundedness on $A$) for all $\xi\in A$, there exist $C_\xi\geq0$, s.t.
	    \begin{align}
	   \sup_{N \in\mathbb{N}} \norm{T_N \xi} <C_\xi
	    \end{align}
	    \item[(ii)] (uniformly boundedness) there exist $C\geq0$, s.t.
	    \begin{align}
	   \sup_{N\in\mathbb{N}} \norm{T_N}<C
	    \end{align}
	\end{itemize}
	\end{theorem*}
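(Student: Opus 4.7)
The direction $(ii) \implies (i)$ is immediate: if $\sup_N \|T_N\| \leq C$ then $\|T_N\xi\| \leq C\|\xi\|$ so $C_\xi = C \|\xi\|$ works. The substance lies in $(i) \implies (ii)$, and the plan is to use Baire category, exploiting the assumption that $A$ is a Banach space (hence a complete metric space).

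First I would define the closed sets
\begin{equation}
E_k = \{\xi \in A : \sup_{N \in \mathbb{N}} \|T_N \xi\| \leq k\} = \bigcap_{N \in \mathbb{N}} \{\xi \in A : \|T_N \xi\| \leq k\},
\end{equation}
which are closed because each $T_N$ is continuous and the norm on $B$ is continuous. The pointwise boundedness hypothesis (i) gives $A = \bigcup_{k \in \mathbb{N}} E_k$: indeed, for each $\xi \in A$ one may pick any integer $k \geq C_\xi$.

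Next I would invoke the Baire category theorem: since $A$ is complete and written as a countable union of closed sets, at least one $E_{k_0}$ must have non-empty interior. That is, there exists $\xi_0 \in A$ and $r > 0$ such that the closed ball $\overline{B}(\xi_0, r) \subset E_{k_0}$. For any $\eta \in A$ with $\|\eta\| \leq r$ we then have $\xi_0 + \eta \in E_{k_0}$, so
\begin{equation}
\|T_N \eta\| \leq \|T_N (\xi_0 + \eta)\| + \|T_N \xi_0\| \leq k_0 + k_0 = 2 k_0
\end{equation}
uniformly in $N$. Homogeneity of the operator norm then yields $\|T_N\| \leq 2k_0/r$ for all $N$, which is (ii) with $C = 2k_0/r + 1$, say.

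The only subtle ingredient is the appeal to Baire category, which is the whole reason completeness of $A$ is needed; everything else is direct manipulation of seminorms. As a final remark, the Corollary used throughout the main text --- that weak operator convergence of $V_N^\dagger V_N$ to $1$ forces uniform boundedness $\sup_N \|V_N^\dagger V_N\| < \infty$ --- follows by applying the theorem with $T_N = V_N^\dagger V_N$ viewed as operators on the Hilbert space: pointwise weak convergence gives pointwise boundedness via $\sup_N |\langle \xi| V_N^\dagger V_N |\xi\rangle| < \infty$ for each $\xi$, and a standard polarization/second application to matrix elements promotes this to $\sup_N \|V_N^\dagger V_N \xi\| < \infty$, after which the theorem applies.
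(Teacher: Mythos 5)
Your proof is correct: it is the standard Baire category argument (the closed sets $E_k$, nonempty interior of some $E_{k_0}$, then homogeneity), and your closing remark about applying the theorem twice to get $\sup_N\|V_N^\dagger V_N\|<\infty$ from weak operator convergence matches how the paper uses it in Lemma~\ref{lem:C} and the comment under Corollary~\ref{cor-ubp}. The paper itself states Banach--Steinhaus without proof, citing it as a standard textbook result, so there is no in-paper argument to compare against; yours is the canonical one.
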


 \begin{corollary}
Suppose that $\lim_N T_N \xi$ exists for $\xi$ in a dense subspace of $A$ then the following are equivalent: 
\label{cor-ubp}
\begin{itemize}
\item[(i)] The operators are uniformly bounded $\| T_N \| < C$ for all $N$.
\item[(ii)] $\lim_N T_N \xi$ exists for
all $\xi \in A$.
\item[(iii)] For all $\xi\in A$, the map $T\in\B(A,B)$ defined by 
\begin{align}
    T\xi=\lim_{N\in\mathbb{N}} T_N\xi
\end{align}
is a bounded linear map.
\end{itemize}
 \end{corollary}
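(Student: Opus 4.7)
The plan is to chain the three implications (ii)$\implies$(iii)$\implies$(i)$\implies$(ii), using the uniform boundedness principle (Banach--Steinhaus) as stated just above the corollary, together with a standard $\epsilon/3$ approximation argument to propagate convergence from the dense subspace to all of $A$.

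First I would show (ii)$\implies$(iii). Linearity of the pointwise limit $T\xi = \lim_N T_N \xi$ is immediate from linearity of each $T_N$ together with linearity of the limit in a normed space. For boundedness, I would argue that for each $\xi \in A$ the sequence $T_N \xi$ converges, hence is norm bounded, i.e.\ $\sup_N \| T_N \xi \| < \infty$. This is exactly condition (i) of the Banach--Steinhaus theorem, so I can apply it to conclude that there exists $C$ with $\sup_N \| T_N \| \le C$, which incidentally also proves (ii)$\implies$(i). Then $\| T \xi \| = \lim_N \| T_N \xi \| \le C \| \xi \|$ gives $T \in \mathcal{B}(A,B)$ with $\| T \| \le C$. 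The implication (iii)$\implies$(ii) is immediate, since (iii) defines $T$ via the pointwise limit and so asserts that the limit exists for every $\xi\in A$.

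Next I would handle (i)$\implies$(ii), which is the only step that actually uses the dense convergence hypothesis. Fix $\xi \in A$ and $\epsilon > 0$. By density, choose $\eta$ in the given dense subspace with $\| \xi - \eta \| < \epsilon/(3C)$, where $C$ is the uniform bound from (i). For $N,M$ large enough, the assumed convergence of $T_N \eta$ yields $\| T_N \eta - T_M \eta \| < \epsilon/3$, so by the triangle inequality
\begin{equation}
\| T_N \xi - T_M \xi \| \le \| T_N \|\, \| \xi - \eta \| + \| T_N \eta - T_M \eta \| + \| T_M \|\, \| \eta - \xi \| < \epsilon.
\end{equation}
Thus $\{T_N \xi\}$ is Cauchy in $B$; assuming $B$ is complete (which is the case for every $B$ used in the body of the paper, where $B$ is a Hilbert space), the sequence converges, proving (ii).

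The main, essentially only, obstacle is the standard subtlety that producing the limit $T\xi$ on all of $A$ requires Cauchy sequences in $B$ to converge, so one must either invoke completeness of $B$ or weaken (ii) to a ``$T_N\xi$ is Cauchy'' statement. Everything else reduces to a direct application of the Banach--Steinhaus theorem in the form already stated, combined with the elementary fact that a norm-convergent sequence is norm-bounded.
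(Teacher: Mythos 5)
Your proof is correct and is precisely the standard argument the paper leaves implicit (it states the corollary without proof as a routine consequence of Banach--Steinhaus): convergent sequences are bounded, Banach--Steinhaus gives the uniform bound, and the $\epsilon/3$ density argument recovers convergence on all of $A$. Your caveat that (i)$\implies$(ii) needs completeness of $B$ (since the paper's Banach--Steinhaus statement only assumes $B$ is a normed space) is well spotted and harmless here, as every instance used in the paper has $B$ a Hilbert space or $\mathbb{C}$.
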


In the main text we used this in various forms.
For Lemma~\ref{lem:C} we used a weak form which
applies the above theorem twice. Consider the theorem with $A = \mathscr{H}$ and $B = \mathbb{C}$ allowing us to discuss convergences of the linear functionals $T_N = \left< \psi_1 \right| V_N^\dagger V_N$ for any fixed $\psi_1 \in \mathscr{H}$. That is weak convergence guarantees, via Corrolary~\ref{cor-ubp}, $\| V_N^\dagger V_N \left| \psi_1 \right> \| < C_{\psi_1}$ for all $N$. Another application of Banach-Steinhaus Theorem~\ref{ubp}
with $A = B = \mathscr{H}$ and $T_N = V_N^\dagger V_N$ gives $V_N^\dagger V_N < C$ for some $C$. 

Since $V_N : \mathscr{H} \rightarrow \mathscr{K}_N$ with different Hilbert spaces as the image we need a slightly genaralized version of strong operator convergence.
In particular we can only define convergence to $0$ in this case:
\begin{equation}
so-\lim_N T_N = 0 \qquad \iff \qquad \lim_N \| T_N \left| \xi \right> \| =0 \,\, \forall \xi \in \mathscr{H}
\end{equation}
for sequences $T_N \in \mathcal{B}(\mathscr{H}, \mathscr{K}_N)$.

The following lemma makes clear the relation between types  of convergence for an asymptotically isometric code:
\begin{lemma} 
\label{lem:seq}
Given a sequence of bounded operators $V_N \in \mathcal{B}( \mathscr{H}, \mathscr{K}_N)$ such that  $wo-\lim_{N \rightarrow \infty}$ $ V_N^\dagger V_N = 1$, the following three conditions are equivalent for a finite set of sequences of uniformly bounded operators $\{ m_N^1, m_N^2, \ldots m_N^k \}$
with associated limits $\{ n^1, n^2 \ldots n^k \}$:
\begin{itemize}
\item[(1)] Strong convergence:
\begin{equation}
so-\lim_{N \rightarrow \infty} (m_N^i V_N - V_N n^i) = 0 \qquad \forall i=1,2 \ldots k
\end{equation}
\item[(2)] Simultaneous strong convergence (for all $1 \leq \ell \leq k$):
\begin{equation}
so-\lim_{N \rightarrow \infty}  \left( m_N^{i_1}  m_N^{i_2}  \ldots m_N^{i_\ell} V_N - V_N n^{i_1} n^{i_2}  \ldots n^{i_\ell}\right)  = 0 \qquad \forall  \{ i_1, i_2, \ldots i_\ell \} \subset \{1, 2, \ldots k \}
\end{equation}
\item[(3)] Simultaneous weak convergence (for all $1 \leq \ell \leq k$):
\begin{equation}
wo-\lim_{N \rightarrow \infty} V_N^\dagger m_N^{i_1}  m_N^{i_2}  \ldots m_N^{i_\ell} V_N = n^{i_1} n^{i_2} \ldots n^{i_\ell}  \qquad \forall  \{ i_1, i_2, \ldots i_\ell \} \subset \{1, 2, \ldots k \}
\end{equation}
\item[(4)] Pairwise weak convergence:
\begin{equation}
wo-\lim_{N \rightarrow \infty} V_N^\dagger m_N^i m_N^j V_N = n^i n^j  \qquad \forall i,j=1,2 \ldots k
\end{equation}
\end{itemize}
\end{lemma}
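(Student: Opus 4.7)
The plan is to close the cycle $(1) \Rightarrow (2) \Rightarrow (3) \Rightarrow (4) \Rightarrow (1)$, where the first three implications are routine bookkeeping and the last is the substantive step. Throughout I will use uniform boundedness of $V_N$ (Lemma~\ref{lem:C}), the assumed uniform boundedness of the sequences $m_N^i$, and the fact that finite products of uniformly bounded sequences remain uniformly bounded.

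For $(1) \Rightarrow (2)$ I would induct on $\ell$, using the telescoping identity
\begin{equation*}
m_N^{i_1} \cdots m_N^{i_\ell} V_N - V_N n^{i_1} \cdots n^{i_\ell} = m_N^{i_1}\bigl( m_N^{i_2} \cdots m_N^{i_\ell} V_N - V_N n^{i_2} \cdots n^{i_\ell} \bigr) + \bigl( m_N^{i_1} V_N - V_N n^{i_1} \bigr) n^{i_2} \cdots n^{i_\ell} .
\end{equation*}
The first summand is strongly null by the induction hypothesis combined with the uniform bound on $\| m_N^{i_1} \|$; the second applies $(1)$ to the single vector $n^{i_2} \cdots n^{i_\ell} \xi$. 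For $(2) \Rightarrow (3)$ I split the matrix element as
\begin{equation*}
\left\langle \xi_1 \bigl| V_N^\dagger m_N^{i_1} \cdots m_N^{i_\ell} V_N - n^{i_1} \cdots n^{i_\ell} \bigr| \xi_2 \right\rangle = \left\langle V_N \xi_1 \bigl| \bigl( m_N^{i_1} \cdots m_N^{i_\ell} V_N - V_N n^{i_1} \cdots n^{i_\ell} \bigr) \xi_2 \right\rangle + \left\langle \xi_1 \bigl| (V_N^\dagger V_N - 1) n^{i_1} \cdots n^{i_\ell} \bigr| \xi_2 \right\rangle ,
\end{equation*}
so that the first piece vanishes by Cauchy--Schwarz together with $\sup_N \| V_N \xi_1 \| < \infty$ and the strong limit (2), and the second vanishes by the weak limit $V_N^\dagger V_N \to 1$. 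The implication $(3) \Rightarrow (4)$ is immediate by restricting to $\ell = 2$.

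The substantive step is $(4) \Rightarrow (1)$, where I expect the real content of the lemma to lie. The natural approach is to expand
\begin{equation*}
\| (m_N^i V_N - V_N n^i) \xi \|^2 = \left\langle \xi \bigl| V_N^\dagger (m_N^i)^\dagger m_N^i V_N \bigr| \xi \right\rangle - \left\langle \xi \bigl| V_N^\dagger (m_N^i)^\dagger V_N n^i \bigr| \xi \right\rangle - \left\langle n^i \xi \bigl| V_N^\dagger m_N^i V_N \bigr| \xi \right\rangle + \left\langle n^i \xi \bigl| V_N^\dagger V_N \bigr| n^i \xi \right\rangle ,
\end{equation*}
and show that each of the four matrix elements converges to $\| n^i \xi \|^2 = \langle \xi | (n^i)^\dagger n^i | \xi \rangle$, so that the four signed contributions cancel. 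The fourth term is handled directly by $wo\text{-}\lim V_N^\dagger V_N = 1$; the first follows from (4) applied to the pair $\bigl( (m_N^i)^\dagger, m_N^i \bigr)$; and the second and third follow from (4) applied to the pairs $\bigl( (m_N^i)^\dagger, 1 \bigr)$ and $\bigl( 1, m_N^i \bigr)$, each giving the single-operator weak convergence $V_N^\dagger (m_N^i)^\dagger V_N \to (n^i)^\dagger$ or $V_N^\dagger m_N^i V_N \to n^i$ after absorbing $n^i$ or $(n^i)^\dagger$ into the test vector. The main obstacle is bookkeeping: this argument forces one to implicitly enlarge the finite family $\{ m_N^1, \ldots, m_N^k \}$ so that it is closed under the $\star$-operation and contains the constant sequence equal to $1$. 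In every application of the lemma in Section~\ref{sec:p} this enlargement is automatic, since the relevant sequences arise from unital $\star$-preserving maps $\beta_N$ or $\gamma_N$, so the adjoint and the unit already lie in the generated family. With this enlargement the four terms collapse as advertised and $(1)$ follows.
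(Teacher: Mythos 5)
Your proof is correct, and since the paper's own proof of Lemma~\ref{lem:seq} is just the word ``Clear,'' your cycle $(1)\Rightarrow(2)\Rightarrow(3)\Rightarrow(4)\Rightarrow(1)$ — telescoping plus uniform bounds, Cauchy--Schwarz plus $wo$-$\lim V_N^\dagger V_N=1$, and the four-term expansion of $\|(m_N^i V_N - V_N n^i)\xi\|^2$ — is exactly the routine argument the authors intend. Your closing caveat is more than bookkeeping and is worth stating explicitly: for a bare finite family the implication $(4)\Rightarrow(1)$ is false as literally written (take $V_N=1$, $m_N^1=u$ a fixed self-adjoint unitary with $u^2=1$ but $u\neq 1$, and $n^1=1$; then (4) holds while (1) and even the $\ell=1$ case of (3) fail), so the availability of the unit sequence and of the adjoint sequences $\bigl((m_N^i)^\dagger,(n^i)^\dagger\bigr)$ — equivalently of the single-operator and $\star$-weighted weak limits — is a genuine additional hypothesis for that step. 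As you observe, it is satisfied in every invocation in the paper, since the sequences there are images of unital $\star$-preserving maps $\beta_N$ or $\gamma_N$ applied to unital $\star$-closed sets (e.g.\ in Lemma~\ref{lem:beta}, where the $k=2$ convergence \eqref{onlyk2} holds for all pairs in $\mathcal{N}$, and in step $(2)\Rightarrow(3')$ of Theorem~\ref{thm:qec}, which only uses the direction $(1)\Rightarrow(3)$ that needs no closure), so nothing downstream is affected.
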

\begin{proof}
Clear. 
\end{proof}

\section{Weyl algebra $C^\star$ norm}

\label{app:weyl}

 There are two natural norms \cite{Slawny:1972iq} defined on $\mathcal{W}$:
\begin{equation}
\| \sum_i \lambda_i w_i \|_B = \sum_i | \lambda_i|,\qquad \lambda_i\in\mathbb{C}
\end{equation}
which can be used to close $\overline{\mathcal{W}}^B$ to a unital Banach algebra. This norm does not satisfy the $C^\star$ algebra condition that $\| a^\dagger a \| = \| a\|^2$. 
However by considering positive linear functionals on $\overline{\mathcal{W}}^B$ we can define a new norm:
\begin{equation}
\|  \sum_i \lambda_i w_i\| = \sup_\rho \left( \sum_{ij}\lambda_j^* \lambda_i  \rho( w_j^\star w_i) \right)^{1/2}
\end{equation} 
where $\rho: \overline{\mathcal{W}}^B \rightarrow \mathbb{C}$ and $\rho( a^\dagger a) \geq 0$ and $\rho(1) = 1$. This defines a $C^\star$ algebra norm on $\mathcal{W}$ which can be used to complete this algebra
to a unital $C^\star$ algebra $\overline{\mathcal{W}}^{\| \cdot \|}$. We will continue to denote this algebra as $\mathcal{W}$ for simplicity. 

Note that generally $\| a \| \leq \| a \|_B $. 
An example of a linear functional is:
\begin{equation}
\tau( \lambda_0 + \sum_i \lambda_i w_i) = \lambda_0
\end{equation}
where $w_i$ are not the identity. Hence:
\begin{equation}
\sum_i | \lambda_i | \geq \| \sum_i \lambda_i w_i \| \geq \left( \sum_i | \lambda_i |^2  \right)^{1/2}
\end{equation}
for $w_i$ unique (possibly the identity). The first inequality is the previously discussed $\| a \| \leq \| a \|_B $.

However it is possible to show
 that:
\begin{equation}
\| \lambda_0 +  \sum_i \lambda_i w_i \| = \| \lambda_0 +  \sum_i \lambda_i w_i \|_B = \sum_i |\lambda_i |
\end{equation}
for finite subsets $\{ w_i \}_i \subset \mathcal{W}_u$
such that all  $h_{i}$ are linearly independent.

\section{Single particle Hilbert space}
\label{app:oneparticle}

In this appendix we summarize the so called standard subspace that can be used to define free states of bosonic fields reduced to sub-regions. Our discussion follows \cite{Longo:2021rag}.
We have a real vector space $H$ an inner product $\alpha$ which we assume is non-degenerate 
and we assume $H$ is complete with respect to $\alpha$. $\beta$ is a symplectic form on $H$ that we take to be non-degenerate
and compatible with $\alpha$ \eqref{bdab}. 

Then \eqref{bdab} implies the existence of an operator $D$ on the real Hilbert  space $\H$
such that $\beta(h_1,h_2) = \alpha(D h_1,h_2)$ by the Riesz representation theorem. In particular it is easy
to see that $D^\dagger = -D$ and $\| D \| \leq 1$ where $^\dagger$ and operator norm are defined in the usual way with respect to $\alpha$. 
The polar decomposition gives $D = V | D|$ where $V^\dagger V = 1$, $V^2 = -1$ and $[D, V] =0$. 
The kernel of $D$ is empty due to the non-degenerate condition on $\beta$. 

Suppose that for all $h_1$ there is some $h$ such that:
\begin{equation}
\label{satdom}
\beta(h, h_1)^2 = \alpha(h,h) \alpha(h_1,h_1) \qquad \forall h_1 \in H
\end{equation}
Then $\alpha(h,h) \alpha(h_1,h_1)  = \alpha(Dh_1, h)^2 \leq \alpha( Dh_1, D h_1) \alpha(h,h) \leq \alpha(h,h) \alpha(h_1,h_1) $.
 Implying that  $\alpha(D h_1, D h_1) = \alpha(h_1,h_1)$ which implies that $\| D h_1 \| = \| h_1 \|$. But $\| (D^2+1) h_1\|^2
 = \|  D^2 h_1 \| + 2 {\rm Re} \alpha( h_1, D^2 h_1) + \| h_1 \|^2
 = \|  D^2 h_1 \| - \| h_1 \|^2$.  But $\| D^2 h_1 \| \leq  \| D h_1 \|  = \| h_1 \|^2$ since $\| D\|^2 \leq 1$. 
 Hence $\| (D^2+1) h_1\|^2 \leq 0$ implying equality $(D^2+1)h_1 =0$ for all $h_1$. That is $D^2=-1$. 
 Conversely if $D^2 = -1$ then  consider  $h = D h_1$. This gives \eqref{satdom}. It turns out this is the condition
 for irreducibility of the Weyl algebra. 
 
Now we double the Hilbert space $H_2 = H \oplus H$ with inner product extended in the usual way. 
Define a dilation of $D$ on this real Hilbert space:
\begin{equation}
\iota = \begin{pmatrix} D  &  V (1+D^2)^{1/2}\\    V (1+D^2)^{1/2} & - D \end{pmatrix}
\qquad \pi = {\rm ker}(1+D^2)  \begin{pmatrix} 0 & 0 \\ 0& 1 \end{pmatrix}
\end{equation}
Then $\iota \pi = \pi \iota$, $\pi^2 = \pi =\pi^\dagger$ and $\iota^2 = -1$. 
Thus $\iota$ defines a complex structure on $\mathscr{H}_1 = (1-\pi) H_2$ where the complex inner product is:
\begin{equation}
\left(  \psi_1,  \psi_2 \right)_{\mathscr{H}_1} = \left(  \psi_1,  \psi_2 \right)_{H_2}
+ i  \left( \iota \psi_1,  \psi_2 \right)_{H_2}
\end{equation}
where $\psi_i \in (1-\pi) H_2$ and where recall that the $H_2$ inner product always gives a real number. 
Note that multiplying by a complex number on this new Hilbert space is achieved via $c \psi
= {\rm Re} c \psi + {\rm Im} c \iota \psi$ and this achieves the desired antilinearity in the inner product. 
We then define:
\begin{equation}
\kappa(h) = (1-\pi) h \oplus 0 = h \oplus 0
\end{equation}
from which we find that:
\begin{equation}
{\rm Im} \left( \kappa(h_1) , \kappa(h_2) \right)_{\mathscr{H}_1}
= ( D h_1 \oplus  V(1+D^2)^{1/2} h_1, h_2 \oplus 0)_{H_2} = \alpha(Dh_1, h_2) =
\beta(h_1,h_2)
\end{equation}
and:
\begin{equation}
{\rm Re} \left( \kappa(h_1) , \kappa(h_2) \right)_{\mathscr{H}_1}
= \alpha(h_1,h_2)
\end{equation}
as desired. 

Some physics comments are in order. The subspace $  {\rm ker}(1+D^2)   \oplus 0 \subset \mathscr{H}_1$ gives rise, via the Fock space construction,
 to an irreducible representation of the Weyl unitaries.  That is the commutant vanishes and the fock vacuum is a pure state. This is effectively why we projected
 out the second half of $H_2$ on this subspace - there is no doubling from entanglement with modes on the commutant. And $D$  already defines
 a complex structure on $ {\rm ker}(1+D^2) =  H \oplus 0$ since $D^2 = -1$ here. The complex structure then determines what we call a creation operator
 vs an annihilation operator. If $ {\rm ker}(1+D^2) = 0$ then the subspace $\kappa(H) \subset \mathscr{H}_1$ is called standard. In particular it leads
 to a cyclic and separating fock vacuum. The doubling then represents the commutant modes that are entangled with $\mathcal{C}$. 
 In particular the complex structure now gives the Minkowski like creation and annihilation operators, which are linear combinations of the Rindler like creation and annihilation operators.
 
 The case where $ {\rm ker}(1+D^2) = 0$ the doubled\footnote{Doubled here means in the direct sum sense and not direct product. It will however roughly correspond to a tensor product for the resulting Fock space.} modes are not projected out. Then one might think of the two subspaces $H \oplus H$
 as follows. Suppose that  we compare to a (singular) pure state determined by $\tilde{\alpha}$, compatible with $\beta$, such that $\tilde{D}^2 =-1$.
 Then the complex structure for this state, on the doubled Hilbert space, provides two Rindler like modes with a factorized pure state. Then the
 different choice of complex structure $\iota$ associated to the original state $\alpha$ corresponds to a Bogoliubov transformation, which results in the Hartle-Hawking/Minkowski modes which now gives rise to an entangled state between the two Rindler modes. States on the factorized Fock space will be singular (in the sense of being in a disjoint folium) on this new Fock space. 
 
 It is an exercise to show (see \cite{Longo:2021rag} for details):
 \begin{equation}
 \label{notneczero}
{\rm ker}(D) = \kappa(H) \cap \iota \kappa(H)' = 0 \qquad  {\rm ker}(1+D^2) =  \kappa(H) \cap \iota \kappa(H) 
 \end{equation}
 where  $S' = (\iota S)^{\perp_{\mathbb{R}}}$ for a real closed subspace $S \subset \mathscr{H}_1$. One can also check that ${\rm span}\{ \kappa(H) \cup \iota \kappa(H) \}$
 is dense in $\mathscr{H}_1$. More generally if we also allow for degenerate $\beta$ (but not degenerate $\alpha$) then the first equation in \eqref{notneczero}
 is not necessarily equal to $0$.  The case where both of these kernels is empty gives rise to the notion of a factorial standard subspace:
\begin{definition}
Given some a real vector space $H$ with symmetric positive form $\alpha$ and symplectic form $\beta$, not necessarily complete. We
say that $(H,\alpha,\beta)$ is a \emph{factorial standard subspace} if the $\alpha$-norm completion denoted $(\bar{H},\bar{\alpha},\bar{\beta})$ is such that:
\begin{equation}
\kappa(\bar{H}) \cap  \kappa(\bar{H})' = 0\,\qquad \kappa(\bar{H}) \cap \iota \kappa(\bar{H}) = 0 \, \qquad \kappa(\bar{H}) \vee \iota \kappa(\bar{H}) =  \mathscr{H}_1
\end{equation}
where $\kappa$ was defined above and $S_1 \vee S_2
 = \overline{{\rm span}(S_1, S_2)}$ for a real closed subspace $S_{1,2} \subset \mathscr{H}_1$.
\end{definition}
In this case the Fock representation, or GNS representation associated to $\alpha$, is then cyclic and separating for the Fock vacuum or GNS vector. 

The Fock space:
\begin{equation}
\mathscr{H}_F = \exp(\mathscr{H}_1) = \bigoplus_{n=0}^\infty Sym(\mathscr{H}_1^{\otimes n}) 
\end{equation}
has creation and annihilation operators defined as:
\begin{equation}
a(\mathfrak{h})^\dagger \Psi_n =\sqrt{n+1}( Sym( \mathfrak{h}) \otimes \Psi_n ) \qquad a(\mathfrak{h})  \Psi_n
= \sqrt{n} \sum_{k=1}^{n} (\mathfrak{h}, g_k) \Psi_{n-1}\backslash g_k
\end{equation}
for all $\Psi_n = Sym(g_1 \otimes g_2 \ldots g_n) \in Sym(\mathscr{H}_1^{\otimes n})  $ and $\Psi_n\backslash g_k = Sym(g_1 \otimes \ldots g_{k-1} \otimes g_{k+1} \otimes \ldots g_n)Sym(\mathscr{H}_1^{\otimes (n-1)}) $
In particular the Weyl unitaries are represented as:
\begin{equation}
v(h) = \exp( i (a(\mathfrak{h}) +a(\mathfrak{h})^\dagger)) 
\end{equation}
and these give a representation of the Weyl algebra for $\mathfrak{h} \in \mathscr{H}_1$ considered as a real
Hilbert space with symplectic form ${\rm Im}(\mathfrak{h}_1,\mathfrak{h}_2)_{\mathscr{H}_1}$ and state:
\begin{equation}
\left< 0 \right| v(\mathfrak{h}) \left| 0 \right> = \exp( - 1/2 \| \mathfrak{h} \|_{\mathscr{H}_1}^2 ) 
\end{equation}
where $0 \in \mathscr{H}_F$ is the Fock vacuum. 
Such a Fock rerpesentation is irreducible with pure state $\left| 0 \right>$. 
In particular ${\rm Im}(\mathfrak{h}_1,\mathfrak{h}_2)_{\mathscr{H}_1}^2 \leq \| \mathfrak{h}_1 \|^2_{\mathscr{H}_1} \| \mathfrak{h}_2 \|_{\mathscr{H}_1}^2$ 
can always be saturated by picking $\mathfrak{h}_1 = i \mathfrak{h}_2$, so the discussion around \eqref{satdom} implies irreducibility. 

The real linear subspace $\kappa(\bar{H})$ then allows us to define a representation of the Weyl algebra for $\bar{H}$ on the Fock space
$\exp(\mathscr{H}_1)$.  This representation will not be irreducible if $\kappa(\bar{H}) \cap \iota \kappa(\bar{H}) \neq \mathscr{H}_1$. 
The Weyl unitaries are:
\begin{equation}
\pi_F( w(h)) = \exp( i (a( \kappa(h)) +a(\kappa(h))^\dagger)) \, \qquad h \in \bar{H}
\end{equation}
and satisfies all the desired properties.

\section{More on the center}

\label{app:center}

In this appendix we give some extra information about the possible superselection sectors for the Weyl algebra with a non-trivial center. That is with a degenerate form $\beta$. 
See \cite{binz2004construction,kadison1997fundamentals}. 

In the presence of a center we now do need to confront different possible folia of states. The center is abelian so we can understand this at the level of abelian $C^\star$-algebras and von Neumann algebras.  The Weyl algebra associated to the central operator $U$ defined in \eqref{defU}, is a unital commutative C$^\star$ algebra. Hence it must correspond to continuous
functions on some compact Hausdorff space \cite{kadison1997fundamentals}. This algebra is not the space of continuous functions on $U \in \mathbb{R}$, since this is non-compact.
Rather the $C^\star$ algebra is the space of almost periodic functions $AP(\mathbb{R})$ which are simply the uniform completion of the space of finite linear combinations of characters $\sum_s c_s e^{i s U}$. These can alternatively be described as the space of continuous functions on the
Bohr compactification $\mathbb{R}_b$  of $\mathbb{R}$ which is a compact Hausdorff space \cite{binz2004construction}. 
The positive linear functionals on $AP(\mathbb{R})$ are then determined by Radon measures on $\mathbb{R}_b$. A subset of such linear functionals, termed regular, correspond to any finite positive measure on $\mathbb{R}$ \cite{hewitt1953linear}. These will be the states of interest. There is a one-one correspondence between the bounded continuous ``positive definite'' functions\footnote{A function is ``positive definite'' if the matrix $f_{ij} = f(s_i-s_j)$ is positive semi-definite for any finite subset $\{ s_i \}_i \subset \mathbb{R}$.}:
\begin{equation}
\sigma_\mu( e^{ is U}) = f(s) = \int d U \mu(U) e^{is U}
\end{equation}
and such finite positive measures on $U$. Note that the abelian von Neumann algebra for $\pi_{\sigma_\mu}$ is given by $\pi_{\sigma_\mu}(AP(\mathbb{R}))'' = L^\infty(\mathbb{R}, \mu)$. States on this folium correspond to other measures $\nu$ on $\mathbb{R}$ that are absolutely continuous with respect to $\mu$: $\nu \ll \mu$. 

At finite $N$ the function $\sigma_N( e^{is (X^{(N)}- N_T^{-1/2} \bar{E}_N )})$ of $s$ is continuous and positive definite for CFTs
with reasonable operator spectrum. In order establish this for the limit on $N$ we need some strengthened continuity condition on the finite $N$ spectrum and also some specific details of the sequence of states $\sigma_N$. Rather than attempt to derive this we simply assume continuity of the limiting function. In fact the assumption on the existence of $\alpha$ Definition~\ref{def:largeN}, which determines the limiting correlation function in \eqref{assumpo}, implies the continuity of $\sigma(e^{is U})$. And there are only two options: 1. the Gaussian measure given in \eqref{gaussianmu} which leads to a faithful representation and 2. the $\delta$ function measure \eqref{deltamu} where $\alpha( h^{\mu\nu} =  \xi^{(\nu} \delta_\Sigma^{\mu)} , h_2 ) = 0 \,, \forall h_2 \in \H$, is also degenerate.

\section{Uniform operator system convergence}

\label{app:uniform}

We assume that, comparing to the main text, we have just arrived at the discussion above Definition~\ref{def:fc} in Section~\ref{sec:faith}. We will continue from there with a stronger
convergence assumption that will eventually lead to Definition~\ref{def:fc}.
Let us summarize where we are at:

As usual we have a sequence of QFTs with nets $\mathcal{M}_N$. We have an associated complete single trace algebra $(\H, \mu_N)$
and a large-$N$ sector determined by a sequence of pure state $\psi_N \in \mathscr{K}_N$.
We use these to construct a limiting algebra $\mathcal{W}$, a $\| \cdot \|$-norm dense Banach $^\star$ sub-algebra $\mathcal{W}_B \subset \mathcal{W}$, a
limiting state $\sigma$ and reconstruction maps $\mathcal{\gamma}_N : \mathcal{W}_B \rightarrow \mathcal{M}_N$ that
are unital pointwise bounded $\| \gamma_N(w) \| < \| w \|_B$ where $\| \cdot \|_B$ is the Banach norm for $\mathcal{W}_B$ as discussed in Appendix~\ref{app:weyl}. 
Then $\mathscr{H}$ is the GNS Hilbert space for $\sigma = \omega_{\eta} \circ \pi_\sigma$ where $\pi_\sigma : \mathcal{W}
\rightarrow \mathcal{C} = \pi_\sigma(\mathcal{W})''$. As usual we can derive the asymptotic homomorphism property
for $\gamma_N$:
\begin{equation}
\lim_N \omega_{\psi_N}( \gamma_N(w_1) \ldots \gamma_N( w_q) )\rightarrow \sigma(w_1 \ldots w_q), \qquad \{w_1, \ldots, w_q \} \subset \mathcal{W}_B
\end{equation}
Here we would like to explore a slightly stronger convergence requirement:

\begin{definition}
\label{uosc}
We say that the maps $\gamma_N$ have the \emph{uniform operator system convergence property} if there is a sequence
of operators systems $\mathcal{S}_1 \subset \ldots \mathcal{S}_k \subset \mathcal{S}_{k+1} \ldots \subset \mathcal{W}_B$
for $k \in \mathbb{N}$ that generates $\mathcal{W}_B = \overline{ \cup_k \mathcal{S}_k}^{\| \cdot \|_B}$, and such that for all $\epsilon > 0$ and all  $k \in \mathbb{N}$ there exists some $N(\epsilon;k)$
such that:
\begin{equation}
\label{nhom}
\left| \omega_{\psi_N}( \gamma_N(w_1)  \gamma_N(w_2)) - \sigma(w_1 w_2) \right|^2
< \epsilon \sigma( | w_1^\dagger |^2 ) \sigma( | w_2|^2 )
\end{equation}
for all $N > N(\epsilon;k)$ and all $ w_1, w_2  \in \mathcal{S}_k$.  
\end{definition}

An operator system is a vector subspace of operators that is not closed under multiplication. See \cite{paulsen2002completely}. 

For simplicity of discussion we work with the faithful case in this Appendix. That is $\sigma$ is assumed faithful for $\mathcal{W}$,
so that $\eta$ is cyclic and separating for $\mathcal{C}$. We recal the definition of $V_N$:
\begin{equation}
V_N  \left| [w] \right> = \gamma_N(w) \left| \psi_N \right> \,, \qquad w \in \mathcal{W}_B
\end{equation}
which makes this a densely defined linear operator. Hence the adjoint exists as a closed operator $V_N^\dagger$ on some domain.

 As in the main text we use \eqref{nhom} (the non-uniform
version suffices) to show that:
\begin{equation}
\lim_N \left( V_N \left| [w']\right> , V_N \left| [w'] \right>\right) \,,\qquad \lim_N ( \gamma_N(w) V_N - V_N \pi_\sigma(w) ) \left| [w'] \right> =0
\end{equation}
for all $w,w' \in \mathcal{W}_B$. 

We prove:

\begin{lemma}
\label{lem:uosc}
With the above setup including Definition~\ref{uosc} (we only really need $q=2$) there exists a uniformly bounded sequence of operators $\tilde{V}_N \in \mathcal{B}(\mathscr{H}, \mathscr{K}_N)$ such that:
\begin{align} 
\label{vntvn}
\lim_N (V_N - \tilde{V}_N ) \left| [w] \right> = 0 \qquad \forall w \in \mathcal{W}_B \\
\label{fpart}
so-\lim_N \tilde{V}_N^\dagger \tilde{V}_N = 1\,, \qquad so-\lim_N \gamma_N(w) \tilde{V}_N - \tilde{V}_N \pi_\sigma(w) = 0 \qquad \forall w \in \mathcal{W}_B
\end{align}
And in this way we have constructed an asymptotically isometric code, Definition~\ref{def:code}. 
\end{lemma}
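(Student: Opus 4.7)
The plan is to build $\tilde{V}_N$ as the composition of $V_N$ with a projection $P_{k(N)}$ onto an increasing family of subspaces that exhaust $\mathscr{H}$, where the cutoff $k(N)$ is tuned diagonally against the uniformity parameters $N(\epsilon;k)$ from Definition~\ref{uosc}. More precisely, let $\mathscr{H}_k = \overline{\{|[w]\rangle : w \in \mathcal{S}_k\}} \subset \mathscr{H}$ with projection $P_k$; by \eqref{nhom} applied with $w_1 = w^\dagger$ and $w_2 = w$, we have
\begin{equation}
\| V_N |[w]\rangle \|^2 = \omega_{\psi_N}(\gamma_N(w)^\dagger \gamma_N(w)) \leq (1+\epsilon)\,\sigma(w^\dagger w) = (1+\epsilon)\|[w]\|^2
\end{equation}
for $w \in \mathcal{S}_k$ and $N > N(\epsilon;k)$. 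So the restriction $V_N|_{\{[w] : w \in \mathcal{S}_k\}}$ extends uniquely to a bounded operator $V_N^{(k)} : \mathscr{H}_k \to \mathscr{K}_N$ with $\|V_N^{(k)}\| \leq \sqrt{1+\epsilon}$ whenever $N > N(\epsilon;k)$. I then choose $k(N)$ to be the largest $k$ with $N(1/k;k) < N$ (mimicking the construction of $q(N)$ in the proof of Lemma~\ref{lem:beta}); this yields a nondecreasing $k(N) \to \infty$ with $N > N(1/k(N);k(N))$ for all sufficiently large $N$. Setting $\tilde{V}_N \equiv V_N^{(k(N))} P_{k(N)}$ gives a uniformly bounded sequence with $\|\tilde{V}_N\| \leq \sqrt{1+1/k(N)} \leq \sqrt{2}$.

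Next I would verify \eqref{vntvn}. For $w \in \mathcal{W}_B$, approximate $|[w]\rangle$ by $|[w_0]\rangle$ with $w_0 \in \mathcal{S}_{k_0}$ and $\|w - w_0\|_B < \epsilon'$. For $N$ large enough that $k(N) \geq k_0$, we have $P_{k(N)}|[w_0]\rangle = |[w_0]\rangle$, so $(V_N - \tilde{V}_N)|[w_0]\rangle = 0$, and triangle inequality combined with the bounds $\|V_N|[w-w_0]\rangle\| \leq \|w-w_0\|_B$ and $\|\tilde{V}_N\| \leq \sqrt{2}$ shows $\limsup_N \|(V_N - \tilde{V}_N)|[w]\rangle\| \leq (1+\sqrt{2})\epsilon'$, whence $\to 0$ as $\epsilon' \to 0$.

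For the first half of \eqref{fpart}, I aim to prove strong convergence on the dense subspace generated by vectors $|[w]\rangle$ with $w \in \mathcal{S}_{k_0}$; uniform boundedness of $\tilde{V}_N^\dagger \tilde{V}_N$ then extends it to all of $\mathscr{H}$. Set $B_N = (V_N^{(k(N))})^\dagger V_N^{(k(N))}$ and fix $w \in \mathcal{S}_{k_0}$; for $N$ large, $P_{k(N)}|[w]\rangle = |[w]\rangle$ so $\tilde{V}_N^\dagger \tilde{V}_N |[w]\rangle = P_{k(N)} B_N |[w]\rangle$. The key estimate is
\begin{equation}
\|P_{k(N)} B_N|[w]\rangle - |[w]\rangle\|^2 \leq \|B_N|[w]\rangle\|^2 - 2\,\mathrm{Re}\,\langle B_N|[w]\rangle, |[w]\rangle\rangle + \|[w]\|^2.
\end{equation}
Here $\langle B_N|[w]\rangle, |[w]\rangle\rangle = \|V_N|[w]\rangle\|^2 \to \|[w]\|^2$ by the (scalar) asymptotic homomorphism, while $\|B_N\| \leq 1 + 1/k(N) \to 1$ gives $\|B_N|[w]\rangle\|^2 \leq \|B_N\| \langle B_N|[w]\rangle, |[w]\rangle\rangle \to \|[w]\|^2$, and Cauchy--Schwarz supplies the matching lower bound $\|B_N|[w]\rangle\|^2 \geq |\langle B_N|[w]\rangle, |[w]\rangle\rangle|^2/\|[w]\|^2 \to \|[w]\|^2$. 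The three terms combine to give vanishing.

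The second half of \eqref{fpart} follows by a standard telescoping: for $w \in \mathcal{W}_B$ and $w' \in \mathcal{S}_{k_0}$,
\begin{equation}
\gamma_N(w)\tilde{V}_N|[w']\rangle - \tilde{V}_N \pi_\sigma(w)|[w']\rangle = \gamma_N(w)(\tilde{V}_N - V_N)|[w']\rangle + \big(\gamma_N(w)\gamma_N(w') - \gamma_N(ww')\big)|\psi_N\rangle + (V_N - \tilde{V}_N)|[ww']\rangle,
\end{equation}
where the first term vanishes by \eqref{vntvn} and $\|\gamma_N(w)\| \leq \|w\|_B$, the middle (asymptotic homomorphism) term vanishes by expanding its squared norm into four correlators all limiting to $\sigma((ww')^\dagger ww')$, and the third by \eqref{vntvn} again. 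Density of the $|[w']\rangle$ and uniform boundedness close the argument. The main obstacle, as with Lemma~\ref{lem:beta}, is the diagonal choice of $k(N)$: it must grow slowly enough that the uniform bounds $\|V_N^{(k(N))}\| \leq (1 + 1/k(N))^{1/2}$ remain valid at each $N$, yet fast enough that $P_{k(N)}$ captures any fixed $|[w]\rangle$ eventually. The construction $k(N) = \max\{k : N(1/k;k) < N\}$ is the natural diagonal, directly analogous to the $q(N)$ of Lemma~\ref{lem:beta}.
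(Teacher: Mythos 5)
Your construction is essentially the paper's own proof: you compress $V_N$ with the projections onto the operator-system subspaces, tune the cutoff $k(N)$ diagonally against $N(\epsilon;k)$, set $\tilde V_N = V_N P_{k(N)}$, and verify the three limits by dense-subspace plus uniform-boundedness arguments, all of which is correct. The only cosmetic differences are that the paper takes the supremum over the off-diagonal estimate in \eqref{nhom} to get the operator-norm statement $\| \pi_k (V_N^\dagger V_N - 1)\pi_k\| \leq \epsilon$ directly (where you use the diagonal bound plus a Cauchy--Schwarz/quadratic-form argument), and it appeals to Banach--Steinhaus where you give an explicit $\| \cdot \|_B$-approximation and an explicit telescoping for the reconstruction limit — these are immaterial variations.
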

\begin{proof}
Define the nested projections $\pi_k = \overline{\pi_\sigma(\mathcal{S}_k)\left| \eta \right>}$ onto these closed subspaces.  So that $\pi_k \left| [w] \right> = \left| [w] \right>$ 
for all $w \in \mathcal{S}_k$. 
Fix some $k \in \mathbb{N}$ and write \eqref{nhom} as:
\begin{align}
&| \left< [w_1] \right| \pi_k V_N^\dagger V_N \pi_k \left| [w_2] \right> 
- \left< [w_1] \right|  \pi_k \left| [w_2] \right> |^2 \\
&
\qquad \leq \epsilon \|  \pi_k \left| [w_1] \right> \|^2 \|  \pi_k \left| [w_2] \right> \|^2
\end{align}
and since $\left| [w_1] \right> $ are dense on $\pi_k \mathscr{H}$ we can take the supremum over these states to compute the norm:
\begin{equation}
\label{2case}
\| \pi_k V_N^\dagger  V_N \pi_k \| - 1 \leq \| \pi_k \left( V_N^\dagger  V_N -  1 \right) \pi_k \|
\leq \epsilon 
\end{equation}
which also implies that $V_N \pi_k$ is a bounded operator for sufficiently large $N$, namely $N> N(\epsilon;k)$. 

We are free to pick $N(\epsilon;k)$ monotonically increasing with $k$ and decreasing with $\epsilon$. 
Now consider a function $k(N) \in \mathbb{N}$ 
such that $N(1/k(N');k(N')) < N'$ and $k(N) \rightarrow \infty$ as $N \rightarrow \infty$. It is always possible to find such a function. It has the desired property that for for all $\delta> 0$ there exists an $N_\star$ such that $k(N) > 1/\delta$ and hence $\| \pi_{k(N)} (V_N^\dagger V_N -1) \pi_{k(N)} \| < \delta $ for all $N > N_\star$. We also have
the uniform bound on the norm $\| \pi_{k(N)} V_N^\dagger V_N \pi_{k(N)}\| < 1 + 1/k(N)$. 
Define:
\begin{equation}
\tilde{V}_N \equiv V_N \pi_{k(N)}
\end{equation} 
We have that:
\begin{equation}
\lim_N \| \tilde{V}_N^\dagger \tilde{V}_N - \pi_{k(N)}\| = 0
\end{equation}
Now for all $w \in \cup_k \mathcal{S}_k$ there is some $N_\star$ for which $(\pi_{k(N)}-1) \left| [w] \right> =0 $ for all $N > N_\star$. Hence:
\begin{equation}
\lim_N (V_N - \tilde{V}_N ) \left| [w] \right> = 0 
\end{equation}
for all $w \in \cup_k \mathcal{S}_k$.
For fixed $w_0 \in \cup_k \mathcal{S}_k$ considering the sequence of functions:
\begin{equation}
f_N(w) = \left(  (V_N - \tilde{V}_N ) \left| [w_0] \right>, (V_N - \tilde{V}_N ) \left| [w] \right> \right)
\end{equation}
then we have $\lim_N f_N(w) = 0$
for $w \in \cup_k \mathcal{S}_k$. Since these are linear functionals on a dense subspace of the Banach space $\mathcal{W}_B$ we can extend this limit to all of $w\in \mathcal{W}_B$ by uniform boundedness (Banach-Steinhaus, see Appendix~\ref{app:convergence}). A similar analysis for varying $w_0$ gives \eqref{vntvn}.

We also have strong convergence on a dense subspace $w \in \cup_k \mathcal{S}_k$:
\begin{equation}
\lim_N \| \tilde{V}_N^\dagger \tilde{V}_N \left| [w] \right> - \left| [w] \right> \| = 0
\end{equation}
Hence by uniform boundedness we simply have strong convergence as claimed. \end{proof}

\section{Additivity and the causal wedge}

\label{app:causal}

Consider working with a CFT that has a known gravitational dual and the code is dual
to quantum fields on some background asymptotically AdS manifold $M$. We assume such a space is globally hyperbolic in the AdS sense. 
Let us define the bulk net of quantum fields as $\widetilde{\mathcal{C}}(D)$, which we assume are
von Neumann algebras labelled by causally complete bulk regions $D \subset M$. The bulk Hilbert space will be unitarily equivalent to the Hilbert space $\mathscr{H}$ that we
reconstruct with the single trace operators. Hence we can take $\widetilde{\mathcal{C}}(D)$ to act on $\mathscr{H}$. 
For a sufficiently small diamonds on the boundary $\vardiamondsuit_i$, the state used to construct $\mathcal{C}$ approaches that of vacuum and so we can identify $\mathcal{C}(\diamondsuit_i) = \widetilde{\mathcal{C}}(\vardiamondsuit_i)$ with the algebra of operators in the corresponding bulk causal diamond.

For some boundary region $O$ the geometric causal wedge is usually defined as $J^{+}(O)  \cap J^{-}(O) $ where  $J^{\pm}(O)$ are the bulk chronological past and future of the boundary region $O$. 
Unfortunately this is often (generically) not a domain of dependence\footnote{We thank Netta Engelhardt and Veronika Hubeny for explaining this to us.}, so it is a rather unnatural region from a bulk quantum fields point of view. 
Instead 
we work with the following causal domain:
\begin{equation}
C_D(O) = O^{\perp \perp} \subset M
\end{equation}
where $S^{\perp} = M \backslash ({\rm cl} (J^+(S) \cup J^-(S)) )$ which is the causal complement operation on $S$ defined in terms of the bulk causal structure.
This is equivalent to the prime $'$ notation except with respect to the bulk causal structure on $M$ instead of the boundary causal structure. 
Generally $J^{+}(O)  \cap J^{-}(O ) \subset C_D(O)$ and $C_D(O)$ is the smallest domain of dependence that still contains $J^{+}(O)  \cap J^{-}(O)$. 
Hence it is very similar to the causal wedge and
so we will often refer to it as such. 

Note that $(\diamondsuit_i)^{\perp\perp} = \vardiamondsuit_i$ for sufficiently small (but finite) diamonds:  $ \vardiamondsuit_i$
is the causal wedge of $O$ but the difference with  $(\diamondsuit_i)^{\perp\perp}$ only arises after caustics occur when shooting light-rays from the boundary. 
But we can always find a sufficiently small diamond with no such caustics. 

Our goal is to give evidence for the claim $\mathcal{C}(O) = \widetilde{\mathcal{C}}(O^{\perp\perp})$. 
Consider a boundary cover $\cup_i \diamondsuit_i = O$ for small diamonds so that we can easily identify $\mathcal{C}(\diamondsuit_i) =\widetilde{\mathcal{C}}(\vardiamondsuit_i)$ as discussed above.
Then: $C_D(O) = (\cup_i \vardiamondsuit_i)^{\perp \perp}$. 
Suppose that:
\begin{equation}
\label{claimE}
\widetilde{\mathcal{C}}(C_D(O)) = \bigvee_i \widetilde{\mathcal{C}}(\vardiamondsuit_i)
\end{equation}
Then $\widetilde{\mathcal{C}}(O^{\perp\perp}) = \widetilde{\mathcal{C}}(C_D(O)) = \bigvee_i \mathcal{C}(\diamondsuit_i) = \mathcal{C}(O)$ by the additivity property of $\mathcal{C}$. 

So we would establish the claim if we could show \eqref{claimE}. Note that $\cup_i \vardiamondsuit_i \neq C_D(O)$, so this result does not follow from the form of additivity 
that we have worked with in this paper (now of course applied to the bulk.)
In fact it is known that much stronger forms of additivity apply to many theories.\footnote{The strongest possible condition is as follows \cite{Pontello:2020csg}. Both von Neumann algebras
and the sets of causally complete regions form a poset (partially ordered set) under inclusion. The corresponding $\vee, \wedge = \cap$ operations then give this
poset a lattice structure. The lattices have an involution given by the prime operation - the algebra commutant or $\perp$.
We can define the $\vee$ operation as $D_1 \vee D_2 = (D_1^\perp \cap D_2^\perp)^{\perp}$. Then a possible, but very strong statement of additivity follows
by assuming the map $D \rightarrow \widetilde{\mathcal{C}}(D)$ is a lattice homomorphism. There are known counter examples to this strong form arising from free theories, so we will not rely on this property.} For example for theories satisfying the timeslice axiom (not generalized free fields), one only needs additivity for regions covering a spatial time slice. We expect the bulk theory to be a well behaved QFT, so this extended additivity is reasonable. However here we need something different since the $\vardiamondsuit_i$ cover a time-like boundary slice. In fact Borchers timelike tube theorem \cite{borchers1961vollstandigkeit} is exactly such an example - allowing for the causal extension of a time-like cylindrical region of Minkowksi space. Araki found generalizations of Borchers setup \cite{araki1963generalization}. Based on this we expect \eqref{claimE} is always true. 

This discussion should be viewed as replacing the common refrain about the existence of a non-standard Green's function, inside the causal wedge, that evolves data in a space-like direction from $O$. The plausible existence of such a Green's function is usually used to arrive at the causal wedge duality. The relation to Borchers timelike tube theorem was first pointed out in \cite{Duetsch:2002hc} and discussed further in \cite{Levine:2020upy}. The claim that the dual causal region is $D_C(O)$ rather than $J^{+}(O)  \cap J^{-}(O )$ seems to be unappreciated. Although, see footnote~1 of \cite{Benedetti:2022aiw} which came out while we were finishing this paper.

 We might also bypass the additivity argument altogether, and instead follow the Green's function deformation argument given in Section~\ref{sec:sta}. In fact a rigorous version
 of this argument would be tantamount to proving a more general version of Borcher's theorem. We would work at the level of the bulk single particle Hilbert space $\widetilde{\mathscr{H}}_1$, and the standard subspace - the conjecture would be that solutions to the bulk equations of motion for boundary sources in $O$, as elements of the real space $\widetilde{H} \subset \widetilde{\mathscr{H}}_1$, are dense in the standard subspace $\widetilde{H}(O^{\perp \perp})$ for the causal domain.
 
 \begin{figure}[h!]
\centering
\includegraphics[scale=.4]{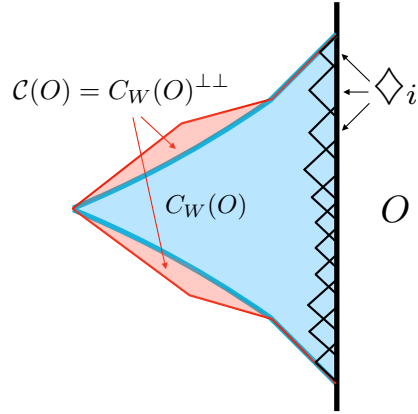}
\caption{Additivity property for the causal domain.}
\end{figure}

\acknowledgments

We thank Chris Akers, Netta Engelhardt, Animik Ghosh, Veronika Hubeny, Hong Liu and Antony Speranza for helpful discussions.
This work is partially supported by the Air Force Office of Scientific Research under award number FA9550-19-1-0360 and the Department of Energy under award number DE-SC0019183.

\bibliography{entanglement}

\end{document}